\newcommand{\Nat}{{\mathbb N}}
\tikzset{
  >={To[scale length=1.5]},
  intuitionistic relation/.style={->},
  successor relation/.style={->,dashed},
  model state/.style={circle, inner sep=0pt, minimum size=.3cm},
  p model state/.style={model state, color=gray!30!black, fill, text=white},
  not p model state/.style={model state, draw, fill=white},
  big model state/.style={circle, inner sep=0pt, minimum size=.42cm},
  p big model state/.style={big model state, color=gray!30!black, fill, text=white},
  not p big model state/.style={big model state, draw, fill=white}
}
\newtheorem{theorem}{Theorem}
\newtheorem{lemma}{Lemma}
\newtheorem{definition}{Definition}
\newtheorem{proposition}{Proposition}
\newtheorem{corollary}{Corollary}
\newtheorem{question}{Question}
\newcommand{\Qfin}{^{a \leftarrow b}}
\newcommand{\toremove}[1]{}
\newcommand{\david}[1]{}
\newcommand{\philippe}[1]{}
\newcommand{\langred}{\lang_{{\ubox}{\Until}}}
\newcommand{\HTMod}[1]{\model^{\ubox}_{#1}}
\newcommand{\DiamMod}[1]{\model^{\diam}_{#1}}
\newcommand{\fullproof}[1]{}
\newcommand{\shortproof}[1]{{\color{orange} #1}}
\newcommand*{\ineq}[2][]{%
  \begingroup
    % Put \refstepcounter at the beginning, because
    % package `hyperref' sets the anchor here.
    \refstepcounter{equation}%
    \ifx\\#1\\%
    \else
      \label{#1}%
    \fi
    % prevent line breaks inside equation
    \relpenalty=10000 %
    \binoppenalty=10000 %
    \@eqnnum \ \ensuremath{%
      % \displaystyle % larger fractions, ...
      #2%
    }%
  \endgroup
}
\newcommand{\seq}{\succcurlyeq}
\newcommand{\Var}{{\mathbb P}}
\newcommand{\fullmod}[2]{{#1}=(W_{#2},{\peq}_{#2},V_{#2})}
\newcommand{\fulllab}[2]{{#1}=(W_{#2},{\peq}_{#2},\labfun_{#2})}
\newcommand{\length}[1]{\lvert#1\rvert}
\newcommand{\ur}[2]{{\rm ur}_{#1} ( {#2} ) }
\newcommand{\T}{{\mathfrak T}}
\newcommand{\Grph }{{\mathfrak G}}
\newcommand{\M }{{\mathfrak M}}
\newcommand{\N }{{\mathfrak N}}
\newcommand{\iM }{{\mathfrak M}}
\newcommand{\newmodel}[1]{{#1}^{\rm mod}}
\newcommand{\A}{{\mathfrak A}}
\newcommand{\iA}{{\mathfrak A}}
\newcommand{\B }{{\mathfrak B}}
\newcommand{\iB }{{\mathfrak B}}
\newcommand{\dpt}[1]{{\rm dpt}(#1)}
\newcommand{\degp}[1]{{\rm lev}(#1)}
\newcommand{\Labels}{\Lambda}
\newcommand{\simvar}{\sigma}
\newcommand{\srone}{\rho}
\newcommand{\srtwo}{\iota}
\newcommand{\ignore}[1]{}
\newcommand{\peq}{\preccurlyeq}
\newcommand{\labfun}{\lambda}
\newcommand{\sfun}{\msuccfct}
\newcommand{\itlht}{{\sf ITL^{ht}}}
\newcommand{\iltl}{{\sf ITL^e}}
\newcommand{\itlb}{{\sf ITL^p}}
\newcommand{\lang}{{\sf L}}
\newcommand{\rng}{{\rm rng}}
\newcommand{\imm}{\mathrel\unlhd}
\newcommand{\cond}{\ll}
\newcommand{\simm}{\mathrel\triangleq}
\newcommand{\wsf}[1]{\Sigma(#1)}
\newcommand{\Edefect}{{D}} % the set of defects
\newcommand{\Qbase}[1][]{}  % for the initial (base) model
\newcommand{\Qstrat}[1][]{^{{\str}\ifx#1\relax \relax\else, #1\fi}}
\newcommand{\str}{{\rm e }} % for the stratified model
\newcommand{\bgraph}[2][\Labels]{\Grph ^{#1}_{#2}}
\newcommand{\bgraphbase}[2][\Labels]{\bgraph[#1]{#2} = %
  \left(W^{#1}_{#2}, \isuccrel^{#1}_{#2}, \labfun^{#1}_{#2}\right)}
\newcommand{\eqdef}{:=}
\renewcommand{\tnext}{\medcircle}
\def\R{\mathop{\sf R}}
\def\Until{\mathop{\sf U}}
\def\bisim{\mathrel{ Z}}
\begin{document}
% Title portion. Note the short title for running heads 
\title{Intuitionistic Linear Temporal Logics}  
\author[1]{Philippe Balbiani. \footnote{\href{mailto:philippe.balbiani@irit.fr}{\tt philippe.balbiani@irit.fr}}}
\author[1]{Joseph Boudou. \footnote{\href{mailto:joseph.boudou@irit.fr}{\tt joseph.boudou@irit.fr}}}
\author[2]{Mart\'in Di\'eguez \footnote{\href{mailto:martin.dieguez@enib.fr}{\tt martin.dieguez@enib.fr}}}
\author[3]{David Fern\'andez-Duque \footnote{\href{mailto:david.fernandezduque@ugent.be}{\tt david.fernandezduque@ugent.be}}}
\affil[1]{IRIT, Toulouse University. Toulouse, France}
\affil[2]{CERV, ENIB,LAB-STICC. Brest, France}
\affil[3]{Department of Mathematics, Ghent University. Gent, Belgium}
\date{~}
\maketitle

\begin{abstract}
We consider intuitionistic variants of linear temporal logic with `next', `until' and `release' based on {\em expanding posets:} partial orders equipped with an order-preserving transition function.
This class of structures gives rise to a logic which we denote $\iltl$, and by imposing additional constraints we obtain the logics $\itlb$ of {\em persistent posets} and $\itlht$ of {\em here-and-there temporal logic,} both of which have been considered in the literature.
We prove that $\iltl$ has the effective finite model property and hence is decidable, while $\itlb$ does not have the finite model property.
We also introduce notions of bounded bisimulations for these logics and use them to show that the `until' and `release' operators are not definable in terms of each other, even over the class of persistent posets.
\end{abstract}

% The default list of authors is too long for headers.

\section{Introduction}

Intuitionistic logic~\cite{DVD86,MInt}
and its modal extensions~\cite{Ewald,PS86,Simpson94}
play a crucial role in computer science and artificial intelligence {and {\em Intuitionistic Temporal Logics} have not been an exception. The study of these logics can be a challenging enterprise~\cite{Simpson94} and, in particular, there is a huge gap that must be filled regarding combinations of intuitionistic and linear-time temporal logic~\cite{pnueli}. This is especially pressing given several potential applications of intuitionistic temporal logics that have been proposed 
by several authors.	

The first involves the Curry-Howard correspondence~\cite{Howard80}, which identifies intuitionistic proofs with the $\lambda$-terms of functional programming. Several extensions of the $\lambda$-calculus with operators from {\em Linear Time Temporal Logic}~\cite{pnueli} ($\sf LTL$) have been proposed in order to introduce new features to functional languages:
Davies \cite{Davies96,Davies-2017} has suggested adding a `next' ($\tnext$) operator to intuitionistic logic in order to define the type system $\lambda^\tnext$, which allows extending functional languages with  \emph{staged computation}\footnote{\emph{Staged computation} is a technique that allows dividing the computation in order to exploit the early availability of some arguments.}~\cite{Ershov77}.
Davies and Pfenning~\cite{Davies01} proposed the functional language ${\sf Mini\text{-}ML}^\Box$ which is supported by intuitionistic $\sf S4$ and allows capturing complex forms of staged computation as well as runtime code generation.
Yuse and Igarashi later extended $\lambda^\tnext$ to $\lambda^\Box$~\cite{Yuse2006} by incorporating the `henceforth' operator ($\ubox$), useful for modelling persistent code that can be executed at any subsequent state.  

Alternately, intuitionistic temporal logics have been proposed as a tool for modelling semantically-given processes.
Maier~\cite{Maier2004Heyting} observed that an intuitionistic temporal logic with `henceforth' and `eventually' ($\diam$) could be used for reasoning about safety and liveness conditions in possibly-terminating reactive systems, and
Fern\'andez-Duque~\cite{DFD2016} has suggested that a logic with `eventually' can be used to provide a decidable framework in which to reason about topological dynamics.
In the areas of {nonmonotonic reasoning,} {knowledge representation (KR),} and {artificial intelligence,} intuitionistic and intermediate logics have played an important role within the successful {answer set programming (ASP)}~\cite{Brewka11} paradigm for practical KR, leading to several extensions of modal ASP~\cite{CP07} that are supported by intuitionistic-based modal logics like \emph{temporal here and there}~\cite{BalbianiDieguezJelia}.}

%Pearce's equilibrium logic~\cite{Pea96}, which characterises the answer set semantics~\cite{MT99,Niemela99} of logic programs (ASP), is defined in terms of the intermediate logic of here and there~\cite{Hey30} together with a minimisation criterion. Extensions of here and there logic allowed the ASP paradigm, already used in a wide range of domains~\cite{FarinasDieguez,BV+2011,GG+10,I13,NBGWB01}, to be applied to reasoning about temporal or epistemic scenarios~\cite{CP07,CerroHS15} while satisfying the theorem of strong equivalence~\cite{CabalarD14,LPV01,CerroHS15}, central to logic programming and nonmonotonic reasoning.
%Such modal extensions of here and there logic are simple cases of a {\em modal intuitionistic logic}.

% including extensions of the Curry-Howard isomorphism to deal with programs that accept new inputs over time \cite{Davies96,Yuse2006}, modelling processes that may eventually halt \cite{Maier2004Heyting} and modelling topological dynamics \cite{DFD2016}.

%Nevertheless, there have been several efforts in this direction, including logics with `past' and `future' tenses \cite{Ewald}
%or with `next' $\tnext$, 'eventually' $\diam$ and/or `henceforth' $\ubox$ modalities.
%
%Natural semantics for intuitionistic temporal logics use bi-relational structures\footnote{Although topological interpretations have also been proposed; see \cite{Davoren2009}.} \cite{Simpson94}, with a partial order $\peq$ to interpret intuitionistic implication, and a function or relation, which we denote $\msuccfct$, representing the passage of time.

{There have been some notable steps towards understanding intuitionisitic temporal logics:

%The main contributions to the field include the following:
%
\begin{itemize}\itemsep0pt	
	\item Davies' intuitionistic temporal logic with $\tnext$ \cite{Davies96} was provided Kripke semantics and a complete deductive system by Kojima and Igarashi \cite{KojimaNext}.	
	\item Logics with $\tnext,\ubox$ were axiomatized by Kamide and Wansing \cite{KamideBounded}, where $\ubox$ was interpreted over bounded time.	
%	\item Nishimura \cite{NishimuraConstructivePDL} provided a sound and complete axiomatization for an intuitionistic variant of the propositional dynamic logic $\sf PDL$.	
  \item Balbiani and Di\'eguez~\cite{BalbianiDieguezJelia} axiomatized the Here and There~\cite{Hey30} variant of $\sf LTL$ with $\tnext,\diam,\ubox$.	
	\item Davoren \cite{Davoren2009} introduced topological semantics for temporal logics and Fern\'andez-Duque \cite{DFD2016} proved the decidability of a logic with $\tnext,\diam$ and a universal modality based on topological semantics.		
\end{itemize}
Nevertheless, many questions have remained open, especially regarding conservative extensions of intuitionistic logic with all of the tenses $\tnext,\diam,\ubox$, or even the more expressive `until' $\Until$ and `release' $\R$.

With the exception of \cite{Davoren2009,DFD2016}, semantics for intuitionistic $\sf LTL$ use frames
of the form $(W,{\irel},{\msuccfct})$, where $\irel$ is a partial order used to interpret the
intuitionistic implication and $\msuccfct$ is a binary relation used to interpret temporal
operators.}
Since we are interested in linear time, we will restrict our attention to the case where $\msuccfct$ is a function. Thus, for example, $\tnext p$ is true at some world $w\in W$ whenever $p$ is true at $\msuccfct(w)$.
Note, however, that $\msuccfct$ cannot be an arbitrary function. Intuitionistic semantics have the feature
that, for any formula $\varphi$ and worlds $w\irel v\in W$, if $\varphi$ is true at $w$ then it must
also be true at $v$; that is, truth is {\em monotone} (with respect to $\irel$). If we want this property to be preserved by
formulas involving $\tnext$, we need for $\peq$ and $\msuccfct$ to satisfy certain confluence
properties. In the literature, one generally considers frames satisfying
\begin{enumerate}
\item \label{ItForCon}  $w\irel v$ implies $\msuccfct(w)\irel\msuccfct(v)$ ({\em forward confluence,} or simply {\em confluence}), and
\item\label{ItBackCon}  if $u\succcurlyeq\msuccfct(w)$, there is $v\succcurlyeq w$ such that $\msuccfct(v)=u$ ({\em backward confluence})
\end{enumerate}
(see Figure \ref{FigCO}). We will call frames satisfying these conditions {\em persistent frames} (see Sec.~\ref{SecExpPer}),
mainly due to the fact that they are closely related to (persistent) products of modal logics
\cite{mdml}.
Persistent frames for intuitionistic $\sf LTL$ are closely related to the frames of the modal logic ${\sf LTL} \times {\sf S4}$, which is non-axiomatizable. For this reason, it may not be surprising that it is unknown whether the intuitionistic temporal logic of persistent frames, which we denote $\itlb$, is decidable.

However, as we will see in Proposition \ref{PropIntCond}, only forward confluence is needed for truth of all formulas to be monotone, even in the presence of $\diam$, $\ubox$ or even $\Until$ and $\R$. The frames satisfying this condition are, instead, related to {\em expanding} products of modal logics \cite{pml}, which are often decidable even when the corresponding product is non-axiomatizable. This suggests that dropping the backwards confluence could also lead to a more manageable intuitionistic temporal logic. We denote the resulting logic by $\iltl$ and, as we will prove in this paper, it enjoys a crucial advantage over $\itlb$: $\iltl$ has the effective finite model property (hence it is decidable), but $\itlb$ does not. In fact, to the best of our knowledge, $\iltl$ is the first known decidable intuitionistic temporal logic that
\begin{enumerate}

	\item is conservative over propositional intuitionistic logic,

	\item includes (or can define) the three tenses $\tnext,\Until,\R$, and

	\item is interpreted over infinite time.

\end{enumerate}
Intuitively, $\itlb$ is a logic of invertible processes, while $\iltl$ reasons about non-invertible ones.
The latter is closely related to $\mathsf{ITL^c}$, an intuitionistic temporal logic for continuous dynamic topological systems \cite{DFD2016}.
In contrast, the logic $\iltl$ is based on relational, rather than topological, semantics, which has the advantage of admitting a natural `henceforth' operator (although topological variants can be defined \cite{BoudouJELIA}).
The current work extends previous results regarding a variant of $\iltl$ with $\diam$ and $\ubox$, rather than $\Until$ and $\R$ \cite{Boudou2017}.

Note that $\diam \varphi \equiv \neg \ubox \neg \varphi$ is not valid intuitionistically and hence $\diam$ cannot be defined in terms of $\ubox$ using the standard equivalence.
The same situation holds for the `until' operator: while the language with $\tnext$ and $\Until$ is equally expressive to classical monadic first-order logic with $\leq$ over $\mathbb N$~\cite{Gabbay80}, $\Until$ admits a first-order definable intuitionistic dual, $\R$ (`release'), which cannot be defined in terms of $\Until$ using the classical definition.

However, this is not enough to conclude that $\R$ cannot be defined in a different way in terms of $\Until$. Thus we will consider the question of {\em definability:} which of the modal operators can be defined in terms of the others? As is well-known, $\diam \varphi \equiv \top \Until \varphi$ and $\ubox \varphi \equiv \bot \R \varphi$; these equivalences remain valid in the intuitionistic setting. Nevertheless, we will show that $\ubox$ cannot be defined in terms of $\Until$, and $\diam$ cannot be defined in terms of $\R$; in order to prove this, we will develop a theory of bisimulations on $\iltl$ models.

\subsection*{Layout}
{The paper is organised as follows: 
	in Section~\ref{SecSynSem} we present the syntax and the semantics in terms of dynamic posets and also study the validity of some of the classical axioms in our setting. In Section~\ref{SecExpPer} we present the concepts of stratified and expanding frames and also show that satisfiability and validity on arbitrary models is equivalent to satisfiability and validity on expanding models. In Section~\ref{secSpecial} we consider two smaller classes of models, persistent and here-and-there models, and we compare their logics to $\iltl$.
	
	The Finite Model Property of $\iltl$ is studied along sections~\ref{sec:combinatorics} and~\ref{SecDec}. In the former we introduce the concepts of labelled structures and quasimodels as well as several related concepts such as immersions, condensations, and normalised quasimodels. Those definitions are used in Section~\ref{SecDec} to prove the finite model property of $\iltl$.
	
In Section~\ref{sec:bisim} we define the concept of bounded bisimulations in intuitionistic modal setting and use them to study the interdefinability of the $\iltl$ modalities in Section~\ref{SecSucc}. We finish the paper with conclusions and future work.
	
}

%\david{To do, or we can just omit it.}

\section{Syntax and semantics} \label{SecSynSem}

We will work in sublanguages of the language $\lang$ given by the following grammar:
\begin{equation*}
  \varphi, \psi \gramdef p \alt \bot \alt
    ( \varphi \wedge \psi ) \alt ( \varphi \vee \psi ) \alt (\varphi \imp \psi) \alt
    ( \tnext \varphi ) \alt ( \diam\varphi ) \alt ( \ubox\varphi) \alt ( \varphi \Until \psi ) \alt ( \varphi \R \psi)
\end{equation*}
where $p$ is an element of a countable set of propositional variables $\Var$. Henceforth we adhere to the standard conventions for omission of parentheses. All sublanguages we will consider include all Boolean operators and $\tnext$, hence we denote them by displaying the additional connectives as a subscript: for example, $\lang_{\diam\ubox}$ denotes the $\Until$-free, $\R$-free fragment. As an exception to this general convention, $\lang_{\tnext}$ denotes the fragment without $\diam,\ubox,{\Until}$ or $\R$.

Given any formula $\varphi $,
we define the \emph{length} of $\varphi$ (in symbols, $\length{\varphi}$) recursively as follows:

\begin{itemize}[itemsep=0pt]
	\item $\length{p} = \length{\bot} = 0 $;%\philippe{I find it unnatural that the length of $\bot$ is 0. Would it be a problem if we define the length of $\bot$ to be $1$ ?}\david{We need it to be zero so that it matches up with the depth of bounded bisimulations. Alternately, we can begin both the length and the bounded bisimulations at $1$, or add a $+1$ to the bisimulation lemmas.}
	\item $\length{\phi \odot \psi}= 1 + \length{\phi} + \length{\psi}$, with $\odot \in \lbrace \vee, \wedge,\imp, \R, \Until \rbrace$;	
	\item $\length{\odot \psi}= 1 + \length{\psi}$, with $\odot \in \lbrace \neg, \tnext, \ubox, \diam \rbrace$.
\end{itemize}
\noindent Broadly speaking, the length of a formula $\varphi$ corresponds to the number of connectives appearing in $\varphi$.

\subsection{Dynamic posets}

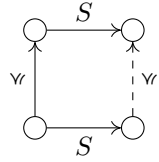
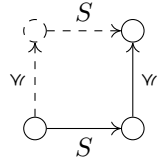
\begin{figure}
\begin{center}

\begin{subfigure}{.5\columnwidth}\centering
  \begin{tikzpicture}[x=1.3cm,y=1.3cm]
    \foreach \n/\x/\y in {bl/0/0, br/1/0, tl/0/1, tr/1/1}
      \node (\n) [not p model state] at (\x,\y) {};

    \path[solid]
      (bl) edge[->] node [below] {$S$} (br)
      (tl) edge[->] node [above] {$S$} (tr)
      (bl) edge[->] node [above,sloped] {$\peq$} (tl)
    ;
    \path[dashed]
      (br) edge[->] node [below,sloped] {$\peq$} (tr)
    ;
  \end{tikzpicture}
  \caption{Forward confluence}
\end{subfigure}
~
\begin{subfigure}{.5\columnwidth}\centering
  \begin{tikzpicture}[x=1.3cm,y=1.3cm]
    \foreach \n/\x/\y in {bl/0/0, br/1/0, tr/1/1}
      \node (\n) [not p model state] at (\x,\y) {};
    \node (tl) [not p model state, dashed] at (0,1) {};

    \path[solid]
      (bl) edge[->] node [below] {$S$} (br)
      (br) edge[->] node [below,sloped] {$\peq$} (tr)
    ;
    \path[dashed]
      (tl) edge[->] node [above] {$S$} (tr)
      (bl) edge[->] node [above,sloped] {$\peq$} (tl)
    ;
  \end{tikzpicture}
  \caption{Backward confluence}
\end{subfigure}

\end{center}

\caption{On a dynamic poset the above diagrams can always be completed if $S$ is forward or backward confluent, respectively. Posets with both properties are {\em persistent.}}\label{FigCO}
\end{figure}

Formulas of $\lang$ are interpreted over dynamic posets. A {\em dynamic poset} is a tuple $ D = (W,\peq,\sfun)$,
where 
$W$ is a non-empty set of states,
$\irel$ is a partial order,
and
$\sfun$ is a function from $W$ to $W$ satisfying the {\em forward confluence} condition that for all $ w, v \in W, $ if $ w \irel v $ then $ \msuccfct(w) \irel \msuccfct(v).$ An {\em intuitionistic dynamic model,} or simply {\em model,} is a tuple $\modelbase$ consisting of a dynamic poset equipped
with a valuation function $\deffun V   W {\parts\Var} $ that is {\em monotone} in the sense that
for all $ w, v \in W, $ if $ w \irel v $ then $ V(w) \subseteq V(v).$ In the standard way, we define $\msuccfct[^0](w) = w$ and,
for all $k \geq 0$, $\msuccfct[^{k+1}](w) = \msuccfct\left(\msuccfct[^{k}](w)\right)$. Then we define the satisfaction relation $\sat$ inductively by:

\noindent\begin{enumerate}[itemsep=0pt]
	\item $\model, w \sat p $  iff $p \in V(w) $;
	\item $\model, w \nsat \bot$;
	\item $\model, w \sat \varphi\wedge \psi$  iff $  \model, w \sat \varphi $ and $\model, w \sat \psi$;
	\item $\model, w \sat \varphi \vee \psi $  iff  $ \model, w \sat \varphi $ or $\model, w \sat \psi$;
	\item $\model, w \sat \tnext \varphi $   iff $ \model, \msuccfct(w) \sat \varphi $;
		\item  $\model, w \sat \varphi \imp \psi $ iff $\forall v \succcurlyeq w $, if $\model, v \sat \varphi$, then $\model, v \sat \psi $;
	\item $\model, w \sat \diam \varphi $  iff there exists $ k \ge 0$ such that $ \model, \msuccfct[^k](w) \sat \varphi$;
	\item $\model, w \sat \ubox \varphi $  iff for all $ k \ge 0 $ we have that $ \model, \msuccfct[^k](w) \sat \varphi$;
	\item  $\model, w \sat \varphi \Until \psi $ iff there exists $k \ge 0 $ such that $ \model, \msuccfct[^k](w) \sat \psi$ and $\forall i \in [0,k)$, $\model, \msuccfct[^i](w) \sat \varphi$;	
	\item 	$\model, w \sat \varphi \R \psi $ iff for all $k \ge 0$, either $\model, \msuccfct[^k](w) \sat \psi$ or $ \exists i \in [ 0 , k)$ such that $\model, \msuccfct[^i](w) \sat \varphi$.
\end{enumerate}
See Figure \ref{fig:iltl-frame} for illustration of the `$\sat$' relation.
Given a model $\modelbase$
and $w\in W$,
we write $\Sigma_\model(w)$ for the set $\ens{\psi \in \Sigma}{\model, w \sat \psi}$; the subscript `$\model$' is omitted when it is clear from the context.

\begin{figure}\centering
	\begin{tikzpicture}[node distance=2cm] 
  \node (a) [not p big model state] {$w$};
  \node (b) [not p big model state, above left of=a] {$x$};
  \node (c) [    p big model state, above right of=a] {$y$};
	
	\path[intuitionistic relation]
  (a) edge[] node {} (b)
	(a) edge[] node {} (c)
  ;
  \path[successor relation]
	(c) edge[bend left] node {} (a)
	(b) edge[bend left] node {} (c)
	(a) edge[loop below] node {} (a)
	;					
	\end{tikzpicture}				
	\caption{Example of an $\iltl$ model $ \model = (W,{\peq},S,V)$, where
    $\peq$ is the reflexive and transitive closure of the relation indicated by the solid arrows,
    $S$ is the relation indicated by the dashed arrows, and
    a black dot indicates that the variable $p$ is true, so that we only have $p\in V(y)$.
    Then, the reader may verify that $\model, x\sat \tnext p$ but $\model, x \nsat p$, while $\model, y \sat p$ but $\model, y \nsat \tnext p$.
    From this it follows that $\model, w \nsat (\tnext p\to p)\vee ( p \to \tnext p)$.}
	\label{fig:iltl-frame}
\end{figure}	

A formula $\varphi$ is {\em satisfiable over a class $\Omega$ of models} if
there is a model $\model\in \Omega$ and a world~$w$ so that $\model,w\sat\varphi$, and {\em valid over $\Omega$} if, for every world $w$ of every model $\model\in \Omega$ we have that $\model,w\sat\varphi$.
Satisfiability (resp. validity) over the class of all intuitionisitic dynamic models
is called {\em satisfiability} (resp. \emph{validity}) for
the {\em expanding domain intuitionisitic temporal logic}~$\iltl$.
We will justify this terminology in the next section. First, we remark that dynamic posets impose the minimal conditions on $\msuccfct$ and $\irel$ in order to preserve the monotonicity of truth of formulas, in the sense that if $\model, w \sat \varphi $ and $w\irel v$ then $\model, v \sat \varphi $. Below, we will use the notation $\llbracket\varphi\rrbracket=\{w\in W \mid \model,w\sat \varphi \}$.

\begin{proposition}\label{PropIntCond}
Let $\mathfrak D=( W,{\irel},\msuccfct)$, where $(W,{\irel})$ is a poset and $\msuccfct\colon W\to W$ is any function. Then, the following are equivalent:
\begin{enumerate}

\item\label{ItExpOne} $S$ is forward confluent;

\item\label{ItExpTwo} for every valuation $V$ on $\mathfrak D$ and every formula $\varphi$, truth of $\varphi$ is monotone with respect to $\irel$.

\end{enumerate}
\end{proposition}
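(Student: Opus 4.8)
The plan is to prove $\ref{ItExpOne}\Rightarrow\ref{ItExpTwo}$ by structural induction on $\varphi$, and $\ref{ItExpTwo}\Rightarrow\ref{ItExpOne}$ by contraposition, using a single formula as a counterexample.

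For $\ref{ItExpOne}\Rightarrow\ref{ItExpTwo}$, I would first isolate the auxiliary fact that forward confluence lifts to all iterates of $\msuccfct$: if $w\irel v$ then $\msuccfct[^k](w)\irel\msuccfct[^k](v)$ for every $k\ge 0$. This is an immediate induction on $k$, the successor step being a single application of forward confluence to the induction hypothesis. Fixing a monotone valuation $V$, I would then show by induction on the structure of $\varphi$ that $\llbracket\varphi\rrbracket$ is upward closed, i.e. that $w\irel v$ and $\model,w\sat\varphi$ imply $\model,v\sat\varphi$. The atomic case is exactly the monotonicity of $V$; the case of $\bot$ is vacuous; the cases of $\wedge$ and $\vee$ follow directly from the induction hypothesis; and the case of $\varphi\imp\psi$ uses only transitivity of $\irel$ (every $u\seq v$ also satisfies $u\seq w$), and no confluence at all. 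The temporal cases are where forward confluence enters: for $\tnext\varphi$, use $\msuccfct(w)\irel\msuccfct(v)$ and apply the induction hypothesis to $\varphi$; for $\diam\varphi$, $\ubox\varphi$, $\varphi\Until\psi$ and $\varphi\R\psi$, use the auxiliary fact that $\msuccfct[^k](w)\irel\msuccfct[^k](v)$ for all $k$ to transport along $\irel$, via the induction hypothesis, the witnessing index (for $\diam$ and the first conjunct of $\Until$), every index (for $\ubox$ and the first disjunct of $\R$), and the bounded ``defect'' index $i<k$ (in $\Until$ and $\R$) from $w$ to $v$.

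For $\ref{ItExpTwo}\Rightarrow\ref{ItExpOne}$ I would argue contrapositively. Suppose $\msuccfct$ is not forward confluent, so there are $w\irel v$ with $\msuccfct(w)\not\irel\msuccfct(v)$. Pick a variable $p$ and set $V(x)=\{p\}$ when $\msuccfct(w)\irel x$ and $V(x)=\emptyset$ otherwise; this $V$ is monotone because $\{x\mid\msuccfct(w)\irel x\}$ is upward closed by transitivity of $\irel$. Then $\model,\msuccfct(w)\sat p$, so $\model,w\sat\tnext p$, whereas $\msuccfct(w)\not\irel\msuccfct(v)$ gives $\model,\msuccfct(v)\nsat p$, so $\model,v\nsat\tnext p$. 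Since $w\irel v$, truth of $\tnext p$ is not monotone, which establishes the negation of \ref{ItExpTwo}.

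I do not expect a genuine obstacle here: the forward direction is a routine induction once the iterate lemma is factored out, and the converse is a single explicit model. The only spot requiring mild care is treating the binary operators $\Until$ and $\R$ uniformly, since there one must transport both the ``eventuality'' information and the bound on the ``defect'' index along $\irel$ at the same time.
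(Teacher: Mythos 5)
Your proposal is correct and matches the paper's proof essentially step for step: the same structural induction for the forward direction (with the iterate fact $\msuccfct[^k](w)\irel\msuccfct[^k](v)$, which the paper establishes inline inside the $\Until$ case), and the same contrapositive counterexample with the valuation $V(x)=\{p\}$ iff $\msuccfct(w)\irel x$ and the formula $\tnext p$. No differences worth noting.
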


\begin{proof}
That \eqref{ItExpOne} implies \eqref{ItExpTwo} follows by a standard structural induction on $\varphi$. The case where $\varphi\in \mathbb P$ follows from the condition on $V$ and most inductive steps are routine. Consider the case where $\varphi= \psi \Until \theta$, and suppose that $w\irel v$ and $w\in\llbracket \varphi \rrbracket$. Then there exists $k \in \mathbb N$ such that $\model, \msuccfct[^k](w) \sat \theta$ and for all $i\in [0,k)$, $\model, \msuccfct[^i](w) \sat \psi$. Since $S$ is confluent, an easy induction shows that, for all $i\in [0,k]$, $\msuccfct[^i](w) \irel \msuccfct[^i](v)$. Therefore, from the induction hypothesis we obtain that $\model, \msuccfct[^k](v) \sat \theta$ and for all $i\in [0,k)$, $\model, \msuccfct[^i](v) \sat \psi$. Other cases are either similar or easier.

Now we prove that \eqref{ItExpTwo} implies \eqref{ItExpOne} by contrapositive. Suppose that $(W,{\irel},{\msuccfct})$ is not forward-confluent, so that there are $w\irel v$ such that $\msuccfct(w)\not\irel \msuccfct (v)$.
Choose $p\in \mathbb P$ and define $V(u)=\{p\}$ if $\msuccfct(w)\irel u$, $V(u)=\varnothing$ otherwise.
It follows from the transitivity of $\peq$ that $V$ is monotone.
However, $p\not\in V(\msuccfct(v))$, from which it follows that $( D,V),w\sat \tnext p$ but $( D,V),v\nsat \tnext p$.
\end{proof}

Observe that satisfiability in propositional intuitionistic logic is equivalent to satisfiability in classical propositional logic. This is because, if $\varphi$ is classically satisfiable, it is trivially intuitionistically satisfiable in a one-world model; conversely, if $\varphi$ is intuitionistically satisfiable, it is satisfiable in a finite model, hence in a maximal world of that finite model, and the generated submodel of a maximal world is a classical model. Thus it may be surprising that the same is not the case for intuitionistic temporal logic:

\begin{proposition}
Any formula $\varphi$ of the temporal language that is classically satisfiable is satisfiable in a dynamic poset. However, there is a formula satisfiable on a dynamic poset that is not classically satisfiable.
\end{proposition}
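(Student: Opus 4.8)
The plan is to prove the two halves of the statement separately, in each case by producing an explicit model; there is no deep combinatorics here, the only delicate point being the choice of the witnessing formula.

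\textbf{Classical satisfiability implies satisfiability on a dynamic poset.} A classical model of the temporal language is a structure $(W,S,V)$ with $S\colon W\to W$ a function and $V$ a valuation, evaluated with the usual Tarskian clauses (implication material, $\tnext$ via $S$, and so on). I would observe that such a structure is literally the dynamic poset $(W,{=},S,V)$ whose partial order is the identity relation: forward confluence is vacuous for the identity, and monotonicity of $V$ is automatic, so this is a legitimate dynamic poset. Since $\succcurlyeq w = \{w\}$ for every $w$, an easy induction on formulas shows that intuitionistic satisfaction in $(W,{=},S,V)$ coincides with classical satisfaction in $(W,S,V)$ on every formula — the clause for $\imp$ collapses to material implication, and the clauses for $\tnext,\diam,\ubox,\Until,\R$ are already the classical ones. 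Hence any classically satisfiable $\varphi$ is satisfied in a dynamic poset. (The same works verbatim if `classically satisfiable' is read as `satisfiable over the natural numbers with the successor function'.)

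\textbf{A formula satisfiable on a dynamic poset but not classically.} I would take
\[
  \varphi\ \eqdef\ \neg\tnext p\ \wedge\ \neg\tnext\neg p .
\]
Classically $\neg\tnext\neg p$ is equivalent to $\tnext p$, so $\varphi$ is classically equivalent to $\neg\tnext p\wedge\tnext p$ and thus classically unsatisfiable. To satisfy $\varphi$ on a dynamic poset one uses the failure of backward confluence. Explicitly, take $W=\{w,t,v\}$ with $t\prec v$ and $w$ incomparable to both $t$ and $v$; put $S(w)=t$, $S(t)=t$, $S(v)=v$; and put $V(v)=\{p\}$, $V(w)=V(t)=\varnothing$. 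This is a dynamic poset: the only nontrivial order pair is $t\prec v$, for which $S(t)=t\prec v=S(v)$, so $S$ is forward confluent, and $V$ is monotone since $\varnothing\subseteq\{p\}$. Now $\succcurlyeq w=\{w\}$, so $\model,w\sat\neg\tnext p$ amounts to $\model,w\nsat\tnext p$, which holds because $p\notin V(t)=V(S(w))$; and $\model,w\sat\neg\tnext\neg p$ amounts to $\model,w\nsat\tnext\neg p$, i.e.\ to $\model,t\nsat\neg p$, which holds because $t\prec v$ and $\model,v\sat p$. Hence $\model,w\sat\varphi$.

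I expect the choice of $\varphi$ to be the only real difficulty, and the rest to be bookkeeping. No propositional formula can serve as the witness, since over dynamic posets a propositional formula is satisfiable iff it is classically satisfiable (as noted just before the proposition); the witness must therefore be genuinely temporal, and the natural candidate exploits that over expanding (non-persistent) posets $\tnext$ no longer commutes with negation, so that $\neg\tnext\neg p$ and $\neg\neg\tnext p$ come apart. Once $\varphi$ is fixed, checking the three-element model is routine. (As an aside, the same $\varphi$ is also unsatisfiable over persistent posets, where $\tnext\neg p\leftrightarrow\neg\tnext p$ makes $\varphi$ equivalent to $\neg\tnext p\wedge\neg\neg\tnext p$; so this example separates $\iltl$ from $\itlb$ as well, although we do not need this here.)
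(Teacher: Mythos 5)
Your proof is correct and follows essentially the same route as the paper: the first half uses the identity order to turn a classical model into a dynamic poset, and the second half uses the same witness $\neg\tnext p\wedge\neg\tnext\neg p$ together with a three-element model that is, up to renaming of worlds, exactly the paper's counterexample.
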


\proof
If $\varphi$ is satisfied on a classical {\sf LTL} model $\model$, then we may regard $\model$ as an intuitionistic model by letting $\peq$ be the identity.
On the other hand, consider the formula $\neg\tnext p\wedge \neg\tnext\neg p$ (recall that $\neg \theta$ is a shorthand for $\theta\to \bot$).
Classically, this formula is equivalent to $\neg\tnext p\wedge \tnext p$, and hence unsatisfiable.
Define a model $\modelbase$, where $W=\{w,v,u\}$, $x\peq  y$ if $x=y$ or $x=v$, $y=u$, $\sfun (w)=v$
and $\sfun (x)=x$ otherwise, $V(u)=\{ p \}$ and $V(v) = V(w) = \varnothing$ (see Figure \ref{FigIMLA}).
Then, one can check that $\model, w \sat \neg\tnext p\wedge \neg\tnext\neg p$.
\endproof

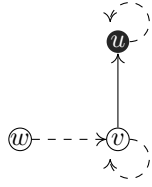
\begin{figure}
\begin{center}
\begin{tikzpicture}[x=1.3cm,y=1.3cm]

  \node (l) [not p model state] at (0,0) {$w$};
  \node (r) [not p model state] at (1,0) {$v$};
  \node (u) [    p model state] at (1,1) {$u$};

  \draw[successor relation] (l) edge (r);
  \draw[successor relation] (r.east) arc (90:-180:.2);
  \draw[successor relation] (u.east) arc (-90:180:.2);
  \draw[intuitionistic relation] (r) edge (u);

\end{tikzpicture}

\end{center}

\caption{A dynamic intuitionistic model. As in the previous figure,
  solid arrows represent the intuitionistic order $\peq$, dashed arrows the successor relation $S$,
  the black point satisfies the atom $p$ and no point satisfies any other atom.
  Note that $S$ is forward, but not backward, confluent.
The world $w$ satisfies $\neg\tnext p\wedge \neg\tnext\neg p$.
  }
\label{FigIMLA}
\end{figure}

Hence the decidability of the intuitionistic satisfiability problem is not a corollary of the
classical case.
In Section~\ref{SecDec}, we will prove that both the satisfiability and the validity problems are
decidable.
We will prove this by showing that $\iltl$ has the effective finite model property: recall that a logic $\Lambda$ has the {\em effective finite model property} for a class of models $\Omega$ if there is a computable function $f\colon \mathbb N \to \mathbb N$ such that given a formula $\varphi$, we have that $\varphi$ is satisfiable (falsifiable) on $\Omega$ if and only if there is $\model \in \Omega$ such that $\varphi$ is satisfied (falsified) on $\model$ and whose domain has at most $f(|\varphi|)$ elements.

\subsection{Some valid and non-valid $\iltl$ formulas}

%\philippe{A general remark about [this] part of the paper is that the notions of $ITL^{e}$-model and $ITL^{ht}$-model have never been officially introduced.}\david{I have now introduced the notation $\itlht$. While it is true that $\iltl$- and $\itlht$-models are not officially introduced, we do define these logics as the logics of their classes of models, so it is hard to imagine that the reader will be confused (?).}

In this section we present some examples of valid formulas that will be useful throughout the text. We begin by focusing on formulas of $\lang_{\diam\ubox}$.

\begin{proposition}\label{prop:valid:iltl}
The following formulas are $\iltl$-valid:
\begin{multicols}{2}
\begin{enumerate}[itemsep=0pt]
\item $\tnext\bot\leftrightarrow \bot$
\item $\tnext \left( \varphi \wedge \psi \right) \leftrightarrow \left(\tnext \varphi \wedge\tnext \psi\right)$
\item $\tnext \left( \varphi \vee \psi \right) \leftrightarrow \left(\tnext \varphi \vee\tnext \psi\right)$
\item\label{ItValidImp} $\tnext\left( \varphi \rightarrow \psi \right) \rightarrow \left(\tnext\varphi \rightarrow \tnext\psi\right)$
\item $\tnext \ubox \varphi \leftrightarrow \ubox \tnext \varphi$
\item $\tnext \diam \varphi \leftrightarrow \diam \tnext \varphi$

%\item $\ubox \left( \varphi \rightarrow \psi \right) \rightarrow \left(\ubox \varphi \rightarrow \ubox \psi\right)$;
%\item $\ubox \left( \varphi \rightarrow \psi \right) \rightarrow \left(\diam \varphi \rightarrow \diam \psi\right)$;
%\item $\diam \left( \varphi \vee \psi \right) \leftrightarrow \left(\diam \varphi \vee\diam \psi\right)$;
%\item $\ubox \varphi \leftrightarrow \varphi \wedge \tnext \ubox \varphi$
%\item $\varphi \vee \tnext \diam \varphi \leftrightarrow \diam \varphi$
%\item\label{ax:ind:1} $\ubox \left( \varphi \rightarrow \tnext \varphi \right) \rightarrow \left( \varphi \rightarrow \ubox \varphi \right)$
%\item\label{ax:ind:2} $ \ubox \left(\tnext \varphi \imp \varphi \right) \rightarrow  \left(\diam \varphi \rightarrow \varphi\right)$
\end{enumerate}
\end{multicols}
\end{proposition}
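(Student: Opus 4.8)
The plan is to verify each of the six validities directly from the semantic clauses, pushing the semantics through the $\tnext$ operator using the fact that $\tnext\varphi$ at $w$ simply means $\varphi$ at $\sfun(w)$. Most items are immediate; the only one requiring genuine care is item~\eqref{ItValidImp}, where the intuitionistic semantics of implication prevents the converse from holding and forces us to be attentive to the universal quantification over $\peq$-successors.

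For items 1--3, I would argue as follows. For $\tnext\bot\leftrightarrow\bot$, note $\model,w\sat\tnext\bot$ iff $\model,\sfun(w)\sat\bot$, which never holds, and likewise $\model,w\nsat\bot$; so both sides are everywhere false and the biconditional is valid. For $\tnext(\varphi\wedge\psi)$, we have $\model,w\sat\tnext(\varphi\wedge\psi)$ iff $\model,\sfun(w)\sat\varphi\wedge\psi$ iff $\model,\sfun(w)\sat\varphi$ and $\model,\sfun(w)\sat\psi$ iff $\model,w\sat\tnext\varphi$ and $\model,w\sat\tnext\psi$; the $\vee$ case is identical with ``and'' replaced by ``or''. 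These are pure ``commute the clause through $\sfun$'' arguments and need no confluence.

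For item~\eqref{ItValidImp}, suppose $\model,w\sat\tnext(\varphi\to\psi)$ and $\model,w\sat\tnext\varphi$; I must show $\model,w\sat\tnext\psi$, i.e. $\model,\sfun(w)\sat\psi$. From the hypotheses, $\model,\sfun(w)\sat\varphi\to\psi$ and $\model,\sfun(w)\sat\varphi$; taking $v=\sfun(w)$ in the clause for $\to$ (using reflexivity of $\peq$) gives $\model,\sfun(w)\sat\psi$, as required. I would then remark that only the conditional, not the biconditional, is asserted: by Proposition~\ref{PropIntCond}-style reasoning one cannot in general pull an implication back through $\tnext$, and in fact the model of Figure~\ref{fig:iltl-frame} already refutes $(\tnext p\to p)\vee(p\to\tnext p)$, which is the kind of phenomenon that blocks the converse here. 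For items 5 and 6, I would unfold both sides using the definition $\sfun[^{k+1}](w)=\sfun(\sfun[^k](w))=\sfun[^k](\sfun(w))$: $\model,w\sat\tnext\ubox\varphi$ iff for all $k\ge 0$, $\model,\sfun[^k](\sfun(w))\sat\varphi$, which by the displayed identity is iff for all $k\ge 0$, $\model,\sfun[^{k+1}](w)\sat\varphi$, i.e. iff $\model,\sfun(w)\sat\varphi$ and for all $k\ge 1$, $\model,\sfun[^k](w)\sat\varphi$ — but since $\tnext\ubox$ ranges over exactly the states $\sfun(w),\sfun[^2](w),\dots$ and $\ubox\tnext$ at $w$ ranges over $\sfun[^0](w),\sfun[^1](w),\dots$ each shifted by $\tnext$, the two sets coincide, giving $\ubox\tnext\varphi$ at $w$. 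The $\diam$ case is the same argument with the quantifier changed from universal to existential.

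The main obstacle is essentially bookkeeping rather than depth: getting the index shift in items 5--6 exactly right (the identity $\sfun\circ\sfun[^k]=\sfun[^k]\circ\sfun$, which follows by a trivial induction), and being careful in item~\eqref{ItValidImp} to claim only the one-directional implication and ideally to note explicitly why the converse fails. None of the six items needs forward confluence, since no monotonicity argument is invoked; they are all local computations at the points $\sfun[^k](w)$.
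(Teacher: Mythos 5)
Your items 1--3 and 5--6 are fine, and they coincide with what the paper leaves to the reader. The problem is item~\eqref{ItValidImp}, the only item the paper actually proves, and your treatment of it has a genuine gap. You argue pointwise: ``if $\model,w\sat\tnext(\varphi\to\psi)$ and $\model,w\sat\tnext\varphi$, then $\model,w\sat\tnext\psi$.'' But validity of $\tnext(\varphi\to\psi)\to(\tnext\varphi\to\tnext\psi)$ requires, at each world $w$ satisfying the antecedent, that $\model,w\sat\tnext\varphi\to\tnext\psi$, and by the intuitionistic clause for $\imp$ this quantifies over \emph{all} $v\succcurlyeq w$: for every such $v$ with $\model,v\sat\tnext\varphi$ one must show $\model,v\sat\tnext\psi$. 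Your pointwise statement only covers $v=w$; to extend it to arbitrary $v\succcurlyeq w$ you would need the antecedent $\tnext(\varphi\to\psi)$ to persist from $w$ up to $v$, i.e.\ monotonicity of truth of $\tnext$-formulas, which is exactly Proposition~\ref{PropIntCond} and rests on forward confluence. This directly contradicts your closing claim that ``none of the six items needs forward confluence'': item~\eqref{ItValidImp} is in fact \emph{not} valid without it. A four-point frame with $w\peq v$, $S(w)=a$, $S(v)=b$, $a\not\peq b$, $S(a)=a$, $S(b)=b$, $V(b)=\{p\}$ and $V=\varnothing$ elsewhere satisfies $\tnext(p\to q)$ at $w$ (vacuously at $a$) but refutes $\tnext p\to\tnext q$ at $w$ (witness $v$).

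The paper's proof handles the quantification head-on and shows where confluence enters: take an arbitrary $v\succcurlyeq w$ with $\model,v\sat\tnext\varphi$; forward confluence gives $S(w)\peq S(v)$, and then $\model,S(w)\sat\varphi\to\psi$ applied at $S(v)\succcurlyeq S(w)$ yields $\model,S(v)\sat\psi$, hence $\model,v\sat\tnext\psi$. Your argument can be repaired either by arguing this way, or by keeping your pointwise computation and explicitly invoking Proposition~\ref{PropIntCond} to propagate $\tnext(\varphi\to\psi)$ from $w$ to $v$ --- but in either case the appeal to forward confluence (directly or via monotonicity) is unavoidable, and your explicit denial of it is the conceptual error to fix.
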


\begin{proof}
We prove that \ref{ItValidImp} holds and leave other items to the reader. Let $\modelbase$ be any dynamic model and $w \in W$ be such that $\model, w \models \tnext\left( \varphi \rightarrow \psi \right)$. Let $v \seq w$ be such that $\model, v \models \tnext \varphi$. Then, $\model, S(v) \models \varphi$. But $S(w) \peq S(v)$ and $\model, S(w) \models \varphi \to \psi$, so that $\model, S(v) \models \psi$ and $\model, v \models \tnext \psi$. Since $v\seq w$ was arbitrary, $\model, w \models \tnext \varphi \to \tnext \psi$.
\end{proof}

%\begin{proof}
%Let us consider~\eqref{ax:ind:1} and~\eqref{ax:ind:2}. For \eqref{ax:ind:1}, let $\model=(W,{\irel},{\msuccfct})$ be any $\iltl$ model and $w\in W$ be such that $\model, w \sat \ubox \left( \varphi \imp \tnext \varphi\right)$. Let $v\succcurlyeq w$ be arbitrary and assume that $\model, v \sat \varphi$. Then, by induction on $i$ we obtain that $\msuccfct[^i](w)\irel \msuccfct[^i](v)$ for all $i$; since $\model, \msuccfct[^i](w) \sat ¿ \varphi \imp \tnext \varphi$ for all $i$, it follows that $\model, \msuccfct[^i](v) \sat ¿ \varphi \imp \tnext \varphi$ for all $i$ as well. Hence an easy induction shows that $\model, \msuccfct[^i](v) \sat ¿ \varphi $ for all $i$, which means that $\model,v\sat \ubox\varphi$. Since $w$ was arbitrary, we conclude that the formula \eqref{ax:ind:1} is valid.

%For \eqref{ax:ind:2}, let $\model$ be as above and $w\in W$ be such that $\model, w \sat \ubox \left( \diam \varphi \imp \varphi\right)$. Let $v\succcurlyeq w$ be such that $\model, v \sat \diam \varphi$, and let $n$ be least so that $\model, S^n(v) \sat \varphi$. If $n>0$, from $\model, S^{n-1}(v) \sat \diam\varphi \to \varphi$ we obtain $\model, S^{n-1} \sat \varphi$, contradicting the minimality of $n$. Thus $n=0$, and $\model, v \sat \varphi$. Since $v\seq w$ was arbitrary, \eqref{ax:ind:2} holds on $w$.

%The proofs for the rest of formulas are left to the reader.
%\end{proof}

Note that, unlike the other items, \ref{ItValidImp} is not a biconditional, and indeed the converse is not valid over the class of all dynamic posets (see Proposition \ref{prop:nvalid:iltl}).
\toremove{
\begin{proof}
%\begin{displaymath}
%\begin{gathered}\xymatrix@C=0.8cm{ && \stackrel{u}{\lbrace p \rbrace} \ar@(ur,dr)^{\msuccfct} &&\\
%	\stackrel{w}{\emptyset}\ar[rr]^{\msuccfct}&&  \stackrel{v}{\emptyset }\ar@{.>}[u]^{\irel}\ar@(ur,dr)^{\msuccfct}&& }\end{gathered}
%\end{displaymath}
Let $\lbrace p,q \rbrace$ be a set of propositional variables and let us consider the $\iltl$ model $\model=\left(W,\irel,\msuccfct,V\right)$ defined by
\begin{enumerate}[itemsep=0pt,label=\arabic*)]
	\item $W = \lbrace w,v,u\rbrace$;
	\item $\msuccfct(w) = v$, $\msuccfct(v) = v$ and $\msuccfct(u) = u$;
	\item $v\irel u$;
	\item $V(p) = \lbrace u \rbrace$, and
	\item $V(q) = \varnothing$.
\end{enumerate}	
\noindent Clearly, $\model, u \not \sat p \imp q$, so $\model, v \not \sat p \imp q$. By definition, $\model, w \not \sat \tnext \left(p \imp q\right)$ and $\model, w \not \sat \ubox \left(p \imp q\right)$; however, it can easily be checked that $\model, w  \sat \tnext p \imp \tnext q$ and $\model, w  \sat \diam p \imp \ubox q$, so $\model, w  \not \sat \left(\tnext p \rightarrow \tnext q \right) \rightarrow \tnext \left( p \rightarrow q \right)$ and $\model, w  \not \sat \left( \diam p \rightarrow \ubox q \right) \rightarrow \ubox \left(p \rightarrow q\right)$.

Let us check their validity over the class of persistent frames. For \eqref{ItFSOne}, let $\model = (W,{\peq},S,V)$ be an $\itlb$ model and $w$ a world of $\model$ such that $\model, w \sat \tnext \varphi \imp \tnext \psi$. Suppose that $v\seq \msuccfct(w)$ satisfies $\model, v \sat \varphi$. By backward confluence, there exists $u\seq w$ such that $v = \msuccfct(u)$, so that $\model,u \sat \tnext \varphi$ and thus $\model,u \sat \tnext \psi$. But this means that $\model, v \sat \psi$, and since $v \seq S(w)$ was arbitrary, $\model, S(w) \sat  \varphi \imp  \psi$, i.e.~$\model, w \sat \tnext (\varphi \imp  \psi ) $.

Similarly, for \eqref{ItFSTwo} let us assume that $\model  = (W,{\peq},S,V) $ is an $\itlb$ model and $w$ a world of $\model$ such that $\model, w \sat \diam \varphi \imp \ubox \psi$. Consider arbitrary $k\in \mathbb N$, and suppose that $v\seq S^k(w)$ is such that $\model,v \sat \varphi$. Then, it is readily checked that the composition of backward confluent functions is backward confluent, so that in particular $S^k$ is backward confluent. This means that there is $u\seq w$ such that $S^k(u) = v$. But then, $\model,u\sat \diam \varphi$, hence $\model,u\sat \ubox \psi$, and $\model,v\sat \psi$. It follows that $\model,S^k(w) \sat \varphi\to \psi$, and since $k$ was arbitrary, $\model,w\sat \ubox(\varphi\imp \psi)$.
\end{proof}

We make a special mention of the schema $\ubox \left(\ubox \varphi \rightarrow \psi \right) \vee \ubox \left(\ubox \psi \rightarrow \varphi \right)$, which characterises the class of \emph{weakly connected frames}~\cite{G92} in classical modal logic. We say that a frame $\left(W,R,V\right)$ is weakly  connected iff it satisfies the following first-order property: for every $x,y,z \in W$, if $ x \mathrel R y$ and $x\mathrel R z$, then either $y\mathrel R z$, $y = z$, or $z\mathrel R y$.

\begin{proposition}
The axiom schema $\ubox \left(\ubox \varphi \rightarrow \psi \right) \vee \ubox \left(\ubox \psi \rightarrow \varphi \right)$ is not $\itlht$-valid.
\end{proposition}
\begin{proof}	
	Let us consider the set of propositional variables $\lbrace p,q\rbrace$ and the $\itlht$ model $\model= \left(W,\irel,S,V\right)$ defined as:
	\begin{enumerate}[label=\arabic*)]
		\item $W = \lbrace w,t,u,v\rbrace$;
		\item $\msuccfct(w) = v$, $\msuccfct(v) = v$, $\msuccfct(t)=u$ and $\msuccfct(u)=u$;
		\item $v \irel u$ and $w \irel t$;
		\item $V(p)=\lbrace v,u\rbrace$ and $V(q) = \lbrace t,u \rbrace$.
	\end{enumerate}	
%	\begin{displaymath}
%	\begin{gathered}\xymatrix@R=0.3cm{\stackrel{t}{\lbrace q \rbrace}\ar[rr]^{\msuccfct} && \stackrel{u}{\lbrace p,q \rbrace} \ar@(ur,dr)^{\msuccfct} &&\\
%		\stackrel{w}{\emptyset}\ar@{.>}[u]^{\irel}\ar[rr]^{\msuccfct}&&  \stackrel{v}{\lbrace p \rbrace}\ar@{.>}[u]^{\irel}\ar@(ur,dr)^{\msuccfct}&& }\end{gathered}
%	\end{displaymath}
%	
\noindent The reader can check that $\model, v \not \sat \ubox p \imp q $ and $\model, t \not \sat  \ubox q \imp p $. Consequently $\model, w \not \sat \ubox \left( \ubox p \imp q \right)\vee \ubox \left( \ubox q \imp p \right)$.
\end{proof}
}
Next we show that $\diam \varphi$ (resp. $\ubox \varphi$) can be defined in terms of $\Until$ (resp. $\R$) and the $\sf LTL$ axioms involving $\Until$ and $\R$ are also valid in our setting:

\begin{proposition}\label{PropUValid}
The following formulas are $\iltl$-valid:
\begin{multicols}{2}
\begin{enumerate}[itemsep=0pt]
	\item\label{ItUOne} \mbox{$(\varphi \Until \psi) \leftrightarrow \psi \vee \left( \varphi \wedge  \tnext \left(\varphi \Until \psi \right)\right)$}
	\item\label{ItUTwo} \mbox{$(\varphi \R \psi) \leftrightarrow \psi \wedge \left( \varphi \vee  \tnext \left(\varphi \R \psi \right)\right)$}
	\item\label{ItUThree} \mbox{$(\varphi \Until \psi) \imp \diam \psi$}
	\item\label{ItUFour} \mbox{$\ubox \psi \imp (\varphi \R \psi )$}
	\item\label{ItUFive} \mbox{$\diam \varphi \leftrightarrow (\top \Until \varphi)$}
	\item\label{ItUSix} \mbox{$\ubox \varphi \leftrightarrow (\bot \R \varphi)$}
\item 	\label{ItUSeven} \mbox{$\tnext(\varphi \Until \psi)\leftrightarrow ({\tnext\varphi}) \Until ({\tnext\psi})$}
 \item 	\label{ItUEight} \mbox{$\tnext(\varphi \R \psi)\leftrightarrow ({\tnext\varphi})\R({\tnext\psi})$}
 
 \item\label{ItUNine} $\varphi \Until \psi \leftrightarrow (\psi \R (\varphi \vee \psi ))\wedge \diam \psi$
 
  \item\label{ItUTen} $\varphi \R \psi \leftrightarrow ( \psi \Until ( \varphi \wedge \psi ))\vee \ubox \psi $
	
\end{enumerate}
\end{multicols}
\end{proposition}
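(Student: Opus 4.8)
The plan is to handle all ten items by one uniform two-step method. First I would record a general fact: for formulas $\alpha,\beta$, the implication $\alpha\imp\beta$ is $\iltl$-valid if and only if $\llbracket\alpha\rrbracket\subseteq\llbracket\beta\rrbracket$ holds in every model, and consequently $\alpha\leftrightarrow\beta$ is valid if and only if $\llbracket\alpha\rrbracket=\llbracket\beta\rrbracket$ in every model. The non-trivial direction uses reflexivity of $\peq$: if $\model,w\sat\alpha\imp\beta$ and $\model,w\sat\alpha$ then, taking $v=w$ in the clause for $\imp$, $\model,w\sat\beta$; the converse is immediate from that clause. With this in hand, every item reduces to a purely \emph{local} comparison of truth sets at a fixed world $w$, and in such comparisons the order $\peq$ never intervenes: we only ever manipulate assertions of the shape $\model,\msuccfct[^i](w)\sat\chi$ as opaque facts. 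Hence the reasoning is essentially that for classical $\sf LTL$; in particular Proposition~\ref{PropIntCond} (monotonicity) is not needed anywhere.

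Items \ref{ItUThree}, \ref{ItUFour}, \ref{ItUFive} and \ref{ItUSix} are then immediate: a witness $k$ for $\varphi\Until\psi$ at $w$ is already a witness for $\diam\psi$; if $\model,w\sat\ubox\psi$ then the first disjunct of the clause for $\R$ holds at every index, so $\varphi\R\psi$; and $\top\Until\varphi$, $\bot\R\varphi$ collapse to $\diam\varphi$, $\ubox\varphi$ since $\top$ holds at every world and $\bot$ at none. Items \ref{ItUOne} and \ref{ItUTwo} are the usual fixpoint unfoldings, obtained by checking whether the index $0$ already works and otherwise shifting indices by one via $\msuccfct[^{k+1}](w)=\msuccfct[^{k}](\msuccfct(w))$. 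Items \ref{ItUSeven} and \ref{ItUEight} follow from the identity $\msuccfct[^{k}](\msuccfct(w))=\msuccfct[^{k+1}](w)=\msuccfct(\msuccfct[^k](w))$, which turns the $\Until$/$\R$ clause evaluated at $\msuccfct(w)$ into the same clause at $w$ with $\tnext$ pushed under the operator. All of these are short.

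The only items requiring a genuine argument are \ref{ItUNine} and \ref{ItUTen}, where the key device is to pass to a \emph{least} index, which is legitimate because the temporal clauses quantify over $\Nat$ at the meta-level. For \ref{ItUNine}: if $\model,w\sat\varphi\Until\psi$ with witness $k$, then $\diam\psi$ is clear, and $\psi\R(\varphi\vee\psi)$ holds because for $m<k$ we get $\varphi$ at $\msuccfct[^m](w)$, for $m=k$ we get $\psi$, and for $m>k$ the index $k$ lies in $[0,m)$ and witnesses $\psi$. Conversely, from $\diam\psi$ pick the least $k_0$ with $\model,\msuccfct[^{k_0}](w)\sat\psi$; then for each $i<k_0$ the clause $\psi\R(\varphi\vee\psi)$ read at index $i$ must deliver $\varphi\vee\psi$ at $\msuccfct[^i](w)$ (the alternative would put $\psi$ strictly before $k_0$), and it cannot be $\psi$ (again by minimality), so it is $\varphi$; hence $k_0$ witnesses $\varphi\Until\psi$. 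Item \ref{ItUTen} is dual: from $\varphi\R\psi$ at $w$, if $\psi$ holds all along the $\msuccfct$-orbit we are in the $\ubox\psi$ disjunct, and otherwise we take the least $k_0$ with $\model,\msuccfct[^{k_0}](w)\nsat\psi$ and read the $\R$-clause at $k_0$ to obtain some $i<k_0$ with $\varphi$ at $\msuccfct[^i](w)$; since $i<k_0$, $\psi$ holds at $\msuccfct[^i](w)$ and at every earlier index, so $i$ witnesses $\psi\Until(\varphi\wedge\psi)$. The converse splits on the two disjuncts exactly as for items \ref{ItUFour} and \ref{ItUOne}. The main obstacle, such as it is, is purely bookkeeping: keeping the index shifts and the minimality arguments in \ref{ItUNine}--\ref{ItUTen} consistent; no idea beyond classical temporal reasoning is involved.
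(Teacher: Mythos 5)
Your proposal is correct and follows essentially the same route as the paper: a pointwise unfolding of the semantic clauses at a fixed world, with all index manipulations carried out classically at the meta-level (the paper proves item 2 by contrapositive rather than directly, which is an immaterial difference). In fact you supply more than the paper does, since it only details items 1 and 2 and leaves the rest, including the least-witness arguments for items 9 and 10, to the reader; your explicit reduction of validity of an implication to truth-set inclusion via reflexivity of $\peq$ is exactly the fact the paper uses implicitly.
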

\begin{proof}
We consider some cases below. For \eqref{ItUOne}, from left to right, let us assume that $\model, w \sat \varphi \Until \psi$. Therefore there exists $k \ge 0$ s.t. $\model, \msuccfct[^k](w) \sat \psi$ and for all $j$ satisfying $0 \le j < k$, $\model, \msuccfct[^j](w) \sat \varphi$. If $k = 0$  then $\model, w\sat \psi$ while, if $k > 0$ it follows that $\model, w \sat \varphi$ and $\model, \msuccfct(w) \sat \varphi \Until \psi$. Therefore $\model, w \sat \psi \vee \left( \varphi \wedge \tnext \varphi\Until\psi\right)$. From right to left, if $\model, w \sat \psi$ then $\model, w \sat \varphi \Until \psi$ by definition (with $k=0$). If $\model, w \sat \varphi \wedge \tnext \varphi \Until \psi$ then $\model, w \sat \varphi$ and $\model,\msuccfct(w)\sat \varphi \Until \psi$ so, due to the semantics, we conclude that $\model, w \sat \varphi \Until \psi$ (with some $k\geq 1$). In any case, $\model, w \sat \varphi \Until \psi$.		

For \eqref{ItUTwo}, we work by contrapositive. From right to left, let us assume that $\model, w \not \sat \varphi \R \psi$. Therefore there exists $k \ge 0$ s.t. $\model, \msuccfct[^k](w) \not\sat \psi$ and for all $j$ satisfying $0 \le j < k$, $\model, \msuccfct[^j](w) \not\sat \varphi$. If $k = 0$  then $\model, w\not \sat \psi$ while, if $k > 0$ it follows that $\model, w \not \sat \varphi$ and $\model, \msuccfct(w) \not \sat \varphi \R \psi$. In any case, $\model, w \not \sat \psi \wedge \left( \varphi \vee \tnext \varphi\R\psi\right)$. From left to right, if $\model, w \not \sat \psi$ then $\model, w \not \sat \varphi \R \psi$ by definition. If $\model, w \not \sat \varphi \vee \tnext \varphi \R \psi$ then $\model, w \not \sat \varphi$ and $\model,\msuccfct(w)\not \sat \varphi \Until \psi$ so, due to the semantics of $\R$, we conclude that $\model, w \not \sat \varphi \R \psi$. In any case, $\model, w \not \sat \varphi \R \psi$.

%For \ref{ItUThree}, assume by contradiction that the formula is not valid. Then there exists a $\iltl$ model $\model$ and a Kripke world $w$ s.t. $\model, w \sat \varphi \Until \psi$ but $\model, w \not \sat \diam\psi$. From the former assumption it follows that $\model, \msuccfct[^k](w) \sat \psi$ for some $k\ge 0$, which contradicts the latter assumption.
		
%For \ref{ItUFour}, assume by contradiction that the formula is not valid. Then there exists a $\iltl$ model $\model$ and a Kripke world $w$ s.t. $\model, w \sat \ubox \psi$ but $\model, w \not \sat \varphi \R \psi$. From the latter assumption it follows that $\model, \msuccfct[^k](w) \not \sat \psi$ for some $k\ge 0$, which contradicts the former assumption.
								
%For \ref{ItUFive}, from left to right, if $\model, w \sat \diam \varphi$ then $\model,\msuccfct[^k](w) \sat \varphi $ for some $k \ge 0$. Since $\model,\msuccfct[^j](w) \sat \top $ for all $j$ satisfying $0 \le j < k$ it follows that $\model, w \sat \top \Until \varphi$. The other direction is straightforward.

%Finally, for \ref{ItUSix}, from left to right, assume by contradiction that  $\model, w \not \sat \bot \R\varphi$ which means that $\model,\msuccfct[^k](w) \not \sat \varphi $ for some $k \ge 0$, which contradicts the asssumption $\model, w \sat \ubox \varphi$. The converse direction is also straightforward.	

The remaining items are left to the reader.
\end{proof}

With these equivalences in mind, we can simplify the syntax of the full language $\lang$.

\begin{proposition}\label{propExpEquiv}
The languages $\lang_{{\diam}{ \R}}$ and $\lang_{{\ubox}{ \Until }}$ are expressively equivalent to $\lang$ over the class of dynamic posets.
\end{proposition}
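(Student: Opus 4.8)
The plan is to build, for each of the two target fragments, a translation of the full language $\lang$ into that fragment that is truth-preserving over every dynamic poset; the reverse inclusions $\lang_{\diam\R}\subseteq\lang$ and $\lang_{\ubox\Until}\subseteq\lang$ are trivial. The engine of the argument is Proposition~\ref{PropUValid}: items \eqref{ItUSix} and \eqref{ItUNine} let us eliminate $\ubox$ and $\Until$ in favour of $\diam$ and $\R$, while items \eqref{ItUFive} and \eqref{ItUTen} eliminate $\diam$ and $\R$ in favour of $\ubox$ and $\Until$. Since all four are $\iltl$-valid, i.e.\ valid over the class of all dynamic models, they are exactly what is needed here.

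Concretely, I would first define a map $\tau\colon\lang\to\lang_{\diam\R}$ by recursion on formula structure, letting $\tau$ commute with propositional variables, $\bot$, all Boolean connectives, $\tnext$, $\diam$ and $\R$, and setting $\tau(\ubox\varphi)\eqdef\bot\R\tau(\varphi)$ and $\tau(\varphi\Until\psi)\eqdef\bigl(\tau(\psi)\R(\tau(\varphi)\vee\tau(\psi))\bigr)\wedge\diam\tau(\psi)$. A straightforward induction shows that $\tau(\varphi)\in\lang_{\diam\R}$ for every $\varphi$: the two new clauses build their output from $\bot,\R,\wedge,\vee,\diam$ applied to translations of proper subformulas, which by the induction hypothesis are already $\ubox$-free and $\Until$-free. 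Dually, define $\tau'\colon\lang\to\lang_{\ubox\Until}$ commuting with variables, $\bot$, the Booleans, $\tnext$, $\ubox$ and $\Until$, and setting $\tau'(\diam\varphi)\eqdef\top\Until\tau'(\varphi)$ and $\tau'(\varphi\R\psi)\eqdef\bigl(\tau'(\psi)\Until(\tau'(\varphi)\wedge\tau'(\psi))\bigr)\vee\ubox\tau'(\psi)$, where $\top\eqdef\bot\imp\bot$; again $\tau'(\varphi)\in\lang_{\ubox\Until}$.

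The main step is the claim that, for every dynamic model $\model$, every world $w$ of $\model$ and every $\varphi\in\lang$, we have $\model,w\sat\varphi$ iff $\model,w\sat\tau(\varphi)$ (and likewise with $\tau'$ in place of $\tau$). I would prove this by induction on $\varphi$, stating the induction hypothesis with the world universally quantified so that it can be applied at the iterates $\msuccfct[^k](w)$. The cases of the Boolean connectives, of $\tnext$, and of the modalities left unchanged by the translation are immediate, since their semantic clauses refer only to truth of the immediate subformulas at $w$ or at some $\msuccfct[^k](w)$, to which the hypothesis applies. For $\ubox$: by the hypothesis at every $\msuccfct[^k](w)$, $\model,w\sat\ubox\varphi$ iff $\model,w\sat\ubox\tau(\varphi)$; by the validity of item \eqref{ItUSix} the latter is equivalent to $\model,w\sat\bot\R\tau(\varphi)$, which is $\model,w\sat\tau(\ubox\varphi)$ by definition of $\tau$. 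For $\Until$: the hypothesis gives $\model,w\sat\varphi\Until\psi$ iff $\model,w\sat\tau(\varphi)\Until\tau(\psi)$; by item \eqref{ItUNine} the latter is equivalent to $\model,w\sat\bigl(\tau(\psi)\R(\tau(\varphi)\vee\tau(\psi))\bigr)\wedge\diam\tau(\psi)$, i.e.\ to $\model,w\sat\tau(\varphi\Until\psi)$. The argument for $\tau'$ is symmetric, using items \eqref{ItUFive} and \eqref{ItUTen} in the $\diam$ and $\R$ cases. Letting $w$ range over all worlds of all dynamic posets gives that $\varphi\leftrightarrow\tau(\varphi)$ and $\varphi\leftrightarrow\tau'(\varphi)$ are valid over dynamic posets, which is precisely the claimed expressive equivalence.

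I do not anticipate a real obstacle: modulo Proposition~\ref{PropUValid}, the statement is a routine induction. The two points that merit attention are (i) formulating the induction hypothesis so it can be invoked at successor worlds --- without this the $\tnext$, $\diam$, $\ubox$, $\Until$, $\R$ cases do not go through --- and (ii) verifying, as indicated above, that the recursive clauses for $\tau$ and $\tau'$ genuinely stay inside the target fragments, i.e.\ that rewriting away $\ubox,\Until$ (resp.\ $\diam,\R$) via the chosen equivalences does not silently reintroduce them. Equivalently, one could phrase the whole argument via a general ``replacement of valid equivalents'' lemma and then rewrite innermost occurrences first; this amounts to the same induction.
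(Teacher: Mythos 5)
Your proposal is correct and follows essentially the same route as the paper: it uses exactly the validities of Proposition~\ref{PropUValid} (items \eqref{ItUFive}, \eqref{ItUSix}, \eqref{ItUNine}, \eqref{ItUTen}) to rewrite away $\ubox,\Until$ (resp.\ $\diam,\R$), the paper merely leaving implicit the recursive translation and the induction on formula structure that you spell out.
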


\begin{proof}
From the validities $\ubox \varphi \leftrightarrow \bot \R \varphi$ and $\varphi \Until \psi \leftrightarrow (\psi \R (\varphi \vee \psi ))\wedge \diam \psi $ we see that any $\varphi \in \lang$ is equivalent to some $\varphi ' \in \lang_{{\diam }{\R}}$. Similarly, from $\diam\varphi \leftrightarrow \top \Until \varphi$ and $\varphi \R \psi \leftrightarrow ( \psi \Until ( \varphi \wedge \psi ))\vee \ubox \psi $ we see that $\lang_{{\ubox }{\Until}}$ is expressively equivalent to $\lang$.
\end{proof}

Nevertheless, we will later show that both $\lang_{\Until}$ and $\lang_{\R}$ are strictly less expressive than the full language, in contrast to the classical case.

\section{The expanding model property} \label{SecExpPer}

As mentioned in the introduction, the logic $\iltl$ is closely related to {\em expanding products} of modal logics \cite{pml}.
In this subsection,
we introduce stratified and expanding frames,
and show that satisfiability and validity on arbitrary models is equivalent to satisfiability and validity on expanding models.
To do this,
it is convenient to represent posets using acyclic graphs.

\begin{definition}
A \emph{directed acyclic graph} is a tuple $(W, \isuccrel)$, where
$W$ is a set of vertices and
${\isuccrel} \subseteq W \times W$ is a set of edges
whose reflexive, transitive closure $\isuccrel^*$ is antisymmetric. We will tacitly identify $(W, \isuccrel)$ with the poset $(W, \isuccrel^*)$. A \emph{path} from $w_1$ to $w_2$
is a finite sequence $v_0 \ldots v_n \in W$ such that $v_0 = w_1$, $v_n = w_2$
and for all $k < n$, $v_k \isuccrel v_{k+1}$. 
A \emph{tree} is an acyclic graph $(W, \isuccrel )$
with an element $r \in W$, called the root, such that for all $w \in W$
there is a unique path from $r$ to $w$.
A poset $(W,\peq)$ is also a tree if there is a relation $\isuccrel$ on $W\times W$ such that
$(W,\isuccrel)$ is a tree and ${\peq}={\isuccrel^*}$.
\end{definition}

Below, if $R\subseteq A\times A$ is a binary relation and $X \subseteq A$, $\reduc{R}{X}$ denotes the restriction of $R$ to $X$. Similarly if $f \colon A\longrightarrow B$ then $\reduc{f}{X}$ denotes the restriction of $f$ to the domain $X$.

\begin{definition} \label{def:stratified}
  A model $\modelbase$ is \emph{stratified} if there is a partition
  $\left\{W_n\right\}_{n < \omega}$ of $W$ such that
  \begin{enumerate}
    \item each $W_n$ is closed under $\irel$,
          \label{cond:stratified:closed}
    \item for all $n$, $(W_n, \reduc{\irel}{W_n})$ is a tree, and
          \label{cond:stratified:tree}
    \item if $w \in W_n$ then $\msuccfct(w) \in W_{n+1}$.
          \label{cond:stratified:succ}
  \end{enumerate}
  If $\model$ is stratified, we write $\peq_n,\msuccfct_n,$ and $V_n$ instead of $\reduc{\irel}{W_n},\reduc{\msuccfct}{W_n},$ and $\reduc{V}{W_n}$.
  We then define $\model_n=(W_n,\irel_n,V_n)$. If moreover we have that $\msuccfct(w)\peq \msuccfct(v)$ implies $w\peq v$, then we say that $\model$ is an {\em expanding model.} We define stratified and expanding posets similarly, ignoring the clauses for $V$.
\end{definition}

Below if $\Sigma \subseteq \Delta\subseteq \lang$ we write $\Sigma \Subset \Delta$ to indicate that $\Sigma$ is finite and closed under subformulas.
In view of Proposition \ref{propExpEquiv}, in this section we may restrict our attention to $\langred$.
Given $\Sigma \Subset \langred$,
a model $\modelbase[\Qbase]$, and a state $w\Qbase \in W\Qbase$,
we will construct a stratified model $\modelbase[\Qstrat]$
such that % $W\Qstrat=\bigcup_{n<\omega} W\Qstrat_n$, and
for the root $w\Qstrat$ of $W\Qstrat_{0}$, $\Sigma({w\Qstrat}) = \Sigma({w\Qbase})$.

\begin{definition}\label{defQstrat}
Let $\Sigma \Subset \langred$ and $\modelbase$ be a model.
We first define the set $\Edefect = \nat \times \nat \times \parts\Sigma$ of possible {\em defects,} and fix
an enumeration $  ( (x_k,y_k,H_k)   )_{k \in \nat}$ of $\Edefect$; since $\Sigma$ is finite and not empty, we assume that $\Edefect$ is enumerated such that for each $k > 0$, $x_k \leq k$.
 Then, for each $k \in \nat$, we construct inductively a tuple $(U_k, \isuccrel_k, h_k)$
where $U_k \subseteq \nat \times \nat$, ${\isuccrel_k} \subseteq U_k \times U_k$ and
$\deffun{h_k}{U_k}{W\Qbase}$.
The model $\model\Qstrat$ is defined from these tuples
and the whole construction proceeds as follows:

\paragraph*{Base case.} Let $U_0 = \{0\} \times \nat$, ${\isuccrel_0} = \varnothing$ and 
$h_0$ be such that for all $(0,y) \in U_0$, $h_0(0,y) = {\msuccfct[\Qbase]}^y\left(w\Qbase\right)$.

\paragraph*{Inductive case.} Let $k \geq 0$ and suppose that $(U_k, \isuccrel_k, h_k)$ has already been
constructed.
Let $(x, y, H) = (x_k,y_k,H_k)$.
If
\begin{enumerate*}[label=(D\arabic*)]
  \item\label{cond:stratified:ind:1} $ (x, y) \in U_k$ and \label{cond:stratified:ind:belong}
%  \item\label{cond:stratified:ind:2} $\Sigma(h_k(x,y)) \neq H$, and
  \item there is $ v = v_k \in W\Qbase$ such that $h_k(x,y) \irel\Qbase v$ and $\wsf{v} = H$,
    \label{cond:stratified:ind:exist}
\end{enumerate*}
then we construct $(U_{k+1}, \isuccrel_{k+1}, h_{k+1})$ such that:
\begin{align*}
  U_{k+1} &= U_k \cup \ens{( k + 1 , a ) }{y \leq a \in \mathbb N} \\
  \isuccrel_{k+1} &= \mathord{\isuccrel_k} \cup
    \ens{((x,a),(k+1,a)) }{y \leq a \in \mathbb N } \\
    h_{k+1} &= h_k \cup \ens{(( k + 1, a), {\msuccfct[\Qbase]}^{d-y}(v) )}{y \leq  d \in \mathbb N }
\end{align*}
Otherwise $(U_{k+1}, \isuccrel_{k+1}, h_{k+1}) = (U_k, \isuccrel_k, h_k)$.

\paragraph*{Final step.} Let $h =  \bigcup_{k \in \nat} h_k$. We construct $\modelbase[\Qstrat]$ such that $W\Qstrat  = \bigcup_{k \in \nat} U_k$, ${\irel\Qstrat} = \mathord{(\isuccrel\Qstrat)^*},$ where $ \mathord{\isuccrel\Qstrat} = \bigcup_{k \in \nat} \mathord{\isuccrel_k},$ $\msuccfct\Qstrat(a,b) = (a,b+1),$
and $  V\Qstrat(x,y) = V\Qbase\left(h (x,y)\right).
$
%where $\isuccrel_k^*$ is the reflexive and transitive closure of $\isuccrel_k$.
\end{definition}

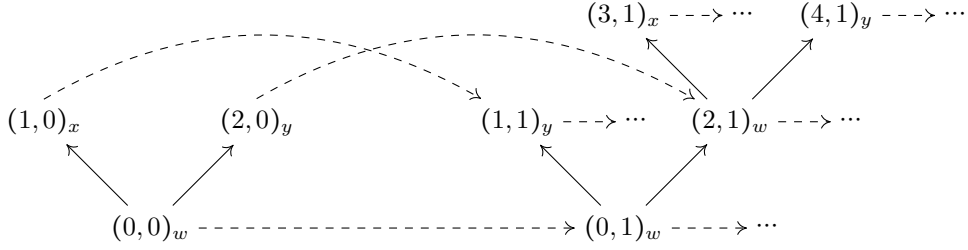
\begin{figure}\centering
	\begin{tikzpicture}[node distance=2cm] 

	\node (a)[] {$(0,0)_w$};
	\node (b)[above left of=a] {$(1,0)_x$};
	\node (c)[above right of=a] {$(2,0)_y$};	
	\node (a10)[right =5cm of a] {$(0,1)_w$};
	\node (c11)[above left of=a10] {$(1,1)_y$};
	\node (a12)[above right of=a10] {$(2,1)_w$};
	\node (b13)[above left of=a12] {$(3,1)_x$};
	\node (c14)[above right of=a12] {$(4,1)_y$};
	\node (sink)[right=.7cm of a12] {$\cdots$};
	\node (sink2)[right=.7cm of b13] {$\cdots$};
	\node (sink3)[right=.7cm of c14] {$\cdots$};
	\node (sink4)[right=1cm of a10] {$\cdots$};
	\node (sink5)[right=.7cm of c11] {$\cdots$};

	\path[intuitionistic relation]
  (a) edge[] node {} (b)
	(a) edge[] node {} (c)
	(a10) edge[] node {} (c11)
	(a10) edge[] node {} (a12)
	(a12) edge[] node {} (b13)
	(a12) edge[] node {} (c14)
  ;
  \path[successor relation]
	(a.east) edge node {} (a10)
	(b.north) edge[bend left] node {} (c11)
	(c.north) edge[bend left] node {} (a12)
	(a12) edge node {} (sink)
	(b13) edge node {} (sink2)
	(c14) edge node {} (sink3)
	(a10) edge node {} (sink4)
	(c11) edge node {} (sink5)
	;				

	\end{tikzpicture}				
	\caption{The strata $W\Qstrat_0$, $W\Qstrat_1$ of the stratified model obtained from the model defined in Figure \ref{fig:iltl-frame}.
    The subindices indicate the value of $h=\bigcup_{k \in \mathbb N} h_k$.}
	\label{fig:expanding-frame}
\end{figure}

See Figure \ref{fig:expanding-frame} for an illustration of the construction.
We wish to prove that the structure $\model\Qstrat $ is a stratified model.
To do this, we first establish some basic properties of the finite stages of the construction.
We begin with some simple observations.

\begin{lemma}\label{lemUStratifiedA}
If $\Sigma \Subset \langred$, $\modelbase$ is any model, $k\in \mathbb N$, and $(U_k)_{k\in \mathbb N}$ is as in Definition \ref{defQstrat}, then
\begin{enumerate}

\item\label{itABound} $(a,b) \in U_k$ implies that $a \leq k$,

\item\label{itSClosed} if $n\in \mathbb N$ then $(S\Qstrat)^n(a,b) = (a,b + n) \in U_k$, and

\item\label{itHHomo} $h_k \colon U_k \to W$ is a function and satisfies $h_k \circ S\Qstrat = S  \circ h_k$.

\end{enumerate}
\end{lemma}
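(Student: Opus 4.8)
The plan is to prove the three items of Lemma~\ref{lemUStratifiedA} simultaneously by induction on $k$, since they reference each other's data, and to exploit the fact that each inductive step of Definition~\ref{defQstrat} only ever \emph{adds} a fresh block $\{k+1\}\times\{a : a\ge y\}$ of vertices, each mapped by $h_{k+1}$ to a forward-iterate of a single witness $v_k$.

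First, for \eqref{itABound}, in the base case $U_0=\{0\}\times\nat$ so every $(a,b)\in U_0$ has $a=0\le 0$. In the inductive step, $U_{k+1}=U_k\cup\{(k+1,a) : a\ge y\}$; by the induction hypothesis any old element has first coordinate $\le k\le k+1$, and the new elements have first coordinate exactly $k+1$. (If the step is vacuous, $U_{k+1}=U_k$ and there is nothing to do.) So every $(a,b)\in U_{k+1}$ has $a\le k+1$.

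Next, for \eqref{itSClosed}: since $\msuccfct\Qstrat(a,b)=(a,b+1)$, an immediate induction on $n$ gives $(\msuccfct\Qstrat)^n(a,b)=(a,b+n)$, so the only content is that this pair lies in $U_k$ whenever $(a,b)$ does. For $U_0=\{0\}\times\nat$ this is clear. For the inductive step, if $(a,b)\in U_k$ then $(a,b+n)\in U_k\subseteq U_{k+1}$ by the induction hypothesis; if instead $(a,b)=(k+1,c)$ is one of the freshly added vertices, then $c\ge y$ forces $c+n\ge y$, so $(k+1,c+n)$ is also among the freshly added vertices and hence lies in $U_{k+1}$. Thus each $U_k$ is closed under $\msuccfct\Qstrat$ and its iterates.

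For \eqref{itHHomo} I would argue that $h_k$ is well-defined as a function and that $h_k\circ\msuccfct\Qstrat=\msuccfct\circ h_k$ on $U_k$. Well-definedness: $h_0$ is manifestly a function on $\{0\}\times\nat$; at stage $k$, if the step is non-vacuous then $h_{k+1}=h_k\cup\{((k+1,a),\msuccfct^{a-y}(v_k)) : a\ge y\}$, and since the new domain $\{k+1\}\times\{a\ge y\}$ is disjoint from $U_k$ (its elements have first coordinate $k+1$, which exceeds the first coordinate of every element of $U_k$ by \eqref{itABound}), the union of a function with a function on a disjoint domain is again a function. For the homomorphism property, on the base block $h_0(0,y)=\msuccfct^y(w\Qbase)$ gives $h_0(\msuccfct\Qstrat(0,y))=h_0(0,y+1)=\msuccfct^{y+1}(w\Qbase)=\msuccfct(h_0(0,y))$; on a fresh block added at stage $k$, using the notation $(x,y,H)=(x_k,y_k,H_k)$ and $v=v_k$, we have $h_{k+1}(k+1,a)=\msuccfct^{a-y}(v)$ for $a\ge y$, hence $h_{k+1}(\msuccfct\Qstrat(k+1,a))=h_{k+1}(k+1,a+1)=\msuccfct^{a+1-y}(v)=\msuccfct(h_{k+1}(k+1,a))$; and on old elements the identity is inherited from the induction hypothesis since $h_{k+1}$ extends $h_k$ and $U_k$ is $\msuccfct\Qstrat$-closed by \eqref{itSClosed}. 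This completes the induction.

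I do not expect a serious obstacle here — this is exactly the kind of bookkeeping lemma that records the defining data of the construction. The one point that needs genuine care is the disjointness of the freshly added block from $U_k$, which is precisely where \eqref{itABound} is used inside the proof of \eqref{itHHomo}, so the simultaneous induction (rather than three separate inductions) is the right organizing principle; everything else is routine unwinding of the definitions.
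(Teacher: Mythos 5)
Your proof is correct and follows essentially the same route as the paper's: a simultaneous induction on $k$, using the bound on first coordinates to see that each fresh block $\{k+1\}\times\{a\ge y\}$ is disjoint from $U_k$, checking closure under $\msuccfct\Qstrat$, and verifying the homomorphism identity blockwise. The paper's version is just terser (it phrases the case split as ``either $a=k$ or $(a,b)\in U_{k-1}$'' and leaves the old-element case to the induction hypothesis), so no substantive difference.
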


\proof
These claims are proven by a straightforward induction on $k$.
Assume that all claims hold for $i<k$.
If $(a,b) \in U_{k }$ then either $a=k$, or $k>0$ and $(a,b) \in U_{k-1}$. In the former case we trivially have $a=k\leq k$ and in the latter $a\leq k-1$ by the induction hypothesis, establishing \eqref{itABound}.
For \eqref{itSClosed}, if $(a,b) \in U_{k-1}$ then the claim follows easily from the induction hypothesis.
Otherwise, $a=k$. Then, from $y\leq b \leq b + n'$ we see that $(a,b + n') \in U_k$ for all $n'$, so that from the definition of $S\Qstrat $ we obtain $(S\Qstrat)^n(a,b) = (a,b+n) \in U_k$.

Meanwhile $h_{k} (a,b)$ is uniquely defined by either $h_{k} (a,b) = S^{b-y}(v)$ if $a = k$, or $h_{k-1} (a,b) = h_{k} (a,b)$ if $k>0$ and $(a,b) \in U_{k-1}$ (so that $a < k$). From this we see that $h_k(S\Qstrat (a,b)) = h_k(a,b+1) = S^{b+1-y}(v) = S (S^{b-y}(v)) = S (h_k (a,b))$, obtaining \eqref{itHHomo}. 
\endproof

With this, we establish some properties of $\isuccrel\Qstrat_k$.

\begin{lemma}\label{lemUStratifiedB}
Let $\Sigma \Subset \langred$, $\modelbase$ be any model, $k\in \mathbb N$, and $(U_k)_{k\in \mathbb N}$ be defined as in Definition \ref{defQstrat}. Suppose that $(a,b) \isuccrel_k (c,d)$. Then,
\begin{enumerate}

\item\label{itInUk} $(a,b) ,(c,d) \in U_k$,

\item\label{itLessEq} $a < c$ and $b=d$,

\item\label{itInj} if $(a',b') \isuccrel_k (c,d)$ then $(a,b) = (a',b')$,

\item\label{itSucc} $(a,b + 1) \isuccrel_k (c,d+1)$,

\item\label{itPred} if $(c,d-1) \in U_k$ then $(a,b-1) \in U_k$ and $(a,b - 1) \isuccrel_k (c,d - 1)$, and

\item\label{itHk} $h_k(a,b) \peq h_k (c,d)$.

\end{enumerate}

\end{lemma}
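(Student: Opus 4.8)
The plan is to prove all six items simultaneously by induction on $k$, following the same pattern as in the proof of Lemma \ref{lemUStratifiedA}. Since $\isuccrel_0 = \varnothing$, the base case is vacuous. For the inductive step, suppose all claims hold for $\isuccrel_{k-1}$ and consider $(a,b) \isuccrel_k (c,d)$. By the construction in Definition \ref{defQstrat}, either $(a,b) \isuccrel_{k-1} (c,d)$ already, in which case the inductive hypothesis gives everything (noting that $U_{k-1} \subseteq U_k$ for item \eqref{itInUk}, and that the new pairs added to $U_k$ cannot interfere with items \eqref{itInj} and \eqref{itPred} because they have first coordinate $k$ while $c \le k-1$ here); or the edge $((a,b),(c,d))$ is one of the newly added edges, so that—writing $(x,y,H) = (x_k,y_k,H_k)$ and $v = v_k$ as in the construction—we have $c = k$, $a = x$, and $b = d = \alpha$ for some $\alpha \ge y$. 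In this second case I would verify each item directly from the explicit formulas for $U_{k+1}, \isuccrel_{k+1}, h_{k+1}$.

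The verifications in the new-edge case are all short. Item \eqref{itInUk}: $(x,\alpha) \in U_k$ because condition \ref{cond:stratified:ind:belong} with $\alpha \ge y$ forces $(x,\alpha)$ into $U_k$ (using Lemma \ref{lemUStratifiedA}\eqref{itSClosed} applied to $(x,y) \in U_k$), and $(k,\alpha) \in U_{k+1}$ by definition. Item \eqref{itLessEq}: $a = x \le k - 1 < k = c$ by the enumeration convention $x_k \le k$ together with Lemma \ref{lemUStratifiedA}\eqref{itABound} (which gives $x \le k-1$ since $(x,y)\in U_k$ for $k\ge 1$; if $k=0$ there is nothing to check as $\isuccrel_0$ is empty), and $b = \alpha = d$. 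Item \eqref{itInj}: any edge into $(k,\alpha)$ must be new (old edges live in $U_{k-1}$ and $(k,\alpha)\notin U_{k-1}$), and all new edges into $(k,\alpha)$ emanate from $(x,\alpha)$, so the source is unique. Item \eqref{itSucc}: $(x,\alpha+1) \isuccrel_k (k,\alpha+1)$ since $\alpha + 1 \ge y$ as well. Item \eqref{itPred}: if $(k,\alpha-1) \in U_k$ then $\alpha - 1 \ge y$, hence $(x,\alpha-1) \in U_k$ and the new edge $(x,\alpha-1)\isuccrel_k(k,\alpha-1)$ is present. Item \eqref{itHk}: $h_k(x,\alpha) = {\msuccfct[\Qbase]}^{\alpha-y}(h_k(x,y))$ by Lemma \ref{lemUStratifiedA}\eqref{itHHomo} (since $h_k(x,\alpha) = h_k((S\Qstrat)^{\alpha-y}(x,y)) = S^{\alpha-y}(h_k(x,y))$), while $h_{k+1}(k,\alpha) = {\msuccfct[\Qbase]}^{\alpha-y}(v)$; by condition \ref{cond:stratified:ind:exist} we have $h_k(x,y) \irel\Qbase v$, and forward confluence of $S\Qbase$ (an easy induction, as in Proposition \ref{PropIntCond}) yields $S^{\alpha-y}(h_k(x,y)) \irel\Qbase S^{\alpha-y}(v)$, i.e. $h_k(a,b) \peq h_k(c,d)$.

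The main point requiring care—and the only place where the argument is not completely mechanical—is the bookkeeping in the "old edge" case for items \eqref{itInj} and \eqref{itPred}: one must check that enlarging $U_{k-1}$ to $U_k$ and $\isuccrel_{k-1}$ to $\isuccrel_k$ does not create a second predecessor of an old vertex $(c,d)$ nor destroy the predecessor guaranteed by the inductive hypothesis. Both follow from item \eqref{itLessEq} of the inductive hypothesis (all old edges go strictly up in the first coordinate and preserve the second) together with the observation that every freshly added edge has target first-coordinate exactly $k$, whereas $c \le k-1$ for an old target. I do not anticipate any genuine obstacle; the lemma is a routine unwinding of the definitions, and the only subtlety is ensuring the case split on "old versus new edge" is exhaustive and that the two halves of item \eqref{itInj} (uniqueness among old edges, uniqueness among new edges, and no mixing) are all addressed.
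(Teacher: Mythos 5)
Your proposal is correct and follows essentially the same route as the paper: induction on $k$, splitting on whether the edge is inherited from $\isuccrel_{k-1}$ (where the induction hypothesis applies, together with the observation that fresh vertices/edges have first coordinate equal to the current stage and so cannot disturb items \ref{itInj} and \ref{itPred}) or newly added (where each item is read off from the explicit formulas, with item \ref{itHk} obtained from \ref{cond:stratified:ind:exist} and iterated forward confluence, exactly as in the paper); in fact you are slightly more explicit than the paper about the uniqueness bookkeeping in both cases. The only blemish is a harmless off-by-one in your labels: with your shift, a fresh edge of $\isuccrel_k$ comes from the defect $(x_{k-1},y_{k-1},H_{k-1})$ and its target $(k,\alpha)$ lies in $U_k$ (not $U_{k+1}$), which does not affect the substance of the argument.
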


\proof
We proceed by indution on $k$.
The base case, $k=0$, is proved by using the fact that ${\isuccrel_0} = \varnothing$, so the antecedent is always false. For the inductive step, let us assume that the lemma holds for all $0 \le i \le k$ and we will prove the lemma for $k+1$. To do so, let us take $(a,b),(c,d) \in \nat\times\nat$ satisfying  $(a,b) \uparrow_{k+1} (c,d)$.
If $(a,b) \isuccrel_{k} (c,d)$, the induction hypothesis immediately yields all desired properties.

Otherwise, conditions~\ref{cond:stratified:ind:1} and~\ref{cond:stratified:ind:exist} hold, so that $(x,y) \eqdef (x_k,y_k) \in U_k$ satisfies $a = x$, $c=k+1$, $b \ge y$ and $b=d$. Since $y\leq b$ we see using Lemma \ref{lemUStratifiedA}.\ref{itSClosed} that $(a,b) \in U_k \subseteq U_{k+1}$ and since also $d \geq y$ we have that $(c,d) \in U_{k+1}$ by the definition of $U_{k+1}$, establishing \eqref{itInUk}.
Moreover $a\in U_k$ so that $a\leq k$, hence $a \le k \leq k+1 = c$, so $a < c$, and by definition of $\isuccrel_{k+1}$ we must have $b=d$, establishing \eqref{itLessEq}. Since $ b < b+1$ we have that $(a,b+1),(c,d+1) \in U_{k+1}$ and $(a,b+1) \isuccrel_{k+1} (k+1,b+1) = (c,d+1)$ also by definition of $\isuccrel_{k+1}$, thus \eqref{itSucc} holds.
If $(c,d-1) \in U_{k+1}$ then $y < d =b $ so that $(a,b-1) \in U_k$, and moreover $(a,b-1) \isuccrel_{k+1} (c,d-1)$ by definition, hence \eqref{itPred}.

Finally, recall that $h_{k}(x, y) \irel v \eqdef v_k$.
Since $h_{k+1}(a, b) = h_{k+1}(x,d) = \msuccfct\Qbase^{d - y}\left(h_{k}\left(x, y\right)\right)$ and $h_{k+1}(c, d) = h_{k+1}(k+1, d) = \msuccfct^{d - y}\left(v\right)$, by the confluence condition for $\model$ and a straightforward secondary induction on $d$, $h_{k+1}(x, d) \irel h_{k+1}(c,d)$, establishing \eqref{itHk}.
\endproof

With this we may begin proving some properties of the model $\modelbase[\Qstrat]$.
We start by considering the function $h$.

\begin{lemma}\label{lemHHomoFull}
Let $\Sigma \Subset \langred$ and $\modelbase$ be any model. Then $h\colon W\Qstrat \to W$ is a function and $S\circ h = h \circ S\Qstrat$.
\end{lemma}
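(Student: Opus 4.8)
The plan is to exploit the fact that the construction in Definition \ref{defQstrat} is \emph{cumulative}, and then simply transfer the finite-stage facts recorded in Lemma \ref{lemUStratifiedA} to the limit. Concretely, I would first observe that $U_k \subseteq U_{k+1}$ and $h_k \subseteq h_{k+1}$ — viewing each $h_k$ as its graph, i.e. as a set of ordered pairs — which is immediate from the displayed equations of the inductive step of Definition \ref{defQstrat}, and trivial in the degenerate case where the step leaves the tuple unchanged. By induction this yields $h_j \subseteq h_k$ whenever $j \le k$; recall that $W\Qstrat = \bigcup_k U_k$ and $h = \bigcup_k h_k$.

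Next I would argue that $h$ is a function. By Lemma \ref{lemUStratifiedA}.\ref{itHHomo}, each $h_k$ is a function with domain $U_k$, and an increasing union of functions is again a function: if $(a,b)$ and $(a,b')$ both belong to $h$, then picking $k$ large enough that both pairs lie in $h_k$ (possible by the chain property just noted) forces $b = b'$, since $h_k$ is a function. Hence $h$ is a function, its domain is $\bigcup_k U_k = W\Qstrat$, and its codomain is $W$ because each $h_k$ maps into $W$; in particular $h$ restricts to $h_k$ on each $U_k$.

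For the commutation identity $S \circ h = h \circ S\Qstrat$, I would take an arbitrary $(a,b) \in W\Qstrat$, fix $k$ with $(a,b) \in U_k$, and use Lemma \ref{lemUStratifiedA}.\ref{itSClosed} to conclude $S\Qstrat(a,b) = (a,b+1) \in U_k$ as well, so that both arguments lie in the domain of $h_k$. Then Lemma \ref{lemUStratifiedA}.\ref{itHHomo} gives $h_k\bigl(S\Qstrat(a,b)\bigr) = S\bigl(h_k(a,b)\bigr)$, and since $h$ agrees with $h_k$ on $U_k$ this is exactly $h\bigl(S\Qstrat(a,b)\bigr) = S\bigl(h(a,b)\bigr)$. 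As $(a,b)$ was arbitrary, this finishes the proof. Since the whole argument is just a passage to the limit of facts already established at every finite stage, I do not anticipate any real obstacle; the only mildly delicate point, that distinct stages never disagree on a common argument, is subsumed by the chain property together with Lemma \ref{lemUStratifiedA}.\ref{itHHomo}.
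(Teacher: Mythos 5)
Your proof is correct and follows essentially the same route as the paper: both arguments pass to the limit of the finite-stage facts in Lemma \ref{lemUStratifiedA}, using item \ref{itHHomo} for functionality and the stage-wise commutation $h_k \circ S\Qstrat = S \circ h_k$ (together with item \ref{itSClosed} to keep $S\Qstrat$ inside $U_k$). The paper phrases the commutation as a one-line computation with unions of compositions while you argue pointwise, but this is a cosmetic difference; your explicit check that the $h_k$ form an increasing chain is a detail the paper merely asserts.
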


\proof
By Lemma \ref{lemUStratifiedA}.\ref{itHHomo}, $h_k\colon U_k \to W$ is a function for all $k$, and since $W\Qstrat = \bigcup_{k\in \mathbb N} U_k$ and $h = \bigcup_{k\in \mathbb N} h_k$ with the union being increasing, we have that $h \colon W\Qstrat \to W$.
Then we have that $ S\circ h = S\circ \bigcup_{k\in\mathbb N} h_k = \bigcup_{k\in\mathbb N} ( S\circ  h_k) =  \bigcup_{k\in\mathbb N} ( h_k \circ S\Qstrat ) =  (\bigcup_{k\in\mathbb N}  h_k ) \circ S\Qstrat = h \circ S\Qstrat $.
\endproof

\begin{lemma}\label{lem:stratified:aux}
Let $\Sigma \Subset \langred$ and $\modelbase$ be any model. Then whenever $(x,y)\peq \Qstrat (x',y') $,
\begin{enumerate}

\item\label{itStratExp} $x \leq x'$ and $y = y'$,

\item\label{itConfl} $S\Qstrat(x,y) \peq \Qstrat S\Qstrat (x',y')$,

\item\label{itConflBack} if $(x,y) = S\Qstrat (w,v)$ and $(x',y') = S\Qstrat (w',v')$ then $(w,v) \peq\Qstrat (w',v
')$, and

\item\label{itHMon} $h(x,y) \peq h (x',y')$.

\end{enumerate}
\end{lemma}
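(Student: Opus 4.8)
The plan is to reduce each of the four claims to the corresponding fact about the finite stages, i.e. to Lemma~\ref{lemUStratifiedB}, exploiting that ${\irel\Qstrat} = ({\isuccrel\Qstrat})^\ast$ with ${\isuccrel\Qstrat} = \bigcup_{k} {\isuccrel_k}$ and that the family $({\isuccrel_k})_{k}$ is increasing. Thus, whenever $(x,y)\peq\Qstrat(x',y')$ there is a finite $\isuccrel\Qstrat$-path $(x,y) = p_0 \isuccrel\Qstrat p_1 \isuccrel\Qstrat \cdots \isuccrel\Qstrat p_n = (x',y')$, and each edge $p_i \isuccrel\Qstrat p_{i+1}$ is witnessed by some $\isuccrel_{k_i}$. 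For items \eqref{itStratExp}, \eqref{itConfl} and \eqref{itHMon} it suffices to prove the statement for a single $\isuccrel\Qstrat$-step and then concatenate along the path (using transitivity of the reflexive–transitive closure, and, for \eqref{itHMon}, of $\peq$ on $\model\Qbase$); the degenerate case $n=0$ is immediate.

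For \eqref{itStratExp}, a single step $(a,b)\isuccrel_{k}(c,d)$ gives $a<c$ and $b=d$ by Lemma~\ref{lemUStratifiedB}.\ref{itLessEq}, so along the whole path $x \le x'$ and $y = y'$. For \eqref{itConfl}, note first that $S\Qstrat$ is a well-defined function on $W\Qstrat$ by Lemma~\ref{lemUStratifiedA}.\ref{itSClosed}; a single step $(a,b)\isuccrel_k(c,d)$ then gives $S\Qstrat(a,b) = (a,b{+}1) \isuccrel_k (c,d{+}1) = S\Qstrat(c,d)$ by Lemma~\ref{lemUStratifiedB}.\ref{itSucc}, hence $S\Qstrat(a,b)\isuccrel\Qstrat S\Qstrat(c,d)$, and applying this to every edge of the path yields a $\isuccrel\Qstrat$-path from $S\Qstrat(x,y)$ to $S\Qstrat(x',y')$. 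For \eqref{itHMon}, a single step gives $h_k(a,b)\peq h_k(c,d)$ by Lemma~\ref{lemUStratifiedB}.\ref{itHk}; since $h=\bigcup_k h_k$ is an increasing union of functions and is itself a function (Lemma~\ref{lemHHomoFull}), $h$ agrees with $h_k$ on $U_k$, so $h(a,b)\peq h(c,d)$, and transitivity of $\peq$ closes the path.

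The only item requiring care is \eqref{itConflBack}. Writing $(x,y) = S\Qstrat(w,v)$ and $(x',y') = S\Qstrat(w',v')$ forces $w=x$, $v = y-1 \ge 0$, $w'=x'$, $v' = y'-1 \ge 0$, and $y = y'$ by \eqref{itStratExp}; so it suffices to show $(x,y-1)\peq\Qstrat(x',y-1)$ from $(x,y)\peq\Qstrat(x',y)$, where $(x,y-1),(x',y-1)\in W\Qstrat$. Fix a path $(x,y) = (a_0,y)\isuccrel\Qstrat(a_1,y)\isuccrel\Qstrat\cdots\isuccrel\Qstrat(a_n,y) = (x',y)$ (second coordinates constant by \eqref{itStratExp}), and choose $K$ large enough that every edge of this path lies in $\isuccrel_K$ and $(x',y-1)\in U_K$; this is possible since the path is finite, $W\Qstrat = \bigcup_k U_k$, and both $({\isuccrel_k})_k$ and $(U_k)_k$ are increasing. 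Now apply Lemma~\ref{lemUStratifiedB}.\ref{itPred} to the edges from right to left: from $(a_{n-1},y)\isuccrel_K(a_n,y)$ and $(a_n,y-1)\in U_K$ we obtain $(a_{n-1},y-1)\in U_K$ and $(a_{n-1},y-1)\isuccrel_K(a_n,y-1)$, and descending the path gives $(a_i,y-1)\in U_K$ and $(a_i,y-1)\isuccrel_K(a_{i+1},y-1)$ for every $i$; concatenating these edges yields $(x,y-1)\peq\Qstrat(x',y-1)$, i.e. $(w,v)\peq\Qstrat(w',v')$. The main obstacle is exactly this bookkeeping step: one must fix a single stage $K$ that simultaneously contains the whole path and the witnessing predecessor vertex before invoking Lemma~\ref{lemUStratifiedB}.\ref{itPred}, since the individual edges of the path may a priori appear only at different stages.
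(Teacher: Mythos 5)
Your proof is correct and follows essentially the same route as the paper's: decompose $(x,y)\peq\Qstrat(x',y')$ into a finite $\isuccrel\Qstrat$-path whose edges are witnessed at finite stages, apply Lemma~\ref{lemUStratifiedB} (items \ref{itLessEq}, \ref{itSucc}, \ref{itHk}, \ref{itPred}) edge-by-edge, and concatenate. The only difference is that where the paper dismisses item~\eqref{itConflBack} as ``similar considerations using Lemma~\ref{lemUStratifiedB}.\ref{itPred}'', you spell out the descent along the path, including the (legitimate and useful) device of passing to a single common stage $K$ before invoking that clause.
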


\proof
If $(x,y) \peq \Qstrat (x',y')$, then $(x,y) (\uparrow^{e})^{\star} (x',y')$.
Let $n$ in $\Nat$ and $(x_{0},y_{0}),\ldots ,(x_{n},y_{n})$ in $W\Qstrat$ be such that $(x_{0},y_{0}) = (x,y)$, $(x_{n},y_{n}) = (x',y')$ and for all nonnegative integers $i<n$, $(x_{i},y_{i}) \uparrow\Qstrat (x_{i+1},y_{i+1})$.
Thus, for all nonnegative integers $i<n$, let $k_{i}$ in $\Nat$ be such that $(x_{i},y_{i}) \uparrow_{k_{i}} (x_{i+1},y_{i+1})$.

To see that \eqref{itStratExp} holds, note that by Lemma \ref{lemUStratifiedB}.\ref{itLessEq}, for all $i<n$, $x_{i} < x_{i+1}$ and $y_{i} = y_{i+1}$.
Since $(x_{0},y_{0}) = (x,y)$ and $(x_{n},y_{n}) = (x',y')$, therefore $x\leq x'$ and $y=y'$.
For \eqref{itConfl}, by Lemma \ref{lemUStratifiedB}.\ref{itSucc} we have that for all nonnegative integers $i<n$, $(x_{i},y_{i} + 1) \uparrow_{k_{i}} (x_{i+1},y_{i+1} + 1)$, so that the sequence $((x_i,y_i+1))_{i<n}$ witnesses that $S\Qstrat (x,y) = (x,y+1) \peq\Qstrat (x',y'+1) = S\Qstrat (x',y' )$.
That \eqref{itConflBack} holds follows from similar considerations using Lemma \ref{lemUStratifiedB}.\ref{itPred}.

To establish \eqref{itHMon}, we consider the following two cases. If $n=0$, then $(x,y) = (x',y')$.
Thus $h (x,y) \peq h (x',y')$ since $\peq$ is reflexive.
Otherwise, $n\geq 1$.
Hence, by Lemma \ref{lemUStratifiedB}.\ref{itHk}, for all nonnegative integers $i<n$, $ (x_{i},y_{i} + 1) \isuccrel_{k_i}  (x_{i+1},y_{i+1} + 1)$ for all $i<n$, so that also $h (x_{i},y_{i}) \peq h  (x_{i+1},y_{i+1})$, hence by transitivity $h (x,y) \peq h (x',y')$.
\endproof

Finally, we show that $\isuccrel\Qstrat$ is suitable for producing a stratified model.

\begin{lemma}\label{lemAcyclic}
Let $\Sigma \Subset \langred$, $\modelbase$ be any model, $k\in \nat$ and $U_k,\isuccrel_k$ be as in Definition \ref{defQstrat}. Then, the graph $(W\Qstrat, \isuccrel \Qstrat)$ is acyclic and if $(0,b),(a,b) \in W\Qstrat$ there exists a unique path from $(0,b)$ to $(a,b)$.
\end{lemma}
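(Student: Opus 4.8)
The plan is to prove acyclicity directly from the fact that every edge of $\isuccrel\Qstrat$ strictly increases the first coordinate, and then to obtain both existence and uniqueness of paths from the observation that, in the construction of Definition~\ref{defQstrat}, each vertex lying outside column $0$ has exactly one $\isuccrel\Qstrat$-predecessor. For acyclicity: by Lemma~\ref{lemUStratifiedB}.\ref{itLessEq}, whenever $(a,b)\isuccrel\Qstrat(c,d)$ we have $a<c$ and $b=d$, so along any path of length $\geq 1$ the first coordinate strictly increases while the second stays fixed. Hence if $(a,b)\,(\isuccrel\Qstrat)^*\,(c,d)$ and $(c,d)\,(\isuccrel\Qstrat)^*\,(a,b)$, neither path can have positive length (else $a<c<a$), so $(a,b)=(c,d)$; that is, $(\isuccrel\Qstrat)^*$ is antisymmetric, which is exactly what acyclicity of $(W\Qstrat,\isuccrel\Qstrat)$ means.

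The key preliminary step for the path claim is to unwind Definition~\ref{defQstrat} and check that, for $a\geq 1$, the vertices with first coordinate $a$ (``column $a$'') can only be created at stage $k=a-1$, and that at any stage $k$ the only new edges added are those of the form $((x_k,c),(k+1,c))$ with $c\geq y_k$. Consequently, if $(a,b)\in W\Qstrat$ with $a\geq 1$, then stage $a-1$ performed its inductive step, $b\geq y_{a-1}$, and the unique edge into $(a,b)$ is $((x_{a-1},b),(a,b))$; moreover $(x_{a-1},b)\in W\Qstrat$, since $(x_{a-1},y_{a-1})\in U_{a-1}$ and so $(x_{a-1},b)\in U_{a-1}$ by Lemma~\ref{lemUStratifiedA}.\ref{itSClosed}. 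One also notes $x_{a-1}<a$: for $a\geq 2$ this is the assumption on the enumeration of $\Edefect$, and for $a=1$ condition~\ref{cond:stratified:ind:1} forces $x_0=0$. Finally, $(0,b)$ has no $\isuccrel\Qstrat$-predecessor at all.

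Granting this, existence and uniqueness both follow by strong induction on $a$. For existence: if $a=0$, the trivial path $(0,b)$ works; if $a\geq 1$, apply the induction hypothesis to $(x_{a-1},b)$ (legitimate since $x_{a-1}<a$) to obtain a path from $(0,b)$ to $(x_{a-1},b)$, and append the edge to $(a,b)$. For uniqueness: any path ending at $(0,b)$ must be trivial, because every edge increases the first coordinate; and any path of length $\geq 1$ ending at $(a,b)$ with $a\geq 1$ must, by the uniqueness of the predecessor of $(a,b)$, have $(x_{a-1},b)$ as its penultimate vertex, hence is completely determined by the path from $(0,b)$ to $(x_{a-1},b)$, which is unique by the induction hypothesis.

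The main obstacle is precisely the ``unique predecessor'' observation: one has to track carefully how $U_k$ and $\isuccrel_k$ evolve across stages of the construction to confirm that edges into a given column are introduced at one and only one stage and nowhere else, and to pin down which vertices populate that column. I would likely isolate this as a short auxiliary induction on $k$ giving an explicit description of $\isuccrel_k$ (and of how each column of $U_k$ arises); once that is in place, the two inductions on the first coordinate above are routine.
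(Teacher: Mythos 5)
Your proof is correct and follows essentially the same route as the paper: acyclicity is read off from Lemma~\ref{lemUStratifiedB}.\ref{itLessEq} (every edge strictly increases the first coordinate and preserves the second), and the unique-path claim is proved by induction on the first coordinate using uniqueness of $\isuccrel\Qstrat$-predecessors, with $(0,b)$ having none. The only real difference is that the stage-by-stage auxiliary induction you propose for the ``unique predecessor'' observation is already available: it is Lemma~\ref{lemUStratifiedB}.\ref{itInj} together with \ref{lemUStratifiedB}.\ref{itLessEq}, once one notes that any two edges into a common vertex lie in a common $\isuccrel_k$; on the other hand, your explicit check that each $(a,b)$ with $a\geq 1$ really has the predecessor $(x_{a-1},b)\in W\Qstrat$ spells out the existence half of the claim, which the paper's proof leaves implicit.
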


\proof
That $(W\Qstrat, \isuccrel \Qstrat)$ is acyclic is an immediate consequence of Lemma \ref{lemUStratifiedB}.\ref{itLessEq}.
The second claim follows by induction on $a$.
Suppose that $(a,b) \in W\Qstrat$. If $a=0$ then once again by Lemma \ref{lemUStratifiedB}.\ref{itLessEq} $(0,b)$ has no predecessors and hence the singleton $  ( ( 0,b )  )$ is the unique path leading from $(0,b)$ to $(a,b)$.
Otherwise observe that if $(c,d) \isuccrel\Qstrat (a,b)$ and $(c',d') \isuccrel\Qstrat (a,b)$ then $(c,d) ,(c',d') \isuccrel_k(a,b)$ for some $k$, hence by Lemma \ref{lemUStratifiedB}.\ref{itInj} $(c,d) = (c',d') $ and by Lemma \ref{lemUStratifiedB}.\ref{itLessEq}, $d=b$. Thus by induction hypothesis there is a unique path $  ( (a_i,b_i)  )_{i<n}$ from $(0,b)$ to $(c,d)$, which means that the only path from $(0,b)$ to $(a,b)$ is $  ( (a_i,b_i) )_{i\leq n}$ with $(a_n,b_n) = (a,b)$.
\endproof

With this we are ready to show that $\model\Qstrat$ is expanding and satisfies (falsifies) the same formulae as $(\model,w)$.

\begin{lemma}
 Given $\Sigma \Subset \langred$ and a model $\M$, $\model\Qstrat$ is an expanding model.
\end{lemma}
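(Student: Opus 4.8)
The statement asks us to verify that $\model\Qstrat$, as constructed in Definition \ref{defQstrat}, is an expanding model. By Definition \ref{def:stratified}, this requires producing a partition $\{W\Qstrat_n\}_{n<\omega}$ of $W\Qstrat$ satisfying the three conditions of a stratified model (each $W\Qstrat_n$ closed under $\irel\Qstrat$; each a tree under the restricted order; $\msuccfct\Qstrat$ maps $W\Qstrat_n$ into $W\Qstrat_{n+1}$), together with the extra expanding condition that $\msuccfct\Qstrat(w)\peq\Qstrat\msuccfct\Qstrat(v)$ implies $w\peq\Qstrat v$. Almost all of the work has already been isolated into the preceding lemmas, so the proof is really just an assembly. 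The plan is to take $W\Qstrat_n = \ens{(a,b)\in W\Qstrat}{b=n}$, i.e.\ stratify by the second coordinate, and then check each clause in turn.

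First I would verify that this is a genuine partition of $W\Qstrat$: every element of $W\Qstrat$ is a pair $(a,b)\in\nat\times\nat$, so it lies in exactly one $W\Qstrat_n$, namely $W\Qstrat_b$. For clause \ref{cond:stratified:closed} (closure under $\irel\Qstrat$), I would invoke Lemma \ref{lem:stratified:aux}.\ref{itStratExp}: if $(x,y)\peq\Qstrat(x',y')$ then $y=y'$, so if $(x,y)\in W\Qstrat_n$ then $y=n$, hence $y'=n$ and $(x',y')\in W\Qstrat_n$. For clause \ref{cond:stratified:succ}, the definition gives $\msuccfct\Qstrat(a,b)=(a,b+1)$, so if $(a,b)\in W\Qstrat_n$ then $b=n$ and $\msuccfct\Qstrat(a,b)=(a,n+1)\in W\Qstrat_{n+1}$; I should also note $(a,n+1)\in W\Qstrat$ by Lemma \ref{lemUStratifiedA}.\ref{itSClosed}. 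For clause \ref{cond:stratified:tree} (each stratum is a tree), I would argue that the root of $(W\Qstrat_n,\reduc{\irel\Qstrat}{W\Qstrat_n})$ is $(0,n)$: it exists in $W\Qstrat_n$ because $(0,n)=(\msuccfct\Qstrat)^n(0,0)\in U_0\subseteq W\Qstrat$, and Lemma \ref{lemAcyclic} gives a unique path from $(0,n)$ to each $(a,n)\in W\Qstrat$ — but I must first check that every element of $W\Qstrat_n$ is reachable from $(0,n)$ within $\isuccrel\Qstrat$, which follows since any $\isuccrel\Qstrat$-edge into $(a,n)$ stays in the stratum by Lemma \ref{lemUStratifiedB}.\ref{itLessEq}, so the unique path from $(0,n)$ lies entirely in $W\Qstrat_n$; acyclicity within the stratum is inherited from Lemma \ref{lemAcyclic}. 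One subtlety to address: I need that $(0,n)\in W\Qstrat$ for every $n$ (so each stratum is nonempty and has a candidate root) and that the stratum is exactly the set of nodes reachable from $(0,n)$.

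Finally, for the expanding condition, suppose $\msuccfct\Qstrat(w)\peq\Qstrat\msuccfct\Qstrat(v)$ where $w=(x,y)$ and $v=(x',y')$; then $\msuccfct\Qstrat(w)=(x,y+1)$ and $\msuccfct\Qstrat(v)=(x',y'+1)$, and both are successors, so Lemma \ref{lem:stratified:aux}.\ref{itConflBack} applies directly to give $(x,y)\peq\Qstrat(x',y')$, i.e.\ $w\peq\Qstrat v$. I expect the main obstacle — really the only place requiring care rather than citation — to be clause \ref{cond:stratified:tree}: pinning down that each stratum is genuinely a rooted tree with root $(0,n)$, which means checking nonemptiness of strata, that $(0,n)$ has no $\isuccrel\Qstrat$-predecessors, and that the unique-path property from Lemma \ref{lemAcyclic} (stated for paths from $(0,b)$ to $(a,b)$) does cover all of $W\Qstrat_n$ and not just the nodes sharing the second coordinate by accident; but Lemma \ref{lemUStratifiedB}.\ref{itLessEq} and \ref{lemUStratifiedB}.\ref{itInUk} close this gap since any edge $(c,d)\isuccrel\Qstrat(a,b)$ forces $d=b$ and $(c,d)\in W\Qstrat$.
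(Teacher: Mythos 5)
There is a genuine gap: you never verify that $\model\Qstrat$ is a \emph{model} in the first place. Definition \ref{def:stratified} begins ``a model $\modelbase$ is stratified if\dots'', so being an expanding model presupposes that $(W\Qstrat,\irel\Qstrat,\msuccfct\Qstrat)$ is a dynamic poset and that $V\Qstrat$ is a monotone valuation, and none of these facts is automatic for this construction. Concretely, three checks are missing. First, $\irel\Qstrat$ is defined as the reflexive--transitive closure of $\isuccrel\Qstrat$, so it is a preorder, but antisymmetry requires an argument (it follows from Lemma \ref{lem:stratified:aux}.\ref{itStratExp}, since along $\irel\Qstrat$ the second coordinate is fixed and the first is non-decreasing, strictly increasing along edges). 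Second, $V\Qstrat(x,y)=V\Qbase(h(x,y))$, so monotonicity of $V\Qstrat$ is not immediate; it needs Lemma \ref{lem:stratified:aux}.\ref{itHMon} combined with monotonicity of $V\Qbase$. Third, forward confluence of $\msuccfct\Qstrat$ must be checked (Lemma \ref{lem:stratified:aux}.\ref{itConfl}); note that the expanding condition you do verify via Lemma \ref{lem:stratified:aux}.\ref{itConflBack} is the \emph{converse} implication and does not substitute for it. This is roughly half of the paper's proof, and your proposal omits it entirely, even though all the needed ingredients are in the lemmas you already cite.

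The part you do carry out --- stratifying by the second coordinate, closure of strata under $\irel\Qstrat$ via Lemma \ref{lem:stratified:aux}.\ref{itStratExp}, the successor clause from the definition of $\msuccfct\Qstrat$, the tree condition from Lemma \ref{lemAcyclic}, and the expanding condition from Lemma \ref{lem:stratified:aux}.\ref{itConflBack} --- matches the paper's argument, and your extra care about roots, nonemptiness of strata, and reachability within a stratum is a sound elaboration of what the paper leaves implicit. So the fix is simply to prepend the missing verification that $\model\Qstrat$ is a model, using the three lemma items indicated above.
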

\begin{proof}
  First we check that $\model\Qstrat$ is a model.
It is easy to see using Lemma~\ref{lem:stratified:aux}.\ref{itStratExp} that $\irel\Qstrat$ is antisymmetric, hence a partial order since it is already a transitive, reflexive closure.
  For the monotonicity condition, suppose that $(x,y) \irel\Qstrat (x',y')$.
  By Lemma~\ref{lem:stratified:aux}.\ref{itHMon}, $h(x,y) \irel\Qbase h (x',y')$ and
  by the monotonicity condition for $\model\Qbase$, 
  $V\Qstrat (x,y ) =V \big (h(x,y) \big ) \subseteq V \big (h(x',y') \big) = V\Qstrat (x',y')$.
Confluence of $S\Qstrat$ follows from Lemma \ref{lem:stratified:aux}.\ref{itConfl}.
  Therefore, $\model\Qstrat$ is a model.

  To prove that $\model\Qstrat$ is stratified,
  define $W\Qstrat_n = \ens{(x,y) \in W\Qstrat}{y = n}$
  for all $n \in \nat$.
  Condition~\ref{cond:stratified:succ} of Def.~\ref{def:stratified} trivially holds,
  condition~\ref{cond:stratified:closed} comes directly from Lemma~\ref{lem:stratified:aux}.\ref{itStratExp}, and condition~\ref{cond:stratified:tree} from Lemma \ref{lemAcyclic}.
Moreover, $\model\Qstrat$ is expanding by Lemma \ref{lem:stratified:aux}.\ref{itConflBack}.
\end{proof}

\begin{lemma}
Let $\Sigma \Subset \langred$ and $\modelbase$ be any model.
  For any state $(x,y) \in W\Qstrat$ and any $\psi \in \Sigma$, % $\wsf{x,y} = \wsf{h_x(x,y)}$.
  $\model\Qstrat, (x,y) \sat \psi$ if and only if $\model\Qbase, h (x,y) \sat \psi$.
\end{lemma}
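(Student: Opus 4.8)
The plan is to prove the biconditional by structural induction on $\psi \in \Sigma$, using throughout that $\Sigma$ is closed under subformulas so that the induction hypothesis is available for every proper subformula of $\psi$. The propositional cases are immediate: for $\psi = p$ the claim is the defining equation $V\Qstrat(x,y) = V\Qbase(h(x,y))$, the case $\psi = \bot$ is vacuous, and $\psi = \theta_1 \wedge \theta_2$, $\psi = \theta_1 \vee \theta_2$ follow at once from the induction hypothesis. For the connectives of $\langred$ other than $\imp$, namely $\tnext$, $\ubox$, and $\Until$, the key fact is that $h$ intertwines the transition maps: Lemma~\ref{lemHHomoFull} gives $S \circ h = h \circ S\Qstrat$, and iterating it yields $h\big((S\Qstrat)^k(x,y)\big) = S^k\big(h(x,y)\big)$ for every $k$, where moreover $(S\Qstrat)^k(x,y) = (x, y+k)$. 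Since the truth conditions for $\tnext\theta$, $\ubox\theta$, and $\theta_1\Until\theta_2$ quantify only over the iterated successors $(S\Qstrat)^k(x,y)$ on the stratified side and over the $S^k(h(x,y))$ on the base side, each such case reduces mechanically to applying the induction hypothesis at these matching pairs of points.

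The substantive case is $\psi = \theta_1 \imp \theta_2$. Unfolding the semantics and applying the induction hypothesis to $\theta_1,\theta_2$ at each $\peq\Qstrat$-successor of $(x,y)$, the assertion $\model\Qstrat, (x,y)\sat\psi$ becomes: $(\ast)$ for all $(x',y')$ with $(x,y)\peq\Qstrat(x',y')$, if $\model\Qbase, h(x',y')\sat\theta_1$ then $\model\Qbase, h(x',y')\sat\theta_2$. I must show $(\ast)$ is equivalent to $\model\Qbase, h(x,y)\sat\psi$, i.e.\ to the same implication ranging over all $v \in W\Qbase$ with $h(x,y)\irel\Qbase v$. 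One direction is easy: given any such $(x',y')$, Lemma~\ref{lem:stratified:aux}.\ref{itHMon} gives $h(x,y)\irel\Qbase h(x',y')$, so instantiating the base-side condition at $v = h(x',y')$ yields $(\ast)$. For the converse I will invoke the following \emph{resolution property} of the construction: for every $(x,y)\in W\Qstrat$ and every $v$ with $h(x,y)\irel\Qbase v$, there is $(x',y')$ with $(x,y)\peq\Qstrat(x',y')$ and $\wsf{h(x',y')} = \wsf{v}$. Granting this, suppose $h(x,y)\irel\Qbase v$ and $\model\Qbase, v\sat\theta_1$; pick $(x',y')$ as supplied. Since $\theta_1,\theta_2\in\Sigma$ and $\wsf{h(x',y')} = \wsf{v}$, the points $h(x',y')$ and $v$ satisfy the same formulas among $\theta_1,\theta_2$; hence $\model\Qbase, h(x',y')\sat\theta_1$, so $(\ast)$ gives $\model\Qbase, h(x',y')\sat\theta_2$, and therefore $\model\Qbase, v\sat\theta_2$, as needed.

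What remains, and what I expect to be the main obstacle, is the resolution property, whose proof is bookkeeping within Definition~\ref{defQstrat}. Given $(x,y)\in W\Qstrat$ and $v$ with $h(x,y)\irel\Qbase v$, set $H = \wsf{v}$ and let $k$ be the unique stage with $(x_k,y_k,H_k) = (x,y,H)$. I claim both conditions governing the inductive step at stage $k$ hold. Condition~\ref{cond:stratified:ind:exist} is witnessed by $v$ itself, as $h_k(x,y) = h(x,y)\irel\Qbase v$ and $\wsf{v} = H$ --- provided we know $(x,y)\in U_k$, which is condition~\ref{cond:stratified:ind:1}. For that, observe that any point of $W\Qstrat$ whose first coordinate is $x$ already lies in $U_{x}$: when $x = 0$ this is the base case, and when $x>0$ such a point can only have been introduced in passing from $U_{x-1}$ to $U_{x}$, since the points freshly added at stage $j$ all have first coordinate $j+1$; as the enumeration is chosen with $x_k\le k$, we get $x\le k$ and hence $(x,y)\in U_x\subseteq U_k$. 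Consequently, at stage $k$ the construction adds the point $(k+1,y)$ with the edge $(x,y)\isuccrel_{k+1}(k+1,y)$ and with $h(k+1,y) = v_k$, where $\wsf{v_k} = H = \wsf{v}$; taking $(x',y') = (k+1,y)$ establishes the resolution property and completes the proof. The one delicate point throughout is exactly this matching of each defect to a stage at which the point it refers to is already present, which is precisely what the constraint $x_k\le k$ on the enumeration guarantees.
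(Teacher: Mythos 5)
Your proof is correct and follows essentially the same route as the paper's: structural induction, Lemma~\ref{lemHHomoFull} for the $\tnext$, $\ubox$, $\Until$ cases, Lemma~\ref{lem:stratified:aux} for the easy direction of implication, and the defect enumeration (conditions \ref{cond:stratified:ind:1} and \ref{cond:stratified:ind:exist}, with $x_k \le k$) to produce a $\peq\Qstrat$-successor whose $h$-image has the same $\Sigma$-type as the offending world. Your isolation of this last step as a ``resolution property,'' with the explicit check that points with first coordinate $x$ already lie in $U_x$, merely spells out what the paper states tersely.
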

\begin{proof}
  The proof is by induction on the size $\nos{\psi}$ of the formula.
  The cases for propositional variables, falsum, conjunctions and disjunctions are straightforward.
  For the temporal modalities, recall that
  for all $(x, y) \in W\Qstrat$ and all $n \in \nat$,
  $(S\Qstrat )^ n(x,y) = (x, y + n) \in W\Qstrat$, so that by Lemma \ref{lemHHomoFull}, $h (x, y + n) = \msuccfct\Qbase^n\left(h (x,y)\right)$, which allows us to easily apply the induction hypothesis.
  
  Finally, for implication, suppose first that $\model\Qstrat, (x,y) \nsat \psi_1 \imp \psi_2$.
  Then there is $(x',y')$ such that $(x, y) \irel\Qstrat (x', y')$,
  $\model\Qstrat, (x', y') \sat \psi_1$ and $\model\Qstrat, (x', y') \nsat \psi_2$.
  By Lemma~\ref{lem:stratified:aux}.\ref{itHMon}, $h (x, y) \irel\Qbase h (x', y')$
  and by induction hypothesis, $\model\Qbase, h (x', y') \sat \psi_1$ and
  $\model\Qbase, h (x', y') \nsat \psi_2$.
  Therefore, $\model\Qbase, h (x, y) \nsat \psi_1 \imp \psi_2$.
  For the other direction suppose that $\model\Qbase, h (x,y) \nsat \psi_1 \imp \psi_2$.
  Hence, There is $v' \in W\Qbase$ such that $h (x, y) \irel\Qbase v'$,
  $\model\Qbase, v' \sat \psi_1$ and $\model\Qbase, v' \nsat \psi_2$.
  Let $k$ be such that $(x_k,y_k,H_k) = (x, y, \Sigma(v'))$; then, $v'$ witnesses that~\ref{cond:stratified:ind:exist} holds, and since $x \le k$,
  condition~\ref{cond:stratified:ind:belong} holds too.
  Hence, there is $(x', y') \in W\Qstrat$ such that $\Sigma(h (x', y')) = \wsf{v'}$
  and $(x, y) \isuccrel\Qstrat (x', y')$, which implies that $(x, y) \irel \Qstrat (x', y')$.
  By induction hypothesis, $\model\Qstrat, (x',y') \sat \psi_1$ and
  $\model\Qstrat, (x', y') \nsat \psi_2$, hence $\model\Qstrat, (x, y) \nsat \psi_1 \imp \psi_2$.
\end{proof}

In conclusion, we obtain the following:

\begin{theorem}\label{TheoStrat}
  A formula $\varphi$ is satisfiable (resp. falsifiable) on an intuitionistic dynamic model
  if and only if   it is satisfiable (resp. falsifiable) on an expanding model.
\end{theorem}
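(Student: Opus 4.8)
The plan is to obtain the theorem as a direct consequence of the two lemmas immediately preceding it, together with Proposition~\ref{propExpEquiv}. The right-to-left implication is immediate: an expanding model is in particular a stratified model and hence an intuitionistic dynamic model, so any formula satisfiable (resp.\ falsifiable) on an expanding model is already satisfiable (resp.\ falsifiable) on an intuitionistic dynamic model.

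For the left-to-right implication I would first reduce to the fragment $\langred$. Given an arbitrary $\varphi \in \lang$, Proposition~\ref{propExpEquiv} provides $\varphi' \in \langred$ equivalent to $\varphi$ over all dynamic posets, hence in particular over all intuitionistic dynamic models and over all expanding models; so it suffices to prove the statement for formulas of $\langred$. Now suppose $\varphi \in \langred$ and $\model\Qbase, w\Qbase \sat \varphi$ (resp.\ $\model\Qbase, w\Qbase \nsat \varphi$) for some model $\modelbase[\Qbase]$ and some $w\Qbase \in W\Qbase$. Let $\Sigma$ be the set of subformulas of $\varphi$, so that $\Sigma \Subset \langred$ and $\varphi \in \Sigma$, and build the stratified model $\model\Qstrat$ from $\Sigma$, $\model\Qbase$ and $w\Qbase$ via Definition~\ref{defQstrat}. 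By the first of the two preceding lemmas, $\model\Qstrat$ is an expanding model. From the base case of the construction, $(0,0) \in U_0 \subseteq W\Qstrat$ and, since $h$ is a function extending $h_0$ (Lemma~\ref{lemHHomoFull}), $h(0,0) = \msuccfct\Qbase^0(w\Qbase) = w\Qbase$. Applying the truth lemma (the second of the two preceding lemmas) with $(x,y) = (0,0)$ and $\psi = \varphi$ then yields $\model\Qstrat, (0,0) \sat \varphi$ iff $\model\Qbase, w\Qbase \sat \varphi$; since the right-hand side holds (resp.\ fails), so does the left, and $\varphi$ is satisfied (resp.\ falsified) on the expanding model $\model\Qstrat$.

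There is no real remaining obstacle, since the substantial content — that $\model\Qstrat$ is expanding and that $h$ both preserves and reflects truth of formulas in $\Sigma$ — has already been established in the two lemmas. The only points needing a little care are that falsifiability must be transferred through the biconditional form of the truth lemma rather than via the negation $\neg\varphi$ (which would be inadequate in the intuitionistic setting), and that the reduction to $\langred$ is licensed by Proposition~\ref{propExpEquiv} together with the observation that expanding models are themselves dynamic posets.
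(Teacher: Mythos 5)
Your proposal is correct and is exactly the argument the paper intends: the theorem is stated as an immediate consequence of the two preceding lemmas (that $\model\Qstrat$ is expanding and that $h$ preserves and reflects truth of formulas in $\Sigma$), applied at the root $(0,0)$ with $h(0,0)=w\Qbase$, after the reduction to $\langred$ via Proposition~\ref{propExpEquiv} that the paper invokes at the start of the section. Your remarks on handling falsifiability through the biconditional rather than via $\neg\varphi$ are a sensible precaution but introduce no deviation from the paper's route.
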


\section{Special classes of frames}\label{secSpecial}

As we have seen in Propositon \ref{PropIntCond}, the class of dynamic posets is the widest class of posets equipped with a function that satisfy truth monotonicity under the classical interpretation of the temporal modalities.
However, in the literature one often considers smaller classes of frames.
In this section we will discuss persistent and here-and-there models, and compare their logics to $\iltl$.

\subsection{Persistent frames}

Expanding models were introduced as a weakening of product models, and thus it is natural to also consider a variant of $\iltl$
interpreted over `standard' product models,
or over the somewhat wider class of persistent models.

\begin{definition}
Let $(W,{\preccurlyeq})$ be a poset.
If $\msuccfct\colon W\to W$ is such that,
whenever $v \succcurlyeq \sfun(w)$, there is $u\succcurlyeq w$ such that $v=\sfun (u)$,
we say that $\sfun$ is {\em backward confluent}.
If $\sfun$ is both forward and backward confluent, we say that it is {\em persistent}.
A tuple $(W,{\preccurlyeq},\msuccfct)$ where $\msuccfct$ is persistent is a {\em persistent
intuitionistic temporal frame}, and the set of valid formulas over the class of persistent
intuitionistic temporal frames is denoted $\itlb$, or {\em persistent domain $\sf ITL$}.
\end{definition}

See Figure \ref{FigCO} for an illustration of backwards confluence.
The name `persistent' comes from the fact that Theorem \ref{TheoStrat} can be modified to obtain a stratified model $\model'$ where $\msuccfct'\colon W'_k\to W'_{k+1}$ is an isomorphism, i.e.~whose domains are persistent with respect to $\msuccfct'$, although we will not elaborate on this issue here.
Next we remark that $\iltl\subsetneq \itlb$, given the following claim proven in \cite{BoudouJELIA}.
\begin{proposition}\label{prop:nvalid:iltl}
The formula $\left(\tnext \varphi \rightarrow \tnext \psi \right) \imp \tnext \left(\varphi \rightarrow \psi\right)$ is not $\iltl$-valid. However it is $\itlb$-valid.
\end{proposition}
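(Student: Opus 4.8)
The plan is to establish the two claims separately. For the non-validity over $\iltl$, I would exhibit a small dynamic poset and a world at which $(\tnext\varphi\to\tnext\psi)\to\tnext(\varphi\to\psi)$ fails; by Proposition \ref{propExpEquiv} it suffices to take $\varphi,\psi$ to be propositional variables, say $p,q$. The model I have in mind has three worlds $w,v,u$ with $\sfun(w)=v$, $\sfun(v)=v$, $\sfun(u)=u$, the only nontrivial $\peq$-relation being $v\peq u$, and valuation $V(u)=\{p\}$, $V(v)=V(w)=\varnothing$. One first checks that $\sfun$ is forward confluent (the only comparable pair $v\peq u$ has $\sfun(v)=v\peq u=\sfun(u)$) and that $V$ is monotone, so this is a genuine dynamic poset. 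Then at $w$ we have $\sfun(w)=v$, and $v$ has successors $v,u$; since $q\notin V(u)$ and $p\in V(u)$, the world $v$ falsifies $p\to q$, hence $\model,w\nsat\tnext(p\to q)$. On the other hand $\model,v\nsat p$ (the only $\peq$-successor of $v$ at which we need to test is $v$ itself, where $p$ is false, but more directly $p\notin V(v)$) — wait, for $\tnext p$ at $w$ we need $\model,v\sat p$, and $p\notin V(v)$, so $\model,w\nsat\tnext p$, whence vacuously $\model,w\sat\tnext p\to\tnext q$. Since $w$ is $\peq$-maximal except for itself, we have shown $\model,w\nsat(\tnext p\to\tnext q)\to\tnext(p\to q)$. (One should double-check there is no $\peq$-successor of $w$ other than $w$, which holds by construction.)

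For the $\itlb$-validity, I would argue semantically using backward confluence. Let $\model=(W,{\peq},\sfun,V)$ be a persistent frame with valuation, let $w\in W$, and suppose $\model,w\sat\tnext\varphi\to\tnext\psi$; I must show $\model,w\sat\tnext(\varphi\to\psi)$, i.e.\ $\model,\sfun(w)\sat\varphi\to\psi$. So fix $v\succcurlyeq\sfun(w)$ with $\model,v\sat\varphi$; I need $\model,v\sat\psi$. By backward confluence there is $u\succcurlyeq w$ with $\sfun(u)=v$. Then $\model,u\sat\tnext\varphi$, and since truth is monotone (Proposition \ref{PropIntCond}) and $u\succcurlyeq w$, we get $\model,u\sat\tnext\varphi\to\tnext\psi$, hence $\model,u\sat\tnext\psi$, i.e.\ $\model,\sfun(u)=v$ satisfies $\psi$. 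As $v\succcurlyeq\sfun(w)$ was arbitrary, $\model,\sfun(w)\sat\varphi\to\psi$, as required. This is essentially the argument already sketched in the removed passage of the excerpt for item \eqref{ItFSOne}.

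Since the proposition is attributed to \cite{BoudouJELIA}, the cleanest writeup is simply to cite that source for both halves, or to reproduce the two short arguments above. The only point requiring genuine care is the counter-model verification: one must confirm both that the frame is a legitimate dynamic poset (forward confluence of $\sfun$ and monotonicity of $V$) and that no stray $\peq$-edges were introduced that would force $p\to q$ to hold at $w$'s successors. The $\itlb$ direction has no real obstacle — it is a direct unwinding of the definitions of backward confluence and of the intuitionistic implication clause, together with monotonicity of truth.
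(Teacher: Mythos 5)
Your proposal is correct and follows essentially the same route as the paper: the paper itself defers the proof to the cited reference \cite{BoudouJELIA}, and the arguments you give (the three-world counter-model $w,v,u$ with $v\peq u$ for non-validity over dynamic posets, and the backward-confluence argument for $\itlb$-validity) are exactly the standard ones for this claim. The only cosmetic point is that appealing to Proposition~\ref{propExpEquiv} to justify instantiating $\varphi,\psi$ as variables is unnecessary — exhibiting one falsifiable instance of the schema already suffices — and in the $\itlb$ half you can skip the monotonicity detour, since the implication clause at $w$ already quantifies over all $u\succcurlyeq w$.
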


Over the class of persistent models this property will allow us to `push down' all occurrences of $\tnext$ to the propositional level.
Say that a formula $\varphi$ is in {\em $\tnext$-normal form} if all occurrences of $\tnext$ are of the form $\tnext ^i p$, with $p$ a propositional variable.

\begin{theorem}\label{TheoNextInside}
Given $\varphi\in \lang$, there exists $\widetilde\varphi$ in $\tnext$-normal form such that $\varphi\leftrightarrow\widetilde\varphi$ is valid over the class of persistent models.
\end{theorem}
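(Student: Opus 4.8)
The plan is to push occurrences of $\tnext$ inward by induction on formula structure, using the valid distribution laws of Proposition~\ref{prop:valid:iltl} together with the extra $\itlb$-validity of Proposition~\ref{prop:nvalid:iltl}. More precisely, I would first prove an auxiliary claim: for every $\psi \in \lang$ there is $\widetilde\psi$ in $\tnext$-normal form with $\psi \leftrightarrow \widetilde\psi$ valid over persistent models, and moreover $\tnext\psi \leftrightarrow \widetilde{\tnext\psi}$ where $\widetilde{\tnext\psi}$ is obtained from $\widetilde\psi$ by ``incrementing the $\tnext$-depth'' of every maximal $\tnext$-block. The point of carrying the $\tnext\psi$ case alongside the $\psi$ case is that the only way $\tnext$ gets introduced in the recursion is by a surrounding $\tnext$, so it suffices to know how to normalise $\tnext$ applied to an already-normalised formula.

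The induction itself is routine given the right toolkit. For the base case $\psi = p$ or $\psi = \bot$ there is nothing to do (and $\tnext\bot \leftrightarrow \bot$ by Proposition~\ref{prop:valid:iltl}(1)). For the Boolean and temporal connectives, the inductive hypothesis gives normal forms for the immediate subformulas, and closure under the connective is immediate since the connective does not introduce a $\tnext$. The real content is the clause for $\tnext\psi$: writing $\widetilde\psi$ for the normal form of $\psi$, I need to commute the outer $\tnext$ past every connective of $\widetilde\psi$ until it lands on a propositional variable. Conjunction, disjunction, $\ubox$, $\diam$, $\Until$, $\R$ all commute with $\tnext$ outright by Proposition~\ref{prop:valid:iltl}(2),(3),(5),(6) and Proposition~\ref{PropUValid}(7),(8); the implication case is exactly where $\tnext(\chi\imp\theta) \leftrightarrow (\tnext\chi \imp \tnext\theta)$ is needed, and the forward direction of this is Proposition~\ref{prop:valid:iltl}(4) while the converse is precisely the formula of Proposition~\ref{prop:nvalid:iltl}, which is $\itlb$-valid. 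Iterating, $\tnext^i p$ blocks simply become $\tnext^{i+1}p$. Since $\lang$, $\lang_{\diam\R}$, and $\lang_{\ubox\Until}$ are interdefinable over dynamic posets (Proposition~\ref{propExpEquiv}) and persistent frames are dynamic posets, I may freely assume $\psi$ is built from $\tnext,\diam,\ubox,\Until,\R$ and Booleans, so these are all the cases.

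The main obstacle is purely bookkeeping: making the statement of the auxiliary claim strong enough that the recursion goes through cleanly — in particular formulating the ``increment the $\tnext$-depth'' operation on normal-form formulas and checking it is well-defined and preserves $\tnext$-normal form, and verifying that the chain of equivalences used at the $\tnext(\chi\imp\theta)$ step can be applied under arbitrary contexts (which is legitimate since $\leftrightarrow$ is a congruence for all the connectives over any class of models closed under the relevant semantics, and persistent models are such a class). No genuinely hard combinatorics is involved; the one place where one must be careful is that the converse of Proposition~\ref{prop:valid:iltl}(4) fails over general dynamic posets, so the theorem is genuinely restricted to persistent models and the proof must invoke backward confluence exactly once, via Proposition~\ref{prop:nvalid:iltl}.
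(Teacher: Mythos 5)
Your proposal is correct and follows essentially the same route as the paper, whose proof is precisely a structural induction pushing $\tnext$ inward via the validities of Propositions~\ref{prop:valid:iltl}, \ref{prop:nvalid:iltl} and \ref{PropUValid}, with Proposition~\ref{prop:nvalid:iltl} supplying the one direction of the implication case that requires persistence. Your extra bookkeeping (the strengthened auxiliary claim and the $\tnext$-depth incrementing operation) is a reasonable way to make that induction explicit, but it is the same argument.
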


\proof
The claim can be proven by structural induction using the validities in Propositions \ref{prop:valid:iltl}, \ref{prop:nvalid:iltl} and \ref{PropUValid}.
\endproof

We remark that the only reason that this argument does not apply to arbitrary $\iltl$ models is the fact that $(\tnext\varphi\to\tnext\psi)\to\tnext(\varphi\to\psi)$ is not valid in general (Proposition \ref{prop:nvalid:iltl}).
Next we show that the finite model property fails over the class of persistent models, using the following formula.

\begin{lemma}\label{LemmInfExam}
The formula $\varphi=\neg\neg\diam\ubox p\to \diam\neg\neg\ubox p$ is not valid over the class of persistent models.
\end{lemma}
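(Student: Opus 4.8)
The goal is to exhibit a single persistent model $\model$ and a world $w$ with $\model, w \nsat \neg\neg\diam\ubox p \to \diam\neg\neg\ubox p$; that is, $\model, w \sat \neg\neg\diam\ubox p$ but $\model, w \nsat \diam\neg\neg\ubox p$. Recall that $\model, w \sat \neg\neg\theta$ means that every $v \succcurlyeq w$ has some $u \succcurlyeq v$ with $\model, u \sat \theta$ (``$\theta$ is dense above $w$''), while $\model, w \nsat \diam\neg\neg\ubox p$ means that for every $k$, $\model, S^k(w) \nsat \neg\neg\ubox p$, i.e.\ above each $S^k(w)$ there is a world where $\ubox p$ fails at every future stage. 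So intuitively I want a persistent model where, from $w$, the proposition $\diam\ubox p$ is \emph{forced} to become true somewhere up the $\preccurlyeq$-order (no matter how far up you go, you can still reach a point from which $p$ holds henceforth), yet along the actual time line $w, S(w), S^2(w), \dots$ one can always climb to a counterexample to $\ubox p$.

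The natural construction is an ``expanding/persistent'' grid indexed by $\Nat$ (time) $\times$ $\Nat$ (intuitionistic height), together with a diagonal phenomenon that delays when $p$ stabilises: at intuitionistic height $n$, the variable $p$ should hold from time $n$ onwards but fail at times $< n$ (so $p$ is monotone in $\preccurlyeq$ provided heights go up while the ``bad'' initial segment shrinks). Concretely I would take $W$ to contain worlds $(k, n)$ with the successor function $S(k,n) = (k+1, n)$, the order generated by $(k,n) \preccurlyeq (k, n+1)$ possibly together with extra ``sink'' worlds forcing persistence, and valuation making $p$ true at $(k,n)$ exactly when $k \ge n$. Then at $(k,n)$ we have $\model, (k,n) \sat \ubox p$ iff $p$ holds at all $(k', n)$ with $k' \ge k$, i.e.\ iff $k \ge n$; so $\diam\ubox p$ holds at $(0,n)$ (witnessed by time $n$), meaning $\diam \ubox p$ is true arbitrarily high above $(0,0)$, giving $\model,(0,0)\sat\neg\neg\diam\ubox p$. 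But at a fixed time $k$, going up in height to any $n > k$ produces a world $(k,n)$ where $p$ — and hence $\ubox p$ — is false, and one must further arrange that $\ubox p$ stays false arbitrarily high above $(k, \cdot)$, so that $\neg\neg\ubox p$ fails at $(k,\cdot)$ and hence $\diam\neg\neg\ubox p$ fails at $(0,0)$. This last requirement is what forces the ``sink'' worlds or an infinite height: above $(k,n)$ with $n>k$ one needs cofinally many worlds still falsifying $\ubox p$.

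The key steps, in order: (1)~write down the domain, the order, the successor function and the valuation explicitly; (2)~verify monotonicity of $V$ (the initial bad segment for $p$ at height $n$ must be contained in that at height $n' \le n$ — this dictates the direction of the order); (3)~verify forward confluence of $S$ and, crucially, backward confluence, so that the frame is genuinely persistent — here $S(k,n)=(k+1,n)$ and the order moving only in the height coordinate make both confluence checks essentially immediate, but one must be careful that the sink worlds used to falsify $\ubox p$ cofinally are themselves in the image of $S$ where required; (4)~compute $\llbracket \ubox p\rrbracket$ and then $\llbracket \diam\ubox p\rrbracket$, $\llbracket\neg\neg\diam\ubox p\rrbracket$, $\llbracket\neg\neg\ubox p\rrbracket$, $\llbracket\diam\neg\neg\ubox p\rrbracket$; (5)~conclude $\model,(0,0)\sat\neg\neg\diam\ubox p$ and $\model,(0,0)\nsat\diam\neg\neg\ubox p$.

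The main obstacle I anticipate is reconciling two tensions at once: making $p$ monotone while having its ``stabilisation time'' increase without bound with height (so that $\diam\ubox p$ is cofinally true, yet $\ubox p$ is cofinally false at each fixed time), \emph{and simultaneously} keeping $S$ backward confluent. Backward confluence is a strong condition — every $\preccurlyeq$-successor of $S(w)$ must be $S$ of some $\preccurlyeq$-successor of $w$ — and it is easy to accidentally break it when adding the auxiliary worlds needed to falsify $\ubox p$ cofinally. I would handle this by making the frame literally a product-like structure, $W = \Nat \times T$ for a suitable infinite tree or linear order $T$ interpreting the intuitionistic dimension, with $S$ acting only on the first coordinate; then backward confluence of $S$ is automatic, and all the work shifts to choosing $T$ and the valuation so that the $p$-stabilisation time is unbounded along $T$ while remaining monotone — which can be arranged by taking $T$ to be (a tree unravelling of) $\Nat$ with $p$ true at $(k,n)$ iff $k\ge n$, exactly as sketched above.
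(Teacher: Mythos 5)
Your overall strategy---an explicit persistent countermodel in which $S$ acts only on the first coordinate of a product, so that backward confluence comes for free---is viable, but the concrete model you commit to is not a model, and the obstacle you yourself flag is not resolved by your proposed fix. With the order generated by $(k,n)\peq(k,n+1)$ and $p$ true at $(k,n)$ iff $k\ge n$, the valuation is not monotone: $(n,n)\peq(n,n+1)$, yet $p$ holds at $(n,n)$ and fails at $(n,n+1)$. Reversing the order restores monotonicity but makes $(0,0)$ maximal, and then both of your semantic computations collapse; and passing to ``a tree unravelling of $\Nat$'' with the same valuation changes nothing, since that valuation is still anti-monotone along every branch. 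Worse, the problem is not only the valuation: no model of your shape with a \emph{linearly ordered} intuitionistic fibre can witness the lemma. Indeed, let $W=\Nat\times T$ with $T$ a chain with root $t_0$, $S(k,t)=(k+1,t)$ and $\peq$ acting on the second coordinate only. If $\model,(0,t_0)\sat\neg\neg\diam\ubox p$, then some $t'\seq t_0$ has $p$ true at $(j,t')$ for all $j\ge m$; since $T$ is a chain, every $t\seq t_0$ is comparable with $t'$, so by monotonicity every such $t$ has some $t''\seq t$ whose $p$-set contains $[m,\infty)$. Hence $\model,(m,t_0)\sat\neg\neg\ubox p$ and therefore $\model,(0,t_0)\sat\diam\neg\neg\ubox p$ after all. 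So the tension you identify is genuinely unresolvable over a chain: what is needed is width, not height.

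Concretely, you need infinitely many pairwise $\peq$-incomparable worlds above the evaluation point whose $p$-stabilisation times are unbounded. In your product style this can be done by taking $T$ to be a root $t_0$ with an infinite antichain $t_1,t_2,\dots$ above it, with $p$ false throughout the fibre of $t_0$ and true at $(k,t_m)$ iff $k\ge m$: this is monotone and persistent, every $(0,t_m)$ satisfies $\diam\ubox p$ (so $\neg\neg\diam\ubox p$ holds at $(0,t_0)$), while $(k,t_{k+1})$ is a maximal point falsifying $\ubox p$, so $\neg\neg\ubox p$ fails at every $(k,t_0)$ and $\diam\neg\neg\ubox p$ fails at $(0,t_0)$. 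The paper's own countermodel is an even leaner version of this idea: $W=\mathbb Z\cup\{r\}$ with $r$ strictly below every integer (the integers forming an antichain), $S(r)=r$, $S(n)=n+1$, and $\llbracket p\rrbracket=[0,\infty)$; every integer satisfies $\diam\ubox p$, giving $\neg\neg\diam\ubox p$ at $r$, while each negative integer is a maximal point falsifying $\ubox p$, and since $S^k(r)=r$ for all $k$ this falsifies $\neg\neg\ubox p$ at every time, so $\diam\neg\neg\ubox p$ fails at $r$. The device you are missing is precisely this branching above the evaluation point (equivalently, an $S$-fixpoint root seeing a whole antichain), which is what allows stabilisation times to be unbounded without breaking monotonicity; note also that using $\mathbb Z$ rather than $\Nat$ above $r$ is what keeps backward confluence intact in the paper's non-product presentation.
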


\proof
Consider the model $ M=(W,{\preccurlyeq},\msuccfct,V)$, where $W=\mathbb Z \cup \{r\}$ with $r$ a fresh world not in $\mathbb Z$, $w\preccurlyeq v$ if and only if $w=r$ or $w=v$, $\msuccfct(r)=r$ and $\msuccfct(n)=n+1$ for $n\in\mathbb Z$, and $\llbracket p\rrbracket =[0,\infty)$. It is readily seen that $ M$ is a persistent model, that $\model,r\sat \neg\neg\diam\ubox p$ (since every world above $r$ satisfies $\diam\ubox p$), yet $\model,r\nsat \diam\neg\neg\ubox p$, since there is no $n$ such that $\model, \msuccfct^n(r) \sat \neg\neg\ubox p$. It follows that $\model,r\nsat  \varphi $, and hence $\varphi$ is not valid, as claimed.
\endproof

\begin{lemma}\label{LemmFinExam}
The formula $\varphi$ (from Lemma \ref{LemmInfExam}) is valid over the class of finite, persistent models.
\end{lemma}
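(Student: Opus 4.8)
The plan is to show that on a finite persistent model, the antecedent $\neg\neg\diam\ubox p$ forces $\diam\neg\neg\ubox p$, essentially because finiteness makes the successor function eventually cyclic, which bounds the ``depth'' at which $\ubox p$ must stabilize. Let $\modelbase$ be a finite persistent model and suppose $\model,w\sat\neg\neg\diam\ubox p$; I must find $k$ with $\model,\msuccfct^k(w)\sat\neg\neg\ubox p$, i.e.\ a $k$ such that every $u\seq\msuccfct^k(w)$ satisfies $\neg\ubox p\to\bot$, which unfolds to: for every $u\seq\msuccfct^k(w)$, every $u'\seq u$ satisfies $\neg\neg\ubox p$ — equivalently every $v\seq\msuccfct^k(w)$ satisfies $\neg\neg\ubox p$, i.e.\ $\model,\msuccfct^k(w)\sat\neg\neg\ubox p$ iff no $v\seq\msuccfct^k(w)$ has $\model,v\sat\neg\ubox p$, iff every $v\seq\msuccfct^k(w)$ has some $v'\seq v$ with $\model,v'\sat\ubox p$.

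First I would exploit finiteness of $W$: the sequence $w,\msuccfct(w),\msuccfct^2(w),\dots$ is eventually periodic, so there are $m$ and $\ell>0$ with $\msuccfct^{m+\ell}(w)=\msuccfct^m(w)$; hence the set $\{\msuccfct^k(w):k\ge m\}$ is finite and closed under $\msuccfct$. Also, by forward confluence the upward cone of any $\msuccfct^k(w)$ maps into the cone of $\msuccfct^{k+1}(w)$, and by backward confluence these cones are in a sense ``in sync''. The key observation is that $\model,w\sat\neg\neg\diam\ubox p$ means: for every $v\seq w$ there is $v'\seq v$ with $\model,v'\sat\diam\ubox p$, i.e.\ some $n$ with $\model,\msuccfct^n(v')\sat\ubox p$. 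I would argue that because the relevant ``tails'' of the model form a finite structure, the quantifier ``there is $n$'' can be bounded uniformly, and then by persistence I can transport the witness $v'$ of $\diam\ubox p$ sitting above $v$ to a witness sitting above $\msuccfct^k(w)$ for a suitable fixed $k$, giving $\model,\msuccfct^k(w)\sat\neg\neg\ubox p$.

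More concretely, I would set $k=m$ (the start of the cycle) and show $\model,\msuccfct^m(w)\sat\neg\neg\ubox p$: take any $v\seq\msuccfct^m(w)$; I need $v'\seq v$ with $\model,v'\sat\ubox p$. Using $\model,w\sat\neg\neg\diam\ubox p$ together with repeated backward confluence, I can pull $v$ back along the cycle: since $\msuccfct^\ell$ maps the cone of $\msuccfct^m(w)$ onto (part of) itself and the state $\msuccfct^m(w)$ lies on a $\msuccfct$-cycle, there is $v_0\seq w$ with $\msuccfct^{m}(v_0)$ (or some $\msuccfct^{m+j\ell}(v_0)$) equal to $v$; then apply $\neg\neg\diam\ubox p$ at $v_0$ to get $v_0'\seq v_0$ and $n$ with $\model,\msuccfct^n(v_0')\sat\ubox p$, push $v_0'$ forward by $\msuccfct^m$ (forward confluence) to land above $v$, and use the fact that $\ubox p$ is ``eventually hereditary'' along $\msuccfct$ once it holds, to slide the witness down to $v$ itself if necessary — more carefully, $\model,\msuccfct^n(v_0')\sat\ubox p$ combined with monotonicity and the cyclic structure yields a point above $v$ satisfying $\ubox p$.

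The main obstacle will be the bookkeeping in the transport argument: correctly combining forward confluence (to push witnesses up) with backward confluence (to pull states down along the eventually-periodic orbit of $w$), while tracking that $\ubox p$, once true at some $\msuccfct^n(v')$, propagates appropriately so that a witness for $\ubox p$ ends up in the cone of $\msuccfct^m(w)$ above the arbitrary chosen $v$. I expect that a clean way to organize this is: (i) prove a helper claim that on a finite persistent model, for any $u$, $\model,u\sat\neg\neg\diam\ubox p$ implies $\model,\msuccfct(u)\sat\neg\neg\diam\ubox p$ and conversely $\neg\neg\diam\ubox p$ is ``eventually'' equivalent to $\neg\neg\ubox p$ on the cycle, and (ii) conclude by landing on the cycle after $m$ steps. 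Everything else — the antisymmetry/monotonicity checks, the semantics unfoldings of the nested negations — is routine.
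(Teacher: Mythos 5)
Your overall route (pull $v\seq S^m(w)$ back along $S^m$ by backward confluence to some $v_0\seq w$, instantiate the double negation there to get $v_0'\seq v_0$ with $\diam\ubox p$, push forward by forward confluence so that $S^m(v_0')\seq v$) is sound and genuinely different from the paper's, but as written it has a real gap at precisely the step you flag. The witness of $\ubox p$ lives at $S^n(v_0')$ for some $n$ depending on $v_0'$, and nothing bounds $n$ by $m$. When $n\le m$, forward persistence of $\ubox p$ along $S$ finishes the job; but when $n>m$, neither mechanism you invoke works: ``eventual heredity'' of $\ubox p$ only propagates it forward in time, never backward from $S^n(v_0')$ to $S^m(v_0')$, and $\peq$-monotonicity is irrelevant because $S^n(v_0')$ need not lie $\peq$-above $v$. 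Periodicity of the orbit of $w$ alone does not transfer to the points above that orbit, so ``the cyclic structure'' as stated does not by itself yield a point above $v$ satisfying $\ubox p$; likewise, the second half of your helper claim (i) is exactly the nontrivial content of the lemma and is asserted rather than proved.

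The missing ingredient that repairs this (and makes (i) precise) is: $S^\ell$ maps the cone $C=\{u\mid u\seq S^m(w)\}$ into itself by forward confluence and onto itself by backward confluence of the composite $S^\ell$, hence, $C$ being finite, $S^\ell$ is a bijection of $C$ and every $u\in C$ is $S$-periodic. In particular $S^m(v_0')\in C$ is periodic, its forward orbit equals the forward orbit of $S^n(v_0')$, and so $\ubox p$ at $S^n(v_0')$ already gives $\ubox p$ at $S^m(v_0')\seq v$, closing the case $n>m$; with this, your choice $k=m$ does work. Note that the paper avoids periodicity altogether: it lists the finitely many maximal points $v_1,\dots,v_n$ above $w$ (at a maximal point, $\neg\neg\diam\ubox p$ immediately yields $\diam\ubox p$), takes $k$ to be the maximum of their fulfilment times, and then uses backward confluence of $S^k$ together with the fact that $S^k$ sends a leaf above the preimage onto any given leaf above $S^k(w)$, so that every maximal point above $S^k(w)$ satisfies $\ubox p$ and hence $S^k(w)\sat\neg\neg\ubox p$.
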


\proof
Let $\model=(W,{\preccurlyeq},\msuccfct,V)$ be a finite, persistent model, and assume that $\model,w\sat \neg\neg\diam\ubox p $.
Let $v_1,\hdots,v_n$ enumerate the maximal elements of $\{v\in W \mid w\peq v\}$. 
For each $i\leq n$, let $k_i$ be large enough so that $\model,\msuccfct^{k_i}(v_i)\sat \ubox p $, and let $k=\max k_i$. We claim that $\model, \msuccfct^k(w)\sat \neg\neg\ubox p $, which concludes the proof. Let $u\succcurlyeq \msuccfct^k(w)$ be any leaf. Then, there is $v' \succcurlyeq w$ such that $u=\msuccfct^{k}(v')$ (since compositions of persistent functions are persistent). Choosing a leaf $v \seq v'$, we obtain by forward confluence of $S^k$ that $S^k(v) =  u$ (as $u$ is already a leaf). But, since $k\geq k_i$, we obtain $\model,u\sat \ubox p$. Since $u$ was arbitrary we easily obtain $\model,w\sat \diam \neg \neg \ubox p$, as desired.
\endproof

The following is then immediate from Lemmas \ref{LemmInfExam} and \ref{LemmFinExam}:

\begin{theorem}\label{ThmNonFin}
${\itlb}$ does not have the finite model property.
\end{theorem}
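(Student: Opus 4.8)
The plan is to deduce Theorem \ref{ThmNonFin} directly from the two preceding lemmas, so essentially no new work is required. Recall that a logic $\Lambda$ for a class $\Omega$ of models has the finite model property precisely when every formula that is falsifiable on $\Omega$ is already falsifiable on some finite member of $\Omega$; equivalently, every formula valid over all finite members of $\Omega$ is valid over all of $\Omega$. Here $\Omega$ is the class of persistent models, and the logic is $\itlb$.

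First I would recall the formula $\varphi = \neg\neg\diam\ubox p \to \diam\neg\neg\ubox p$ from Lemma \ref{LemmInfExam}. By Lemma \ref{LemmFinExam}, $\varphi$ is valid over the class of finite persistent models, so in particular it is falsified on no finite persistent model. However, by Lemma \ref{LemmInfExam}, $\varphi$ is not valid over the class of all persistent models, i.e.\ it \emph{is} falsified on some (necessarily infinite) persistent model, namely the model $M$ built on $\mathbb Z \cup \{r\}$ with root $r$. Thus $\varphi$ witnesses a failure of the finite model property: it is falsifiable over persistent models but not over finite persistent models. Hence $\itlb$ does not have the finite model property.

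Since both lemmas are already proven in the excerpt, the only thing to check is that the notion of ``finite model property'' being used is indeed the one that matches: a logic has the FMP over $\Omega$ iff falsifiability over $\Omega$ coincides with falsifiability over the finite members of $\Omega$ (this is the natural dual of the effective FMP definition given earlier for the satisfiability/falsifiability formulation). The formula $\varphi$ separates these two notions, which is exactly the required contradiction. There is essentially no obstacle here — the real content lives in Lemmas \ref{LemmInfExam} and \ref{LemmFinExam}, and the present theorem is merely their formal combination. If anything, the only point requiring a line of care is spelling out that ``not valid over persistent models'' means ``falsifiable on some persistent model,'' so that the witness model $M$ is genuinely infinite while no finite persistent model can falsify $\varphi$.

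\begin{proof}
Immediate from Lemmas \ref{LemmInfExam} and \ref{LemmFinExam}. By Lemma \ref{LemmInfExam}, the formula $\varphi = \neg\neg\diam\ubox p \to \diam\neg\neg\ubox p$ is falsified on some persistent model (which, by Lemma \ref{LemmFinExam}, must be infinite). By Lemma \ref{LemmFinExam}, $\varphi$ is valid over every finite persistent model, hence falsified on none of them. Therefore falsifiability of $\varphi$ over the class of persistent models is not witnessed by any finite persistent model, so $\itlb$ lacks the finite model property.
\end{proof}
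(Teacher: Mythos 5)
Your proposal is correct and is exactly the paper's argument: the theorem is stated as an immediate consequence of Lemmas \ref{LemmInfExam} and \ref{LemmFinExam}, with the formula $\neg\neg\diam\ubox p\to\diam\neg\neg\ubox p$ falsifiable on a persistent model but valid over all finite persistent models. Nothing further is needed.
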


Thus our decidability proof for $\iltl$, which proceeds by first establishing an effective finite model property, will not carry over to $\itlb$. Whether $\itlb$ is decidable remains open.

\subsection{Temporal here-and-there models}\label{SecHT}

An even smaller class of models which, nevertheless, has many applications is that of temporal here-and-there models \cite{CP07,BalbianiDieguezJelia}. Some of the results we will present here apply to this class, so it will be instructive to review it. The logic of here-and-there is the maximal logic strictly between classical and intuitionistic propositional logic, given by a frame $\{0,1\}$ with $0 \peq 1$. This logic is axiomatized by adding to intuitionistic propositional logic the axiom $p \vee ( p \imp q )\vee \neg q.$

A temporal here-and-there frame is a persistent frame that is `locally' based on this frame. To be precise:

\begin{definition}
A {\em temporal here-and-there frame} is a persistent frame $(W,{\peq},S)$ such that $W = T \times \{0,1\}$ for some set $T$, and there is a function $f\colon T\to T$ such that for all $t,s \in T$ and $i,j\in \{ 0,1\}$, $(t,i) \peq (s,j)$ if and only if $t=s$ and $i\leq j$ and $S(t,i) = (f(t),i)$.
\end{definition}

The prototypical example is the frame $(W,{\peq},S)$, where $W = \mathbb N\times \{0,1\}$, $(i,j) \peq (i',j')$ if $i = i'$ and $j \leq j'$, and $S(i,j) = (i+1,j)$. Note, however, that our definition allows for other examples (see Figure \ref{FigBoxU}). We will denote the resulting logic by $\itlht$. In its propositional flavour, here-and-there logic plays a crucial role in the definition of Equilibrium Logic~\cite{Pea96,Pea06}, a well-known characterisation of Stable Model~\cite{Gelfond88} and Answer Set~\cite{Niemela99,MT99} semantics for logic programs. Modal extensions of this aforementioned superintuitionistic logic made it possible to extend those existent logic programming paradigms with new constructs, allowing their use in different scenarios where describing and reasoning with temporal~\cite{CP07} or epistemic~\cite{CerroHS15} data is necessary. A combination of propositional here-and-there with {$\sf LTL$} was axiomatized by Balbiani and Di\'eguez \cite{BalbianiDieguezJelia}, who also show that $\ubox$ cannot be defined in terms of $\diam$, a result we will strengthen here to show that $\ubox$ cannot be defined even in terms of $\Until$. It is also claimed in \cite{BalbianiDieguezJelia} that $\diam$ is not definable in terms of $\ubox$ over the class of here-and-there models, but as we will see in Proposition \ref{propDiamDefinHT}, this claim is incorrect.

\section{Combinatorics of intuitionistic models}\label{sec:combinatorics}

In this section we introduce some combinatorial tools we will need in order to prove that $\iltl$ has the effective finite model property, and hence is decidable. We begin by discussing labelled structures, which allow for a graph-theoretic approach to intuitionistic models.

\subsection{Labelled structures and quasimodels}

\begin{definition}
Given a set $\Labels$ whose elements we call `labels' and a set $W$, a {\em $\Lambda$-labelling
function} on $W$ is any function $\labfun\colon W\to \Labels$. A structure ${
S}=(W,R,\labfun)$ where $W$ is a set, $R\subseteq W\times W$ and $\labfun$ is a labelling function on
$W$ is a {\em $\Labels$-labelled structure,} where `structure' may be replaced with `poset', `directed graph', etc.
\end{definition}

A useful measure of the complexity of a labelled poset or graph is given by its level:

\begin{definition}
Given a labelled poset $\A=(W, {\peq},\labfun)$ and an element $w \in W$,
an \emph{increasing chain} from $w$ of length $n$ is
a sequence $v_1 \ldots v_n$ of elements of $W$ such that $v_1 = w$ and $\forall i < n ,~ v_i \prec v_{i+1},$
where $u\prec u'$ is shorthand for $u\peq u'$ and $u'\not\peq u$. The chain $v_1 \ldots v_n$ is {\em proper} if it moreover satisfies $\forall i < n ,~ \labfun\left(v_i\right) \ne \labfun\left(v_{i+1}\right).$
The {\em depth} $\dpt{w} \in \nat \cup \{\omega\}$ of $w$ is defined such that
$\dpt{w} = m$ if $m$ is the maximal length of all the increasing chains from $w$
and $\dpt{w} = \omega$ if there is no such maximum.
%The degree $\degp{w} \in \nat \cup \{\omega\}$ of $w$ is defined analogously, but using proper increasing chains.
%Similarly, the degree $\deg{\A}$ of $\A$ is the maximal degree of all its elements.
%
Similarly,
the \emph{level} $\degp{w} \in \nat \cup \{\omega\}$ of $w$ is defined such that
$\degp{w} = m$ if $m$ is the maximal length of all the proper increasing chains from $w$
and $\degp{w} = \omega$ if there is no such maximum.
The level $\degp{\A}$ of $\A$ is the maximal level of all of its elements.

The notions of {\em depth} and {\em level} are extended to any acyclic directed graph $(W, {\uparrow},\labfun)$ by taking the respective values on $(W, {\uparrow}^\ast,\labfun)$.
\end{definition}

An important class of labelled posets comes from intuitionistic models. 
Below, recall that $\Sigma_\model(w) = \ens{\psi \in \Sigma}{\model, w \sat \psi}$, and we may omit the subindex `$\model$'.

\begin{definition}
Given an intuitionistic Kripke model $\model = (W, \irel, V)$, we denote the labelled poset $(W, \irel, \Sigma_\model)$ by $\model^\Sigma$.
Conversely,
given a labelled poset $\A=(W,{\peq},\labfun)$ over $\parts\Sigma$ such that if $w \irel v$ then $\labfun(w) \subseteq \labfun(v)$,
the valuation $V_{\labfun}$ is defined such that
$V_{\labfun}(w) = \{p \in \Var \mid p \in \labfun(w) \}$ for all $w \in W$, and denote the resulting model by $\newmodel\A$.
\end{definition}

If $\model = (W, \irel, V)$ is a model, it can easily be checked that for all $w, v \in W$,
if $w \irel v$ then $\Sigma(w) \subseteq \Sigma(v)$.
Note that not every $\parts\Sigma$-labelled poset is of the form $\model^\Sigma$, as it has to satisfy additional conditions according to the semantics.
In particular, we are interested in labelled posets that respect the intuitionistic implication:

\begin{definition}
Let $\Sigma\Subset \lang_{\ubox\Until}$ and $\A=(W,{\peq},\labfun)$ be a $\parts\Sigma$-labelled poset. We say that $\A$ is a {\em $\Sigma$-quasimodel}\david{Note the clash of terminology with \cite{DFD2016}. We might consider using different terminology.} if $\labfun $ is monotone in the sense that $w\peq  v$ implies that $\labfun(w)\subseteq \labfun(v)$, and whenever $\varphi\to\psi\in \Sigma$ and $w\in W$, we have that $\varphi\to\psi \in \labfun (w)$ if and only if, for all $v$ such that $w\peq v$, if $\varphi\in\labfun(v)$ then $\psi\in\labfun(v)$.

If further $(W,{\peq})$ is a tree, we say that $\A$ is {\em tree-like.}
\end{definition}

\subsection{Simulations, immersions and condensations}

As is well-known, truth in intuitionistic models is preserved by bisimulation, and thus this is usually the appropriate notion of equivalence between different models.
However, it will also be convenient to consider a weaker notion, which we call {\em bimersion}.

\begin{definition}\label{DefSim}
Given two labelled posets $\A=(W_\iA,\peq_\iA,\labfun_\iA)$ and $\B =(W_\iB ,\peq_\iB ,\labfun_\iB )$
and a relation $R\subseteq  W_\iA\times W_\iB $,
we write
\begin{align*}
\dom(R) & = \ens{w\in W_\iA}{\exists v \in W_\iB  ~ (w,v) \in R}\\
\rng(R) & =  \ens{v \in W_\iB }{\exists w \in W_\iA ~ (w,v) \in R}.
\end{align*}
A relation ${\simvar} \subseteq W_\iA \times W_\iB $ is
a \emph{simulation} from $\A$ to $\B $ if
$\dom({\simvar})=W_\iA$ and
whenever $w\mathrel\simvar v$,
it follows that $\labfun_\iA(w)=\labfun_\iB  (v)$,
and if $w\peq_\iA w'$ then there is $v'$ so that $v\peq_\iB  v'$ and $w'\mathrel \simvar v'$.

A simulation is called a {\em (partial) immersion} if it is a (partial) function.
If an immersion $\simvar\colon W_\iA\to W_\iB $ exists, we write $\A\imm \B $.
If, moreover, there is an immersion $\tau\colon W_\iB \to W_\iA$,
we say that they are {\em bimersive},
write $\A\simm \B $, and call the pair $(\simvar,\tau)$ a {\em bimersion}.
A {\em condensation from $\A$ to $\B$} is a bimersion $({\srone},{\srtwo})$
so that $\srone\colon W_\iA\to W_\iB $, $\srtwo\colon W_\iB \to W_\iA$,
$\srone$ is surjective, and $\srone\srtwo$ is the identity on $W_\iB$.
If such a condensation exists we write $\B \cond \A$.
Observe that $\B \cond \A$ implies that $\B \simm \A$. 

If $\M ,\N $ are models and $\Sigma\Subset \langred$, we write $\M \imm_\Sigma\N $ if $\M ^\Sigma\imm\N ^\Sigma$, and define $\simm_\Sigma,\cond_\Sigma$ similarly. We may also write e.g.~$\A\cond \M $ if $\A$ is $\parts\Sigma$-labelled and $\A\cond\M ^\Sigma$. 
\end{definition}

\begin{figure}\centering
\begin{tikzpicture}[node distance=2cm] 

  \node (a10) {$\varnothing$};
  \node (c11)[above left of=a10] {$\lbrace \tnext p\rbrace$};
  \node (a12)[above right of=a10] {$\varnothing$};
  \node (b13)[above left of=a12] {$\lbrace \tnext p\rbrace$};
  \node (c14)[above right of=a12] {$\lbrace p \rbrace$};

  \node (a20)[right=6cm of a10] {$\varnothing$};
  \node (b21)[above left of=a20] {$\lbrace \tnext p\rbrace$};
  \node (c22)[above right of=a20] {$\lbrace p \rbrace$};

  \path[intuitionistic relation]
  (a10) edge[] node {} (c11)
  (a10) edge[] node {} (a12)
  (a12) edge[] node {} (b13)
  (a12) edge[] node {} (c14)
  (a20) edge[] node {} (b21)
  (a20) edge[] node {} (c22)
  ;

  \path[dotted, thick, ->, >=stealth']
  (a10) edge[bend left=10] node {} (a20)		
  (a20) edge[bend left=10] node {} (a10)	
  (c11) edge[bend left=10] node {} (b21)		
  (b13) edge[bend left=5] node {} (b21)
  (b21) edge[bend left=5] node {} (b13)	
  (c14) edge[bend left=10] node {} (c22)		
  (c22) edge[bend left=10] node {} (c14)
  (a12) edge[bend left=10] node {} (a20)
  ;
\end{tikzpicture}				
\caption{A condensation from the labelled frame on the left to the labelled frame on the right.
  Dotted arrows indicate the condensation: $\rho$ for arrows from left to right and $\iota$ for arrows from right to left.}
\label{fig:condensate-frame}
\end{figure}
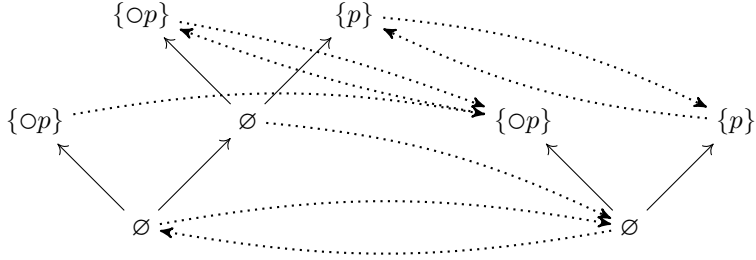

See Figure \ref{fig:condensate-frame} for an example of a condensation.
Note that the relation $\simm$ is an equivalence relation.
In this text, simulations will {\em always} be between posets. In the case that $\A$ or $\B$ is an acyclic directed graph, a simulation between $\A$ and $\B$ will be one between their respective transitive, reflexive closures. It will typically be convenient to work with immersions rather than simulations: however, as the next lemma shows, not much generality is lost by this restriction.

\begin{lemma}\label{LemmThereIsFun}
  Let $\fulllab\A\iA$ and $\fulllab\B\iB$ be labelled posets.
  If a simulation ${\simvar}\subseteq W_\iA\times W_\iB$ exists,
  $W_\iA$ is a finite tree, and $w\mathrel\simvar w'$,
  then there is a partial immersion ${\simvar}'\colon W_\iA \to W_\iB$ such that $w\in \dom({\simvar'})$ and $w'=\simvar'(w)$.
\end{lemma}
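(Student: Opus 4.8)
The plan is to build $\simvar'$ by recursion on the finite tree $W_\iA$, pushing the constraint "$w \mathrel\simvar w'$" down from the root along the unique paths. Since $W_\iA$ is a finite tree, it suffices to process its vertices in an order compatible with $\peq_\iA$ (root first). We will actually prove a slightly more general claim by induction: for every $u \in W_\iA$ and every $v \in W_\iB$ with $u \mathrel\simvar v$, there is a partial immersion defined on the subtree $W_\iA{\restriction}_{\geq u}$ generated by $u$ that sends $u$ to $v$ and is compatible with $\simvar$ everywhere (in the sense that $\simvar'(x) $ is some $\simvar$-image of $x$). The lemma then follows by taking $u = w$, $v = w'$: we obtain a partial immersion on the subtree above $w$, and we extend it to all of $W_\iA$ by choosing, for each vertex $x$ not above $w$, an arbitrary $v_x$ with $x \mathrel\simvar v_x$ (such $v_x$ exists since $\dom(\simvar) = W_\iA$), which is harmless since no coherence constraints link these vertices to those above $w$ once we no longer require a total simulation.

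The recursive step is as follows. Suppose $u \mathrel\simvar v$ and that $\simvar'(u) = v$ has been committed. Let $u_1, \dots, u_m$ be the immediate successors of $u$ in the tree $W_\iA$. For each $j$, since $u \peq_\iA u_j$ and $\simvar$ is a simulation, there is $v_j \in W_\iB$ with $v \peq_\iB v_j$ and $u_j \mathrel\simvar v_j$; commit $\simvar'(u_j) = v_j$ and recurse into each subtree above $u_j$. Because $W_\iA$ is a tree, the subtrees above distinct $u_j$ are disjoint and each vertex of $W_\iA{\restriction}_{\geq u}$ is assigned exactly once, so $\simvar'$ is well-defined as a partial function. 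Finiteness of $W_\iA$ guarantees the recursion terminates (the recursion tree is exactly a subtree of the finite tree $W_\iA$). By construction $u \mathrel{\simvar'} v$ entails $\labfun_\iA(u) = \labfun_\iB(\simvar'(u))$ at every assigned vertex, and whenever $x \peq_\iA x'$ with both above $u$, the unique $W_\iA$-path from $x$ to $x'$ is traversed step-by-step by the recursion, so a telescoping of the one-step conditions $\simvar'(x_i) \peq_\iB \simvar'(x_{i+1})$ together with transitivity of $\peq_\iB$ yields $\simvar'(x) \peq_\iB \simvar'(x')$; hence $\simvar'$ restricted to the subtree is an immersion, and $\simvar'$ as extended to $W_\iA$ is a partial immersion with $w \in \dom(\simvar')$ and $\simvar'(w) = w'$.

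The only mild subtlety — and the point deserving care rather than a genuine obstacle — is verifying the back-and-forth condition of a simulation for $\simvar'$ when $x \peq_\iA x'$ are at arbitrary (not merely adjacent) positions of the tree: one must observe that in a tree the relation $\peq_\iA$ between two comparable vertices is witnessed by a unique finite chain of immediate-successor steps, and only along that chain do we need to have propagated the $\peq_\iB$-constraints. Since the recursion visits exactly that chain, transitivity of $\peq_\iB$ closes the argument. No infinitary choice is needed beyond one $\peq_\iB$-successor per edge of the (finite) tree, so the construction is entirely finite and explicit.
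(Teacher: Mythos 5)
Your core recursion is exactly the paper's argument: starting from $w\mathrel\simvar w'$, use the forth clause of $\simvar$ to pick, for each daughter $u_j$ of the current node, a witness $v_j$ above the current image with $u_j\mathrel\simvar v_j$, recurse into the (disjoint, finite) subtrees, and verify monotonicity by telescoping the one-step choices along the unique tree paths. That part is correct and essentially identical to the paper's induction on depth, which produces a partial immersion whose domain is the subtree generated by $w$.

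The one step you should drop is the final ``extension to all of $W_\iA$'' by arbitrary $\simvar$-images. As justified (``no coherence constraints link these vertices to those above $w$''), it is wrong: a partial immersion is still a simulation on its domain, so whenever $x$ is in the domain and $x\peq_\iA x'$, you need $x'$ in the domain with $\simvar'(x)\peq_\iB\simvar'(x')$. In a tree, any $x$ with $x\peq_\iA x'$ for some $x'$ in the subtree above $w$ is an ancestor of $w$, so the strict ancestors of $w$ \emph{are} constrained: their images must lie $\peq_\iB$-below $w'$ (and below the images of everything above them), and an arbitrary choice of $v_x$ with $x\mathrel\simvar v_x$ need not satisfy this. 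The fix is simply not to extend: the lemma only asks that $w\in\dom(\simvar')$, and the subtree generated by $w$ is upward closed, so the partial immersion your recursion already builds (as in the paper) suffices.
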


\proof
By a straightforward induction on the depth of $w \in W_ \iA$ we show that if $w\mathrel\simvar w'$ then there is a partial immersion $\simvar_w$ with $w\in \dom ({\simvar_w})$, whose domain is the subtree generated by $w$, and such that $\simvar_w(w) = w'$.
Let $D$ be set of daughters of $w$,
and for each $v\in D$, choose $v'$ so that $v\mathrel \simvar v'$ and $w'\peq_\iB v'$.
By the induction hypothesis, there is a partial immersion $\simvar_{v}'$ with $v\in \dom(\simvar_{v}')$.
Then, one readily checks that $\{(w,w')\}\cup\bigcup_{v\in D}\simvar_{v}'$
is also an immersion, as needed. 
\endproof

Condensations are useful for producing (small) quasimodels out of models.

\begin{proposition} \label{prop:IL}
  Given
  an intuitionistic model $\fullmod\M \iM $,
  a set $\Sigma \Subset \langred$,
  and a $\parts\Sigma$-labelled poset $\fulllab\A\iA$ over $\Sigma$,
  if $\A \cond \M$,
  then $\A$ is a quasimodel.
\end{proposition}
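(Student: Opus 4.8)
The goal is to show that if $\A \cond \M$, i.e.\ there is a condensation $({\srone},{\srtwo})$ with $\srone\colon W_\iA\to W_\iM$ surjective, $\srtwo\colon W_\iM \to W_\iA$, and $\srone\srtwo = \mathrm{id}_{W_\iM}$, then the $\parts\Sigma$-labelled poset $\A$ is a $\Sigma$-quasimodel. Two things must be verified: (i) $\labfun_\iA$ is monotone, and (ii) for every $\varphi\to\psi\in\Sigma$ and every $w\in W_\iA$, we have $\varphi\to\psi\in\labfun_\iA(w)$ iff for all $v\peq_\iA w$ (read: $w\peq_\iA v$) with $\varphi\in\labfun_\iA(v)$ we also have $\psi\in\labfun_\iA(v)$. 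Throughout I will use that $\labfun_\iA(w) = \Sigma_\M(\srone(w))$, since $\srone$ is an immersion and hence label-preserving, and that $\labfun_\iM(u) = \Sigma_\M(u)$ by definition of $\M^\Sigma$.

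\textbf{Monotonicity.} First I would note that monotonicity of $\labfun_\iA$ is almost immediate: if $w\peq_\iA w'$, then since $\srone$ is an immersion we have $\srone(w)\peq_\iM\srone(w')$, and since labels in a model are monotone (if $u\irel u'$ then $\Sigma_\M(u)\subseteq\Sigma_\M(u')$, as remarked in the text right after the definition of $\model^\Sigma$), we get $\labfun_\iA(w) = \Sigma_\M(\srone(w))\subseteq\Sigma_\M(\srone(w')) = \labfun_\iA(w')$.

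\textbf{The implication clause.} This is the heart of the argument. Fix $\varphi\to\psi\in\Sigma$ and $w\in W_\iA$. For the forward direction, suppose $\varphi\to\psi\in\labfun_\iA(w) = \Sigma_\M(\srone(w))$, and take any $v$ with $w\peq_\iA v$ and $\varphi\in\labfun_\iA(v)$. Then $\srone(w)\irel\srone(v)$ and $\varphi\in\Sigma_\M(\srone(v))$, so by the semantics of $\to$ in $\M$ we get $\psi\in\Sigma_\M(\srone(v)) = \labfun_\iA(v)$. (This direction only uses that $\srone$ is an immersion.) For the converse, suppose $\varphi\to\psi\notin\labfun_\iA(w) = \Sigma_\M(\srone(w))$; then in $\M$ there is $u\irel_\iM\srone(w)$ (i.e.\ $\srone(w)\irel u$) with $\varphi\in\Sigma_\M(u)$ and $\psi\notin\Sigma_\M(u)$. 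Now I use the second immersion: set $v = \srtwo(u)$. Since $\srtwo$ is an immersion it is a simulation, but simulations only guarantee forward behaviour, so to control $\srone(w)\peq_\iM u$ I would instead exploit the condensation identity $\srone\srtwo = \mathrm{id}$ together with the simulation property of $\srtwo$ in the right way: because $(\srone,\srtwo)$ is a bimersion, one shows (this is the standard fact about condensations, and may already be folklore in the surrounding text) that $u\irel u'$ in $\M$ lifts appropriately; concretely, since $\srone(w)\irel u$ and $\srtwo$ preserves the order from a point mapping onto $\srone(w)$, we obtain $w' = \srtwo(\srone(w)) \peq_\iA \srtwo(u) = v$; then pre/post-composing with $\srone$ and using $\srone\srtwo=\mathrm{id}$ gives $\srone(w)\irel\srone(v) = u$ after identifying $\srone(w')=\srone(w)$, hence I actually need $w\peq_\iA v$, which follows because $\srone(w)=\srone(w')$ and, $\A$ being a poset with antisymmetric order while $\srone$ is label-preserving, one argues $w$ and $w'$ are comparable upward. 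I expect the delicate bookkeeping here — aligning $w$ with $\srtwo(\srone(w))$ and transferring the $\M$-witness $u$ back along $\srtwo$ to a genuine $\peq_\iA$-successor of $w$ — to be the main obstacle, and the cleanest route is probably to first prove a small auxiliary fact: for a condensation $(\srone,\srtwo)$, whenever $\srone(w)\irel u$ there is $v\seq w$ in $W_\iA$ with $\srone(v)=u$ (a "lifting" property), after which $\labfun_\iA(v) = \Sigma_\M(u)$ contains $\varphi$ but not $\psi$, witnessing that $\varphi\to\psi\notin\labfun_\iA(w)$ fails the right-hand side, as required. Once monotonicity and both directions of the implication clause are in hand, $\A$ satisfies the definition of $\Sigma$-quasimodel, completing the proof.
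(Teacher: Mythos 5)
There is a genuine gap, and it starts with the direction of the condensation. In the paper's Definition of $\cond$, writing $\A\cond\M$ means that there is a condensation \emph{from} $\M^\Sigma$ \emph{to} $\A$: that is, $\srone\colon W_\iM\to W_\iA$ is a surjective immersion, $\srtwo\colon W_\iA\to W_\iM$ is an immersion, and $\srone\srtwo$ is the identity on $W_\iA$. You have unpacked it the other way around ($\srone\colon W_\iA\to W_\iM$ surjective, $\srone\srtwo=\mathrm{id}$ on $W_\iM$), and this is not a harmless notational flip: under your reading the statement is actually false. For instance, take $\M$ to be a two-element chain $x\prec y$ with $p$ true only at $y$ and $q$ nowhere, $\Sigma=\{p,q,p\to q\}$, and let $\A$ be $\M^\Sigma$ together with one extra isolated point $c$ labelled $\Sigma(x)$; mapping $c$ to $x$ and everything else identically gives a surjective immersion from $\A$ onto $\M^\Sigma$ with a section, yet $\A$ is not a quasimodel, because $p\to q\notin\labfun_\iA(c)$ but $c$ has no $\peq_\iA$-successor where $p$ holds and $q$ fails. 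This is exactly the point where your argument breaks down: the ``lifting'' property you want (if $\srone(w)\irel u$ then some $v\seq w$ has $\srone(v)=u$) is a backward-confluence condition that immersions simply do not provide, and your attempted justification---that two points with the same $\srone$-image, or the same label, must be comparable upward because the order is antisymmetric---is invalid; distinct points of a poset with equal labels need not be comparable at all.

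With the correct orientation the difficulty you flag as ``the main obstacle'' disappears, and the argument is essentially the paper's. One reads the labels of $\A$ through $\srtwo$: $\labfun_\iA(w)=\Sigma(\srtwo(w))$. Monotonicity and the positive direction of the implication clause follow from $\srtwo$ being an immersion, as in your first two paragraphs (just with $\srtwo$ in place of $\srone$). For the negative direction, if $\varphi\to\psi\in\Sigma\setminus\labfun_\iA(w)$ then $\M,\srtwo(w)\nsat\varphi\to\psi$, so there is $v\in W_\iM$ with $\srtwo(w)\irel_\iM v$, $\M,v\sat\varphi$ and $\M,v\nsat\psi$; pushing this witness forward with the surjective immersion $\srone$ gives $\varphi\in\labfun_\iA(\srone(v))$, $\psi\notin\labfun_\iA(\srone(v))$, and $w=\srone\srtwo(w)\peq_\iA\srone(v)$, where the identity $\srone\srtwo=\mathrm{id}_{W_\iA}$ (on $W_\iA$, not on $W_\iM$) is precisely what aligns $w$ with the image of the witness. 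So the idea you gesture at---transfer the $\M$-witness back to a genuine $\peq_\iA$-successor of $w$---is correct in spirit, but it is only available because the condensation points in the opposite direction to the one you assumed; as written, your proof cannot be completed.
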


\begin{proof}
Let $(\srone,\srtwo)$ be a condensation from $\model^\Sigma$ to $\A$. If $w\peq_\iA v $, then $\srtwo(w)\peq_\iM \srtwo(v)$, so that $\labfun_\iA(w)=\Sigma(\srtwo(w))\subseteq \Sigma( \srtwo(v))=\labfun_\iA(v)$. Next, suppose that $\varphi\to\psi\in \labfun_\iA(w)$, and consider $v$ such that $w\peq_\iA v$. Then, $\M , \srtwo(w) \sat \varphi \imp \psi$. Since $\srtwo$ is an immersion, $\srtwo(w)\peq_\iM \srtwo(v)$, hence if $\M , \srtwo(v) \sat \varphi  $, then also $\M , \srtwo(v) \sat   \psi$. Thus if $\varphi\in\labfun_\iA(v)$, it follows that $\psi\in\labfun_\iA(v)$.
Finally, suppose that $\varphi\to\psi \in \Sigma\setminus \labfun_\iA(w)$. Then, $\M,\srtwo(w)\nsat \varphi\to\psi$, so that there is $v \in W_\iM$ such that $\srtwo(w) \irel_\iM v$,
  $\M, v \sat \varphi$ and $\M, v \nsat \psi$.
  It follows that $\varphi \in \labfun_\iA(\srone (v))$ and $\psi \not\in \labfun_\iA(\srone (v))$, and since $\srone$ is an immersion we also have that $w=\srone\srtwo(w)\peq_\iA \srone(v)$, as needed.
\end{proof}

\subsection{Normalized labelled trees}

In order to
count the number of different labelled trees up to bimersion,
we construct,
% prove the effective finite model property for $\iltl$, it will be convenient to represent
% intuitionistic models as subtrees of a single `universal' labelled graph.%
for any set $\Labels$ of labels and any $k \ge 1$,
the labelled directed acyclic graph $\bgraphbase{k}$ by induction on $k$ as follows.

\paragraph*{Base case.}
For $k = 1$, let $\bgraphbase 1$ with $W^\Labels_{1} = \Labels$, $\isuccrel^\Labels_{1} = \varnothing$, and $\labfun^\Labels_{1}(w) = w $ for all $ w \in W^\Labels_{1}.$

\paragraph*{Inductive case.}
Suppose that $\bgraphbase{k}$ has already been defined. Let us write $X \amalg Y$ for the disjoint union of $X$ and $Y$. The graph $\bgraphbase{k+1}$ is constructed such that:
\begin{align*}
  W^\Labels_{k+1} &= W^\Labels_{k} \amalg \tilde W^\Labels_{k+1} \text{, where } \tilde W^\Labels_{k + 1}= 
   \Labels \times \parts{W^\Labels_{k}}  \\
  \isuccrel^\Labels_{k+1} &= \isuccrel^\Labels_{k} \cup
    \ens{((\ell,C),y) \in \tilde W^\Labels_{k + 1} \times W^\Labels_{k}}{y \in C} \\
  \labfun^\Labels_{k+1}(w) &= \begin{cases}
      \labfun^\Labels_{k}(w) &\text{if } w \in W^\Labels_{k} \\
      \ell       &\text{if } w = (\ell, C) \in \tilde W^\Labels_{k+1}
    \end{cases}
    \end{align*}\david{Before we had
    \[W^\Labels_{k+1} = W^\Labels_k \cup \ens{(\ell, C) \in \Labels \times \parts{W^\Labels_{k}}}{
    \forall y \in C,~ \labfun^\Labels_{k}(y) \ne \ell}.\]
    As far as I can tell this simpler definition works just as well.}

Note that $\bgraphbase k$ is typically not a tree, but we may unravel it to obtain one.

\begin{definition}
Given a  labelled directed graph $\Grph  = (W, \isuccrel, \labfun)$ and $w\in W$,
the \emph{unravelling} of $\Grph $ from $w$ is
the labelled tree $\ur w \Grph  =(\ur w W, \ur w\isuccrel, \ur w\labfun)$ such that
$\ur w W$ is the set of all the paths in $\Grph$ starting on $w$,
$\xi \mathrel{\ur w \isuccrel}\zeta $ if and only if there is $v \in W $
such that $ \zeta = \xi v$, and $\ur w \labfun(v_0 \ldots v_n) = \labfun(v_n)$.
\end{definition}

\begin{proposition} \label{PropIsBisim}
  For any rooted labelled tree
  $\T $ over a set $\Labels$ of labels,
  if the level of $\T$ is finite then
  there is a condensation from $\T$ to $\ur y { \Grph^\Lambda_{{\rm lev}(\T)} }$ for some $y \in W^\Lambda_{{\rm lev}(\T)}$.
\end{proposition}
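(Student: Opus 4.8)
The plan is to prove Proposition~\ref{PropIsBisim} by induction on the level $n = {\rm lev}(\T)$, constructing the condensation explicitly along the way. The key idea is that the graph $\bgraph{n}$ is built so as to contain, up to relabelling of nodes, a copy of every possible ``shape'' of a labelled tree of level at most $n$: the new nodes $\tilde W^\Lambda_{k+1} = \Lambda \times \parts{W^\Lambda_k}$ encode ``a node with label $\ell$ whose set of strict successors-shapes is $C$''. Unravelling from an appropriate vertex $y$ turns this graph into a tree, and the unravelling comes with a canonical surjection back onto the graph (sending a path to its last vertex), which is the skeleton of the condensation. So the first step is to set up, for a rooted labelled tree $\T = (W_\T, \isuccrel_\T, \labfun_\T)$ with root $r$ and finite level $n$, a map $\rho\colon W_\T \to W^\Lambda_n$ assigning to each $w$ a vertex recording its label together with (recursively) the $\rho$-images of its daughters; formally $\rho(w) = (\labfun_\T(w), \{\rho(v) : v \text{ a daughter of } w\})$ when $w$ is not a leaf, landing in $\tilde W^\Lambda_{\ell(w)}$ where $\ell(w) = {\rm lev}(w)$, and $\rho(w) = \labfun_\T(w) \in W^\Lambda_1$ when $w$ is a leaf. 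One checks $\rho$ is label-preserving and that $w \isuccrel_\T v$ implies $\rho(w) \isuccrel^\Lambda_n \rho(v)$, using the base/inductive clauses of the construction of $\bgraph{}{}$ and the fact (from the definition of level) that ${\rm lev}(v) < {\rm lev}(w)$ whenever $v$ is a daughter of $w$ with $\labfun_\T(v)\ne\labfun_\T(w)$ — here one must be slightly careful when a daughter has the same label as its parent, since then the level does not drop; in that case $\rho(v)$ still lies in $W^\Lambda_{\ell(w)}$ by the ``old vertices'' part $W^\Lambda_k \amalg \tilde W^\Lambda_{k+1}$, and the edge is still present.

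Next, set $y = \rho(r)$ and form the unravelling $\ur y {\bgraph{n}}$. Every path in $\bgraph{n}$ starting at $y$ that stays within the image of $\rho$ lifts uniquely to a path in $\T$ starting at $r$ (since $\T$ is a tree, a vertex is determined by its unique path from the root), and conversely every path in $\T$ from $r$ maps under $\rho$ to a path in $\bgraph{n}$ from $y$. This gives a bijection between $W_\T$ and the ``reachable'' part of $\ur y W$; but actually the unravelling only contains vertices reachable from $y$, so I claim this bijection is onto all of $\ur y {W}$ — every path from $y$ in $\bgraph{n}$ is realized by $\rho$ of a path in $\T$, because $y = \rho(r)$ and the successors of any $\rho(w)$ in $\bgraph{n}$ are, by construction of the new vertices, exactly $\{\rho(v) : v \text{ daughter of } w\}$ plus possibly old vertices, which are again in the image. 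Define $\rho'\colon W_\T \to \ur y W$ by sending $w$ to the path $\rho(r_0)\rho(r_1)\cdots\rho(r_m)$ where $r_0\isuccrel_\T r_1 \isuccrel_\T\cdots\isuccrel_\T r_m = w$ is the unique path from the root; and define $\iota'\colon \ur y W \to W_\T$ by sending a path $\xi$ to the $\isuccrel_\T$-endpoint of its (unique) $\rho$-preimage path in $\T$. One checks $\rho'$ is an injective immersion, $\iota'$ is an immersion, $\rho'$ is surjective, and $\iota'\rho' = {\rm id}_{W_\T}$, and that $\ur y \labfun \circ \rho' = \labfun_\T$ since $\ur y \labfun$ reads off the last vertex's label, which is $\labfun_\T$ of the endpoint. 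Hence $(\rho',\iota')$ is a condensation from $\T$ to $\ur y {\bgraph{n}}$ in the sense of Definition~\ref{DefSim}, after passing to reflexive–transitive closures as stipulated for simulations between acyclic graphs.

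The one genuine subtlety — and the step I expect to be the main obstacle — is verifying the \emph{back-and-forth} clauses of the simulation with respect to the \emph{order} $\peq = \isuccrel^*$ rather than just the edge relation, i.e.\ that whenever $w \peq_\T w'$ there is a corresponding $\peq$-successor on the other side with the same label, and vice versa. The forward direction is fine because $\rho'$ is a homomorphism for $\isuccrel$ hence for $\isuccrel^*$. The reverse direction, needed for $\iota'$, requires that if $\rho'(w) \mathrel{(\ur y \isuccrel)^*} \zeta$ then $\zeta = \rho'(w')$ for some $w' \succeq_\T w$ — this is where surjectivity of $\rho'$ and the tree structure are used: $\zeta$ is a path extending the path coding $w$, and by the realizability claim above it is $\rho'$ of the endpoint $w'$ of the corresponding extended path in $\T$, which satisfies $w \isuccrel_\T^* w'$. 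I would isolate this realizability statement (``every path from $y$ in $\bgraph{n}$ is $\rho$ of a path from $r$ in $\T$'') as a lemma and prove it by induction on path length, as it is the linchpin of both surjectivity and the back clause. Everything else (label preservation, $\iota'\rho' = {\rm id}$, antisymmetry issues, that $\ur y{\bgraph{n}}$ is indeed a tree) is routine and follows from the definitions of unravelling and of $\bgraph{}{}$.
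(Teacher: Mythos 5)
Your construction has a genuine gap, and it sits exactly at the point you flag as needing ``slight care'': daughters carrying the same label as their parent. With $\rho(w)=(\labfun_\T(w),\{\rho(v): v \text{ a daughter of } w\})$, the stage at which $\rho(w)$ first exists in the construction of $\Grph^\Labels_{k}$ grows with the \emph{depth} of the subtree below $w$, not with its level. Indeed, if $v$ is a daughter of $w$ with $\labfun_\T(v)=\labfun_\T(w)$ and $\degp{v}=\degp{w}=k$, your invariant puts $\rho(v)\in\tilde W^\Labels_{k}$, hence $\rho(v)\notin W^\Labels_{k-1}$; then the pair $(\labfun_\T(w),C)$ with $\rho(v)\in C$ is \emph{not} an element of $\tilde W^\Labels_{k}=\Labels\times\parts{W^\Labels_{k-1}}$ and can only be born at stage $k+1$ or later. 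Moreover every edge of $\bgraph{n}$ goes from a vertex born at some stage $j+1$ to a vertex born at stage $\leq j$, so an edge $\rho(w)\isuccrel\rho(v)$ requires $\rho(v)$ to be born strictly earlier than $\rho(w)$; the claim that ``$\rho(v)$ still lies in $W^\Labels_{\degp{w}}$ \dots\ and the edge is still present'' therefore fails. The failure compounds along chains: for a chain $w_0\isuccrel_\T w_1\isuccrel_\T\cdots\isuccrel_\T w_m$ of identically labelled nodes (a tree of level $1$), your $\rho$ sends $w_{m-j}$ into $\tilde W^\Labels_{j+1}$, so $\rho(w_0)\notin W^\Labels_{1}=W^\Labels_{\degp{\T}}$, whereas the correct condensation collapses the whole chain to the single vertex $\labfun_\T(w_0)$. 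This is not a repairable bookkeeping slip: the proposition is applied to arbitrary, typically infinite, trees of finite level, while every unravelling of $\Grph^\Labels_{\degp{\T}}$ is finite with size bounded in terms of $|\Labels|$ and $\degp{\T}$ alone (Lemma~\ref{LemmBoundsGraph}), so your claim that $\rho'$ is a bijection onto the unravelling cannot hold --- a condensation must genuinely identify distinct points, which your shape map never does. (Two further symptoms: when two sibling subtrees are isomorphic as labelled trees their $\rho$-images coincide, so paths do not lift uniquely and your $\iota'$ is not well defined as stated; and although you announce an induction on the level, the construction never invokes an induction hypothesis.)

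What is missing is a mechanism for collapsing maximal same-label regions, and this is precisely what the paper's induction on the level supplies: the set $M$ of points all of whose $\peq$-predecessors carry the root's label is mapped wholesale to the single new vertex $s=(\labfun_\T(r),\{y_w: w\in N\})$, where $N$ is the set of minimal points at which the label first changes; each subtree rooted at some $w\in N$ has strictly smaller level and is condensed into $\ur{y_w}{\Grph^\Labels_{n-1}}$ by the induction hypothesis; these pieces are glued inside the unravelling of $\Grph^\Labels_{n}$ from $s$; and the map back into $\T$ is obtained first only as a simulation and then refined to a partial immersion via Lemma~\ref{LemmThereIsFun} (which also resolves the non-uniqueness of preimages that trips up your $\iota'$). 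If you repair your $\rho$ by quotienting each maximal connected same-label region before recursing, you essentially reconstruct the paper's argument.
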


\begin{proof}
  Let $\T = (W_\T, \isuccrel_\T, \labfun_\T)$ be a labelled tree with root $r$.
  We write $\irelneq_\T$
  for the transitive closure of $\isuccrel_\T$
  and $\irel_\T$
  for the reflexive closure of $\irelneq_\T$.
  The proof is by induction on the level $n = \degp \T$ of $\T$.
  For $n = 1$, observe that this means that $\labfun_\T(w)=\labfun(r)$ for all $w\in W_\T$.
  Let $\srone = W_\T\times \{\labfun_\T(r)\}$ and $\srtwo=\{(\labfun_\T(r),r)\}$.
  It can easily be checked that
  $(\srone,\srtwo)$ is a condensation.\footnote{Recall that as per our convention, this means that $(\srone,\srtwo)$ is a condensation between the respective transitive closures.}
  For $n > 1$, suppose the property holds for all rooted labelled trees $\T'$ such that
  $\degp {\T'} < n$.
  Define the following sets:
  \begin{align*}
    N &= \ens{w \in W_\T}{\labfun_\T(w) \neq \labfun_\T(r)
            \text{ and for all } v \irelneq w ,~ \labfun_\T(v) = \labfun_\T(r)} \\
    M &= \ens{w \in W}{\text{for all } v \irel w ,~ \labfun_\T(v) = \labfun_\T(r)}
  \end{align*}
  Note that if $w \in N$ then $\degp w < \degp r$, and therefore $\degp w < n$; hence by induction, there is a condensation $({\srone}'_w,{\srtwo}'_w)$
  from the subgraph of $\T$ generated by $w$
  to $\ur {y_w} { \Grph^\Labels_{n-1} }$ for some $y_w \in W ^\Labels_{n-1}$.

 Define $s = (\labfun(r), \ens{y_w}{w \in N})\in W^\Lambda_{n}$ and consider the unravelling $\mathfrak U = (W_\mathfrak U,\isuccrel_\mathfrak U,\lambda_\mathfrak U)$ of $\Grph^\Labels_n$ from $s$.
Note that $\ur{y_w}{ \Grph^\Labels_{n-1} }$ embeds into $\mathfrak U$ via the map $\xi \mapsto s \xi$, and with this we define $\rho_w \colon W_\T \to W_\mathfrak U$ by $\rho_w = s \rho'_w$, and similarly define $\iota_w \colon W_\mathfrak U \to W_\T$ by $\iota_w (s\xi) = \iota'_w (\xi)$ (i.e., $\iota_w$ first removes the first element of a string and then applies $\iota'_w$).

We then define
\begin{align*}
{\srone} & = \left(M \times \left\{s\right\} \right) \cup \bigcup_{w \in N} {\srone}_w,\\
{\tilde \srtwo} & = \{(s,r)\} \cup \bigcup_{w \in N} {\srtwo}_w.
\end{align*}
Then, it can readily be checked that $\srone$
  is an immersion from $\T$ to~$\mathfrak U$,
$\tilde \srtwo$  is a simulation from $\mathfrak U$ to~$\T$
  and $\tilde \srtwo\subseteq \srone^{-1}$.
  Using Lemma~\ref{LemmThereIsFun}, we can then choose an immersion $\srtwo\subseteq\tilde \srtwo$, so
  that $(\srone,\srtwo)$ is a condensation from $\T$ to~$\mathfrak U$.\philippe{I do not agree that this can be easily checked. We should take into account that the reader is maybe lost in the huge set of notations and definitions that she has to keep in mind. If it is explicitly explained why $\rho = (M \times {s}) \cup \bigcup_{w \in N} \rho_{w}$, then she will understand better.}
\david{That $\rho = (M \times {s}) \cup \bigcup_{w \in N} \rho_{w}$ is a definition, not a claim. Is it clearer now? If not, we can add more detail.}
\end{proof}

\newcommand{\nre}[2]{E^{#1}_{#2}}
\newcommand{\nrq}[2]{Q^{#1}_{#2}}

Finally, given $n,k\in\nat$ let us recursively define natural numbers $\nre{n}{k}$ and $\nrq{n}{k}$ by:

\begin{equation*}
  \nre{n}{k} = \begin{cases}
    0 &\text{if } k = 0 \\
    \nre{n}{k-1} + n 2^{\nre{n}{k-1}} &\text{otherwise}
  \end{cases}
  \qquad
  \nrq{n}{k} = \begin{cases}
    0 &\text{if } k = 0 \\
    1 + \nre{n}{k-1} \nrq{n}{k-1} &\text{otherwise}
  \end{cases}
\end{equation*}

The following lemma can be proven by a straightforward induction, left to the reader.\philippe{Again, I do not agree with ``straightforward'' and ``left to the reader''. As for me, I was unable to figure out the meaning of all these notations.}

\begin{lemma}\label{LemmBoundsGraph}

  For any finite set $\Labels$ with cardinality $n$ and all $k \in \nat$,
\begin{enumerate*}

\item
  the size of $\bgraph{k}$ is bounded by $\nre{n}{k}$, and
  
\item 
  the size of any unravelling of $\bgraph k$
  is bounded by $\nrq{n}{k}$.

\end{enumerate*}

\end{lemma}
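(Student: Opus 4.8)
The plan is to prove both bounds simultaneously by induction on $k$, unwinding the two recursive definitions of $\nre{n}{k}$ and $\nrq{n}{k}$ in parallel with the inductive construction of $\bgraph{k}$ given just before the lemma. I would fix a finite label set $\Labels$ with $|\Labels| = n$ and write $G_k = \bgraph{k}$, $U_k$ for a generic unravelling of $G_k$ from some vertex.

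\textbf{Base case ($k=0$).} Strictly speaking the construction in the text starts at $k=1$, so either the lemma is read vacuously at $k=0$ (both $\nre{n}{0}=\nrq{n}{0}=0$ and there is nothing to bound), or one checks $k=1$ directly: $|W^\Labels_1| = |\Labels| = n$, and $\nre{n}{1} = \nre{n}{0} + n2^{\nre{n}{0}} = 0 + n\cdot 2^0 = n$, so the first bound holds with equality; any unravelling of $G_1$ from a vertex $w$ consists of the single path $(w)$ since $\isuccrel^\Labels_1 = \varnothing$, so it has size $1 = 1 + \nre{n}{0}\nrq{n}{0} = \nrq{n}{1}$. I would present it this way to sidestep the off-by-one in the indexing.

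\textbf{Inductive step for the size of $G_{k+1}$.} By construction $W^\Labels_{k+1} = W^\Labels_k \amalg (\Labels \times \parts{W^\Labels_k})$, so $|W^\Labels_{k+1}| = |W^\Labels_k| + n\cdot 2^{|W^\Labels_k|}$. Since $t \mapsto t + n2^t$ is monotone increasing and, by the induction hypothesis, $|W^\Labels_k| \le \nre{n}{k}$, we get $|W^\Labels_{k+1}| \le \nre{n}{k} + n2^{\nre{n}{k}} = \nre{n}{k+1}$, which is exactly the recursive clause. This part is genuinely routine.

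\textbf{Inductive step for unravellings.} Here is the only place requiring a little care. An unravelling $U_{k+1}$ of $G_{k+1}$ from a vertex $y$ has as vertices the paths in $G_{k+1}$ starting at $y$. If $y \in W^\Labels_k$, then since $\isuccrel^\Labels_{k+1}$ adds no edges out of old vertices (the new edges go from $\tilde W^\Labels_{k+1}$ into $W^\Labels_k$), every such path stays in $G_k$, so $U_{k+1}$ is an unravelling of $G_k$ and has size $\le \nrq{n}{k} \le \nrq{n}{k+1}$. If $y = (\ell, C) \in \tilde W^\Labels_{k+1}$, then a path from $y$ is either the trivial path $(y)$, or $y$ followed by a path in $G_k$ starting at some $z \in C$; distinct paths are obtained from distinct choices, so $|U_{k+1}| \le 1 + \sum_{z \in C} |(\text{unravelling of } G_k \text{ from } z)| \le 1 + |C| \cdot \nrq{n}{k}$ by the induction hypothesis. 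Finally $|C| \le |W^\Labels_k| \le \nre{n}{k}$ by the first bound (already proved for $k$), giving $|U_{k+1}| \le 1 + \nre{n}{k}\nrq{n}{k} = \nrq{n}{k+1}$. Note the two bounds must be proved in tandem: the unravelling bound at stage $k+1$ consumes the size bound $|W^\Labels_k| \le \nre{n}{k}$.

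The main (minor) obstacle is bookkeeping: making sure one invokes the size bound for $G_k$ — not $G_{k+1}$ — inside the unravelling estimate, and handling the two cases for the root of the unravelling separately, since only roots in $\tilde W^\Labels_{k+1}$ have outgoing edges into the previous layer. Everything else is a direct reading-off of the recursions, which is presumably why the authors left it to the reader.
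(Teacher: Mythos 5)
Your proof is correct and follows exactly the route the paper intends: the paper states only that the lemma ``can be proven by a straightforward induction, left to the reader,'' and your simultaneous induction on $k$ --- counting $|W^\Labels_{k+1}| = |W^\Labels_k| + n2^{|W^\Labels_k|}$ for the first bound, splitting the root of the unravelling according to whether it lies in $W^\Labels_k$ or in $\tilde W^\Labels_{k+1}$, and feeding the size bound $|W^\Labels_k|\le\nre{n}{k}$ into the estimate $1+|C|\,\nrq{n}{k}\le\nrq{n}{k+1}$ --- is precisely that induction. The only micro-step you use without comment is the monotonicity $\nrq{n}{k}\le\nrq{n}{k+1}$ in the old-root case, which is immediate from the recursion since $\nre{n}{k}\ge 1$ for $k\ge 1$ and nonempty $\Labels$.
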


From this and Proposition \ref{PropIsBisim}, we obtain the following:

\begin{theorem}\label{TheoKruskal}

\begin{enumerate}

\item 
Given a set of labels $\Labels$ and a $\Labels$-labelled tree $\T$ of level $k<\omega$, there is a $\Labels$-labelled tree $\T'$ bounded by $\nrq{|\Labels|}{k}$ such that $\T'\simm \T$. We call $\T'$ the {\em normalized $\Labels$-labelled tree for $\T$.}

\item \label{ItKruskal} Given a sequence of $\Labels$-labelled trees $\T_1,\hdots,\T_n$ of level $k<\omega$ with $n > \nre{|\Labels|}{k}$, there are indexes $i<j\leq n$ such that $\T_i\simm \T_j$.

\end{enumerate}

\end{theorem}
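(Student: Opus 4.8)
The plan is to obtain both items essentially for free from Proposition~\ref{PropIsBisim} and Lemma~\ref{LemmBoundsGraph}, together with two facts already noted in the text: that $\simm$ is an equivalence relation, and that $\B\cond\A$ implies $\B\simm\A$. Before starting I would record one small structural observation about the graphs $\bgraph{k}$: in their inductive construction every edge newly placed in $\isuccrel^\Labels_{k+1}$ has its source in the freshly adjoined set $\tilde W^\Labels_{k+1}$, so a path starting from a vertex $y$ that already belonged to an earlier stage $W^\Labels_{k'}$ stays entirely within $W^\Labels_{k'}$. Hence, for $k'\le k$ and $y\in W^\Labels_{k'}\subseteq W^\Labels_k$, the unravelling $\ur y{\bgraph{k}}$ coincides with $\ur y{\bgraph[\Labels]{k'}}$. (Alternatively, one checks that $\nre{n}{\cdot}$ and $\nrq{n}{\cdot}$ are monotone in their second argument, which also suffices.)

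For the first item, given a $\Labels$-labelled tree $\T$ of level $k<\omega$, I would apply Proposition~\ref{PropIsBisim} to get $y\in W^\Labels_k$ and a condensation from $\T$ to $\ur y{\bgraph{k}}$, and then simply set $\T'=\ur y{\bgraph{k}}$. This $\T'$ is a tree, being an unravelling; it satisfies $\T'\cond\T$, hence $\T'\simm\T$; and by Lemma~\ref{LemmBoundsGraph} its size is at most $\nrq{|\Labels|}{k}$. So $\T'$ is the desired normalized tree.

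For the second item, for each $i\le n$ I would use Proposition~\ref{PropIsBisim} to obtain $y_i\in W^\Labels_{\degp{\T_i}}\subseteq W^\Labels_k$ with $\T_i\simm\ur{y_i}{\bgraph[\Labels]{\degp{\T_i}}}=\ur{y_i}{\bgraph{k}}$, the last equality being the structural observation above. By Lemma~\ref{LemmBoundsGraph} there are at most $\nre{|\Labels|}{k}<n$ vertices in $\bgraph{k}$, so the pigeonhole principle gives $i<j\le n$ with $y_i=y_j$, and therefore $\ur{y_i}{\bgraph{k}}=\ur{y_j}{\bgraph{k}}$. Symmetry and transitivity of $\simm$ then yield $\T_i\simm\ur{y_i}{\bgraph{k}}=\ur{y_j}{\bgraph{k}}\simm\T_j$, i.e.\ $\T_i\simm\T_j$. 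There is no genuinely hard step here: all the real content lives in Proposition~\ref{PropIsBisim} and Lemma~\ref{LemmBoundsGraph}, and the only points needing attention are the pigeonhole count and the bookkeeping that lets us treat all the trees $\T_i$ as being unravelled inside the single graph $\bgraph{k}$.
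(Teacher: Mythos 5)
Your proof is correct and takes essentially the same route as the paper's: it combines Proposition~\ref{PropIsBisim} with the bounds of Lemma~\ref{LemmBoundsGraph}, obtains the first item by taking $\T'$ to be the appropriate unravelling of $\bgraph{k}$, and obtains the second by pigeonhole on the at most $\nre{|\Labels|}{k}$ possible starting vertices together with the fact that $\simm$ is an equivalence relation. Your additional observation that unravellings from vertices of earlier stages of the construction coincide (so that trees of level below $k$ can be treated inside the single graph $\bgraph{k}$) is a harmless refinement of a point the paper glosses over.
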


\begin{proof}
In view of Proposition \ref{PropIsBisim}, way may take $\T$ to be a suitable unravelling of $\Grph^\Lambda_k$, establishing the first claim. For the second, by Lemma \ref{LemmBoundsGraph}, $ \Grph^\Lambda_k$ has size at most $\nre{|\Lambda|}k$. Since the unravellings of any graph are determined by their starting point, there must be $i<j\leq n$ with $\T_i$ and $\T_j$ bimersive to the same unravelling of $\Grph^\Lambda_k $, from which it follows that $\T_i$ and $\T_j$ are bimersive.
\end{proof}

The second item may be viewed as a finitary variant of Kruskal's theorem for labelled trees \cite{Kruskal1960}. When applied to quasimodels, we obtain the following:

\begin{proposition}\label{PropBound}

Let $\Sigma \Subset \langred$ with $|\Sigma|=s <\omega$.

\begin{enumerate}

\item 
Given a tree-like $\Sigma$-quasimodel $\T$, there is a tree-like $\Sigma$-quasimodel $\T'\simm_{\Sigma}\T$ bounded by $\nrq{2^{s}}{s+1}$. We call $\T'$ the {\em normalized $\Sigma$-quasimodel for $\T$.}

\item Given a sequence of tree-like $\Sigma$-quasimodels $\T_1,\hdots,\T_n$ with $n > \nre{2^{s}}{s+1}$, there are indexes $i<j\leq n$ such that $\T_i\simm\T_j$.

\end{enumerate}
\end{proposition}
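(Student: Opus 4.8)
The plan is to deduce Proposition \ref{PropBound} directly from Theorem \ref{TheoKruskal} by bounding the relevant level parameter. The key observation is that a tree-like $\Sigma$-quasimodel is in particular a $\parts\Sigma$-labelled tree, so its labels come from the set $\Labels = \parts\Sigma$, which has cardinality $2^{s}$. The only missing ingredient is a bound on the \emph{level} of such a quasimodel, and I would argue that any tree-like $\Sigma$-quasimodel $\T$ has level at most $s+1$: along any proper increasing chain $v_1 \prec \cdots \prec v_m$ the labels $\labfun(v_1) \subsetneq \labfun(v_2) \subsetneq \cdots$ are strictly increasing subsets of $\Sigma$ — strictly because monotonicity of $\labfun$ together with distinctness of consecutive labels forces strict inclusion — and a strictly increasing chain of subsets of an $s$-element set has length at most $s+1$. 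Hence $\degp{\T} \le s+1$.

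With this bound in hand, both parts follow immediately by instantiating Theorem \ref{TheoKruskal} with $\Labels = \parts\Sigma$ (so $|\Labels| = 2^{s}$) and $k = s+1$. For part (1), the normalized $\parts\Sigma$-labelled tree $\T'$ for $\T$ provided by Theorem \ref{TheoKruskal}(1) is bounded by $\nrq{2^{s}}{s+1}$ and satisfies $\T' \simm \T$ as labelled trees; since bimersions preserve labels, $\T'$ is still a $\parts\Sigma$-labelled tree with the quasimodel conditions — one checks that the defining conditions of a quasimodel (monotonicity of $\labfun$ and the biconditional for $\varphi\to\psi \in \Sigma$) are invariant under bimersion, so $\T'$ is again a tree-like $\Sigma$-quasimodel, and $\T' \simm_\Sigma \T$ by definition. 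For part (2), apply Theorem \ref{TheoKruskal}(\ref{ItKruskal}) to the sequence $\T_1, \ldots, \T_n$ viewed as $\parts\Sigma$-labelled trees of level at most $s+1$: since $n > \nre{2^{s}}{s+1} \ge \nre{|\Labels|}{k}$, there are $i < j \le n$ with $\T_i \simm \T_j$.

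I expect the main (and really only) obstacle to be the verification that the quasimodel conditions are preserved under $\simm$ — that is, that a labelled tree bimersive to a tree-like $\Sigma$-quasimodel is itself a quasimodel. This should follow from the fact that immersions in both directions preserve labels and map $\peq$-related pairs to $\peq$-related pairs, so that the monotonicity clause transfers directly, and the `$\varphi\to\psi$' biconditional transfers because an immersion $\T' \to \T$ lets us pull back a potential witness $v$ (with $\varphi \in \labfun(v)$, $\psi \notin \labfun(v)$, $w \peq v$) — but one must be slightly careful because immersions need not be surjective, so getting witnesses in \emph{both} directions is where the pair $(\srone, \srtwo)$ is used. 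A clean alternative, avoiding this bookkeeping, is to observe via Proposition \ref{prop:IL} that $\T'$, being bimersive (indeed a condensation can be arranged) to the model $\newmodel{\T}$, is automatically a quasimodel; I would likely phrase the argument this way to keep it short.
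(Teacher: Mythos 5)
Your main line is correct and is essentially the paper's own proof: the paper likewise notes that any $\Sigma$-quasimodel has level at most $s+1$ (your strict-inclusion-of-labels argument is the justification it leaves implicit) and then combines Theorem~\ref{TheoKruskal} with the argument of Proposition~\ref{prop:IL} to see that the normalized tree is again a quasimodel. Your ``direct verification'' is exactly how the proof of Proposition~\ref{prop:IL} adapts when the labelled poset $\M^\Sigma$ is replaced by the quasimodel $\T$: monotonicity and the left-to-right half of the $\to$-clause transfer along the single immersion $\srtwo\colon\T'\to\T$, while the right-to-left half uses the condensation identity $\srone\srtwo=\mathrm{id}$ (available by Proposition~\ref{PropIsBisim}) to map a witness $v\succcurlyeq\srtwo(w)$ in $\T$ to $\srone(v)\succcurlyeq\srone\srtwo(w)=w$ in $\T'$. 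Do state that you use the condensation rather than a bare bimersion: the relevant point is not surjectivity of $\srone$ but the identity $\srone\srtwo=\mathrm{id}$, and with only $\T'\simm\T$ the pulled-back witness sits above $\srtwo'\srone'(w)$ rather than above $w$, so that direction would not obviously go through.

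The ``clean alternative'' you say you would actually write --- invoking Proposition~\ref{prop:IL} with $\M=\newmodel\T$ --- does not work. That proposition needs a condensation from $\M^\Sigma$ to $\T'$, where the labels of $\M^\Sigma$ are the sets $\Sigma_\M(\cdot)$ of $\Sigma$-formulas \emph{true} in $\M$. There is no reason that $\Sigma_{\newmodel\T}(w)=\labfun_\T(w)$: the quasimodel conditions constrain only monotonicity and the implications in $\Sigma$, so the labels may treat $\wedge$, $\vee$ and in particular the temporal formulas $\tnext\psi$, $\ubox\psi$, $\varphi\Until\psi$ occurring in $\Sigma$ arbitrarily, and $\newmodel\T$ carries no successor function against which such formulas could even be evaluated. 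Hence the condensation from $\T$ to $\T'$ is not a condensation from $(\newmodel\T)^\Sigma$ to $\T'$, and this shortcut collapses. Keep your first argument --- which is also what the paper's terse citation of Proposition~\ref{prop:IL} really amounts to, since that proposition is stated for models and is here being applied, mutatis mutandis, with the quasimodel $\T$ in place of $\M^\Sigma$.
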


\proof
Immediate from Proposition \ref{prop:IL} and Theorem~\ref{TheoKruskal} %{Lemma \ref{LemmBoundsGraph}} 
using the fact that any
$\Sigma$-quasimodel has level at most $s+1$.
\endproof

Finally, we obtain an analogous result for pointed structures.

\begin{definition}
A {\em pointed labelled poset} is a structure $(W,\peq,\labfun,w)$ consisting of a labelled tree with a designated world $w\in W$. Given a labelled poset $\A=(W_\iA,\peq_\iA,\labfun_\iA)$ and $w\in W_\iA$, we denote by $\A^w$ the pointed labelled poset given by $\A^w = (W_\iA,\peq_\iA,\labfun_\iA, w)$. A {\em pointed simulation} between pointed labelled posets $\A=(W_\iA,\peq_\iA,\labfun_\iA,w_\iA)$ and $\B=(W_\iB,\peq_\iB,\labfun_\iB,w_\iB)$ is a simulation $\simvar\subseteq W_\iA\times W_\iB$ such that if $w \mathrel \simvar v$, then $w=w_\iA$ if and only if $v=w_\iB$. The notions of {\em pointed immersion,} {\em pointed condensation,} etc.~are defined analogously to Definition \ref{DefSim}.
\end{definition}

\begin{lemma}\label{LemmPointedBound}
If $\Labels$ has $n$ elements, any pointed $\Labels$-labelled poset of level at most $k$ condenses to a labelled pointed tree bounded by $\nrq{2n}{k+2}$, and there are at most $\nre{2n}{k+2}$ bimersion classes.
\end{lemma}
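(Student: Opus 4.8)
The plan is to reduce the pointed case to the non-pointed case already handled by Theorem~\ref{TheoKruskal} (equivalently Lemma~\ref{LemmBoundsGraph}), by encoding the pointing information into the labels. Given a pointed $\Labels$-labelled poset $\A=(W,\peq,\labfun,w_0)$ of level at most $k$, I would work with the new label set $\Labels' = \Labels\times\{0,1\}$, which has $2n$ elements, and define a new labelling $\labfun'$ on $W$ by setting $\labfun'(v) = (\labfun(v),1)$ if $v\peq w_0$ — equivalently, if $w_0$ lies above $v$ — and $\labfun'(v)=(\labfun(v),0)$ otherwise. The first coordinate recovers the original label and the second coordinate flags exactly the down-set of $w_0$.

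The key steps are then as follows. First I would check that $\labfun'$ is still monotone-compatible in the sense needed for the constructions of Section~\ref{sec:combinatorics}: actually here we do not need monotonicity of the flag, since $\bgraph{}{}$, unravellings and condensations in Theorem~\ref{TheoKruskal} are stated for arbitrary labelled trees, not just quasimodels; so I only need that $(W,\peq,\labfun')$ is a legitimate $\Labels'$-labelled poset, which is immediate. Second, note that the level of $(W,\peq,\labfun')$ is at most $k+2$: a proper increasing chain for $\labfun'$ is in particular an increasing chain, and along any increasing chain the flag coordinate can change at most once (it is monotone, being $1$ exactly on the down-set of $w_0$), so inserting the flag can increase the number of "proper" steps by at most one compared with the $\labfun$-level — and to be safe I allow a slack of $2$ rather than $1$, matching the bound $\nrq{2n}{k+2}$ in the statement; I would double-check the exact increment but a bound of $k+2$ is certainly safe. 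Third, apply Theorem~\ref{TheoKruskal}(1) to the $\Labels'$-labelled tree (after unravelling, if $\A$ is not already a tree, noting unravelling preserves level and bimersion class) to obtain a bimersive $\Labels'$-labelled tree $\T'$ bounded by $\nrq{2n}{k+2}$. Fourth, strip the flag back off: the condensation $(\srone,\srtwo)$ witnessing $\T'\simm(W,\peq,\labfun')$ is label-preserving for $\labfun'$, hence in particular preserves the flag coordinate, so the unique $\srone$-preimage of the root is the designated point of $\T'$ and the whole condensation is automatically a \emph{pointed} condensation once we forget the flag and read the $\Labels$-labels off the first coordinate. For the counting clause, the bimersion classes of pointed $\Labels$-labelled posets of level $\le k$ inject into the bimersion classes of $\Labels'$-labelled trees of level $\le k+2$, of which there are at most $\nre{2n}{k+2}$ by Theorem~\ref{TheoKruskal}(2) / Lemma~\ref{LemmBoundsGraph}.

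I expect the main obstacle to be the bookkeeping around pointedness interacting with condensation: one must verify that a label-preserving condensation for $\labfun'$ really does respect the condition "$w\mathrel\srone v$ implies ($w$ is the point iff $v$ is the point)", which relies on the point being the \emph{unique} world whose flag is $1$ while having no flag-$1$ strict successor — or more simply on the point being recoverable from the $\Labels'$-structure. Here one has to be a little careful: the flag marks the entire down-set of $w_0$, not $w_0$ alone, so $w_0$ is identified as the \emph{top} element of the flag-$1$ region, and I must confirm this is preserved by bimersion (it is, since immersions preserve labels and the order, hence preserve "maximal element with label-flag $1$"). Alternatively, and perhaps more cleanly, I could flag \emph{only} $w_0$ itself rather than its down-set, accepting that this breaks monotonicity of the flag but observing that Theorem~\ref{TheoKruskal} does not require it; then $w_0$ is literally the unique world with flag $1$ and pointedness is trivially preserved, at the cost of a possibly larger level increment — which is exactly why the statement leaves a slack of $2$. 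I would adopt whichever of these two encodings makes the level estimate cleanest, and the rest is routine.
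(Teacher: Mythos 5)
Your second encoding --- flag only the designated point itself, i.e.\ $\labfun'(w_0)=(\labfun(w_0),1)$ and $\labfun'(v)=(\labfun(v),0)$ for $v\neq w_0$, over $\Labels'=\Labels\times\{0,1\}$ --- is exactly the paper's proof: the level rises to at most $k+2$ (a chain $u\prec w_0\prec v$ with $\labfun(u)=\labfun(w_0)=\labfun(v)$ can gain two proper steps), Proposition~\ref{PropIsBisim} yields a condensation $(\srone,\srtwo)$ onto an unravelling of $\Grph^{\Labels'}_{k+2}$, and since $\srone,\srtwo$ preserve labels and $w_0$ and $\srone(w_0)$ are the only worlds whose label has second component $1$, the condensation is automatically pointed; the counting clause then follows from the size bound on $\Grph^{\Labels'}_{k+2}$ (Lemma~\ref{LemmBoundsGraph}). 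With that choice your argument is correct and coincides with the paper's, and you are right that no monotonicity of $\labfun'$ is needed since Proposition~\ref{PropIsBisim} and Theorem~\ref{TheoKruskal} apply to arbitrary labelled trees.

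Your first encoding (flag the entire down-set of $w_0$) does not work, and your justification for it is where the gap lies. A pointed simulation requires that $w\mathrel\simvar v$ implies that $w$ is the point if and only if $v$ is; it is not enough that the point be recoverable as the maximal flag-$1$ element, because a condensation may identify a non-point with the point. Concretely, take $W=\{v,w_0\}$ with $v\prec w_0$, both carrying the same $\Labels$-label $\ell$, and designated point $w_0$. Down-set flagging gives both worlds the label $(\ell,1)$, the flagged tree has level $1$, and Proposition~\ref{PropIsBisim} condenses it to a one-node tree; any surjective immersion onto that tree must send the non-point $v$ to the (necessarily designated) single node, so no pointed condensation onto it exists --- whereas the point-only flagging correctly produces the two-node pointed tree. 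This matters because your stated tie-breaking rule (``adopt whichever encoding makes the level estimate cleanest'') would select the down-set encoding, whose level bound is the cleaner $k+1$, i.e.\ precisely the broken one; you must take the point-only encoding and accept the $k+2$ increment, which is what the stated bound allows for.
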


\proof
We may view a pointed labelled poset $\A=(W,\peq,\labfun,w)$ as a (non-pointed) labelled poset as
follows. Let $\Labels'=\Labels\times \{0,1\}$. Then, set $\labfun'(v)=(\labfun(v),0)$ if $v\ne w$,
$\labfun'(w)=(\labfun(w),1)$.
Note that if $\A$ had level $k$ according to $\lambda$ it may now have level $k+2$ according to $\lambda'$, since if $u\prec w\prec v$ we may have that $\labfun(u)=\labfun(w)=\labfun(v)$ yet $\labfun'(u)\ne\labfun'(w)$ and $\labfun'(w)\ne\labfun'(v)$. By Proposition \ref{PropIsBisim}, $\A$ condenses to a generated tree $\T$ of $\Grph^{\Labels'}_{k+2}$%\philippe{Why $k+2$?}\david{Because the world $w$ with $\labfun'(w)=(\labfun(w),1)$ can add some extra levels. I added this to the explanation, does it help?}
 by some condensation $({\srone},{\srtwo})$. Let $w'=\srone(w)$, and consider $\T$ as a pointed structure with distinguished point $w'$. Given that $\srone$ is a surjective, label-preserving function, $w,w'$ are the only points whose label has second component $1$, and therefore $({\srone},{\srtwo})$ must be a pointed condensation, as claimed.
\endproof

With this we may give an analogue of Proposition \ref{PropBound} tailored for pointed quasimodels.
Its proof is essentially the same.

\begin{proposition} \label{PropBoundPointed}
  Let $\Sigma \Subset\langred$ with $|\Sigma|=s $.
  \begin{enumerate}
    \item Given a tree-like pointed $\Sigma$-quasimodel $\T$
          and a formula $\varphi$,
          there is a tree-like pointed $\Sigma$-quasimodel $\T'\simm \T$ bounded by $\nrq{2^{s+1}}{s+3}$.
          We call $\T'$ the {\em normalized pointed $\Sigma$-quasimodel for $\T$.}

    \item Given a sequence of tree-like pointed $\Sigma$-quasimodels $\T_1,\hdots,\T_n$ with $n > \nre{2^{s+1}}{s+3}$,
          there are indexes $i<j\leq n$ such that $\T_i\simm\T_j$.
  \end{enumerate}
\end{proposition}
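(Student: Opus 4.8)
The plan is to mimic the proof of Proposition \ref{PropBound}, but feed it through the pointed machinery of Lemma \ref{LemmPointedBound} instead of Theorem \ref{TheoKruskal}. First I would recall that any $\Sigma$-quasimodel has level at most $s+1$: an increasing chain that is proper with respect to the labelling $\labfun$ (valued in $\parts\Sigma$) can strictly increase the label at most $|\Sigma|=s$ times, so proper chains have length at most $s+1$, hence $\degp\T\le s+1$. This is the same observation used in the proof of Proposition \ref{PropBound}.

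Next, given a tree-like pointed $\Sigma$-quasimodel $\T$, I would apply Lemma \ref{LemmPointedBound} with $\Labels=\parts\Sigma$, so $n=|\Labels|=2^s$ and $k=s+1$. The lemma yields a labelled pointed tree $\T'$ bounded by $\nrq{2\cdot 2^s}{(s+1)+2}=\nrq{2^{s+1}}{s+3}$ such that $\T$ condenses to $\T'$, in particular $\T'\simm \T$; and it also tells us there are at most $\nre{2^{s+1}}{s+3}$ bimersion classes of such structures. Since the property of being a quasimodel is preserved under condensation — this is exactly Proposition \ref{prop:IL}, or more directly the fact that bimersion preserves the labelled-poset conditions defining a $\Sigma$-quasimodel, because simulations preserve labels and the up-set structure relevant to the implication clause — the tree $\T'$ is again a tree-like $\Sigma$-quasimodel. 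This establishes the first item, where the reference to the formula $\varphi$ in the statement is vestigial and plays no role (one only needs $\T$ and the designated point). For the second item, given tree-like pointed $\Sigma$-quasimodels $\T_1,\dots,\T_n$ with $n>\nre{2^{s+1}}{s+3}$, normalize each $\T_i$ to $\T_i'$; since there are at most $\nre{2^{s+1}}{s+3}<n$ bimersion classes, by pigeonhole two of them, say $\T_i'$ and $\T_j'$ with $i<j$, are bimersive, and then $\T_i\simm\T_i'\simm\T_j'\simm\T_j$ since $\simm$ is an equivalence relation.

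The only slightly delicate point — and the place I would be most careful — is checking that a pointed condensation of a tree-like $\Sigma$-quasimodel is again a tree-like $\Sigma$-quasimodel, i.e.\ that the defining conditions (monotonicity of $\labfun$ and the biconditional characterising membership of $\varphi\to\psi$ in $\labfun(w)$) transfer along a condensation $(\srone,\srtwo)$ with $\srone\srtwo$ the identity. This is handled just as in the proof of Proposition \ref{prop:IL}: for monotonicity use that $\srtwo$ is an immersion and $\srone$ surjective; for the $\varphi\to\psi$ clause, the forward direction uses that $\srone$ maps a witnessing successor of $\srtwo(w)$ to a successor of $w$, and the backward direction uses that $\srtwo$ sends successors of $w$ to successors of $\srtwo(w)$ together with $\srone\srtwo=\mathrm{id}$. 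I will simply remark that the argument is the same as for Proposition \ref{PropBound} and omit the routine verification.

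\begin{proof}
The argument is essentially that of Proposition \ref{PropBound}, using Lemma \ref{LemmPointedBound} in place of Theorem \ref{TheoKruskal}. As in the proof of Proposition \ref{PropBound}, any $\Sigma$-quasimodel has level at most $s+1$, since a proper increasing chain can strictly increase its $\parts\Sigma$-valued label at most $s=|\Sigma|$ times.

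For the first item, view $\T$ as a pointed $\parts\Sigma$-labelled poset of level at most $s+1$; note that the formula $\varphi$ plays no role. Applying Lemma \ref{LemmPointedBound} with $n=|\parts\Sigma|=2^s$ and $k=s+1$, we obtain a labelled pointed tree $\T'$ bounded by $\nrq{2\cdot 2^s}{(s+1)+2}=\nrq{2^{s+1}}{s+3}$ and a pointed condensation from $\T$ to $\T'$; in particular $\T'\simm\T$. Just as in Proposition \ref{prop:IL}, the conditions defining a $\Sigma$-quasimodel are preserved by condensation, so $\T'$ is a tree-like pointed $\Sigma$-quasimodel, as required.

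For the second item, let $\T_1,\dots,\T_n$ be tree-like pointed $\Sigma$-quasimodels with $n>\nre{2^{s+1}}{s+3}$, and for each $i$ let $\T_i'\simm\T_i$ be its normalized form as above. By Lemma \ref{LemmPointedBound} there are at most $\nre{2^{s+1}}{s+3}<n$ bimersion classes of pointed $\parts\Sigma$-labelled posets of level at most $s+1$, so by the pigeonhole principle there are $i<j\le n$ with $\T_i'\simm\T_j'$. Since $\simm$ is an equivalence relation, $\T_i\simm\T_i'\simm\T_j'\simm\T_j$.
\end{proof}
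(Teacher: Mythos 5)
Your proposal is correct and follows exactly the route the paper intends: the paper proves Proposition~\ref{PropBoundPointed} by remarking that the argument is ``essentially the same'' as for Proposition~\ref{PropBound}, i.e.\ combine the level bound $s+1$ for $\Sigma$-quasimodels with Lemma~\ref{LemmPointedBound} (in place of Theorem~\ref{TheoKruskal}) and the preservation of the quasimodel conditions under condensation as in Proposition~\ref{prop:IL}, which is precisely what you do, with the correct parameter substitution $n=2^s$, $k=s+1$. Your explicit check that the quasimodel conditions survive a pointed condensation (a point the paper leaves implicit, since Proposition~\ref{prop:IL} is stated only for condensations from models) is a welcome clarification but does not constitute a different approach.
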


With these tools at hand, we are ready to prove that $\iltl$ has the effective finite model property, and hence is decidable.

%%%%BOOKMARK

\section{The Finite Model Property} \label{SecDec}

In view of Proposition \ref{propExpEquiv}, in order to show that validity over $\lang$ is decidable, it suffices to prove that validity is decidable over $\lang_{\mathord{\ubox}\mathord{\Until}}$.
Thus in this section we will restrict our attention to this sub-language.
We will use the notions of {\em eventuality} and {\em fulfilment,} defined below (see also Figure \ref{fig:eventualities}).

\begin{definition}
Given a model $\model$, an \emph{eventuality} in $\model$ is a pair $(w, \varphi)$, where $w \in W$ %is a state of the model
and $\varphi$ is a formula such that
either $\varphi = \ubox\psi$ for some formula $\psi$ and $\model, w \nsat \varphi$,
or $\varphi = \psi \Until \chi$ for some formulas $\psi$ and $\chi$ and $\model, w \sat \varphi$.
The \emph{fulfillment} of an eventuality $(w, \varphi)$ is
the finite sequence $v_0 \ldots v_n$ of states of the model such that
\begin{enumerate}
  \item for all $k \le n$, $v_k = \msuccfct[^k](w)$,
  \item if $\varphi = \ubox\psi$ then
\begin{enumerate}

\item   $\model, v_n \nsat \psi$ (the {\em end condition for $\varphi$}) and
 \item       for all $k < n$, $\model, v_k \sat \psi$ (the {\em progressive condition for $\varphi$}), and
  \end{enumerate}

  \item if $\varphi = \psi \Until \chi$ then
\begin{enumerate}

\item  $\model, v_n \sat \chi$ (the {\em end condition for $\varphi$}) and
  
\item        for all $k < n$, $\model, v_k \sat \psi$ and $\model, v_k \nsat \chi$ (the {\em progressive condition for $\varphi$}).
        
\end{enumerate}
\end{enumerate}
We call $n$ the {\em fulfillment time} of $(w,\varphi)$. Given a set of formulas $\Sigma$, the fulfillment time of $w$ with respect to $\Sigma$ is the supremum of all fulfillment times of any eventuality $(w,\varphi)$ with $\varphi \in \Sigma$, and if $U $ is a set of worlds or eventualities, the fulfillment time of $U$ with respect to $\Sigma$ is the supremum of all fulfillment times with respect to $\Sigma$ of all elements of $U$.
\end{definition}

The idea is to replace an arbitrary stratified model $\model$ by a related model $\model'$ where all eventualities of $\model'_0$ are realized in effective time.
From such a model $\model'$ we can then extract an effectively bounded finite model $\model\Qfin$.
The model $\M'$ is a `good' model, defined as follows.

\begin{figure}[h!]
\centering
\begin{tikzpicture}[node distance=3.5cm, triangle/.style={ regular polygon, draw, regular polygon sides=3, shape border rotate=180 }] 

  \node (p0) {};
  \node (p1)[right=2cm of p0] {$a$};
  \node (p2)[right=2cm of p1] {$\hdots$};	
  \node (p3)[right=2cm of p2] {$k$};	
  \node (p4)[right=2cm of p3] {$i$};		
  \node (p5)[right=2cm of p4] {$\cdots$};

  \node[triangle,minimum size=3cm] (t1)[above=0cm of p1] {};	
  \node[triangle,minimum size=2cm] (t2)[above=0cm of p2] {};	
  \node[triangle,minimum size=2cm] (t3)[above=0cm of p3] {};	
  \node[triangle,minimum size=2cm] (t4)[above=0cm of p4] {};	

  \node (f1)[above=1.4cm of p1] {\footnotesize $(w,\varphi)$};	
  \node (f2)[above=0.9cm of p1] {\footnotesize $(w',\varphi')$};

  \node (f3)[above=0.8cm of p2] {\footnotesize $\bullet$};	
  \node (f4)[above=0.6cm of p2] {\footnotesize $\bullet$};

  \node (f5)[above=0.8cm of p3] {\footnotesize $\bullet$};	
  \node (f6)[above=0.6cm of p3] {\footnotesize $\bullet$ };	
  \node (varphi)[above=1cm of p3] {\footnotesize $\varphi$};	
  \node (f7)[above=0.6cm of p4] {\footnotesize $\bullet$};	

  \path[successor relation]
  (p0) edge node {} (p1)
  (p1) edge node {} (p2)
  (p2) edge node {} (p3)
  (p3) edge node {} (p4)
  (p4) edge node {} (p5)
  (f1.east) edge node {} (f3)
  (f2.east) edge node {} (f4)	
  (f3.east) edge node {} (f5)
  (f4.east) edge node {} (f6)
  (f6.east) edge node {} (f7)
  ;

\end{tikzpicture}
\caption{The stratum $\M_a$ and two of its eventualities. The fulfillment of $(w,\varphi)$ is displayed, as well as the initial portion of the fulfillment of $(w',\varphi')$.}
\label{fig:eventualities}
\end{figure}
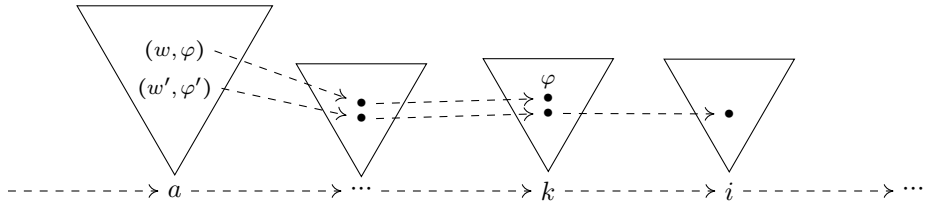

\begin{definition}\label{defGood}
Let $\Sigma \Subset \langred$, $s = |\Sigma|$ and $a,b $ be natural numbers.
An expanding model $\M$ is {\em good (with parameters $a$, $b$, relative to $\Sigma$)} if
\begin{enumerate}

\item \label{itBoundC} $a < b\leq 2 \nre{2^{n+1}}{n + 1} + \nrq{2^n}{n+1} \nre{2^{n+1}}{n + 3}$,

\item $\M_a \simm_\Sigma \M_b$,

\item $W_a$ has fulfillment time less than $b-a$, and

\item for all $c < b $, $\M_c$ is bounded by $\nrq{2^{s+1}}{s+3}$.
\end{enumerate}
\end{definition}

The bound \eqref{itBoundC} will naturally arise throughout our construction, but the only relevance is that it is computable.
We construct $\M'$ as a speedup of $\M$, in a sense that we make precise next.

\begin{definition}\label{defSpeedup}
Let $\Sigma \Subset \langred$, $\model$, $\N$ be stratified models, and $a\leq b'\leq b$ be natural numbers. We say that $\N$ is a {\em speedup of $\model$ from $a$ taking $b$ to $b'$} if for all $i\leq a$ $\N _i = \model _i$ and for all $i\geq b'$ $\N_i = \model_{i+b-b'}$. We say that $\N$ is a {\em strict speedup of $\M$} if $b'< b$. We may omit mention of the parameters if we wish to leave them unspecified, e.g.~$\N$ is a speedup of $\model$ from $a$ if there exist $b,b'$ such that $\N$ is a speedup of $\M$ from $a$ taking $b$ to $b'$.
\end{definition}

Then, the following speedups are defined
for any stratified model $\modelbase$ and any finite, non-empty set of formulas $\Sigma$ closed
under subformulas. In each case, if $\model=(W,{\peq},\msuccfct,V)$ is a stratified model, we will
produce another stratified model $\model'=(W',{\peq}',\msuccfct',V')$ and a map $\pi\colon W'\to W$
such that $\Sigma_\model(\pi(w))=\Sigma_{\model'}(w)$ for all $w\in W'$.
Below, recall that $\model_k = (W_k,{\peq}_k,\msuccfct_k,V_k)$ denotes the $k^{\rm th}$ stratum of $\model$.

\begin{enumerate}[label=({\sc su}\arabic*)]

\item\label{transOne} {\em Replace $\model_k$ with a copy of the normalized $\Sigma$-quasimodel of $\model_k$,} where $k\geq 0$.
Let $\T = \left(W_\T, \isuccrel_\T, \labfun_\T\right)$ be a copy of the normalized labelled tree of $\model^\Sigma_k$
such that $W_\T \cap W = \varnothing$,
and $(\srone,\srtwo)$ the condensation from $\model^\Sigma_k$ to $\T$.
The result of the transformation is the tuple $(W', \irel', \msuccfct', V')$ such that
$W' = ( W \setminus W_k ) \cup W_\T $, $\irel' = \reduc{\irel}{W \setminus W_k} \cup \left(\isuccrel_\T\right)^*$,
\[
  \msuccfct'(w) = \begin{cases}
    \srone\left(\msuccfct\left(w\right)\right) &\text{if } w \in W_{k-1} \\
    \msuccfct\left(\srtwo\left(w\right)\right) &\text{if } w \in W_\T \\
    \msuccfct(w) &\text{otherwise}
  \end{cases} \qquad
  V'(w) = \begin{cases}
     \labfun_T(w) \cap \Var &\text{if } w \in W_\T \\
    V(w) &\text{otherwise}
  \end{cases}
\]
The map $\pi$ is the identity on $W'_i=W_i$ for $i\not=k$, and $\pi(w)=\srtwo(w)$ for $w\in W_\T$.

\item\label{transTwo} {\em Replace $(\model_k,w)$ with a copy of its normalized, pointed $ \Sigma$-quasimodel,}
where $k\geq 0$ and $w \in W_k$.
The transformation is similar to the previous one except that
$(\model_k,w)$ is regarded as a pointed structure with distinguished point $w$.

\item\label{transThree} {\em Replace $\model_\ell$ with $\model_k$,}
where $k < \ell$ and there is an immersion $\simvar\colon W_k\to W_\ell$ (seen as $\parts\Sigma$-labelled trees).
The result of the transformation is the tuple $(W', \irel', \msuccfct', V')$ such that $W' = W \setminus \bigcup_{k < m \le \ell} W_m,$ $\irel' = \reduc{\irel}{W'}$,
\[\msuccfct'(w) =
    \begin{cases}
      \msuccfct\left(\simvar\left(w\right)\right) &\text{if } \msuccfct(w) \in W_k \\
      \msuccfct(w) &\text{otherwise}
    \end{cases}\]
and $ V' = \reduc{V}{W'}$.

The map $\pi$ is the identity on $W'_i = W_i$ for $i < k$, on $W'_i = W_{i+\ell-k}$ for $i > k$,
and $\pi(w)=\simvar(w)$ for all $w \in W'_k$.

\item\label{transFour} {\em Replace $(\model_\ell,w_\ell)$ with $(\model_k,w_k)$,}
where $k < \ell$, $w_k \in W_k$, $w_\ell \in W_\ell$ and
there is an immersion $\simvar\colon W_k \to W_\ell$ such that $\simvar(w_k) = w_\ell$.
The transformation is defined as the previous one.

\end{enumerate}

\begin{lemma} \label{lem:transformations}
Let $a < k < \ell \leq b$ be natural numbers and suppose that $\model$ is such that one of the transformations \ref{transOne}-\ref{transFour} applies.
Then, the result $\model'$ is a speedup of $\M$ between $a$ and $b$ such that
  $\Sigma_{\model}({\pi(w)}) = \Sigma_{\model'}(w)$ for any $w\in W'$.
  In the cases \ref{transThree} and \ref{transFour}, the speedup is strict.
  \philippe{I cannot understand the proof of Lemma 6.1. In fact, I cannot at all understand this construction up to the end of Section 6.}
\end{lemma}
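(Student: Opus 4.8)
The plan is to treat the four transformations \ref{transOne}--\ref{transFour} separately. In each case there are three things to check: (i) that $\model'$ is again a stratified model; (ii) that $\model'$ is a speedup of $\model$ between $a$ and $b$, strict in the last two cases; and (iii) that $\Sigma_\model(\pi(w))=\Sigma_{\model'}(w)$ for every $w\in W'$. Parts (i) and (ii) are bookkeeping and I would dispatch them first. For (i): $\irel'$ is by construction a reflexive--transitive closure, and it is antisymmetric because every stratum of $\model'$ is a tree --- in \ref{transOne} and \ref{transTwo} the replaced stratum $W_\T$ is a tree-like $\Sigma$-quasimodel by Propositions~\ref{prop:IL} and \ref{PropBound} (resp.\ \ref{PropBoundPointed}), and in \ref{transThree} and \ref{transFour} no stratum is replaced; the clause $\msuccfct'(W'_n)\subseteq W'_{n+1}$ is read off the case split; forward confluence of $\msuccfct'$ follows from forward confluence of $\msuccfct$ and the fact that $\srone,\srtwo$ (resp.\ $\simvar$) are order-preserving; and monotonicity of $V'$ on $W_\T$ follows from monotonicity of $\labfun_\T$. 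For (ii): in \ref{transOne} and \ref{transTwo} only stratum $k$ is altered, so $\model'$ is a non-strict speedup from $a$ taking $b$ to $b$ --- note $\model'_i=\model_i$ for $i\le a$ because $a<k$, and for $i\ge b$ because $k<\ell\le b$; in \ref{transThree} and \ref{transFour} the strata $W_{k+1},\dots,W_\ell$ are deleted, dropping the stratum index by $\ell-k$, so $\model'$ is a strict speedup from $a$ taking $b$ to some $b'<b$, the strata of index $\le a$ being untouched and those of sufficiently large index coinciding with $\model$-strata of index $\ge b$.

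The content is (iii), proved by induction on $\psi\in\Sigma$, showing $\model',w\sat\psi$ iff $\model,\pi(w)\sat\psi$ for all $w\in W'$. The atomic, $\bot$, $\wedge$ and $\vee$ cases use only that $\pi$ preserves $\Sigma$-labels, which is built in: $\srone,\srtwo,\simvar$ are components of immersions or condensations, hence label-preserving, so for instance in \ref{transOne} one has $\labfun_\T(w)=\Sigma_\model(\srtwo(w))=\Sigma_\model(\pi(w))$ for $w\in W_\T$, with $V'(w)=\labfun_\T(w)\cap\Var$. The implication case is exactly the argument of Proposition~\ref{prop:IL}: to verify $\varphi\to\vartheta\in\Sigma$ at $w$, each $\irel'$-successor of $w$ must be matched by an $\irel$-successor of $\pi(w)$ with the same label --- supplied by $\pi$ --- and conversely each $\irel$-successor of $\pi(w)$ by an $\irel'$-successor of $w$ --- supplied by the section packaged in the condensation data on the affected stratum; in the pointed variants \ref{transTwo} and \ref{transFour} the pointed condensation keeps the designated world in line.

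The temporal cases $\tnext\psi$, $\ubox\psi$ and $\varphi\Until\vartheta$ are where the real work, and the main obstacle, lies. The basic ingredient is the single-step identity $\Sigma_\model(\pi(\msuccfct'(w)))=\Sigma_\model(\msuccfct(\pi(w)))$ for every $w\in W'$: the re-routing of the successor function --- through $\srone$ then $\srtwo$ in \ref{transOne} and \ref{transTwo}, or through $\simvar$ in \ref{transThree} and \ref{transFour} --- is a label-preserving detour and so leaves the $\Sigma$-type of the target unchanged. Given this, $\tnext\psi$ follows at once from the induction hypothesis for $\psi$ at $\msuccfct'(w)$. For $\ubox\psi$ and $\varphi\Until\vartheta$ one needs the stronger fact that the $\pi$-image of the entire $\model'$-orbit $\big((\msuccfct')^n(w)\big)_n$ realises the same sequence of $\Sigma$-types as the $\model$-orbit $\big(\msuccfct^n(\pi(w))\big)_n$; with this, both operators are settled by unwinding their semantic clauses and applying the induction hypothesis on proper subformulas. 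Establishing this orbit-synchronisation is the delicate step, because $\Sigma$-type is not in general preserved along $\msuccfct$, so one cannot merely iterate the single-step identity; I would prove it by analysing, stratum by stratum across the boundary indices $k-1$, $k$ and $k+1$, how $\pi$ interleaves with $\msuccfct$ and $\msuccfct'$: for $w$ in strata of index $\ge k$ the $\pi$-image of the $\model'$-orbit is literally a $\model$-orbit, so the match is exact, and the remaining care is for $w$ in strata of index $<k$, where the orbit makes exactly one detour through the modified stratum before settling into a genuine $\model$-orbit, so that one only has to check that a single label-preserving detour can neither create nor destroy a fulfilled eventuality. Once the orbit-synchronisation is in place the temporal cases close, and collecting (i)--(iii) yields the lemma.
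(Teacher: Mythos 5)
Your bookkeeping for the speedup and stratification claims, and your Boolean, implication and $\tnext$ cases, are fine and coincide with the paper's proof (structural induction on $\varphi$, label preservation by $\srone,\srtwo,\simvar$, the implication case as in Proposition~\ref{prop:IL}, and for $\tnext$ the single-step identity $\Sigma_\model(\pi(\msuccfct'(w)))=\Sigma_\model(\msuccfct(\pi(w)))$). The gap is exactly at the step you yourself flag as delicate: the orbit-synchronisation claim to which you reduce the $\ubox$ and $\Until$ cases is false. Take \ref{transOne} and $w\in W'_{k-1}$, and write $u=\msuccfct(w)$, $u'=\srtwo\srone(u)$. The $\pi$-image of the $\model'$-orbit of $w$ is $w,u',\msuccfct(u'),\msuccfct(\msuccfct(u')),\dots$, whereas the $\model$-orbit of $\pi(w)=w$ is $w,u,\msuccfct(u),\dots$. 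The condensation gives $\Sigma_\model(u)=\Sigma_\model(u')$, but it is chosen with no reference to $\msuccfct$, and $\Sigma$ is closed under subformulas, not under $\tnext$; so agreement of two points on $\Sigma$ imposes no constraint on their successors. For instance, if $\top\Until q\in\Sigma$ (with $\top:=\bot\to\bot$) and every point of $W_k$ carries the label $\{\top,\top\Until q\}$, the normalized quasimodel of the stratum is a single point and $u'=\srtwo\srone(u)$ is one fixed point of $W_k$ independent of $u$; nothing prevents $q\in\Sigma_\model(\msuccfct(u))$ while $q\notin\Sigma_\model(\msuccfct(u'))$, the orbit of $u'$ reaching $q$ only later. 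Hence the two orbits need not realise the same sequence of $\Sigma$-types beyond the detour point, and the plan of ``unwinding the semantic clauses and applying the induction hypothesis on proper subformulas'' at matched positions does not go through; the same phenomenon occurs in \ref{transThree} and \ref{transFour} for points in strata below $k$, with $\simvar(v)$ in place of $u'$.

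What rescues the temporal cases --- and what your closing remark about a detour neither creating nor destroying eventualities gestures at but does not supply --- is that the formula being evaluated is itself a member of $\Sigma$. Since $\ubox\psi$ (resp.\ $\theta\Until\psi$) lies in $\Sigma$ and $\Sigma_\model(u)=\Sigma_\model(u')$, its truth at $u$ and at $u'$ coincides; after the detour the $\model'$-orbit is a genuine $\model$-orbit on which $\pi$ is the identity, so a fulfilment (or refuting witness) transfers pointwise by the induction hypothesis applied to the proper subformulas $\psi,\theta\in\Sigma$, and one closes with the unfolding equivalences of Proposition~\ref{PropUValid}; this is how the paper's pointer to the fulfilment machinery of Lemmas~\ref{lemmFulfPres} and~\ref{LemFinSigma} plays out. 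Note also that for points lying more than one stratum below $k$ this unfolding must be iterated (a secondary induction on the distance to the modified stratum, or on the length of fulfilments as in Lemma~\ref{LemFinSigma}), since $\ubox\psi$ and $\theta\Until\psi$ are not proper subformulas of themselves. Without replacing the false synchronisation claim by an argument of this kind, the $\ubox$ and $\Until$ cases --- the only substantive content of the lemma beyond $\tnext$ --- remain open.
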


\begin{proof}
  The proof that $\modelbase[']$ is a speedup of $\modelbase$ consists of checking that Definition \ref{defSpeedup} applies and is left to the reader.
  We prove by structural induction on $\varphi$ that
  for all transformations,
  all $w \in W'$ and all $\varphi \in \Sigma$,
  $\model', w \sat \varphi$ iff $\model, \pi(w) \sat \varphi$.
  
  We only detail the case for $\varphi = \tnext \psi$ in the sub-case when $\model_k$ is replaced with a copy of the
  normalized $\Sigma$-quasimodel $\T$ of $\model_k$
  and $w \in W'_{k-1}$.
  Suppose that $w \in W'_{k-1}$ and $\model, \pi(w) \sat \tnext \psi$.
  Then $\psi \in \Sigma _\model  (\msuccfct \pi (w) )$.
  Since $\msuccfct'(w) = \srone \msuccfct \pi(w)  $, $\pi \msuccfct'(w)  = \srtwo \msuccfct'(w) $
  and $(\srone,\srtwo)$ is a condensation,
  $\Sigma_\model (\msuccfct \pi (w)) = \labfun_\T(\msuccfct'(w)) = \Sigma_\model (\pi \msuccfct'(w) )$.
  In particular $\psi \in \Sigma_\model (\pi \msuccfct'(w) )$, so that $\model, \pi \msuccfct'(w)  \sat \psi$.
  By induction hypothesis, $\model', \msuccfct'(w) \sat \psi$. Hence $\model', w \sat \tnext \psi$.
  The other direction is similar.

The remaining two sub-cases for $\varphi = \tnext \psi$ are when $w\in W'_k$ and when $w\not \in W'_{k-1} \cup W'_k$, both of which are treated similarly.
    The cases for the other temporal modalities also follow from similar considerations
  (see also the proof of Lemma~\ref{LemFinSigma}).
  The cases for the implication are similar to those in the proof of Proposition~\ref{prop:IL}, and the remaining cases are straightforward.
We leave the details to the reader.\david{Maybe do the more complicated $\psi \Until \theta$ case instead?}
\end{proof}

The purpose of the transformations \ref{transTwo} and \ref{transFour} is to preserve fulfillments of formulas. We make this precise in the next lemma.

\begin{lemma}\label{lemmFulfPres}
Let $\Sigma\Subset \lang_{\ubox\Until}$, $\M = \modelbase$ be a stratified model, $a , k \in \mathbb N$ with $k>0$, and $w\in W_a$.
Suppose that $\varphi \in \Sigma$ is such that $(w,\varphi)$ is an eventuality of $\M$ with fulfillment $w=w_0,\ldots,w_n$.
\begin{enumerate}

\item If $k \leq n$ and $\M'$ is obtained by replacing $(\M_{a+k},w_{k})$ by $(\T,v)$, then $(w,\varphi)$ is an eventuality of $\M'$ and the fulfillment of $(w,\varphi)$ is $v_0,\ldots,v_n$ with $v_k = v$ and otherwise $v_i = w_i$.

\item If $k < \ell \leq  n$ and $\M'$ is obtained by replacing $(\M_{a+\ell},w_{\ell})$ by $(\M_{a+k},w_{k})$, then $(w,\varphi)$ is an eventuality of $\M'$ and the fulfillment of $(w,\varphi)$ is $w_0,\ldots,w_k,w_{\ell+1},\ldots w_n$.

\end{enumerate}
\end{lemma}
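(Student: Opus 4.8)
The plan is to reduce both items to the $\Sigma$-type preservation already granted by Lemma~\ref{lem:transformations}, namely that $\Sigma_{\M}(\pi(v)) = \Sigma_{\M'}(v)$ for every $v \in W'$, together with an explicit description of how $\msuccfct'$ iterates from $w$. Since $\varphi \in \Sigma$ and $\Sigma$ is closed under subformulas, each formula occurring in an end or progressive condition (that is, $\psi$, or $\psi$ and $\chi$) lies in $\Sigma$; thus, once we know to which $w_j$ the map $\pi$ sends each point of the new orbit of $w$, the relevant satisfaction facts transfer automatically, and since a fulfillment is uniquely determined by its defining conditions it suffices to exhibit a sequence satisfying them. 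Note also that $w = w_0 \in W_a = W'_a$ (here we use $k > 0$, so that the stratum $W_a$ is untouched by the transformation) and $\pi(w) = w$, whence $\Sigma_{\M'}(w) = \Sigma_{\M}(w)$; in particular $(w,\varphi)$ is again an eventuality of $\M'$.

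For the first item, I would read off from the definition of the transformation \ref{transTwo} that, writing $(\srone,\srtwo)$ for the pointed condensation from $(\M_{a+k},w_k)$ to $(\T,v)$, we have $\srone(w_k) = v$ and $\srtwo(v) = w_k$; consequently, setting $v_0 = w$ and $v_{i+1} = \msuccfct'(v_i)$, one obtains $v_i = w_i$ for $i \ne k$ and $v_k = v$. Applying $\pi$ gives $\pi(v_i) = w_i$ for \emph{every} $i \le n$ (for $i = k$ this is precisely $\pi(v) = \srtwo(v) = w_k$), so $\Sigma_{\M'}(v_i) = \Sigma_{\M}(w_i)$ throughout. Hence the end and progressive conditions for $(w,\varphi)$ hold of $v_0,\ldots,v_n$ verbatim, and by uniqueness this sequence is the fulfillment of $(w,\varphi)$ in $\M'$.

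For the second item, the transformation \ref{transFour} deletes the strata $W_{a+k+1},\ldots,W_{a+\ell}$ and redirects the successor on $W_{a+k}$ so that each $u \in W_{a+k}$ is sent to $\msuccfct(\simvar(u))$, where $\simvar(w_k) = w_\ell$. Setting $u_0 = w$ and $u_{i+1} = \msuccfct'(u_i)$ and tracing the orbit, I would check that $u_i = w_i$ for $i \le k$ and $u_{k+1+j} = w_{\ell+1+j}$ for $0 \le j < n - \ell$, so that $u_0,\ldots,u_m$ with $m = k + n - \ell$ runs through exactly $w_0,\ldots,w_k,w_{\ell+1},\ldots,w_n$. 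Now $\pi(u_i) = w_i$ for $i < k$, $\pi(u_k) = w_\ell$ (since $\pi$ restricts to $\simvar$ on $W_{a+k}$), and $\pi(u_i) = w_{i+\ell-k}$ for $i > k$; thus the string of $\M$-types along the new orbit is $\Sigma_{\M}(w_0),\ldots,\Sigma_{\M}(w_{k-1}),\Sigma_{\M}(w_\ell),\Sigma_{\M}(w_{\ell+1}),\ldots,\Sigma_{\M}(w_n)$. Every index appearing here other than the last --- namely $0,\ldots,k-1$ and, when $\ell < n$, also $\ell,\ldots,n-1$ --- is strictly less than $n$, so the progressive condition of the original fulfillment yields the progressive condition at every position $< m$, while the final type $\Sigma_{\M}(w_n)$ yields the end condition at position $m$. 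Again by uniqueness of fulfillments, $w_0,\ldots,w_k,w_{\ell+1},\ldots,w_n$ is the fulfillment of $(w,\varphi)$ in $\M'$.

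The real work is the index bookkeeping in the second item: one must keep straight the difference between positions along the new orbit and indices of the original fulfillment, and in particular treat the degenerate case $\ell = n$ separately, where $m = k$, the orbit point at position $k$ carries the $\M$-type of $w_n$, and it plays the role of the endpoint rather than a progressive point --- so the conditions still line up. No new semantic reasoning is needed beyond Lemma~\ref{lem:transformations}; everything else is purely combinatorial.
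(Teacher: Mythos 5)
Your proof is correct, and since the paper itself dismisses this lemma with ``straightforward and left to the reader,'' what you have written is precisely the intended argument: transfer of $\Sigma$-types via Lemma~\ref{lem:transformations} (using subformula closure and the fact that the stratum of $w$ is untouched because $k>0$), explicit tracing of the new $\msuccfct'$-orbit through the condensation maps (resp.\ the immersion $\simvar$), and uniqueness of fulfillments. Your index bookkeeping in the second item, including the degenerate case $\ell = n$, checks out.
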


The proof is straightforward and left to the reader.
In the next few lemmas we show that models can always be sped up so that fulfillment times are effectively bounded.

\begin{lemma}\label{lemSpeedAB}
Fix $\Sigma \Subset \langred$ with $s = |\Sigma|$ and let $\M$ be any startified model and $a<b$ be natural numbers. Then there is a speedup of $\M'$ of $\M$ from $a$ taking $b$ to some $b' \leq a + \nre{2^s}{s+1}$, and such that $\M'_i $ is bounded by $\nrq{2^{s}}{s+1}$ for all $i\in (a,b')$.
\end{lemma}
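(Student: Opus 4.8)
The plan is to build $\M'$ from $\M$ by composing the transformations \ref{transOne} and \ref{transThree} described above; recall that by Lemma~\ref{lem:transformations} each of these turns a stratified model into a speedup of it between $a$ and $b$ (in particular fixing all strata of index $\le a$), so the only real work is to iterate them while keeping track of the parameters.

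First I would normalise the intermediate strata. For every $k$ with $a<k<b$, the labelled poset $\model^\Sigma_k=(W_k,\reduc{\irel}{W_k},\reduc{\Sigma_\M}{W_k})$ is a tree-like $\Sigma$-quasimodel: it is a tree by condition~\ref{cond:stratified:tree} of Definition~\ref{def:stratified}, and it respects the intuitionistic implication because $W_k$ is $\irel$-closed (condition~\ref{cond:stratified:closed}), so that for $w\in W_k$ whether $\M,w\sat\varphi\to\psi$ is determined by the states $\irel$-above $w$, all of which lie in $W_k$. Since a $\Sigma$-quasimodel uses at most $2^{s}$ labels and has level at most $s+1$, Proposition~\ref{PropBound}(1) provides a normalised copy of it with at most $\nrq{2^{s}}{s+1}$ worlds. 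Applying transformation~\ref{transOne} at each such $k$ then yields a stratified model that agrees with $\M$ on strata of index $\le a$ --- hence a speedup of $\M$ from $a$ taking $b$ to $b$ --- in which every stratum of index in $(a,b)$ is a normalised $\Sigma$-quasimodel bounded by $\nrq{2^{s}}{s+1}$.

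Next I would repeatedly eliminate redundant strata. I maintain the invariant that the current model $\N$ is a speedup of $\M$ from $a$ taking $b$ to some $b'$, that each stratum of $\N$ of index in $(a,b')$ is a normalised $\Sigma$-quasimodel of size at most $\nrq{2^{s}}{s+1}$, and that each stratum of index in $[a,b')$ is a tree-like $\Sigma$-quasimodel. Whenever $b'-a>\nre{2^{s}}{s+1}$, Proposition~\ref{PropBound}(2) applied to the sequence $\N_a,\N_{a+1},\ldots,\N_{b'-1}$ produces indices $a\le p<q<b'$ with $\N_p\simm_\Sigma\N_q$, and hence an immersion $W_p\to W_q$, so transformation~\ref{transThree} applies with $k=p$ and $\ell=q$. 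The result is again a speedup of $\M$ from $a$ --- this holds even in the boundary case $p=a$, because $\M_n$ as defined in Definition~\ref{def:stratified} records only $(W_n,\irel_n,V_n)$ whereas \ref{transThree} merely redirects the successor map on $W_p$, leaving $(W_p,\irel_p,V_p)$ intact --- it does not change the internal structure of any surviving stratum, and it strictly lowers the parameter from $b'$ to $b'-(q-p)$. Since $b'$ strictly decreases at every step, the procedure halts, and at that point $b'-a\le\nre{2^{s}}{s+1}$; using that composites of speedups from $a$ are again speedups from $a$, the resulting model is the desired $\M'$.

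The hard part is not any single step but the bookkeeping: one must verify that \ref{transThree} leaves each surviving stratum literally unchanged as a labelled poset (so Proposition~\ref{PropBound}(2) can be reapplied with the same constants $2^{s}$ and $s+1$), that the bimersion $\simm_\Sigma$ it yields genuinely provides the one-directional immersion demanded by \ref{transThree}, that the boundary case $p=a$ does produce a valid stratified model that is still a speedup from $a$, and that the speedup parameters combine so that the terminal $b'$ satisfies the stated bound $b'\le a+\nre{2^{s}}{s+1}$.
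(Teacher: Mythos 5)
Your proof is correct and uses the same ingredients as the paper's: the pigeonhole of Proposition~\ref{PropBound}(2) (i.e.\ Theorem~\ref{TheoKruskal}) applied to the strata viewed as tree-like $\Sigma$-quasimodels, transformation~\ref{transThree} to contract bimersive strata, and normalization via~\ref{transOne} to obtain the size bound $\nrq{2^{s}}{s+1}$. The paper merely organizes this differently: it chooses $b'$ minimal among all speedups and derives a contradiction from $b' > a + \nre{2^{s}}{s+1}$ using indices $a < i < j \leq b'$ (so it never needs your boundary case $p=a$, which lies outside the stated hypotheses of Lemma~\ref{lem:transformations} and which you patch by hand), and it normalizes the strata in $(a,b')$ at the end rather than at the start.
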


\proof
Let $b'$ be minimal such that some model $\N$ is a speedup of $\M$ from $a$ taking $ b$ to $b'$.
We claim that $b' \leq a + \nre{2^s}{s+1}$; for otherwise, by Theorem \ref{TheoKruskal}.\ref{ItKruskal} there are natural numbers $i,j$ with $a < i < j \leq b'$ such that $\N_i \simm_\Sigma \N_j$, and hence we can apply a transformation \ref{transThree} to obtain some speedup $\N'$ of $\N$ from $a$ taking $b'$ to some $b''<b'$; but then clearly $\N'$ is also a speedup of $\M$ from $a$ taking $b$ to $b''$ and $b''<b'$, a contradiction.

Thus $b' \leq a + \nre{2^{s}}{s+1}$, and finally we obtain $\M'$ by replacing each $\N_x$ with $x\in (a,b')$ by its normalized $\Sigma$-quasimodel, which by Proposition \ref{PropBound} is bounded by $\nrq{2^{s}}{s+1}$.
\endproof

\begin{lemma}\label{lemSpeedFulf}
Fix a finite set $\Sigma\Subset \langred$ with $s = |\Sigma|$ and let $\M = (W,\peq,S,V)$ be any stratified model, $a\in \mathbb N$, and $U \subseteq W_a \times \Sigma$ be a finite set of eventualities. Then there is a speedup $\N$ of $\M$ from $a$ such that the fulfillment time $\ell$ of $U$ in $\N$ satisfies
\begin{enumerate}

\item $ \ell \leq |U|\nre{2^{s+1}}{s + 3}$, and 

\item for all $x\in [1, \ell -a)$, $\N_{a+x}$ is bounded by $\nrq{2^{s+1}}{s + 3}$.

\end{enumerate}
\end{lemma}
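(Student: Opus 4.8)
The plan is to process the eventualities in $U$ one at a time, using the pointed transformations \ref{transTwo} and \ref{transFour} to collapse the part of the model that lies strictly between stratum $a$ and the fulfillment time of the eventuality currently being treated. First I would enumerate $U = \{(w_1,\varphi_1),\ldots,(w_m,\varphi_m)\}$ with $m = |U|$, and build a finite chain of speedups $\M = \N^0, \N^1,\ldots, \N^m = \N$, where $\N^j$ is a speedup of $\N^{j-1}$ from $a$ that bounds the fulfillment time of $(w_j,\varphi_j)$ by $\nre{2^{s+1}}{s+3}$, while preserving the fulfillments of $(w_1,\varphi_1),\ldots,(w_{j-1},\varphi_{j-1})$ already achieved in earlier steps. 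Note that since each $\N^j$ agrees with $\M$ on stratum $a$ (speedups from $a$ fix $W_a$ and its labels), the pair $(w_j,\varphi_j)$ remains an eventuality of $\N^j$ throughout, so it is legitimate to keep speaking of its fulfillment; and by Lemma~\ref{lem:transformations} all speedups preserve the sets $\Sigma_{\N}(\cdot)$ along the map $\pi$, so $U$ is still a set of eventualities of $\N$ at the end.

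The single step $\N^{j-1} \rightsquigarrow \N^j$ works as follows. Let $v_0,\ldots,v_n$ be the fulfillment of $(w_j,\varphi_j)$ in $\N^{j-1}$; if $n \le \nre{2^{s+1}}{s+3}$ there is nothing to do, so assume $n > \nre{2^{s+1}}{s+3}$. Consider the pointed $\parts\Sigma$-labelled trees $(\N^{j-1}_{a+1},v_1),\ldots,(\N^{j-1}_{a+n},v_n)$; each has level at most $s+1$, so by Proposition~\ref{PropBoundPointed}.2 (equivalently Theorem~\ref{TheoKruskal} applied after the pointed encoding, which is exactly how Lemma~\ref{LemmPointedBound} is proved) there are indices $a < a+k < a+\ell \le a+n$ with $(\N^{j-1}_{a+k},v_k) \simm (\N^{j-1}_{a+\ell},v_\ell)$, and in particular an immersion $\simvar\colon W_{a+k}\to W_{a+\ell}$ with $\simvar(v_k)=v_\ell$. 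Apply transformation \ref{transFour} to replace $(\N^{j-1}_{a+\ell},v_\ell)$ with $(\N^{j-1}_{a+k},v_k)$; by Lemma~\ref{lemmFulfPres}.2 this is a strict speedup from $a$ in which the fulfillment of $(w_j,\varphi_j)$ becomes $v_0,\ldots,v_k,v_{\ell+1},\ldots,v_n$, hence strictly shorter, and by Lemma~\ref{lemmFulfPres} again all the earlier fulfillments $(w_1,\varphi_1),\ldots,(w_{j-1},\varphi_{j-1})$ are preserved (their fulfillment sequences are modified only by an order-preserving relabelling of the strata, never lengthened). Iterating, after finitely many such strict speedups the fulfillment time of $(w_j,\varphi_j)$ drops to at most $\nre{2^{s+1}}{s+3}$: indeed once the sequence $(\N_{a+1},v_1),\ldots,(\N_{a+n},v_n)$ has length $n \le \nre{2^{s+1}}{s+3}$ no repetition is forced, and conversely while $n$ exceeds this bound a repetition, hence a further strict speedup, exists. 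This yields $\N^j$, and the relocation offsets used in \ref{transFour} mean that the fulfillment time of $(w_j,\varphi_j)$ in $\N^j$ is bounded by $a + \nre{2^{s+1}}{s+3}$, and more to the point its fulfillment occupies at most $\nre{2^{s+1}}{s+3}$ strata above $a$.

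After all $m$ steps, the fulfillment times of the $m$ eventualities in $U$, measured from stratum $a$, each use at most $\nre{2^{s+1}}{s+3}$ strata; since they all start at the same stratum $a$ but may need to be realized at disjoint ranges of strata — one cannot in general overlap two fulfillments, as the second collapse could destroy the first — the total height needed is at most $m \cdot \nre{2^{s+1}}{s+3} = |U|\,\nre{2^{s+1}}{s+3}$, giving item~1 with $\ell - a \le |U|\,\nre{2^{s+1}}{s+3}$. Finally, to obtain item~2, apply transformation \ref{transOne} to every stratum $\N_{a+x}$ with $x \in [1,\ell-a)$, replacing it by its normalized $\Sigma$-quasimodel; by Proposition~\ref{PropBound} each such stratum then has size at most $\nrq{2^{s+1}}{s+3}$ (the bound $\nrq{2^s}{s+1}$ from Proposition~\ref{PropBound}.1 is absorbed by the larger pointed bound), and by Lemma~\ref{lem:transformations} this last round of speedups preserves all the $\Sigma_{\N}(\cdot)$ and hence all the fulfillments, so it does not increase $\ell$. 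The main obstacle is the bookkeeping in the iteration: one must be careful that each pointed collapse for eventuality $j$ does not lengthen the already-shortened fulfillments of eventualities $1,\ldots,j-1$, which is exactly what Lemma~\ref{lemmFulfPres} is designed to guarantee, and that the indices produced by Theorem~\ref{TheoKruskal} genuinely lie in the open range $(a,\ell]$ so that stratum $a$ itself is never touched.
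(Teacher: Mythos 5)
Your overall strategy (handle the eventualities one at a time, shorten each fulfilment by finding two bimersive pointed strata along its path and collapsing them with \ref{transFour}, then normalize for the size bound) is the same as the paper's, but there is a genuine gap in how you protect the eventualities already treated. In your step for $(w_j,\varphi_j)$ you allow the collapse to take place anywhere in $(a,a+n]$, i.e.\ possibly inside the region of strata where the fulfilments of $(w_1,\varphi_1),\dots,(w_{j-1},\varphi_{j-1})$ live. Lemma~\ref{lemmFulfPres} does not cover them: it only guarantees preservation for the eventuality whose fulfilment path carries the distinguished points of the pointed replacement, and the immersion witnessing $(\N_{a+k},v_k)\simm(\N_{a+\ell},v_\ell)$ need not map the world of eventuality $i$'s path in stratum $a+k$ to its world in stratum $a+\ell$. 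Likewise, the fact that transformations preserve $\Sigma_{\N}(\cdot)$ along $\pi$ (Lemma~\ref{lem:transformations}) only shows that the earlier pairs remain eventualities; it says nothing about their fulfilment times, which are properties of the rerouted $S$-paths and can grow. The same confusion reappears at the end: applying the unpointed normalization \ref{transOne} to every stratum can reroute all fulfilment paths, and ``labels preserved, hence fulfilments preserved'' is exactly the inference the pointed machinery (\ref{transTwo}, \ref{transFour}, Lemma~\ref{lemmFulfPres}) exists to replace. Note also that your per-step claim (each fulfilment bounded by $\nre{2^{s+1}}{s+3}$ strata above $a$) is incompatible with your own final additive bound $|U|\,\nre{2^{s+1}}{s+3}$: all fulfilments start at stratum $a$, so if each were individually that short the total would be the maximum, not the sum.

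The paper's proof fixes precisely this point: in the induction step it only takes speedups of $\M'$ \emph{from $a+\ell$}, where $\ell$ is the fulfilment time already secured for $U\setminus\{(w,\varphi)\}$. Since nothing at or below stratum $a+\ell$ is ever modified, the earlier fulfilments (which end by $a+\ell$) are literally untouched, the new eventuality needs at most $\nre{2^{s+1}}{s+3}$ additional strata above $a+\ell$ (by the minimal-fulfilment-time/contradiction argument, equivalent to your iteration of strict speedups), and this is what produces the additive bound $r\leq \ell+\nre{2^{s+1}}{s+3}\leq |U|\,\nre{2^{s+1}}{s+3}$. The final size bound is then obtained by replacing only the new strata by their \emph{pointed} normalized quasimodels, pointed at the current fulfilment path, so that Lemma~\ref{lemmFulfPres}.1 applies. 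To repair your argument you would need to restrict each collapse and each normalization to strata strictly above the region already used by previously treated eventualities, which is exactly the paper's construction.
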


\proof
By induction on $|U|$. The claim is vacuously true if $U =\varnothing$. Otherwise, let $n+1 =|U|$ and $(w,\varphi) \in U$ and assume inductively that a speedup $\M'$ of $\M$ from $a$ is given so that the fulfilment time of $U \setminus \{(w,\varphi)\}$ in $\M'$ is $\ell \leq n \nre{2^{s+1}}{s + 3}$ and for all $x< \ell -a$, $\N_{a+1+x}$ is bounded by $\nrq{2^{s+1}}{s + 3}$.

Let $\N$ be a speedup of $\M'$ from $a+\ell$ chosen so that the fulfilment time $r$ of $(w,\varphi)$ in $\N$ is least among all such speedups.
We claim that $r \leq (n+1) \nre{2^{s+1}}{s + 3}$.
If not, let $w_0,\ldots,w_r$ be the fulfilment path for $(w,\varphi)$, and for $x \in [ 1 ,r - \ell ]$ let $\N_{\ell + x}^+$ be the pointed submodel $(\N_{\ell + x},w_{\ell+x})$.
Note that $r-\ell > \nre{2^{s+1}}{s + 3}$, so that by Proposition \ref{PropBoundPointed} there are $x , y\in \mathbb N$ such that $0< x < y \leq r-\ell$ and $\N_{\ell + x}^+ \simm_\Sigma \N_{\ell + y}^+$. Thus we can apply a transformation \ref{transFour} and replace $\N^+_{\ell + y}$ by $\N^+_{\ell + x}$ to obtain a speedup $\N'$ of $\N$. By Lemma \ref{lemmFulfPres}, the fulfilment of $(w,\varphi)$ in $\N'$ is $w_0,\ldots,w_{\ell+x},w_{\ell+y+1},\ldots ,w_r$, so that $(w,\varphi)$ has fulfilment time $r-(y-x)$, contradicting the minimality of $r$.

Finally we define $\N'$ by replacing each $(\N_{\ell + x},w_{\ell+x})$ with $x\in [0,r-\ell)$ by its pointed, normalized $\Sigma$-quasimodel, which in view of Proposition \ref{PropBoundPointed} has size at most $\nrq{2^{s+1}}{s + 3}$ and by Lemma \ref{lemmFulfPres} preserves the fulfilment time of $(w,\varphi)$, as needed.
\endproof

In the next lemmas we construct a good model in three phases, each time obtaining more of the properties required by Definition \ref{defGood}.
Below, if $\Sigma \Subset \langred$ and $\M =(W,\peq,S,V)$ is a stratified model and $a\in \mathbb N$, we say that $\M_a$ {\em occurs infinitely often (with respect to $\Sigma$)} if there are infinitely many values of $i$ such that $\M_a \simm_\Sigma \M_i$.

\begin{lemma}\label{lemmPhaseOne}
Let $\Sigma\Subset \lang_{\ubox\Until}$ and $s = |\Sigma|$ and $\varphi \in \Sigma$. Then $\varphi$ is satisfiable (falsifiable) over the class of expanding posets if and only if $\varphi$ is satisfied in an expanding model $\M$ for which there exists $a \leq \nre{2^s}{s + 1}$ such that
\begin{enumerate}

\item $\M_a$ occurs infinitely often and

\item for all $i\leq a$ the size of $\M_i$ is bounded by $\nrq{2^{s}}{s+1}$.
 
\end{enumerate}
\end{lemma}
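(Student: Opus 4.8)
The plan is to start from the easy direction and then do the substantial work. If $\varphi$ is satisfied in a model of the required special form, then it is trivially satisfiable (falsifiability is handled symmetrically, so I will only discuss satisfiability). For the converse, suppose $\varphi$ is satisfiable over the class of expanding posets; by Theorem \ref{TheoStrat} (together with Proposition \ref{propExpEquiv}, which lets us assume $\varphi \in \lang_{\ubox\Until}$) we may fix an expanding model $\model$ with a world $w_\ast$ in stratum $W_0$ such that $\model, w_\ast \sat \varphi$. The goal is to transform $\model$ into a new model of the required form while preserving $\Sigma$-types along the distinguished stratum, so that $\varphi$ remains satisfied.

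First I would pass to the sequence of $\Sigma$-types of the strata: for each $i$, consider the normalized $\Sigma$-quasimodel of $\model_i^\Sigma$, which by Proposition \ref{PropBound} has size at most $\nrq{2^{s}}{s+1}$ and level at most $s+1$. By the finitary Kruskal statement (Theorem \ref{TheoKruskal}.\ref{ItKruskal}, or its quasimodel version Proposition \ref{PropBound}), among any $\nre{2^s}{s+1}+1$ consecutive strata two are bimersive ($\simm_\Sigma$), and in fact some bimersion class must be attained infinitely often along the infinite sequence $(\model_i)_{i\in\mathbb N}$, since there are only finitely many $\simm_\Sigma$-classes of tree-like $\Sigma$-quasimodels of bounded level. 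Let $c$ be least such that $\model_c$ occurs infinitely often; by the pigeonhole argument we get $c \leq \nre{2^s}{s+1}$ — indeed, if $c > \nre{2^s}{s+1}$ then among $\model_0,\dots,\model_{\nre{2^s}{s+1}}$ none occurs infinitely often, but then already two of these finitely-many-represented classes coincide and one sees a contradiction with the minimality/finiteness count. Here is where I would need to be slightly careful: occurring infinitely often is about the whole tail, so the clean way is: the map $i \mapsto [\model_i]_{\simm_\Sigma}$ takes values in a finite set, hence some value is hit infinitely often; let $a$ be the least index at which such an infinitely-recurring value first appears, and argue $a \leq \nre{2^s}{s+1}$ because the values $[\model_0]_{\simm_\Sigma},\dots,[\model_a]_{\simm_\Sigma}$ are forced to be distinct from one another up to the point where an infinitely-recurring one is found (or else we could have taken a smaller $a$), and there are at most $\nre{2^s}{s+1}$ such classes by Proposition \ref{PropBound}, item 2.

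Having located $a$, I would apply Lemma \ref{lemSpeedAB} — or more directly transformation \ref{transOne} — stratum by stratum to the initial segment: replace each $\model_i$ with $i \leq a$ by a copy of its normalized $\Sigma$-quasimodel. Lemma \ref{lem:transformations} guarantees that each such replacement is a speedup preserving $\Sigma$-types via the map $\pi$, and composing these finitely many transformations (and tracking $\pi$) yields a model $\model'$ in which $\model'_i$ is bounded by $\nrq{2^s}{s+1}$ for all $i \leq a$, in which $\model'_a$ still occurs infinitely often (the transformations on strata $\leq a$ do not disturb the $\simm_\Sigma$-class of stratum $a$ once it has been normalized, and for $i>a$ nothing changed), and in which $\varphi$ is still satisfied at the image of $w_\ast$ in $W'_0$. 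The expanding property is preserved because transformation \ref{transOne} produces an expanding model (this is implicit in the setup of the speedups). This $\model'$ witnesses the right-hand side.

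The main obstacle I anticipate is the bookkeeping around the phrase \emph{occurs infinitely often} and pinning down the bound $a \leq \nre{2^s}{s+1}$: one must be careful that normalizing the first $a+1$ strata does not accidentally merge the chosen stratum's class with an earlier one in a way that would have permitted a smaller choice of $a$, and that the infinitude of recurrences genuinely survives the finitely many initial-segment transformations. Everything else — the type-preservation, the size bounds, the expanding property — is handed to us by Lemma \ref{lem:transformations}, Proposition \ref{PropBound}, and Theorem \ref{TheoStrat}, so the proof is mostly a matter of assembling these pieces in the right order.
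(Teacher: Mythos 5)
Your overall architecture (Theorem \ref{TheoStrat} to get a stratified/expanding model satisfying $\varphi$ at stratum $0$, finitely many $\simm_\Sigma$-classes by Proposition \ref{PropBound}, hence some stratum class recurs infinitely often, then transformations preserving $\Sigma$-types) matches the paper, but the key quantitative step is wrong. You let $a$ be the least index whose stratum class recurs infinitely often and claim $a \le E^{2^s}_{s+1}$ by pigeonhole, on the grounds that the classes of $\M_0,\dots,\M_a$ must be pairwise distinct by minimality of $a$. Minimality of $a$ only says that no class appearing at an index $<a$ recurs infinitely often; it does not prevent such a class from repeating many times among the indices $<a$. Concretely, take a classical model (trivial order, singleton strata) in which $p$ holds at times $0,\dots,N$ and fails from $N+1$ onwards: the first class occurs $N+1$ times but not infinitely often, the infinitely recurring class first appears at time $N+1$, and $N$ is arbitrary. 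So no bound on the first infinitely-recurring index can be read off the original model, and your earlier variant (``two of these classes coincide, contradiction'') fails for the same reason: two bimersive strata below $a$ contradict nothing.

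This is precisely why the paper does not bound the first occurrence in the given model but instead \emph{changes} the model: it takes any $a'$ (possibly huge) such that $\N_{a'}$ occurs infinitely often and applies Lemma \ref{lemSpeedAB} from $0$ taking $a'$ to some $a \le E^{2^s}_{s+1}$; internally this rests on transformation \ref{transThree}, which deletes all strata strictly between two bimersive ones, and it is this collapsing of the initial segment---not normalization---that yields the bound on $a$. In your write-up Lemma \ref{lemSpeedAB} is invoked only as if it were stratum-by-stratum normalization (transformation \ref{transOne}), which controls the size of each stratum but leaves the index $a$ untouched. Once the speedup is used properly, the rest of your plan goes through essentially as you describe: $\Sigma$-types at stratum $0$ are preserved by Lemma \ref{lem:transformations}, so $\varphi$ stays satisfied; the part of the new model from index $a$ onwards is a shifted copy of the old model from $a'$ onwards, so the new stratum at $a$ (which equals $\N_{a'}$) still occurs infinitely often; and a final application of \ref{transOne} to strata $0$ and $a$ (Lemma \ref{lemSpeedAB} already bounds the strata strictly in between) gives the size bound $Q^{2^s}_{s+1}$ for all $i\le a$.
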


\proof
Suppose that $\varphi$ is satisfiable (falsifiable).
Then, by Theorem \ref{TheoStrat}, $\varphi$ is satisfied (falsified) on $\N_0$ for some stratified model $\N$.
By Proposition \ref{PropBound} there are finitely many $\simm_\Sigma$ equivalence classes, and hence there is some $a'$ such that $\N_{a'}$ occurs infinitely often.

By Lemma \ref{lemSpeedAB} there is a speedup $\M'$ of $\N$ from $0$ taking $a'$ to some $a \leq \nre{2^s}{s+1}$ and such that the size of $\M'_i$ is bounded by $\nrq{2^{s}}{s+1}$ for all $ i \in (0,a)$. It is then easy to see that $\M'_a$ occurs infinitely often in $\M'$.
Finally we define $\M$ by replacing $\M'_0$ and $\M'_a$ by their normalized $\parts\Sigma$-labelled trees, which by Proposition \ref{PropBound} have size at most $\nrq{2^{s}}{s+1}$.
\endproof

\begin{lemma}\label{lemmPhaseTwo}
Let $\Sigma\Subset \langred$ with $s = |\Sigma|$ and $\varphi \in \Sigma$. Then $\varphi$ is satisfiable (falsifiable) over the class of dynamic posets if and only if $\varphi$ is satisfied in a stratified model $\M$ for which there exists $a \leq \nre{2^s}{s+1}$ such that
\begin{enumerate}

\item $\M_a$ occurs infinitely often,

\item $W_a$ has fulfilment time $r\leq s \nrq{2^s}{s+1} \nre{2^{s+1}}{s + 3}$, and 

\item for all $i \leq a + r $, $\M_i$ is bounded by $\nrq{2^{s+1}}{s+3}$.
\end{enumerate}
\end{lemma}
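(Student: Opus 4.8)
The ``if'' direction is immediate, since a stratified model is in particular a model over a dynamic poset, so any $\varphi$ satisfied (falsified) on such a model is satisfiable (falsifiable). For the converse, suppose $\varphi$ is satisfiable (falsifiable) over the class of dynamic posets. By Theorem~\ref{TheoStrat} this is equivalent to $\varphi$ being satisfiable (falsifiable) over expanding models, so Lemma~\ref{lemmPhaseOne} applies and yields an expanding model $\M$ satisfying (falsifying) $\varphi$, together with some $a\leq\nre{2^s}{s+1}$, such that $\M_a$ occurs infinitely often and $\M_i$ is bounded by $\nrq{2^s}{s+1}$ for every $i\leq a$. The plan is to leave this initial segment untouched and to \emph{speed up} the tail via Lemma~\ref{lemSpeedFulf}, so that every eventuality rooted at $W_a$ becomes fulfilled in effectively bounded time.

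Since $\M_a$ is bounded by $\nrq{2^s}{s+1}$, the set $W_a$ is finite, and the set $U\subseteq W_a\times\Sigma$ of all eventualities of $\M$ whose first coordinate lies in $W_a$ is finite with $|U|\leq s\,\nrq{2^s}{s+1}$ (for each $w\in W_a$ at most $|\Sigma|=s$ formulas of $\Sigma$ can form an eventuality with $w$). Apply Lemma~\ref{lemSpeedFulf} to $\M$, $a$ and $U$: this gives a stratified speedup $\N$ of $\M$ from $a$ in which the fulfilment time of $U$ is some
\[
  r\;\leq\;|U|\,\nre{2^{s+1}}{s+3}\;\leq\;s\,\nrq{2^s}{s+1}\,\nre{2^{s+1}}{s+3},
\]
and in which $\N_{a+x}$ is bounded by $\nrq{2^{s+1}}{s+3}$ for $1\leq x\leq r$ (applying, if necessary, one further instance of transformation~\ref{transOne} to the stratum of index $a+r$, which alters neither the $\Sigma$-types, nor the strata of index $\neq a+r$, nor the fulfilment time of $W_a$). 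Because a speedup from $a$ keeps the strata of index $\leq a$ literally unchanged and, by the $\Sigma$-type preservation of transformations \ref{transOne}--\ref{transFour} (Lemma~\ref{lem:transformations}), leaves $\Sigma_\M(w)$ unchanged for $w\in W_a$, the set of eventualities rooted at $W_a$ in $\N$ is again exactly $U$; hence the fulfilment time of $W_a$ in $\N$ really is $r$, and, as $0\leq a$, the root of $\N_0$ still satisfies (falsifies) $\varphi$.

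It remains to check that $\N$ witnesses the three conditions with the same $a$. Condition~1 is the displayed bound, and $a\leq\nre{2^s}{s+1}$ is inherited from Lemma~\ref{lemmPhaseOne}. For condition~3: strata of index $i\leq a$ are the strata $\M_i$, bounded by $\nrq{2^s}{s+1}\leq\nrq{2^{s+1}}{s+3}$ by the evident monotonicity of $\nre{\cdot}{\cdot}$ and $\nrq{\cdot}{\cdot}$ in both of their arguments, while strata of index $i$ with $a<i\leq a+r$ are bounded by $\nrq{2^{s+1}}{s+3}$ by the previous paragraph. For condition~2 we must see that $\N_a$ still occurs infinitely often, which is the one point needing care. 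The speedup produced by Lemma~\ref{lemSpeedFulf} is obtained from $\M$ by finitely many transformations \ref{transOne}--\ref{transFour}, each acting on a single stratum of bounded index; consequently every stratum of $\N$ of sufficiently large index coincides, up to the finite global reindexing caused by the cuts, with a stratum of $\M$. Since $\N_a=\M_a$ and $\M_a\simm_\Sigma\M_m$ for arbitrarily large $m$, it follows that $\N_a\simm_\Sigma\N_j$ for arbitrarily large $j$, i.e.\ $\N_a$ occurs infinitely often in $\N$. This completes the construction; beyond bookkeeping with the functions $\nre{\cdot}{\cdot}$ and $\nrq{\cdot}{\cdot}$, the only substantive points are the invariance of $U$ and of ``occurs infinitely often'' under the speedup, both of which follow from the local nature of the transformations.
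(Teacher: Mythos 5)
Your proof is correct and follows essentially the same route as the paper's: Lemma~\ref{lemmPhaseOne} supplies the initial segment, then Lemma~\ref{lemSpeedFulf} is applied to the set of eventualities rooted at $W_a$ (of size at most $s\,|W_a|$), and the size bounds are checked via monotonicity of the bounding functions; the extra care you take about invariance of the eventuality set and of ``occurs infinitely often'' under the speedup, and about the boundary stratum $a+r$, is detail the paper leaves implicit but is sound. The only blemish is cosmetic: in your final paragraph the labels of conditions~1 and~2 are swapped (the bound on $r$ is condition~2, while ``occurs infinitely often'' is condition~1).
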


\proof
In view of Lemma \ref{lemmPhaseOne}, we may assume that $\varphi$ is satisfied (falsified) on $\N_0$ for some expanding model $\N = (W,\peq,S,V)$ satisfying the first condition and such that for all $ i \leq a$ the size of $\M_i$ is bounded by $\nrq{2^{s}}{s+1} $.
Let $U \subseteq W_a\times \Sigma$ be the set of all eventualities of $\N_a$; by Lemma \ref{lemSpeedFulf} there is a speedup $\M'$ of $\N$ from $a$ such that the realization time of $\M'_a$ is bounded by $| U | \nre{2^{s+1}}{s + 3}$ and such that $\M'_{a+1+i}$ is bounded by $\nrq{2^{s+1}}{s+3}$ for all $ i < r$.
Clearly $|U| \leq s|W_a| \leq s \nrq{2^s}{s+1}  $, giving us the second condition.
Since for $i\leq a$ we have that $\M_i$ is bounded by $ \nrq{2^{s}}{s+1} \leq \nrq{2^{s+1}}{s+3} $, we obtain the third condition.
\endproof

Finally we are able to show that satisfiability and validity can be restricted to good models.

\begin{lemma}\label{lemmPhaseThree}
Let $\Sigma\Subset \langred$ with $s = |\Sigma|$ and $\varphi \in \Sigma$. Then $\varphi$ is satisfiable (falsifiable) over the class of expanding posets if and only if $\varphi$ is satisfied (falsified) in a good model. 
\end{lemma}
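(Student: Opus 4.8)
The reverse direction is immediate, since a good model is in particular an expanding model. For the forward direction, the plan is to take the model supplied by Lemma~\ref{lemmPhaseTwo} and perform two further rounds of surgery on its strata: first to pull a recurrence of $\M_a$ down within the computable distance demanded by~\ref{itBoundC}, and then to normalize the strata lying in between.

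So I would first apply Lemma~\ref{lemmPhaseTwo} (recalling that, by Theorem~\ref{TheoStrat}, satisfiability and falsifiability over expanding posets agree with those over dynamic posets), obtaining a stratified model $\M$ that satisfies (resp.\ falsifies) $\varphi$ in its initial stratum $\M_0$, together with an index $a\leq\nre{2^s}{s+1}$ such that $\M_a$ occurs infinitely often, $W_a$ has fulfillment time $r\leq s\,\nrq{2^s}{s+1}\nre{2^{s+1}}{s+3}$, and $\M_i$ is bounded by $\nrq{2^{s+1}}{s+3}$ for all $i\leq a+r$. Since $\M_a$ occurs infinitely often, fix the least $b^\circ>a+r$ with $\M_{b^\circ}\simm_\Sigma\M_a$.

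For the first round I would drive $b^\circ$ down to at most $a+r+\nre{2^s}{s+1}$. Each stratum $\M_c$, carrying the labelling inherited from $\M$, is a tree-like $\Sigma$-quasimodel, because $W_c$ is $\peq$-closed and so the implication clause is inherited from $\M$. Hence, as long as $b^\circ-(a+r)>\nre{2^s}{s+1}$, the second part of Proposition~\ref{PropBound} yields $a+r<i<j\leq b^\circ$ with $\M_i\simm_\Sigma\M_j$, and minimality of $b^\circ$ forces $j<b^\circ$. Applying transformation~\ref{transThree} to replace $\M_j$ by $\M_i$ gives, by Lemma~\ref{lem:transformations}, a stratified model in which $\varphi$ is still satisfied (resp.\ falsified) in $\M_0$, the strata $\M_c$ with $c\leq a+r$ are unchanged (so they keep their size bound, and $W_a$ keeps fulfillment time $\leq r$, since the fulfillments starting in $W_a$ are too short to reach any modified stratum), $\M_a$ still occurs infinitely often, and the least recurrence index of $\M_a$ past $a+r$ has strictly decreased while remaining $>a+r$. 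Iterating finitely many times produces a model, still called $\M$, with $b:=b^\circ\leq a+r+\nre{2^s}{s+1}$. For the second round I would apply transformation~\ref{transOne} to each stratum $\M_c$ with $a+r<c<b$, replacing it by a copy of its normalized $\Sigma$-quasimodel; by Proposition~\ref{PropBound} each such stratum then has size at most $\nrq{2^s}{s+1}\leq\nrq{2^{s+1}}{s+3}$, while Lemma~\ref{lem:transformations} ensures that $\varphi$ is still satisfied (resp.\ falsified) in $\M_0$, that every stratum is still $\simm_\Sigma$-equivalent to its former value (in particular $\M_a\simm_\Sigma\M_b$), and that $W_a$ still has fulfillment time $\leq r$.

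It then remains to read off Definition~\ref{defGood} for this $\M$ with parameters $a$, $b$: the first condition holds because $a<b$ (as $b>a+r\geq a$) and $b\leq a+r+\nre{2^s}{s+1}$ is dominated by the computable quantity in~\ref{itBoundC}; the second is $\M_a\simm_\Sigma\M_b$; the third holds because the fulfillment time of $W_a$ is $\leq r<b-a$; and the fourth holds because $\M_c$ is bounded by $\nrq{2^{s+1}}{s+3}$ for every $c<b$ (by Lemma~\ref{lemmPhaseTwo} when $c\leq a+r$, and by the normalization step when $a+r<c<b$). One checks, exactly as for the speedups used earlier, that the construction leaves $\M$ expanding. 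I expect the main difficulty to be bookkeeping rather than ideas: verifying that every application of transformations~\ref{transThree} and~\ref{transOne} simultaneously preserves the recurrence of $\M_a$, the fulfillment time of $W_a$, and the size bounds on the strata up to $a+r$ — all of which rests on the fact that the modified strata always lie strictly above position $a+r$, so they are invisible to the short fulfillments originating in $W_a$.
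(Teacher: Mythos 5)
Your proof is correct and follows essentially the same route as the paper: start from Lemma~\ref{lemmPhaseTwo}, use the infinite recurrence of $\M_a$ to pick a stratum $b'>a+r$ with $\M_a\simm_\Sigma\M_{b'}$, bring it down within a computable distance by repeated applications of \ref{transThree}, and then normalize the intermediate strata via \ref{transOne}. The only difference is that the paper packages your two ``rounds of surgery'' into a single citation of Lemma~\ref{lemSpeedAB} (applied from $a+r$), whereas you re-derive that lemma's content inline; the bookkeeping you carry out (preservation of the recurrence, of the fulfillment times of $W_a$, and of the size bounds on strata up to $a+r$) matches what the paper leaves implicit.
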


\proof
We may begin with a model $\N =(W,\peq,S,V)$ satisfying all conditions of Lemma \ref{lemmPhaseTwo}, where $\N_a$ occurs infinitely often and $r$ is the realization time of $W_a$. Since $\M_a$ occurs infinitely often, we may choose $b'>a+r$ such that $\N_a \simm_\Sigma \N_{b'}$. Then, by Lemma \ref{lemSpeedAB} there is a speedup $\M$ of $\N$ from $a+r$ taking $b'$ to some $b \leq a + r + \nre{2^{s+1}}{s + 1}$ and such that $\M_i$ is bounded by $\nrq{2^{s+1}}{s+1}$ (and hence by $\nrq{2^{s+1}}{s+3}$) for all $ i \in (a+r,b)$. The model $\M$ then has all desired properties.
\endproof

\begin{definition}
Let $\model$ be an expanding model such that there is an immersion $\simvar \colon W_b  \to W_a$.
Then we define a new pointed model $\model\Qfin = (W\Qfin, \irel\Qfin, \msuccfct\Qfin, V\Qfin,w_0\Qfin)$ by setting $W\Qfin = \bigcup_{0 \le m < b} W_m $, $\irel\Qfin = \reduc{\irel}{W\Qfin}$,
\[
  \msuccfct\Qfin(w) = \begin{cases}
    \simvar\left(\msuccfct\left(w\right)\right) &\text{if } w \in W_{b-1} \\
    \msuccfct(w) &\text{otherwise}
  \end{cases}
  \]
$ V\Qfin = \reduc{V}{W\Qfin}$, and $w_0\Qfin$ to be the root of $W_0$ (note that $w_0\Qfin \in W\Qfin$).
\end{definition}

The idea is to apply the operation $\cdot\Qfin$ to good models, in which case the end result is a well-behaved finite model as described in the next lemma and Figure \ref{fig:final-model}.

\begin{figure}[h]
  \begin{center}
	\begin{tikzpicture}[node distance=3cm, triangle/.style={ regular polygon,draw, regular polygon sides=3 }] 

	\node (p0) { };
	\node (pp1)[right=1.2cm of p0]{ };
	\node (pp2)[right=1.2cm of pp1]{ };		
	\node[triangle,shape border rotate=180,minimum size=1cm] (tt1)[above=-.1cm of pp1] {};	
	\node[triangle,shape border rotate=180,minimum size=1cm] (tt2)[above=-.1cm of pp2] {};			
	\node (p1)[right=1.2cm of pp2] { };
	\node (p3)[right=1.2cm of p1] { };	
	\node (p4)[right=1.2cm of p3] { };		
	
	\node (pp3)[right=1.2cm of p4]{ };
	\node (pp4)[right=1.2cm of pp3]{ };		
	\node (pp5)[right=1.2cm of pp4]{ };
	
	\node (l1)[below=.4cm of p0] {$0$};	
	\node (l2)[below=.4cm of p1] {$a$};	
	\node (l3)[below=.4cm of pp3] {$a + r$};		
	\node (l4)[below=.4cm of pp5] {$b-1$};

	\node[triangle,shape border rotate=180,minimum size=1cm] (t1)[above=-.1cm of p0] {};	
	\node[triangle,shape border rotate=180,minimum size=2cm] (t1)[above=-.1cm of p1] {};	
	\node[triangle,shape border rotate=180,minimum size=1cm] (t3)[above=-.1cm of p3] {};	
	\node[triangle,shape border rotate=180,minimum size=1cm] (t4)[above=-.1cm of p4] {};		
	\node[triangle,shape border rotate=180,minimum size=1cm] (tt3)[above=-.1cm of pp3] {};	
	\node[triangle,shape border rotate=180,minimum size=1cm] (tt4)[above=-.1cm of pp4] {};	
	\node[triangle,shape border rotate=180,minimum size=1cm] (tt5)[above=-.1cm of pp5] {};

	\path[successor relation]
	(p0)  edge node {} (pp1)
	(pp1) edge node {} (pp2)
	(pp2) edge node {} (p1)
	(p1)  edge node {} (p3)
	(p3)  edge node {} (p4)
	(p4)  edge node {} (pp3)
	(pp3) edge node {} (pp4)
	(pp4) edge node {} (pp5)
	(pp5) edge[bend right=50,out=-55] node {} (p1)
  ;
	
	\path[<->,>=stealth']
	(l1) edge[] node[above] {Phase 1}
	            node[below] { \small $\nre{2^s}{s+1}$} (l2)
	(l2) edge[] node[above] {Phase 2}
	            node[below] {\small $s \nrq{2^s}{s+1} \nre{2^{s+1}}{s + 3}$} (l3)
	(l3) edge[] node[above] {Phase 3}
	            node[below] {\small $\nre{2^s}{s+1}$} (l4)
	;

	\end{tikzpicture}
	\end{center}
	\caption{An illustration of the three phases of $\model\Qfin$ built from a good model. Below each phase we indicate the maximum number of strata, used for the computations in the proof of Lemma \ref{lemFinalBound}.}
	\label{fig:final-model}
\end{figure}
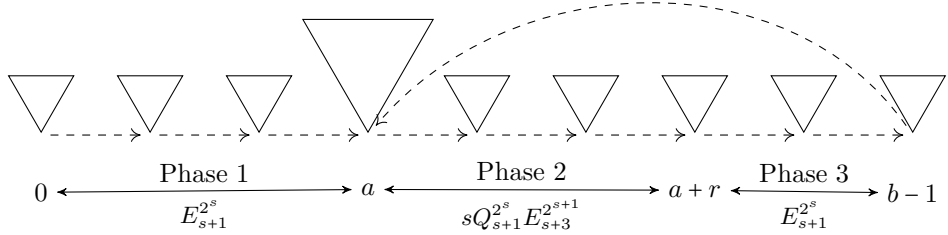\david{The lines in this figure are not perfectly horizontal, we should fix that.}

\begin{lemma}\label{LemFinSigma}
If $\model = (W,\peq,S,V)$ is a good model with parameters $a$, $b$ then $\model\Qfin$ is a model and $\Sigma_{\model\Qfin} ( w ) = \Sigma_\model ( w ) $ for all $w\in W\Qfin$.
\end{lemma}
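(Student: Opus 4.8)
The plan is to establish the two assertions separately: that $\model\Qfin$ is a genuine intuitionistic dynamic model, and that it realises exactly the same formulas of $\Sigma$ as $\model$ at every surviving world. First I would note that $\model\Qfin$ is well defined, since a good model is in particular expanding, hence stratified, and the condition $\model_a\simm_\Sigma\model_b$ of Definition~\ref{defGood} provides (by symmetry of $\simm$) an immersion $\simvar\colon W_b\to W_a$, which is all the construction $\cdot\Qfin$ requires.

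For the first assertion, $\irel\Qfin$ and $V\Qfin$ are the restrictions of $\irel$ and $V$ to $W\Qfin=\bigcup_{c<b}W_c$, so the former is automatically a partial order and the latter automatically monotone. Forward confluence of $\msuccfct\Qfin$ is the only point requiring attention: since every stratum is $\irel$-closed, any two $\irel\Qfin$-comparable worlds lie in a single $W_c$ with $c<b$; if $c<b-1$ then $\msuccfct\Qfin$ agrees with $\msuccfct$ on them and confluence is inherited from $\model$, whereas if $c=b-1$ then $\msuccfct\Qfin=\simvar\circ\msuccfct$ and confluence follows from confluence of $\model$ (which maps $W_{b-1}$ into $W_b$) together with the fact that the immersion $\simvar$ preserves $\irel$.

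For the second assertion I would argue by induction on $\length\varphi$ that $\model\Qfin,w\sat\varphi$ iff $\model,w\sat\varphi$ for every $\varphi\in\Sigma$ and every $w\in W\Qfin$. The Boolean cases are immediate; the case $\varphi=\psi_1\imp\psi_2$ uses that $\{v:w\irel\Qfin v\}=\{v:w\irel v\}$, because the stratum of $w$ is $\irel$-closed and contained in $W\Qfin$; and the case $\varphi=\tnext\psi$ uses the identity $\Sigma_\model(\msuccfct\Qfin(w))=\Sigma_\model(\msuccfct(w))$, which holds because $\simvar$ is label-preserving. The substantive cases are $\ubox\psi$ and $\psi\Until\chi$, and for these I would first record the shape of the $\msuccfct\Qfin$-orbit of a world $w$ in a stratum $W_m$ with $m<b$: it runs along $\msuccfct$ up through the strata $W_m,\dots,W_{b-1}$, then the world reached in $W_{b-1}$ is sent by $\msuccfct\Qfin=\simvar\circ\msuccfct$ to a world $z_0\in W_a$, after which the same pattern repeats with period $b-a$, producing re-entry points $z_0,z_1,\dots\in W_a$; along the non-jump segments $\msuccfct\Qfin$ coincides with $\msuccfct$, and each jump is label-preserving. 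For $\varphi=\ubox\psi$, the left-to-right direction follows because $\ubox\psi$ is hereditary along $\msuccfct$-orbits in $\model$ and is preserved by each label-preserving jump, so every point of the $\model\Qfin$-orbit of $w$ satisfies $\ubox\psi$, hence $\psi$, in $\model$, hence $\psi$ in $\model\Qfin$ by the induction hypothesis for $\psi$; the converse, by contraposition, is the delicate direction — a failure of $\psi$ along the $\model$-orbit of $w$ occurring beyond $W_{b-1}$ transfers, across the jump, to a failure of $\ubox\psi$ at $z_0\in W_a$, and then the fulfillment-time condition of Definition~\ref{defGood} forces the eventuality $(z_0,\ubox\psi)$ to be witnessed within the recycled strata $W_a,\dots,W_{b-1}$, exhibiting a failure of $\psi$ along the $\model\Qfin$-orbit of $w$. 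The case $\varphi=\psi\Until\chi$ is dual: left-to-right again uses the fulfillment-time condition to keep a fulfillment of $(z_0,\psi\Until\chi)$ inside the recycled strata, while right-to-left (by contraposition) requires a finite-descent argument — a minimal $\model\Qfin$-fulfillment of $(w,\psi\Until\chi)$ that left $W_{b-1}$ would, by label preservation, produce a strictly shorter $\model\Qfin$-fulfillment of $(z_0,\psi\Until\chi)$ together with $\model,z_0\nsat\psi\Until\chi$, and iterating would give an infinite strictly decreasing sequence of fulfillment lengths, which is impossible.

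I expect the main obstacle to be precisely these two ``escaping'' sub-cases of $\ubox$ and $\Until$. Two points need care: the induction hypothesis may only be applied to the proper subformulas $\psi$ and $\chi$, never to $\ubox\psi$ or $\psi\Until\chi$ themselves, so the transfer across a jump must be mediated solely by the label-preservation of $\simvar$ and by elementary semantic facts about $\ubox$ and $\Until$; and the bookkeeping of orbit positions, re-entry points, fulfillment times and strata must be carried out exactly. Recognising that the fulfillment-time condition of a good model is precisely what neutralises an escaping eventuality in the forward direction and what makes the descent terminate in the backward direction is the heart of the argument; the remaining cases are routine.
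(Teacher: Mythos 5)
Your proposal is correct and follows essentially the same route as the paper: structural induction on the formula, the easy cases handled via stratum-closure and label preservation of $\simvar$, the ``escaping'' $\ubox$/$\Until$ witnesses pushed through the jump into $W_a$ and contained there by the fulfillment-time bound of Definition~\ref{defGood}, and the converse handled by (sub)induction/descent on the length of the $\model\Qfin$-fulfillment, exactly as in the paper's eventuality argument. The only cosmetic differences are that you verify the ``$\model\Qfin$ is a model'' part explicitly (the paper leaves it to the reader) and that you dispatch the direction $\model,w\sat\ubox\psi\Rightarrow\model\Qfin,w\sat\ubox\psi$ by a direct heredity-along-the-orbit argument rather than the paper's uniform treatment via eventualities, which is a harmless simplification.
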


\begin{proof}
  The proof that $\modelbase[\Qfin]$ is a model is straightforward and left to the reader.
  We prove by structural induction on $\varphi$ that
  for all $w \in W\Qfin$ and all $\varphi \in \Sigma$,
  $\model\Qfin, w \sat \varphi$ iff $\model, w \sat \varphi$.
  The cases for propositional variables and the Boolean connectives are straightforward.
  The case for the `next' temporal modality is similar to that in the proof of
  Lemma~\ref{lem:transformations}.

  For the `henceforth' and `until' temporal modalities,
  suppose first that $(w, \varphi)$ is an eventuality in $\model$ and $w \in W\Qfin$.
  Let $w_0 \ldots w_n$ be the fulfilment of $(w, \varphi)$ in~$\model$.
  If $w_n \in W\Qfin$ then we can apply the induction hypothesis to see that each $w_i$ for $i\leq n$ satisfies the progressive and the end conditions for $(w,\varphi)$ in $\model\Qfin$: if $\varphi = \theta \Until \psi$ then $\M,w_n\models \psi$ and for all $i<n$ $\M,w_i \models \theta$ and $\M,w_i \not \models \psi$, which by induction on formula length yields $\M\Qfin,w_n\models \psi$ and for all $i<n$ $\M\Qfin ,w_i \models \theta$.
  The case for $\varphi = \ubox \psi$ is similar.
  
  Otherwise, there is a least $k \le n$ such that $w_k \in W_b$.
  Therefore, $(w_k, \varphi)$ is an eventuality in $\model$ and so is $(\sigma(w_k), \varphi)$ since $\sigma$ is an immersion.
  Since $\model$ is good, 
  the length of the fulfilment of any eventuality $(v, \varphi)$ such that $v \in W_a$
  is bounded by $  b - a$.
  Thus by the previous case (where $w_n \in W\Qfin$), $(\sigma(w_k), \varphi)$ is an eventuality in $\model\Qfin$.
  Let $v_0,\ldots,v_\ell$ be its fulfilment.
  Then it is not hard to see using the induction hypothesis that $w_0,\ldots,w_{k-1},v_0,\ldots,v_\ell$ is the fulfilment of $(w,\varphi)$ in $\model\Qfin$, witnessing that $\model\Qfin,w\models \varphi$.
 
  Conversely, suppose now that $(w, \varphi)$ is an eventuality in $\model\Qfin$ and
  let $w_0 \ldots w_n$ be its fulfilment.
  For each $k \le n$ let $m_k$ be such that $w_k \in W_{m_k}$.
  The proof is by a subinduction on $n$. For the base case we directly apply the induction hypothesis to $w = w_n$.
If $n>0$ then first note that by the main induction hypothesis on $\varphi$, the sequence $w_0 \ldots w_n$ satisfies the progressive condition for $(w,\varphi)$ on $\model$.

Now consider two cases. If $m_0<b-1$ then $m_{1} < b$. The sub-induction hypothesis tells us that $(w_1 ,\varphi)$ is an eventuality of $\M$, and since $\M,w_0$ satisfies the progressive condition for $(w,\varphi)$ it follows that $(w,\varphi)$ is an eventuality of $\M$.

Otherwise $m_0 = b -1$, so that $m_1 = a$.
The sub-induction hypothesis tells us that $(w_1,\varphi)$ is an eventuality of $\M$.
  Since $w_1 = S\Qfin(w_{0}) = \sigma S(w_{0}) $ and
  $\sigma$ is an immersion, $(S(w_{0}), \varphi)$ is an eventuality in $\model$.
  Therefore, $(w, \varphi)$ is an eventuality in $\model$.
\end{proof}

\begin{lemma}\label{lemFinalBound}
If $\M$ is a good model with parameters $a,b$ and $s = \card\Sigma$ then $\M\Qfin$ is bounded by
\[
    B(s) \eqdef \nrq{2^{s+1}}{s+3}\left(2 \nre{2^s}{s+1} + s \nrq{2^s}{s+1} \nre{2^{s+1}}{s + 3}\right)
  \]
\end{lemma}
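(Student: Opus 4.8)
The plan is to count the states of $\M\Qfin$ stratum by stratum. By the definition of $\cdot\Qfin$, the underlying set is $W\Qfin = \bigcup_{0 \le m < b} W_m$, and passing from $\M$ to $\M\Qfin$ only redirects the successor function on $W_{b-1}$ and restricts the valuation; the carrier set is untouched. Hence $|W\Qfin| = \sum_{m = 0}^{b-1} |W_m|$, and it suffices to bound both the number of strata $b$ and the size of each stratum $W_m$ for $m < b$.

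For the size of a single stratum I would simply invoke the last clause of Definition \ref{defGood}: since $\M$ is good, every $W_m$ with $m < b$ satisfies $|W_m| \le \nrq{2^{s+1}}{s+3}$, so already $|W\Qfin| \le b \cdot \nrq{2^{s+1}}{s+3}$. To bound $b$ I would use condition \eqref{itBoundC} of Definition \ref{defGood}, which is most transparently read off from the three-phase construction of a good model summarised in Figure \ref{fig:final-model}: Phase 1 (the interval $[0,a)$) contributes at most $\nre{2^s}{s+1}$ strata by Lemma \ref{lemmPhaseOne}; Phase 2 (the interval $[a, a+r)$, where $r$ is the fulfilment time of $W_a$) contributes $r \le s\,\nrq{2^s}{s+1}\,\nre{2^{s+1}}{s+3}$ strata by Lemma \ref{lemmPhaseTwo}; and Phase 3 (the interval $[a+r, b)$) contributes at most $\nre{2^s}{s+1}$ strata by Lemma \ref{lemmPhaseThree}. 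Summing, $b \le 2\,\nre{2^s}{s+1} + s\,\nrq{2^s}{s+1}\,\nre{2^{s+1}}{s+3}$.

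Putting the two estimates together yields
\[
  |W\Qfin| \;\le\; b \cdot \nrq{2^{s+1}}{s+3}
  \;\le\; \nrq{2^{s+1}}{s+3}\left(2\,\nre{2^s}{s+1} + s\,\nrq{2^s}{s+1}\,\nre{2^{s+1}}{s+3}\right) = B(s),
\]
which is the desired bound. I do not expect any genuine obstacle here: all of the substantive work — constructing the good model, bounding the sizes of normalised (pointed) quasimodels, and bounding fulfilment times — has already been carried out in the preceding lemmas, so what remains is pure bookkeeping. The only point needing a little care is to track each of the three phases with the correct parameters, i.e.\ to keep straight which bounds use alphabet size $2^s$ versus $2^{s+1}$ and which use level $s+1$ versus $s+3$ (the plain versus pointed quasimodel bounds of Propositions \ref{PropBound} and \ref{PropBoundPointed}).
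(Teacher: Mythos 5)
Your proposal is correct and is essentially the paper's own argument: the paper's proof simply declares the bound immediate from the definition of $W\Qfin$ and the bounds in Definition \ref{defGood}, i.e.\ the same stratum-by-stratum count $|W\Qfin|\le b\cdot \nrq{2^{s+1}}{s+3}$ combined with the three-phase bound on $b$ that you spell out. Your version just makes the bookkeeping explicit (and correctly tracks the $2^s$ versus $2^{s+1}$ and $s+1$ versus $s+3$ parameters coming from Propositions \ref{PropBound} and \ref{PropBoundPointed}).
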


\proof
This is immediate from the definition of $W\Qfin$ and the bounds on good models (see Defininition \ref{defGood}).
\endproof

We have proven the following effective finite model property for $\langred$; however, since $\lang$ maps effectively into $\langred$, this result applies to the full language.

\begin{theorem} \label{ThmFMP}
  There exists a computable function $B$ such that
  for any formula $\varphi \in \lang$,
  if $\varphi$ is satisfiable (resp. unsatisfiable)
  then $\varphi$ is satisfiable (resp. falsifiable) in a model $\modelbase$
  such that $\card W \le B(\nos\varphi)$.
\end{theorem}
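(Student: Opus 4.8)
The plan is to combine the expanding model property of Section~\ref{SecExpPer}, the combinatorial bounds on quasimodels of Section~\ref{sec:combinatorics}, and the good-model machinery of the current section, and then to observe that every bound involved is computable. First I would reduce to the fragment $\langred$: by Proposition~\ref{propExpEquiv} every $\varphi\in\lang$ is equivalent, over the class of dynamic posets, to some $\varphi'\in\langred$ obtained by a computable rewriting (repeatedly applying the validities $\ubox\psi\leftrightarrow\bot\R\psi$ and $\varphi\Until\psi\leftrightarrow(\psi\R(\varphi\vee\psi))\wedge\diam\psi$ of Proposition~\ref{PropUValid}), so it suffices to prove the bound for formulas of $\langred$ and then precompose with this translation. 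So I fix $\varphi\in\langred$, let $\Sigma\Subset\langred$ be its subformula closure, and put $s=\card\Sigma$.

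Next I would run the chain of reductions culminating in the collapse $\cdot\Qfin$. Suppose $\varphi$ is satisfiable on a dynamic poset. By Theorem~\ref{TheoStrat} it is satisfied on an expanding model, and by Lemma~\ref{lemmPhaseThree} it is in fact satisfied in a \emph{good} model $\M=(W,\peq,S,V)$ with parameters $a,b$ relative to $\Sigma$; tracing through the construction, the satisfying world $w$ lies in $W_0$, and $W_0\subseteq W\Qfin$ since $W\Qfin=\bigcup_{m<b}W_m$ and $b>0$. Since $\M$ is good we have $\M_a\simm_\Sigma\M_b$, which in particular provides an immersion $W_b\to W_a$ of the associated $\parts\Sigma$-labelled posets, so the pointed model $\M\Qfin$ is defined. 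By Lemma~\ref{LemFinSigma}, $\M\Qfin$ is a (finite) model with $\Sigma_{\M\Qfin}(v)=\Sigma_\M(v)$ for every $v\in W\Qfin$; since $\varphi\in\Sigma$, this yields $\M\Qfin,w\sat\varphi$. Finally, by Lemma~\ref{lemFinalBound},
\[
  \card{W\Qfin}\ \le\ B(s)\ =\ \nrq{2^{s+1}}{s+3}\left(2\,\nre{2^s}{s+1}+s\,\nrq{2^s}{s+1}\,\nre{2^{s+1}}{s+3}\right).
\]
The falsifiability case is identical, reading ``falsified'' for ``satisfied'' throughout and using the corresponding clauses of Theorem~\ref{TheoStrat} and Lemma~\ref{lemmPhaseThree}.

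It then remains only to check effectivity. The translation $\varphi\mapsto\varphi'$ is computable; $s=\card\Sigma$ is computable from $\varphi'$ and bounded by a computable function of $\length{\varphi}$ (indeed linearly, since the number of distinct subformulas is at most the number of connectives plus atoms); and the quantities $\nre{n}{k}$ and $\nrq{n}{k}$, hence $B$, are computable directly from their defining recursions. Composing the translation with $s\mapsto B(s)$ yields the desired computable function, still denoted $B$, such that $\card W\le B(\length\varphi)$ whenever $\varphi$ is satisfiable (resp.\ falsifiable) on a dynamic poset. Decidability of $\iltl$ follows by exhaustively searching all models of size at most $B(\length\varphi)$.

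I do not expect a real obstacle here: the substantive work — the expanding model property, the Kruskal-style bounds of Theorem~\ref{TheoKruskal}, the four speedup transformations \ref{transOne}--\ref{transFour}, and the type-preservation of $\M\Qfin$ — has already been carried out in the preceding lemmas, and Theorem~\ref{ThmFMP} is a short corollary of Lemmas~\ref{lemmPhaseThree}, \ref{LemFinSigma}, and~\ref{lemFinalBound} together with Propositions~\ref{propExpEquiv} and~\ref{PropUValid}. The only points needing care are bookkeeping ones: that the parameter constraint~\eqref{itBoundC} in Definition~\ref{defGood} together with the per-stratum bound $\nrq{2^{s+1}}{s+3}$ really combine, via Lemma~\ref{lemFinalBound}, into the closed-form $B(s)$ above, and that the reduction $\lang\to\langred$ preserves satisfiability and falsifiability — both of which have in effect already been verified.
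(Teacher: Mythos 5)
Your proposal follows the paper's own proof route essentially verbatim: reduce to $\langred$ via Proposition~\ref{propExpEquiv}, pass to a stratified/expanding model by Theorem~\ref{TheoStrat}, obtain a good model by Lemma~\ref{lemmPhaseThree}, and conclude with the collapse $\cdot\Qfin$ via Lemmas~\ref{LemFinSigma} and~\ref{lemFinalBound}, the computability of the bound being immediate from the defining recursions. The only slip is that the two equivalences you cite translate $\ubox$ and $\Until$ away and hence land in $\lang_{\diam\R}$ rather than in $\langred$, where the finite-model machinery is developed; to reduce to $\lang_{\ubox\Until}$ use instead $\diam\psi\leftrightarrow\top\Until\psi$ and $\varphi\R\psi\leftrightarrow(\psi\Until(\varphi\wedge\psi))\vee\ubox\psi$ (Proposition~\ref{PropUValid}, items~\ref{ItUFive} and~\ref{ItUTen}).
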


\fullproof{
\begin{proof}
  In view of Theorem~\ref{TheoStrat},
  a formula $\varphi$ is satisfiable (resp. falsifiable) in a model $\model$ if and only if
  it is satisfied (resp. falsified) at the root of a stratified model $\model\Qstrat$.
  Then, by Lemma~\ref{LemFinSigma}, $\varphi$ is satisfied (resp. falsified) in $\model\Qstrat$ if and only if
  it is satisfied (res. falsified) on $(\model\Qstrat)\Qfin$,
  which is effectively bounded by $B(\nos\varphi)$ by Lemma \ref{LemmFinalBound}.
\end{proof}
}
\shortproof{}

As a corollary, we get the decidability of $\iltl$.

\begin{corollary}\label{CorITLeDec}
  The satisfiability and validity problems for $\iltl$ are decidable.
\end{corollary}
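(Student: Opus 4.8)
The plan is to derive decidability directly from the effective finite model property of Theorem~\ref{ThmFMP}. First I would fix a formula $\varphi\in\lang$ and let $B$ be the computable function provided by that theorem; by Proposition~\ref{propExpEquiv} we may assume $\varphi\in\langred$ without loss of generality. Since only the finitely many propositional variables occurring in $\varphi$ are relevant to the truth of $\varphi$, up to isomorphism there are only finitely many dynamic models whose domain has at most $B(\nos\varphi)$ elements, and this finite list can be effectively generated: enumerate all triples $(W,{\peq},S)$ with $\card W\le B(\nos\varphi)$ where $\peq$ is a partial order and $S\colon W\to W$ satisfies forward confluence, together with all valuations $V$ of the variables of $\varphi$ that are monotone with respect to $\peq$.

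Next I would observe that the satisfaction relation $\model,w\sat\psi$ is computable whenever $\model$ is a finite model. This is the only point that requires a small argument: because $S$ is a function on the finite set $W$, for each $w$ the orbit $w,S(w),S^2(w),\dots$ is eventually periodic, with pre-period and period both bounded by $\card W$. Hence $\model,w\sat\diam\psi$ iff $\model,\msuccfct[^k](w)\sat\psi$ for some $k<\card W$; $\model,w\sat\ubox\psi$ iff $\model,\msuccfct[^k](w)\sat\psi$ for all $k<\card W$; and the bounded-quantifier clauses for $\Until$ and $\R$ need only be checked for $k<\card W$. The Boolean and $\tnext$ clauses are immediate, and the clause for $\imp$ quantifies over the finitely many $v\succcurlyeq w$. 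A straightforward recursion on $\psi$ therefore decides $\model,w\sat\psi$.

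Putting these together, to decide whether $\varphi$ is satisfiable I would search through the finite, effectively listed collection of models of size at most $B(\nos\varphi)$ and, for each one, check whether some world satisfies $\varphi$; by Theorem~\ref{ThmFMP} this search answers ``yes'' precisely when $\varphi$ is satisfiable over the class of all dynamic posets. For validity, recall that $\varphi$ is valid iff it is falsified at no world of any model; by the ``falsifiable'' half of Theorem~\ref{ThmFMP}, $\varphi$ is falsifiable iff it is falsified at some world of some model of size at most $B(\nos\varphi)$, so the same bounded search, now looking for a falsifying world, decides validity as well.

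I do not anticipate any serious obstacle: essentially all the work has already been carried out in establishing Theorem~\ref{ThmFMP}, and the only mildly nontrivial point is the remark that the temporal operators can be evaluated on a finite model by exploiting the eventual periodicity of $S$-orbits, which is what keeps the model-checking step effective.
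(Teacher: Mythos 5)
Your proposal is correct and follows exactly the route the paper intends: the corollary is presented as an immediate consequence of the effective finite model property of Theorem~\ref{ThmFMP}, obtained by enumerating the finitely many models of size at most $B(\nos\varphi)$ and model-checking $\varphi$ on each. Your extra remark that evaluating $\diam$, $\ubox$, $\Until$ and $\R$ on a finite model only requires inspecting $k<\card W$ iterates of $S$ (by eventual periodicity of orbits) is the right way to make the model-checking step effective, and is the only detail the paper leaves implicit.
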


\section{Bounded bisimulations for $\Until$ and $\R$}\label{sec:bisim}

\david{Double-check definitions.}

In this section we adapt the classical definition of bounded bisimulations for modal logic~\cite{BRV01} to our case. To do so we combine the ordinary definition of bounded bisimulations with the work of~\cite{P97} on bisimulations for propositional intuitionistic logic, which includes extra conditions involving the partial order $\irel$. In our setting, we combine both approaches in order to define bisimulation for a language involving $\imp$, $\tnext$, $\Until$ and $\R$, where the latter are adapted from bisimulations for  a language with \emph{until} and \emph{since}~\cite{Kamp68} presented by Kurtonina and de Rijke~\cite{KR97}.
Since all languages we consider contain Booleans and $\tnext$, it is convenient to begin with a `basic' notion of bisimulation for this language.

\begin{definition}
Given $n > 0 $ and two $\iltl$ models $\model_1$ and $\model_2$, a sequence of binary relations $\bisim_n \subseteq \cdots \subseteq \bisim_0 \subseteq W_1 \times W_2$ is said to be a 
\emph{bounded $\tnext$-bisimulation} if for all $(w_1,w_2)\in W_1 \times W_2$ and for all $0\le i < n$, the following conditions are satisfied:
\smallskip

\noindent{\sc Atoms.} If $w_1 \bisim_i w_2$ then for all propositional variables $p$, $\model_1,w_1 \sat p$ iff $\model_2, w_2\sat p$.\smallskip

\noindent{\sc Forth $\to$.} If $w_1 \bisim_{i+1} w_2$ then for all $v_1 \in W_1$, if $v_1 \succcurlyeq w_1$, there exists $v_2\in W_2$ such that $v_2 \succcurlyeq w_2$ and $v_1 \bisim_i v_2$.
\smallskip

\noindent{\sc Back $\to$.} If $w_1 \bisim_{i+1} w_2$ then for all $v_2 \in W_2$ if $v_2\succcurlyeq w_2$ then there exists $v_1\in W_1$ such that $v_1 \succcurlyeq w_1$ and $v_1 \bisim_iv_2$.
\smallskip

\noindent{\sc Forth $\tnext$.} if $w_1 \bisim_{i+1} w_2$ then $\msuccfct(w_1) \bisim_i \msuccfct(w_2)$.
\end{definition}

Note that there is not `back' clause for $\tnext$; this is simply because $S$ is a function, so its `forth' and `back' clauses are identical. Bounded $\tnext$-bisimulations are useful because they preserve the truth of relatively small $\lang_{\tnext}$-formulas.

\begin{lemma}\label{lemma:bisimulation:tnext}
Given two $\iltl$ models $\model_1$ and $\model_2$ and a bounded $\tnext$-bisimulation $\bisim_n \subseteq \cdots \subseteq \bisim_0$ between them, for all $i\leq n$ and $(w_1,w_2)\in W_1\times W_2$, if $w_1\bisim_i w_2$ then for all $\varphi \in \lang_{\tnext}$ satisfying\footnote{Although not optimal, we use the length of the formula in this lemma to simplify its proof. More precise measures like counting the number of modalities and implications could be equally used.} $\length{\varphi} \le i$, $\model_1, w_1 \sat \varphi \hbox{ iff } \model_2, w_2 \sat \varphi$.									
\end{lemma}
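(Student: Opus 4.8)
The plan is to proceed by induction on the length $\length\varphi$ of the formula, with an outer case analysis on the shape of $\varphi$. Fix $i \le n$ and $(w_1,w_2)$ with $w_1 \bisim_i w_2$, and assume $\length\varphi \le i$. The base case $\length\varphi = 0$ covers $\varphi = p$ (handled directly by the {\sc Atoms} clause, which applies since $\bisim_i \subseteq \bisim_0$) and $\varphi = \bot$ (trivial, as neither model satisfies $\bot$). For the inductive step I would first dispatch the routine Boolean cases $\varphi = \psi \wedge \chi$ and $\varphi = \psi \vee \chi$: here $\length\psi, \length\chi < \length\varphi \le i$, and since $\bisim_i \subseteq \bisim_{i}$ trivially we may apply the induction hypothesis to $\psi$ and $\chi$ at the same pair $(w_1,w_2)$ and the same index $i$ (or any smaller index $\ge \length\psi,\length\chi$), then combine.

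The two substantive cases are $\varphi = \tnext\psi$ and $\varphi = \psi \imp \chi$. For $\varphi = \tnext\psi$: since $\length\varphi = 1 + \length\psi \le i$, we have $i \ge 1$, so write $i = j+1$ with $j \ge \length\psi$; by the {\sc Forth}~$\tnext$ clause (applicable because $w_1 \bisim_{j+1} w_2$), we get $\msuccfct(w_1) \bisim_j \msuccfct(w_2)$, and the induction hypothesis applied to $\psi$ at index $j$ gives $\model_1,\msuccfct(w_1)\sat\psi$ iff $\model_2,\msuccfct(w_2)\sat\psi$, which by the semantics of $\tnext$ is exactly what we need. For $\varphi = \psi \imp \chi$: again $\length\varphi = 1 + \length\psi + \length\chi \le i$, so $i = j+1$ with $j \ge \length\psi,\length\chi$. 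Suppose $\model_1,w_1 \sat \psi\imp\chi$ and take any $v_2 \succcurlyeq w_2$ with $\model_2,v_2\sat\psi$; by {\sc Back}~$\to$ there is $v_1 \succcurlyeq w_1$ with $v_1 \bisim_j v_2$, so by the induction hypothesis $\model_1,v_1\sat\psi$, hence $\model_1,v_1\sat\chi$, hence $\model_2,v_2\sat\chi$ by the induction hypothesis again. The converse direction is symmetric, using {\sc Forth}~$\to$ instead. Since $v_2$ was arbitrary, $\model_2,w_2\sat\psi\imp\chi$.

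The main point to be careful about — and the only real subtlety — is the bookkeeping of the indices: one must check throughout that whenever the induction hypothesis or a {\sc Forth}/{\sc Back}/{\sc Forth}~$\tnext$ clause is invoked, the relevant index is still $\le n$ and still large enough (i.e.\ at least the length of the subformula being analyzed), which is guaranteed precisely because the length function strictly decreases on proper subformulas and the clauses take us from $\bisim_{j+1}$ to $\bisim_j$. Using the monotonicity $\bisim_n \subseteq \cdots \subseteq \bisim_0$ one can always relax an index downward when needed, so no case runs out of room. Everything else is a direct unfolding of the satisfaction clauses, so I would state the Boolean and base cases as routine and write out only the $\tnext$ and $\imp$ cases in detail.
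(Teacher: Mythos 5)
Your proof is correct and follows essentially the same route as the paper's: the base case via the {\sc Atoms} clause, the implication case via {\sc Back}~$\to$/{\sc Forth}~$\to$ together with the induction hypothesis at index $i-1$, and the $\tnext$ case via {\sc Forth}~$\tnext$. The only cosmetic difference is that the paper inducts on the index $i$ while you induct on $\length{\varphi}$; since the index bound and the formula length decrease together, the two bookkeeping schemes coincide and your index management is sound.
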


\begin{proof}We proceed by induction on $i$. Let $0 \leq i \leq n$ be such that for all $j < i$ the lemma holds. Let $w_1\in W_1$ and $w_2 \in W_2$ be such that $w_1 \bisim_i w_2$ and let us consider $\varphi \in \lang_{\tnext}$ such that $\length{\varphi}\le i$. The cases where $\varphi$ is an atom or of the forms $\theta\wedge \psi$, $\theta\vee \psi$ are as in the classical case and we omit them. Thus we focus on the following:\smallskip

%\noindent{\sc Case of an atom $p$}. proved by using Condition~\ref{def:bisim:d:c1}.
%		\item Case $\varphi \wedge \psi$: assume without loss of generality that that $\model_1, w_1 \sat \varphi \wedge \psi$, so $\model_1, w_1 \sat \varphi$ and $\model_1, w_1 \sat\psi$. Since $\length{\varphi\wedge\psi} \le n$ then  $\length{\varphi}< n$  and  $\length{\psi} < n$. By induction hypothesis $\model_2, w_2 \sat \varphi$ and $\model_2, w_2 \sat \psi$ so $\model_2, w_2 \sat \varphi \wedge \psi$.		
%		\item Case $\varphi \vee \psi$: assume without loss of generality that that $\model_1, w_1 \sat \varphi \vee \psi$, so either $\model_1, w_1 \sat \varphi$ or $\model_1, w_1 \sat\psi$. Since $\length{\varphi\vee\psi} \le n$ then  $\length{\varphi}< n$  and  $\length{\psi} < n$ so, by induction hypothesis, either $\model_2, w_2 \sat \varphi$ or $\model_2, w_2 \sat \psi$. Therefore $\model_2, w_2 \sat \varphi \vee \psi$.
				
\noindent{\sc Case $\varphi = \theta \imp \psi$.} We proceed by contrapositive to prove the left-to-right implication. Note that in this case we must have $i>0$.

Assume that $\model_2, w_2 \not \sat \theta \imp \psi$. Therefore there exists $v_2 \in W_2$ such that $v_2 \succcurlyeq w_2$, $\model_2, v_2 \sat \theta $, and $\model_2, v_2 \not \sat \psi$. By the {\sc Back $\to$} condition, it follows that there exists $v_1\in W_1$ such that $v_1\succcurlyeq w_1$ and $v_1 \bisim_{i-1} v_2$. Since $\length{\theta} , \length{\psi} < i$, by the induction hypothesis, it follows that $\model_1,v_1 \sat \theta$ and $\model_1,v_1 \not \sat \psi$. Consequently, $\model_1, w_1 \not\sat \theta \imp \psi$. The converse direction is proved in a similar way but using {\sc Forth $\imp$}.\smallskip

\noindent {\sc Case $\varphi = \tnext \psi$.} Once again we have that $i>0$. Assume that $\model_1, w_1\sat \tnext \psi$, so that $\model_1, \msuccfct(w_1)\sat \psi$. By {\sc Forth $\tnext$}, $S_1(w_1) \bisim_{i-1} S_2(w_2)$. Moreover, $\length{\psi} \leq i-1$, so that by the induction hypothesis, $\model_2, \msuccfct(w_2)\sat \psi$, and $\model_2, w_2\sat \tnext \psi$. The right-to-left direction is analogous.
\end{proof}	

We will use bounded $\tnext$-bisimulations as a basis to define bounded bisimulations for more powerful languages.
The bisimulations we define below preserve formulas containing the `until' operator.

\begin{definition}
Given $n\in \mathbb{N}$ and two $\iltl$ models $\model_1$ and $\model_2$, a bounded $\tnext$-bisimulation $\bisim_n \subseteq \cdots \subseteq \bisim_0 \subseteq W_1 \times W_2$ is said to be a \emph{bounded $\Until$-bisimulation} iff for all $(w_1,w_2)\in W_1 \times W_2 $ and $0\le i < n$ such that $w_1 \bisim_{i+1} w_2$:\smallskip

\noindent{\sc Forth $\Until$.} For all $k_1\ge 0$ there exist $k_2 \ge 0$ and $(v_1,v_2) \in W_1\times W_2$ such that
\begin{enumerate}[itemsep=0pt]

\item $\msuccfct[^{k_2}](w_2) \succcurlyeq v_2$, $v_1 \succcurlyeq \msuccfct[^{k_1}](w_1)$ and $v_1 \bisim_i v_2$, and

\item  for all $j_2 \in [0,k_2)$  there exist $j_1 \in [0,k_1)$ and $(u_1,u_2) \in W_1 \times W_2$ such that $u_1 \succcurlyeq \msuccfct[^{j_1}](w_1)$, $\msuccfct[^{j_2}](w_2) \succcurlyeq u_2$ and $u_1\bisim_i u_2$.\smallskip

\end{enumerate}

\noindent{\sc Back $\Until$.} For all $k_2\ge 0$ there exist $k_1 \ge 0$ and $(v_1,v_2) \in W_1\times W_2$ such that
\begin{enumerate}[itemsep=0pt]

\item

$\msuccfct[^{k_1}](w_1) \succcurlyeq v_1$, $v_2 \succcurlyeq \msuccfct[^{k_2}](w_2)$ and $v_1 \bisim_i v_2$, and

\item for all $j_1 \in [0,k_1)$  there exist $j_2 \in [0,k_2)$ and $(u_1,u_2) \in W_1 \times W_2$ such that $u_2 \succcurlyeq \msuccfct[^{j_2}](w_2)$, $\msuccfct[^{j_1}](w_1) \succcurlyeq u_1$ and $u_1\bisim_i u_2$.

\end{enumerate}
\end{definition}

As was the case before, the following lemma states that two bounded $\Until$-bisimilar models agree on small-enough $\lang_{\Until}$ formulas.

\begin{lemma}\label{lemma:bisimulation:until}					
	Given two $\iltl$ models $\model_1$ and $\model_2$ and a bounded $\Until$-bisimulation $\bisim_n\subseteq \hdots \subseteq \bisim_0$ between them, for all $m\leq n$ and $(w_1,w_2)\in W_1\times W_2$, if $w_1\bisim_m w_2$ then for all $\varphi \in \lang_{\Until}$ such that $\length{\varphi} \le m$, $\model_1, w_1 \sat \varphi \hbox{ iff } \model_2, w_2 \sat \varphi$.									
\end{lemma}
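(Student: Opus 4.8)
The plan is to follow the proof of Lemma~\ref{lemma:bisimulation:tnext}: argue by induction on the index $m$, and inside the inductive step do a case analysis on the principal connective of $\varphi$. So assume the statement for all indices below $m$, fix $(w_1,w_2)$ with $w_1\bisim_m w_2$, and take $\varphi\in\lang_\Until$ with $\length\varphi\le m$. Since a bounded $\Until$-bisimulation is in particular a bounded $\tnext$-bisimulation and the relations $\bisim_i$ are nested, the cases in which $\varphi$ is a propositional variable, $\bot$, a conjunction, a disjunction, an implication, or of the form $\tnext\psi$ are handled exactly as the corresponding cases in the proof of Lemma~\ref{lemma:bisimulation:tnext} (and as in the classical argument, for the Booleans), the only change being that the induction hypothesis is now invoked for subformulas of $\lang_\Until$. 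The one new case is $\varphi=\theta\Until\psi$, for which $m\ge1$ and $\length\theta,\length\psi\le m-1$; write $i=m-1$, so $w_1\bisim_{i+1}w_2$ and the clauses {\sc Forth $\Until$}, {\sc Back $\Until$} are available at level $i$.

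For the left-to-right direction I would assume $\model_1,w_1\sat\theta\Until\psi$ and fix a witness $k_1$, so $\model_1,\msuccfct[^{k_1}](w_1)\sat\psi$ and $\model_1,\msuccfct[^{j_1}](w_1)\sat\theta$ for every $j_1<k_1$. Feeding $k_1$ into {\sc Forth $\Until$} produces $k_2$ and a pair $(v_1,v_2)$ with $v_1\succcurlyeq\msuccfct[^{k_1}](w_1)$, $\msuccfct[^{k_2}](w_2)\succcurlyeq v_2$, and $v_1\bisim_i v_2$; then three moves give $\model_2,\msuccfct[^{k_2}](w_2)\sat\psi$: monotonicity of truth from $\msuccfct[^{k_1}](w_1)$ up to $v_1$, the induction hypothesis from $v_1$ to $v_2$ (applicable since $\length\psi\le i$), and monotonicity from $v_2$ up to $\msuccfct[^{k_2}](w_2)$. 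For the progressive part, for each $j_2<k_2$ clause~2 of {\sc Forth $\Until$} supplies $j_1<k_1$ and a pair $(u_1,u_2)$ with $u_1\succcurlyeq\msuccfct[^{j_1}](w_1)$, $\msuccfct[^{j_2}](w_2)\succcurlyeq u_2$, $u_1\bisim_i u_2$, and the same monotonicity--hypothesis--monotonicity chain transfers $\theta$ from $\msuccfct[^{j_1}](w_1)$ to $\msuccfct[^{j_2}](w_2)$. Hence $k_2$ witnesses $\model_2,w_2\sat\theta\Until\psi$. The right-to-left direction is the mirror image, using {\sc Back $\Until$} in place of {\sc Forth $\Until$}.

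I expect the only real difficulty to be bookkeeping: one must apply truth-monotonicity in the correct direction at each of the four places where a bisimilar pair replaces an actual temporal iterate (the $\succcurlyeq$'s in the clauses are oriented precisely so that the bisimilar point always sits below the iterate into which one is propagating), and one must check that clause~2 of {\sc Forth $\Until$} and {\sc Back $\Until$} is exactly what is needed to carry the ``progressive'' half of the until-witness. It is also worth explaining in the writeup why these clauses are phrased as they are rather than reduced to iterations of {\sc Forth $\tnext$}: iterating {\sc Forth $\tnext$} $k_1$ times would lower the index by $k_1$ and so fail for long witnesses, whereas {\sc Forth $\Until$}/{\sc Back $\Until$} let a temporal jump of arbitrary length cost only a single level of the bisimulation, matching the fact that $\Until$ adds only $1$ to $\length{\cdot}$.
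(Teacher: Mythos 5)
Your proposal is correct and follows essentially the same route as the paper's own proof: induction on the index, deferring the Boolean, implication and $\tnext$ cases to Lemma~\ref{lemma:bisimulation:tnext}, and handling $\theta\Until\psi$ by feeding the witness into {\sc Forth $\Until$} (resp.\ {\sc Back $\Until$}) and transferring $\psi$ and the intermediate $\theta$'s via the monotonicity--induction-hypothesis--monotonicity chain. The orientation of the $\succcurlyeq$'s and the length bookkeeping ($\length\theta,\length\psi\le m-1$) are exactly as in the paper, so no gap remains.
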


	\begin{proof} Once again, proceed by induction on $n$. Let $m\leq n$ be such that for all $k < m$ the lemma holds. Let $w_1\in W_1$ and $w_2 \in W_2$ be such that $w_1 \bisim_m w_2$ and let us consider $\varphi \in \lang_{\Until}$ such that $\length{\varphi}\le m$.
	We only consider the new case, where $\varphi = \theta \Until\psi$. From left to right, assume that $\model_1, w_1 \sat \theta \Until \psi$.
	Then, there exists $i_1\ge 0$ such that $\model_1, \msuccfct[^{i_1}](w_1) \sat \psi$ and for all $j_1$ satisfying $0 \le j_1 < i_1$, $\model_1, \msuccfct[^{j_1}](w_1) \sat \theta$.
	By {\sc Forth $\Until$}, there exist $i_2 \ge 0$ and $(v_1,v_2) \in W_1\times W_2$ such that
	\begin{enumerate*}[itemsep=0pt]
		\item $\msuccfct[^{i_2}](w_2)\succcurlyeq v_2$, $v_1 \succcurlyeq \msuccfct[^{i_1}](w_1)$ and $v_1 \bisim_{m-1} v_2$;
		\item\label{itTwoUBis} for all $j_2$ satisfying $0 \le j_2 < i_2$  there exist $j_1 \in [0,i_1)$ and $(u_1,u_2) \in W_1 \times W_2$ s. t. $u_1 \succcurlyeq \msuccfct[^{j_1}](w_1)$, $\msuccfct[^{j_2}](w_2) \succcurlyeq u_2$ and $u_1\bisim_{m-1} u_2$.
	\end{enumerate*}
		
Since $v_1 \succcurlyeq \msuccfct[^{i_1}](w_1)$ and $\model_1, \msuccfct[^{i_1}](w_1) \sat \psi$, by $\peq$-monotonicity we see that  $\model_1, v_1 \sat \psi$. Since $\length{\psi} \leq m-1$, it follows from the induction hypothesis that $\model_2, v_2 \sat \psi$, and by $\peq$-monotonicity, $\model_2, \msuccfct[^{i_2}](w_2) \sat \psi$.

Now take any $j_2$ satisfying $0 \le j_2 < i_2$. Using \eqref{itTwoUBis}, the fact that $\length{\theta} \leq m-1$, and the induction hypothesis, we may reason as above to conclude that $\model_2, \msuccfct[^{j_2}](w_2) \sat \theta$ so $\model_2, w_2 \sat \theta \Until\psi$. The right-to-left direction is symmetric (but uses {\sc Back $\Until$}).
	\end{proof}	

Finally, we define bounded bisimulations for `release'. The idea is similar as that for the `until' operator.

\begin{definition}
A bounded $\tnext$-bisimulation $\bisim_n \subseteq \cdots \subseteq \bisim_0 \subseteq W_1 \times W_2$ is said to be a \emph{bounded $\R$-bisimulation} if for all $(w_1,w_2)\in W_1 \times W_2 $ and $0\le i < n$ such that $w_1 \bisim_{i+1} w_2$:\smallskip

\noindent{\sc Forth $\R$.} For all $k_2\ge 0$ there exist $k_1 \ge 0$ and $(v_1,v_2) \in W_1\times W_2$ such that

\begin{enumerate}[itemsep=0pt]

\item $\msuccfct[^{k_2}](w_2)\succcurlyeq v_2$, $v_1 \succcurlyeq \msuccfct[^{k_1}](w_1)$ and $v_1 \bisim_i v_2$, and

\item for all $j_1$ satisfying $0\le j_1 < k_1$ there exist $j_2$ such that $0 \le j_2 < k_2$ and $(u_1,u_2) \in W_1 \times W_2$ s. t. $u_1 \succcurlyeq \msuccfct[^{j_1}](w_1)$, $\msuccfct[^{j_2}](w_2)\succcurlyeq u_2$ and $u_1\bisim_i u_2$.

\end{enumerate}

\noindent{\sc Back $\R$.} For all $k_1\ge 0$ there exist $k_2 \ge 0$ and $(v_1,v_2) \in W_1\times W_2$ such that

\begin{enumerate}[itemsep=0pt]

\item

$\msuccfct[^{k_1}](w_1)\succcurlyeq v_1$, $v_2 \succcurlyeq \msuccfct[^{k_2}](w_2)$ and $v_1 \bisim_i v_2$, and

\item for all $j_2$ satisfying $0 \le j_2 < k_2$ there exist $j_1$ such that $0 \le j_1 < k_1$ and $(u_1,u_2) \in W_1 \times W_2$ s. t. $u_2 \succcurlyeq \msuccfct[^{j_2}](w_2)$, $\msuccfct[^{j_1}](w_1)\succcurlyeq u_1$ and $u_1\bisim_i u_2$. 
\end{enumerate}
\end{definition}

Once again, we obtain a corresponding bisimulation lemma for $\lang_{\R}$.

\begin{lemma}\label{lemma:bisimulation:release}					
Given two $\iltl$ models $\model_1$ and $\model_2$ and a bounded $\R$-bisimulation $\bisim_n \subseteq \cdots \subseteq \bisim_0$ between them, for all $m\leq n$ and $(w_1,w_2)\in W_1\times W_2$, if $w_1\bisim_m w_2$ then for all $\varphi \in \lang_{\Until}$ such that $\length{\varphi} \le m$, $\model_1, w_1 \sat \varphi \hbox{ iff } \model_2, w_2 \sat \varphi$.									
\end{lemma}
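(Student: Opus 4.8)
The plan is to mirror the proof of Lemma~\ref{lemma:bisimulation:until}, exploiting the formal duality between the two sets of clauses. First, a clarifying remark on the statement: as printed it quantifies over $\varphi\in\lang_{\Until}$, but since $\bisim$ is here a bounded $\R$-bisimulation the fragment actually preserved is $\lang_{\R}$ (the reference to $\lang_{\Until}$ being a typographical slip inherited from Lemma~\ref{lemma:bisimulation:until}). The two definitions share the \emph{same} order constraints and differ only in the alternation of their quantifiers: {\sc Forth}~$\R$ reads ``for all $k_2$ there exist $k_1$'' (and dually in its inner clause), which is exactly what is needed to transport the \emph{falsification} of $\theta\R\psi$ from $\model_2$ to $\model_1$, just as {\sc Forth}~$\Until$ transports the \emph{satisfaction} of $\theta\Until\psi$ from $\model_1$ to $\model_2$. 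These clauses do not preserve $\lang_{\Until}$: the clause whose quantifier alternation matches until-satisfaction in the forth direction is {\sc Back}~$\R$, and there the $\model_1$-witness lies \emph{below} $\msuccfct[^{k_1}](w_1)$, so the upward-closed truth of $\psi$ cannot be carried across $\bisim$. Consequently the only genuinely new case beyond Lemma~\ref{lemma:bisimulation:tnext} is $\varphi=\theta\R\psi$; atoms, Booleans, $\imp$, and $\tnext$ are handled verbatim.

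I would argue by induction on $m$, as in Lemmas~\ref{lemma:bisimulation:tnext} and~\ref{lemma:bisimulation:until}. Fix $m\le n$ and $(w_1,w_2)$ with $w_1\bisim_m w_2$, and consider $\varphi=\theta\R\psi$ with $\length\varphi\le m$, so that $\length\theta,\length\psi\le m-1$. Since the definition of a bounded $\R$-bisimulation imposes both {\sc Forth}~$\R$ and {\sc Back}~$\R$, it suffices to prove $\model_1,w_1\sat\theta\R\psi\Rightarrow\model_2,w_2\sat\theta\R\psi$, the converse being entirely analogous with {\sc Back}~$\R$ in place of {\sc Forth}~$\R$ and the two models interchanged. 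I prove this direction contrapositively: suppose $\model_2,w_2\nsat\theta\R\psi$, so that there is $k_2$ with $\model_2,\msuccfct[^{k_2}](w_2)\nsat\psi$ and $\model_2,\msuccfct[^{j_2}](w_2)\nsat\theta$ for all $j_2<k_2$, and I produce a $k_1$ witnessing $\model_1,w_1\nsat\theta\R\psi$.

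Applying {\sc Forth}~$\R$ to this $k_2$ yields $k_1$ together with $v_1\succcurlyeq\msuccfct[^{k_1}](w_1)$, $v_2\preccurlyeq\msuccfct[^{k_2}](w_2)$ and $v_1\bisim_{m-1}v_2$. For the end condition: since $v_2\preccurlyeq\msuccfct[^{k_2}](w_2)$ and $\model_2,\msuccfct[^{k_2}](w_2)\nsat\psi$, monotonicity gives $\model_2,v_2\nsat\psi$; the induction hypothesis applied to $\psi$ (using $v_1\bisim_{m-1}v_2$ and $\length\psi\le m-1$) gives $\model_1,v_1\nsat\psi$; and since $\msuccfct[^{k_1}](w_1)\preccurlyeq v_1$, monotonicity once more gives $\model_1,\msuccfct[^{k_1}](w_1)\nsat\psi$. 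For the progressive condition, fix $j_1<k_1$; the inner part of {\sc Forth}~$\R$ supplies $j_2<k_2$ and $u_1\succcurlyeq\msuccfct[^{j_1}](w_1)$, $u_2\preccurlyeq\msuccfct[^{j_2}](w_2)$ with $u_1\bisim_{m-1}u_2$. From $\model_2,\msuccfct[^{j_2}](w_2)\nsat\theta$ the same monotonicity-then-induction chain yields $\model_1,\msuccfct[^{j_1}](w_1)\nsat\theta$. Hence $k_1$ witnesses $\model_1,w_1\nsat\theta\R\psi$, completing this direction.

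The one point to watch is the direction of each monotonicity step: every transport across $\bisim$ is flanked by two appeals to monotonicity, and here falsity travels \emph{downward} on both sides (down to $v_2$ on $\model_2$, then down to $\msuccfct[^{k_1}](w_1)$ on $\model_1$), which is precisely what makes the $\R$-clauses succeed for release where they would fail for until. I expect this bookkeeping of order directions, rather than any conceptual difficulty, to be the only delicate part; the remaining cases coincide exactly with those of Lemmas~\ref{lemma:bisimulation:until} and~\ref{lemma:bisimulation:tnext}.
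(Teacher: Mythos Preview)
Your proof is correct and follows exactly the approach the paper intends: the paper's own proof merely states that the critical case $\varphi=\theta\R\psi$ ``follows by reasoning similar to that of Lemma~\ref{lemma:bisimulation:until}'' and leaves the details to the reader, which is precisely the contrapositive argument via {\sc Forth}~$\R$ (and symmetrically {\sc Back}~$\R$) with downward transport of falsity that you spell out. You are also right that the occurrence of $\lang_{\Until}$ in the statement is a slip for $\lang_{\R}$.
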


\begin{proof}
As before, we proceed by induction on $n$; the critical case where $\varphi =\theta \R \psi$ follows by reasoning similar to that of Lemma \ref{lemma:bisimulation:until}. Details are left to the reader.
	\end{proof}

\section{Definability and undefinability of modal operators} \label{SecSucc}

In this section, we explore the question of when the basic connectives can or cannot be defined in terms of each other.
It is known that, classically, $\diam$ and $\ubox$ are interdefinable, as are $\Until$ and $\R$; we will see that this is not the case intuitionistically. On the other hand, $\Until$ (and hence $\R$) is not definable in terms of $\diam,\ubox$ in the classical setting \cite{Kamp68}, and this result immediately carries over to the intuitionistic setting, as the class of classical $\sf LTL$ models can be seen as the subclass of that of dynamic posets by letting the partial order be the identity.

It is worth noting that interdefinability of modal operators can vary within intermediate logics. For example, $\wedge$, $\vee$ and $\rightarrow$ are basic connectives in propositional intuitionistic logic, but in the intermediate logic of here-and-there~\cite{Hey30}, $\wedge$ is a basic operator~\cite{A+15,BalbianiDieguezJelia} as is $\rightarrow$~\cite{A+15} while $\vee$ is definable in terms of $\rightarrow$ and $\wedge$~\cite{LK41}. In first-order here-and-there~\cite{L+07}, the quantifier $\exists$ is definable in terms of $\forall$ and $\rightarrow$~\cite{Mints10}. In the modal case, Simpson~\cite{Simpson94} shows that modal operators are not interdefinable in the intuitionistic modal logic $\sf IK$ and Balbiani and Di\'eguez~\cite{BalbianiDieguezJelia} proved that $\ubox$ is not definable in terms of $\diam$ in the linear time temporal extension of here-and-there. This last proof is adapted here to show that $\ubox$ not definable in terms of $\Until$ in $\itlht$ either. Note, however, that here we correct the claim of \cite{BalbianiDieguezJelia} stating that $\diam$ is not here-and-there definable in terms of $\ubox$, although we do show that $\diam$ is not definable in terms of $\R$ over the class of persistent models.

Let us begin by studying the definability of $\ubox$ in terms of $\tnext$ and $\Until$. Recall that $\lang$ denotes the full language of intuitionistic temporal logic. If $\lang'\subseteq \lang$, $\varphi\in \lang$ and $\Omega$ is a class of models, we say that $\varphi$ is {\em $\lang'$-definable over $\Omega$} if there is $\varphi' \in \lang'$ such that $\Omega \models \varphi\leftrightarrow \varphi'$. Thus for example $\diam p$ is $\lang_{\Until}$-definable; however, as we will see, $\ubox p$ is not.

We will show this by exhibiting models that are $n$-$\Until$-bisimilar for arbitrariliy large $n$.
To construct these models, it will be convenient to introduce some ad-hoc notation for cyclic groups. Recall that if $a,b \in \mathbb Z$ we write $a \mid b$ if there is $k\in \mathbb Z$ such that $b=ak$, and $a \equiv b \pmod n$ if $n \mid (a-b)$.
Given $n>0$, we will denote the cyclic group with $n$ elements by $\mathbb Z/(n)$. We will identify it with the set $\{1,\ldots, n\}$, and define $[i]_n$ to be the unique $ j \in [1 , n] $ such that $i \equiv j \pmod n$.
Note that addition in $\mathbb Z/(n)$ is given by $[x + y]_n$.
With this, we are ready to show that $\ubox$ is not definable in terms of $\Until$.

\begin{theorem}
The formula $\ubox p$ is not $\lang_{\Until}$-definable, even over the class of finite here-and-there models.
\end{theorem}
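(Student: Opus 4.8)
The strategy is the standard one for undefinability results via bounded bisimulations: for each $n$ I will construct a pair of finite here-and-there models $\model_1^n, \model_2^n$, with designated worlds $w_1, w_2$, such that (i) there is a bounded $\Until$-bisimulation $\{\bisim_i\}_{i\le n}$ between them with $w_1 \bisim_n w_2$, but (ii) $\model_1^n, w_1 \sat \ubox p$ while $\model_2^n, w_2 \nsat \ubox p$. Given such families, if $\ubox p$ were equivalent over here-and-there models to some $\varphi \in \lang_\Until$, we could take $n = \length{\varphi}$ and apply Lemma~\ref{lemma:bisimulation:until} to obtain $\model_1^n, w_1 \sat \varphi \iff \model_2^n, w_2 \sat \varphi$, contradicting (i)+(ii). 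So the whole content is in the construction and in verifying the bisimulation conditions.

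**The construction.** The natural idea, following the cyclic-group notation just set up and the Balbiani--Di\'eguez precedent, is to use temporal here-and-there frames built on a long cycle. Concretely, for a parameter $m$ (to be taken large relative to $n$, say $m = 2^{n+2}$ or similar), let $T = \mathbb Z/(m)$ with transition $f([i]_m) = [i+1]_m$, giving a here-and-there frame $W = T \times \{0,1\}$ as in the definition of $\itlht$; the dynamics just walks around the cycle. In $\model_1$, make $p$ true everywhere (so $\ubox p$ holds at every world); in $\model_2$, make $p$ true everywhere on the ``here'' level $T\times\{0\}$ but false at a single ``there''-world, say at $(1,1)$ — this is monotone since $p$ is true at $(1,0)$ is not required... wait, monotonicity forces $p\in V(1,0)\Rightarrow p \in V(1,1)$, so instead make $p$ false at both $(1,0)$ and $(1,1)$. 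Then $\ubox p$ fails at any world from which the cycle reaches position $1$, i.e.~everywhere. To fix this, one should instead falsify $p$ only at the ``there'' copies of a small set of positions while keeping $p$ true at all ``here'' copies — but monotonicity again couples the two levels. The correct move is therefore to falsify $p$ at position $1$ on \emph{both} levels in $\model_2$ but to choose the base map $f$ so that in $\model_2$ this ``bad'' position is never actually reached from $w_2$, whereas the analogous structure in $\model_1$ has no bad position at all; concretely, make the $\model_2$-cycle have a short ``tail'' leading into a clean cycle that avoids the bad world, while $w_2$ sits so that $S^k(w_2)$ always has $p$. This is exactly the kind of asymmetry (``$\ubox p$ true on one side, and on the other side every finite approximation of $\ubox p$ is true but $\ubox p$ itself fails'') used in Lemma~\ref{LemmInfExam}, transplanted to a finite cyclic setting where the failure is ``visible only past round $m$''.

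**Verifying the bisimulation.** Define $w \bisim_i w'$ to hold when $w, w'$ lie at the ``same depth in the here-and-there fibre'' (i.e.~both on level $0$ or both on level $1$) and, roughly, their positions on the cycle are either equal or both at distance $> i$ from any relevant ``special'' position, with the distance/index bookkeeping done modulo $m$. The {\sc Atoms} clause holds because $p$ is true at all worlds within the bisimulated region. {\sc Forth}/{\sc Back $\to$} are easy: the only non-trivial $\irel$-step is from level $0$ to level $1$ within a fibre, and both models have the same two-element fibres. {\sc Forth $\tnext$} holds because $S$ just increments the cycle position and decrements the index $i$, and the ``distance $> i$'' condition degrades gracefully to ``distance $> i-1$''. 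The real work is {\sc Forth $\Until$} and {\sc Back $\Until$}: given a witness time $k_1$ on one side, one must produce a matching time $k_2$ on the other with $\bisim_i$-related targets and $\bisim_i$-related intermediate points for every $j_2 < k_2$; here one exploits that $m$ is exponentially larger than $n$, so the relevant $\Until$-witnesses can always be realized ``before the cycle wraps into the dangerous zone'', and one matches $k_1$ with $k_2$ either equal or both large. I expect this matching of the nested $\Until$-quantifiers — getting the intermediate-point condition to hold simultaneously with the endpoint condition while keeping everything at the lower level $i$ — to be the main obstacle, and the place where the precise definition of the models and of $\bisim_i$ has to be tuned so that the induction goes through. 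Once the bisimulation is in place, the theorem follows immediately from Lemma~\ref{lemma:bisimulation:until} as sketched above, and since the models are finite here-and-there models the ``even over the class of finite here-and-there models'' strengthening is automatic.
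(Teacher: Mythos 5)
Your overall strategy (build models that are $n$-$\Until$-bisimilar for arbitrary $n$ but disagree on $\ubox p$, then invoke Lemma~\ref{lemma:bisimulation:until} with $n=\length{\varphi}$) is exactly the paper's, but the construction you propose does not produce such a pair, and the gap is not merely the unverified bisimulation clauses you flag at the end. In the version you actually wrote, $p$ is falsified at both levels of a ``bad'' position of $\model_2$ while $w_2$ is placed so that $\msuccfct^k(w_2)$ satisfies $p$ for every $k$; but then, by the semantic clause for $\ubox$, we get $\model_2,w_2\sat\ubox p$, so the two distinguished worlds agree on $\ubox p$ and there is nothing to separate. In the only other reading (the orbit of $w_2$ does reach the bad position), the pair is separated by the constant-length formula $\top\Until\neg p$: since $p$ fails at \emph{both} levels of the bad position, $\neg p$ is genuinely true there, so $\top\Until\neg p$ holds at $w_2$ but fails everywhere in the all-$p$ model $\model_1$; hence no bounded $\Until$-bisimulation of depth beyond a small constant can relate $w_1$ and $w_2$, and the argument collapses at step (i). The underlying misstep is your remark that ``monotonicity again couples the two levels'': monotonicity only forbids $p$ true below and false above, so the one option you never consider -- falsifying $p$ \emph{only at the here-copy} of a position, keeping it true at the there-copy -- is perfectly legal, and it is the crux of the matter.

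That is what the paper does: it uses a single here-and-there model $\HTMod n$ on a cycle of length $n+2$ with $V(n+2,0)=\varnothing$ and $p$ true everywhere else, and compares $(1,0)$ with the world $(1,1)$ directly above it. Then $\ubox p$ fails at $(1,0)$ (its orbit passes through $(n+2,0)$) but holds at $(1,1)$, while $\neg p$ is true nowhere, so no cheap $\Until$-formula sees the defect. Moreover, with the right (very coarse) relation $\sim_m$, the nested-quantifier matching you anticipated as ``the main obstacle'' becomes trivial: {\sc Forth $\Until$} is witnessed by taking $k_2=0$, $v_2=w_2$ and $v_1$ the top-level world above $\msuccfct^{k_1}(w_1)$, so the intermediate-point condition holds vacuously. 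In short, the difficulty is not in delaying a visible $p$-failure past time $n$ (impossible to exploit in a finite model in the way you suggest), but in hiding the failure under a $p$-world so that only a universal future quantifier -- $\ubox$ -- can detect it; your proposal is missing this idea, and without it the construction cannot be repaired along the lines you describe.
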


\begin{proof}
For $n>0$ consider a model $ \HTMod n = (W,{\peq},S,V)$ with $W = \big ( \mathbb Z / (n+2) \big )  \times \{0,1\}$, $(i,j) \peq (i',j')$ if $i = i'$ and $j\leq j'$, $S(i,j) = ([i+1]_{n+2} ,j)$, and $V(n+2,0) = \varnothing$, otherwise $V(i,j) = \{p\}$.
Clearly $ \HTMod n$ is a here-and-there model. For $m \leq n$, let $\sim_m$ be the least equivalence relation such that $(i,j) \sim_{m} (i',j')$ whenever
\[\max \{i(1-j) ,i' (1-j') \} \leq n - m +1\]
(see Figure \ref{FigBoxU}). Then, it can easily be checked that $ \HTMod n,(1,0) \not \models \ubox p$, $\model,(1,1) \models \ubox p$, and
$(1,0)\sim_m(1,1)$.

It remains to check that $(\sim_m)_{m\leq n}$ is a bounded $\Until$-bisimulation. The atoms, $\imp$ and $\tnext$ clauses are easily verified, so we focus on those for $\Until$. Since $\sim_m$ is symmetric, we only check {\sc Forth $\Until$}. Suppose that $(i_1,j_1) \sim_m (i_2,j_2)$, and fix $k_1 \geq 0$. Let $i'=[i_1 + k_1]_{n+2}$ and note that  $S^{k_1} (i_1,j_1) = (i ',j_1) $.
Then, we can see that $k_2 = 0$, $v_1 = (i',1)$ and $v_2 = (i_2,j_2)$ witness that {\sc Forth $\Until$} holds, where the intermediate condition for $j_2 \in [0,k_2)$ holds vacuously since $[0,k_2) = \varnothing$.

By letting $n = |\varphi|$, we see using Lemma \ref{lemma:bisimulation:until} that that no $\lang_{ \Until}$-formula $\varphi$ can be equivalent to $\ubox p$.
\end{proof}

\begin{figure}[h!]
\begin{center}
\begin{tikzpicture}[label distance=1pt]

  % states
  \foreach \x/\n/\e/\c in {0/0/1/n, 1.5/1/2/n-1, 4/n/n+1/0}
  { 
  	\path (\x,0)   node [p model state, "${(\e,0)}$" below] (n\n0) {};
    \path (\x,1.3) node [p model state, "${(\e,1)}$" above] (n\n1) {};
    \path (\x,0.6) -- +(-.1,0) node [anchor=east,color=black!50!gray] (c\n) {$\c$};
  }
  \path (5.5,0)   node [not p model state, "${(n+2,0)}$" below] (np0) {};
  \path (5.5,1.3) node [    p model state, "${(n+2,1)}$" above] (np1) {};

  % equivalence classes
  \begin{scope}[on background layer]
    \foreach \n/\g/\D in {n/10/7pt, 1/30/5pt, 0/40/3pt}
    \fill[fill=gray!\g,rounded corners]
         ($(n\n0.south east) + (\D,-\D)$)
      -| ($(c0.west) - (\D,0)$) |- ($(np1.north east) + (\D,\D)$)
      |- ($(n\n1.south east) + (\D,-\D)$)
      -- cycle;
    \fill[fill=gray!10,rounded corners] (np0.north west) rectangle (np0.south east);
  \end{scope}

  % arrows
  \foreach \n in {0, 1, n, p}
    \draw[intuitionistic relation] (n\n0) -> (n\n1);
  \foreach \y in {0, 1}
  { \foreach \f/\t in {0/1, n/p}
      \draw[successor relation] (n\f\y) -> (n\t\y);
    \draw[dotted,thick] (n1\y) -- (nn\y); }
  \draw[successor relation,rounded corners] (np0) -| ++(.8,-1) -| (-1,0)   -- (n00);
  \draw[successor relation,rounded corners] (np1) -| ++(.8,1)  -| (-1,1.3) -- (n01);
\end{tikzpicture}

\end{center}
\caption{The here-and-there model $ \HTMod n$. Black dots satisfy the atom $p$, white dots do not; all other atoms are false everywhere.
  Solid lines indicate $\peq$ and dashed lines indicate $S$.
  The $\sim_m$-equivalence classes are shown as grey regions.}
\label{FigBoxU}
\end{figure}
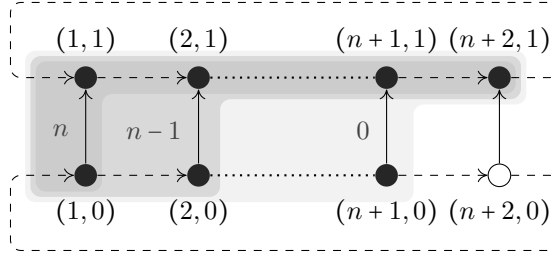

\noindent As a consequence:
\begin{corollary}
The formula $q \R p$ is not definable in terms of $\tnext$ and $\Until$, even over the class of finite here-and-there models.
\end{corollary}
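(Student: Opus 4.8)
The plan is to derive this as an immediate corollary of the preceding Theorem, using the fact that $\ubox$ is the special case of $\R$ whose first argument is $\bot$: by Proposition~\ref{PropUValid}.\ref{ItUSix}, $\ubox p \leftrightarrow \bot \R p$ is $\iltl$-valid, and hence valid over the (smaller) class $\Omega$ of finite here-and-there models.

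First I would assume, towards a contradiction, that $q \R p$ is $\lang_{\Until}$-definable over $\Omega$, say $\Omega \models \chi \leftrightarrow q \R p$ with $\chi \in \lang_{\Until}$. Let $\chi'$ be obtained from $\chi$ by substituting $\bot$ for every occurrence of the variable $q$; then $\chi' \in \lang_{\Until}$ as well, since the fragment is closed under substitution of $\bot$ for a propositional variable. The key step is then to verify that $\Omega \models \chi' \leftrightarrow \ubox p$. This follows from a routine substitution lemma: given $\model = (W,\peq,S,V) \in \Omega$, let $\model^- = (W,\peq,S,V^-)$ with $V^-(u) = V(u)\setminus\{q\}$ for all $u$; one checks that $\model^- \in \Omega$ (monotonicity is preserved, since the same variable is removed at every world) and that $\model, w \sat \chi'$ iff $\model^-, w \sat \chi$ for every $w \in W$. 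Since $q$ is false throughout $\model^-$, the second disjunct in the semantic clause for $\R$ is never satisfied there, so $\model^-, w \sat q \R p$ iff $\model^-, w \sat \ubox p$; and because $V$ and $V^-$ agree on $p$, this is equivalent to $\model, w \sat \ubox p$. Chaining these equivalences yields $\model, w \sat \chi'$ iff $\model, w \sat \ubox p$, as desired. Consequently $\ubox p$ would be $\lang_{\Until}$-definable over $\Omega$, contradicting the Theorem, and the corollary follows.

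I do not expect any genuine obstacle: the only step needing attention is the substitution lemma, and both of its ingredients — that replacing $q$ by $\bot$ commutes with the satisfaction relation, and that the reduct $\model^-$ is still a here-and-there model — are entirely standard. If one prefers to avoid substitution altogether, a still more direct route is available: in the models $\HTMod n$ constructed in the proof of the Theorem, the variable $q$, like every variable other than $p$, is already interpreted as false everywhere, so $q \R p$ and $\ubox p$ coincide on $\HTMod n$; the bounded $\Until$-bisimulations $(\sim_m)_{m \le n}$ built there then witness, via Lemma~\ref{lemma:bisimulation:until}, that no $\chi \in \lang_{\Until}$ with $\length{\chi} = n$ can be equivalent to $q \R p$, since such a $\chi$ would have to distinguish $(1,0)$ from $(1,1)$.
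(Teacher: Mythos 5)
Your proposal is correct and follows essentially the same route as the paper, whose entire proof is the one-line observation that defining $q \R p$ in $\lang_{\Until}$ would let one define $\ubox p \equiv \bot \R p$, contradicting the preceding theorem; your substitution-of-$\bot$-for-$q$ argument (and your alternative remark that $q$ is already false everywhere in $\HTMod n$) simply makes that reduction explicit.
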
	

\begin{proof}
If we could define $q\R p$, then we could also define $\ubox p \equiv \bot \R p$.\end{proof}

The situation is a bit different for $\diam$, at least over the class of here-and-there models.

\begin{proposition}\label{propDiamDefinHT}
Over the class of here-and-there models, $\diam$ is $\lang_{\ubox}$-definable. To be precise, define formulas
\begin{align*}
\alpha& = \ubox (p\to \ubox (p\vee \neg p))\\
\beta &= \ubox (\tnext \ubox (p\vee \neg p) \to p\vee \neg p\vee \tnext \ubox \neg p)\\
\gamma & = \ubox(p\vee \neg p)\wedge \neg \ubox \neg p\\
\varphi& = (\alpha \wedge \beta)\to \gamma.
\end{align*}
Then, $\diam p$ is here-and-there equivalent to $\varphi$.
\end{proposition}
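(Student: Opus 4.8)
The plan is to exploit the rigid two-track shape of here-and-there models. Fix a here-and-there model $\model$ and a world $w$; replacing $\model$ by the submodel generated from $w$, we may assume $W = T\times\{0,1\}$ with a function $f\colon T\to T$ as in the definition, and $w = (t,i)$. Put $t_k = f^k(t)$, so that $S$ acts by $S(t_k,j)=(t_{k+1},j)$ and $(s,0)\peq(s,1)$ are the only nontrivial instances of $\peq$. Since $\tnext,\ubox,\diam$ move forward along the orbit $t_0,t_1,\dots$ while $\imp$ and $\neg$ only look ``up'' from a $0$-world to the matching $1$-world, the truth of any formula at $(t_k,i)$ depends only on the two $\{0,1\}$-sequences $a_\ell = [\,p\in V(t_\ell,0)\,]$ and $b_\ell = [\,p\in V(t_\ell,1)\,]$ for $\ell\ge k$; monotonicity of $V$ gives $a_\ell\le b_\ell$. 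Call $\ell$ a \emph{defect} if $a_\ell\neq b_\ell$ (equivalently $a_\ell = 0$, $b_\ell = 1$).

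First I would record the semantics of the pieces of $\varphi$ in these terms. At a ``there'' world $(t_k,1)$ the generated submodel is an ordinary classical ${\sf LTL}$ model, so $p\vee\neg p$ holds everywhere on it and hence $\alpha$ and $\beta$ hold trivially at $(t_k,1)$, while $\neg\ubox\neg p$ — and thus $\gamma$ — is equivalent there to $\diam p$; since $\alpha\wedge\beta$ holds, so is $\varphi$. At a ``here'' world $(t_k,0)$ the key local facts are that $\model,(t_\ell,0)\sat p\vee\neg p$ iff $a_\ell = b_\ell$, and $\model,(t_\ell,0)\sat\neg p$ iff $b_\ell = 0$; from these a routine unwinding gives that $\diam p$ holds iff $a_\ell = 1$ for some $\ell\ge k$; that $\gamma$ holds iff there is no defect $\ge k$ and $a_\ell = 1$ for some $\ell\ge k$; that $\alpha$ holds iff for every $\ell\ge k$ with $a_\ell = 1$ there is no defect $\ge\ell$; and that $\beta$ holds iff for every $\ell\ge k$, whenever there is no defect $>\ell$ then either $a_\ell = b_\ell$ or $b_{\ell'} = 0$ for all $\ell'>\ell$.

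The heart of the argument is a normal form for the antecedent of $\varphi$, which I would isolate as a lemma: \emph{$\model,(t,0)\sat\alpha\wedge\beta$ iff $a_\ell = b_\ell$ for all $\ell$, or $a_\ell = 0$ for all $\ell$.} The $(\Leftarrow)$ direction is immediate from the descriptions above. For $(\Rightarrow)$, suppose there is at least one defect. By $\alpha$, $a_\ell = 1$ forces $\ell$ to lie strictly beyond every defect, so if the set of defects is infinite then $a\equiv 0$. If it is finite and nonempty, let $M$ be its largest element; then $\alpha$ already gives $a_\ell = 0$ for $\ell\le M$, and applying $\beta$ at $\ell = M$ — where the hypothesis ``no defect $>M$'' holds by maximality and $a_M\neq b_M$ rules out the first disjunct — forces $b_{\ell'} = 0$, hence $a_{\ell'} = 0$, for all $\ell'>M$; so again $a\equiv 0$. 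This interplay between $\alpha$ (pushing $a$ to $0$ up to and through the defects) and $\beta$ (killing the $b$-track, hence the $a$-track, past the last defect) is the main obstacle, and essentially the only place that requires real care.

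With the lemma proved, the theorem at a ``here'' world $(t,0)$ is a short case analysis. If $a = b$ throughout, then $\alpha\wedge\beta$ holds and the ``no defect'' clause of $\gamma$ is automatic, so $\varphi\leftrightarrow\gamma\leftrightarrow\diam p$. If $a\equiv 0$, then $\diam p$ and $\gamma$ are both false while $\alpha\wedge\beta$ holds, so $\varphi$ is false too, and the two sides agree. If neither holds, the lemma makes $\alpha\wedge\beta$ false, so $\varphi$ holds vacuously, and $a\not\equiv 0$ makes $\diam p$ true; again they agree. Combining this with the ``there'' case treated above, $\model,w\sat\diam p$ iff $\model,w\sat\varphi$ for every here-and-there model and every world, which is exactly the validity of $\diam p\leftrightarrow\varphi$ over the class of here-and-there models.
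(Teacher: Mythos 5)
Your proposal is correct in substance but takes a genuinely different route from the paper. The paper argues the two implications directly: for $\diam p\to\varphi$ it fixes an arbitrary $x'\succcurlyeq x$ and splits on whether $x'\sat\ubox(p\vee\neg p)$, showing that otherwise one of $\alpha,\beta$ fails at $x'$ (this half, as the paper notes, works over arbitrary expanding models); for $\varphi\to\diam p$ it observes that a failure of $\alpha$ or of $\beta$ can only be witnessed at a `here' world, and in either case such a failure produces a future point satisfying $p$. You instead encode the model by the two bit-sequences $a,b$ with $a\le b$, compute explicit truth conditions for $\alpha$, $\beta$, $\gamma$ and $\diam p$ at here-worlds, and isolate the characterization that $\alpha\wedge\beta$ holds at $(t,0)$ iff there are no defects or $a\equiv 0$; the equivalence then follows by a short case analysis. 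Your truth-condition computations and the normal-form lemma (using $\alpha$ to force $a_\ell=0$ up to the last defect and $\beta$ to force $b$, hence $a$, to vanish beyond it, with the infinite-defect case handled by $\alpha$ alone) are correct, and this gives a more structural explanation of why the formula works, at the price of not yielding the paper's extra observation that one direction is valid over all expanding models (which the statement does not require).

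One step needs patching: in the final case analysis you evaluate $\varphi=(\alpha\wedge\beta)\to\gamma$ at $(t,0)$ as if the implication were classical and local. In particular, ``the lemma makes $\alpha\wedge\beta$ false, so $\varphi$ holds vacuously'' is not a valid intuitionistic inference: the clause for $\to$ also quantifies over $(t,1)\succcurlyeq(t,0)$, and by your own first paragraph $\alpha\wedge\beta$ always holds at there-worlds, so the implication is not vacuous there. In that third case you must additionally verify $\gamma$ at $(t,1)$, which does hold: $a\not\equiv 0$ gives some $\ell$ with $a_\ell=1$, hence $b_\ell=1$, so $(t,1)\sat\neg\ubox\neg p$, while $\ubox(p\vee\neg p)$ is automatic on the there-track. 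Similarly, in the first case the instance of the implication at $(t,1)$ is settled either by monotonicity of $\gamma$ or by the same observation; the second case is unaffected, since failure of the implication at $(t,0)$ itself already falsifies $\varphi$. With these one-line additions your argument is complete and the conclusions of all three cases stand as written.
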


\proof
Let $\model = (T\times \{0,1\},{\peq},S,V)$ be a here-and-there model with $S(t,i) = (f(t),i)$ (see Section \ref{SecHT}). First assume that $x=(x_1,x_2)$ is such that $\model,x \models \diam p$. To check that $\model,x \models\varphi$, let $x'\seq x$, and consider the following cases.\smallskip

\noindent {\sc Case $\model,x' \models \ubox(p\vee \neg p)$.} In this case, it is easy to see that we also have $ \model,x' \models \neg \ubox \neg p$ given that $ \model , x \models \diam p$, so $ \model,x' \models\gamma $. \smallskip

\noindent{\sc Case $ \model,x'  \not \models \ubox(p\vee \neg p)$.} Using the assumption that $ \model,x \models \diam p$, choose $k$ such that $ \model, S^k(x) \models p$ and consider two sub-cases.\\

\begin{enumerate}[itemsep=0pt]

\item\label{CaseA} Suppose there is $k'>k$ such that $ \model, S^{k'}(x) \not \models p\vee \neg p$. Then, it follows that
\[ \model, S^{k}(x') \not \models p \to \ubox ( p\vee \neg p)\]
 and hence $ \model,x' \not \models  \ubox (p\to \ubox (p\vee \neg p)) = \alpha$.

\item\label{CaseB} If there is not such $k'$, then there must be a maximal $k'<k$ such that $ \model, S^{k'}(x' ) \not \models p\vee \neg p$ (otherwise, we would be in {\sc Case $ \model,x' \models \ubox(p\vee \neg p)$}). Since $k'$ is maximal,
\[ \model, S^{k'} (x')  \models  \tnext \ubox (p\vee \neg p),\]
and since $k' < k$ and $ \model, S^{k} (x')  \not \models  \neg p$, we have that
$\model, S^{k'} (x')  \not \models  \tnext \ubox \neg p.$
It follows that
\[ \model, S^{k'} (x' ) \not \models  \tnext \ubox (p\vee \neg p) \to p\vee \neg p\vee \tnext \ubox \neg p ,\]
and therefore
\[ \model, x'  \not \models \ubox (\tnext \ubox (p\vee \neg p) \to p\vee \neg p\vee \tnext \ubox \neg p) = \beta.\]

\end{enumerate}
Since $x'\seq x$ was arbitrary, $\model,x \models (\alpha \wedge \beta)\to \gamma = \varphi$.

Note that the above direction does not use any properties of here-and-there models, and works over arbitrary expanding models. However, we need these properties for the other implication. Suppose that $ \model, x \models \varphi$. If $  \model,x \models \ubox(p\vee \neg p)\wedge \neg \ubox \neg p = \gamma$, then it is readily verified that $ (\model,x) \models \diam p$. Otherwise,
\[ \model,x \not \models \alpha \wedge \beta.\]
If $\model,x \not \models \alpha = \ubox (p\to \ubox (p\vee \neg p))$, then there is $k$ such that
\[\model, S^k(x) \not \models p\to \ubox (p\vee \neg p).\]
Since $ S^k(x) = (f ^k(x_1) ,x_2)$ and $\model, (f ^k(x_1) ,1) \models \ubox (p\vee \neg p)$, this is only possible if $x_2 = 0$ and $ \model,S^k(x ) \models p$, so that $(\model, x)\models \diam p$. Similarly, if
\[ \model,x \not \models \beta = \ubox (\tnext \ubox (p\vee \neg p) \to p\vee \neg p\vee \tnext \ubox \neg p),\]
then there is $k$ such that $ \model, S^k(x) \not \models \tnext \ubox (p\vee \neg p) \to p\vee \neg p\vee \tnext \ubox \neg p$.
Once again using the fact that $ S^k(x) = (f ^k(x_1) ,x_2)$, this is only possible if $x_2 = 0$, $ \model, S^ k (x) \models \tnext \ubox (p\vee \neg p)$ and $ \model, S^k(x) \not \models \tnext \ubox \neg p$. But from this it easily can be seen that there is $k'>k$ with $ \model, S^{k'}(x) \models p$, hence $ \model,x \models \diam p$.
\endproof

\begin{corollary}\label{CorUDef}
Over the class of here-and-there models, $p\Until q$ is $\lang_{\R}$-definable.
\end{corollary}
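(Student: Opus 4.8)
Looking at Corollary~\ref{CorUDef}, the statement is: over the class of here-and-there models, $p \Until q$ is $\lang_{\R}$-definable.

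\textbf{Plan.} The idea is to combine the equivalence $\varphi \Until \psi \leftrightarrow (\psi \R (\varphi \vee \psi)) \wedge \diam \psi$ from Proposition~\ref{PropUValid}.\ref{ItUNine} with the fact, established in Proposition~\ref{propDiamDefinHT}, that $\diam$ is $\lang_{\ubox}$-definable over here-and-there models, together with the standard equivalence $\ubox \chi \leftrightarrow \bot \R \chi$ from Proposition~\ref{PropUValid}.\ref{ItUSix}. First I would recall that, by Proposition~\ref{PropUValid}.\ref{ItUNine}, $p \Until q$ is equivalent over all dynamic posets (hence in particular over here-and-there models) to $(q \R (p \vee q)) \wedge \diam q$. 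The first conjunct is already in $\lang_{\R}$ (it uses only $\R$ and Boolean connectives). It therefore remains to express $\diam q$ in $\lang_{\R}$ over here-and-there models.

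For the second conjunct, Proposition~\ref{propDiamDefinHT} gives a formula $\varphi_q \in \lang_{\ubox}$ (obtained by substituting $q$ for $p$ in the formulas $\alpha,\beta,\gamma,\varphi$ there) such that $\diam q \leftrightarrow \varphi_q$ is valid over here-and-there models. Since $\varphi_q$ uses only $\ubox$, $\tnext$ and Boolean connectives, and since $\ubox \chi \leftrightarrow \bot \R \chi$ is valid over all dynamic posets by Proposition~\ref{PropUValid}.\ref{ItUSix}, I can replace every occurrence of $\ubox$ in $\varphi_q$ by the corresponding $\bot \R (\cdot)$, proceeding from the innermost occurrences outward (a routine substitution-of-equivalents argument, legitimate because $\iltl$-validity — and a fortiori $\itlht$-validity — is closed under uniform substitution of provably equivalent subformulas, which follows from the monotonicity/congruence of the semantics). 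This yields a formula $\varphi_q' \in \lang_{\R}$ with $\diam q \leftrightarrow \varphi_q'$ valid over here-and-there models.

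Putting the pieces together, $(q \R (p \vee q)) \wedge \varphi_q' \in \lang_{\R}$ is here-and-there equivalent to $p \Until q$, which is exactly what is required. I do not expect any genuine obstacle here; the only point requiring a modicum of care is the justification that substituting $\bot \R \chi$ for $\ubox \chi$ inside an arbitrary formula context preserves validity, which is the standard replacement lemma for these logics and can be cited or dispatched by a one-line induction on formula structure using that satisfaction is determined locally at each world.

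\begin{proof}
By Proposition~\ref{PropUValid}.\ref{ItUNine}, $p \Until q$ is equivalent over all dynamic posets to $(q \R (p \vee q)) \wedge \diam q$. The first conjunct lies in $\lang_{\R}$. For the second, Proposition~\ref{propDiamDefinHT} provides a formula in $\lang_{\ubox}$ which is here-and-there equivalent to $\diam q$; replacing each occurrence of $\ubox\chi$ in it by $\bot \R \chi$, which is legitimate by Proposition~\ref{PropUValid}.\ref{ItUSix} and the fact that validity is preserved under replacement of equivalent subformulas, yields a formula in $\lang_{\R}$ here-and-there equivalent to $\diam q$. The conjunction of this formula with $q \R (p \vee q)$ is the desired $\lang_{\R}$-definition of $p \Until q$.
\end{proof}
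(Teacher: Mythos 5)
Your proof is correct and follows essentially the same route as the paper: combine Proposition~\ref{PropUValid}.\ref{ItUNine} with the $\lang_{\ubox}$-definition of $\diam$ from Proposition~\ref{propDiamDefinHT} and the equivalence $\ubox\chi \leftrightarrow \bot\R\chi$. Your explicit mention of the replacement-of-equivalents step (and your use of $\diam q$ rather than the paper's slip $\diam p$) only makes the argument more careful.
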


\begin{proof}
Since $\ubox \varphi$ is definable by $\ubox \varphi \equiv \bot \R \varphi$ and $\diam p$ is definable by Proposition \ref{propDiamDefinHT}, $p \Until q$ is definable by $p\Until q \equiv (q \R (p\vee q))\wedge \diam p$ (Proposition \ref{PropUValid}.\ref{ItUNine}).
\end{proof}

Our goal next is to show that the modality $\diam$ cannot be defined in terms of $\R$ over the class of persistent models.
For this, we will use a model construction based on the last exponent of a number $m > 0$ in base $2$, which we denote by $\ell ( m )$; for example, $6 = 2^2 + 2^1$, so $\ell(6) = 1$.
Before we continue, let us establish some basic properties of the function $\ell$.
The following lemma is easily verified, and we present it without proof.

\begin{lemma}\label{lemEllEqs}
Let $a,b$ be positive integers.
\begin{enumerate}

\item\label{itEllsum} If $\ell(a) < \ell (b)$ then $\ell(a + b) = \ell(a) $ and if $\ell(a) = \ell (b)$ then $\ell(a + b) \geq \ell(a) + 1$.

\item\label{itEllProd} $\ell(ab) = \ell(a) + \ell(b)$.

\item\label{itEllIneq} If $1 \leq a \leq 2^b $ then $\ell(a) \leq b$, and $\ell(a) = b$ if and only if $a = 2^b$.

\end{enumerate}
\end{lemma}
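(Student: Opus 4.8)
The plan is to treat $\ell(m)$ as the $2$-adic valuation of $m$: for every positive integer $m$ there are unique $\ell(m) \in \Nat$ and odd $q_m$ with $m = 2^{\ell(m)} q_m$, and this $\ell(m)$ is precisely the largest $k$ for which $2^k \mid m$, which matches the ``last exponent in base $2$'' description used in the text. Everything then reduces to manipulating these factorizations together with two trivial parity facts: a product of odd integers is odd, and a sum of two odd integers is even.

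First I would prove item~\ref{itEllProd}. Writing $a = 2^{\ell(a)} q_a$ and $b = 2^{\ell(b)} q_b$ with $q_a, q_b$ odd, we get $ab = 2^{\ell(a) + \ell(b)} (q_a q_b)$ with $q_a q_b$ odd, so by uniqueness of the factorization $\ell(ab) = \ell(a) + \ell(b)$. For item~\ref{itEllsum} I would keep the same notation. If $\ell(a) < \ell(b)$, then $a + b = 2^{\ell(a)}\bigl(q_a + 2^{\ell(b) - \ell(a)} q_b\bigr)$, and since $\ell(b) - \ell(a) \ge 1$ the factor in parentheses is (odd) $+$ (even), hence odd, giving $\ell(a+b) = \ell(a)$. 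If $\ell(a) = \ell(b) = k$, then $a + b = 2^k (q_a + q_b)$ with $q_a + q_b$ even, so $2^{k+1} \mid (a+b)$ and therefore $\ell(a+b) \ge k + 1 = \ell(a) + 1$.

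Finally, for item~\ref{itEllIneq}: since $2^{\ell(a)} \mid a$ and $1 \le a \le 2^b$, we get $2^{\ell(a)} \le a \le 2^b$, hence $\ell(a) \le b$. If moreover $\ell(a) = b$, then $2^b \mid a$ while $0 < a \le 2^b$, which forces $a = 2^b$; the converse implication, that $\ell(2^b) = b$, is immediate from $2^b = 2^b \cdot 1$. I do not expect a genuine obstacle here: the only points needing a moment's care are to invoke uniqueness of the ``power of two times an odd number'' factorization each time the valuation of a product or sum is read off, and to keep the strict versus non-strict inequalities in item~\ref{itEllsum} straight — the first clause is an equality, the second only a lower bound, as $\ell(1+3) = 2 > 1$ shows.
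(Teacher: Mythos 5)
Your proof is correct: identifying $\ell$ with the $2$-adic valuation agrees with the paper's ``last exponent in base $2$'' definition, and your factorization and parity arguments verify all three items, including keeping the second clause of item~\ref{itEllsum} as only a lower bound. The paper in fact states this lemma without proof (``easily verified''), and your argument is exactly the routine verification that was intended, so nothing further is needed.
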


From these properties we obtain the following useful equality.

\begin{lemma}\label{lemInterK}
Let $m\geq 1$, $a\geq 0$ and $k\in [1,2^m)$. Then, $\ell (a 2^m + k) = \ell (k)$.
\end{lemma}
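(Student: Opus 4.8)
The statement to prove is Lemma~\ref{lemInterK}: for $m \geq 1$, $a \geq 0$, and $k \in [1, 2^m)$, we have $\ell(a2^m + k) = \ell(k)$.

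The plan is to apply Lemma~\ref{lemEllEqs} directly. First I would dispose of the trivial case $a = 0$, where $a2^m + k = k$ and the claim is immediate. For $a \geq 1$, write $a2^m + k$ as a sum of the two positive integers $a2^m$ and $k$, and compute the last exponents of each summand. For the first summand, by Lemma~\ref{lemEllEqs}.\ref{itEllProd} we have $\ell(a2^m) = \ell(a) + \ell(2^m)$; since $2^m = 2^m \cdot 1$ and $\ell(1) = 0$, or more simply since $\ell(2^m) = m$ by Lemma~\ref{lemEllEqs}.\ref{itEllIneq} (taking $a = 2^m$, $b = m$), we get $\ell(a2^m) \geq m$. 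For the second summand, since $1 \leq k < 2^m$, i.e.\ $k \leq 2^m - 1 < 2^m$, Lemma~\ref{lemEllEqs}.\ref{itEllIneq} gives $\ell(k) \leq m$, and moreover $\ell(k) = m$ would force $k = 2^m$, which is excluded; hence $\ell(k) < m \leq \ell(a2^m)$.

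Now the two summands satisfy $\ell(k) < \ell(a2^m)$, so the first clause of Lemma~\ref{lemEllEqs}.\ref{itEllsum} applies (with the roles arranged so that the summand of smaller last exponent is $k$): $\ell(a2^m + k) = \ell(k)$, as desired. I do not anticipate any real obstacle here; the only mild subtlety is making sure the strict inequality $\ell(k) < m$ is genuinely available, which is exactly why the hypothesis is $k \in [1, 2^m)$ rather than $k \in [1, 2^m]$ — the endpoint $k = 2^m$ is precisely the case where $\ell(k) = m$ could equal $\ell(a2^m)$ and the conclusion would fail.

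Here is the proof.

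\begin{proof}
If $a = 0$ the claim is trivial, so assume $a \geq 1$. We write $a2^m + k$ as the sum of the positive integers $a2^m$ and $k$ and estimate the last exponent of each. By Lemma~\ref{lemEllEqs}.\ref{itEllProd} and Lemma~\ref{lemEllEqs}.\ref{itEllIneq} (applied with $a = 2^m$, $b = m$), we have $\ell(a2^m) = \ell(a) + \ell(2^m) = \ell(a) + m \geq m$. On the other hand, since $1 \leq k \leq 2^m - 1 < 2^m$, Lemma~\ref{lemEllEqs}.\ref{itEllIneq} yields $\ell(k) \leq m$, and the equality $\ell(k) = m$ would imply $k = 2^m$, contrary to $k < 2^m$; hence $\ell(k) < m \leq \ell(a2^m)$. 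Applying the first clause of Lemma~\ref{lemEllEqs}.\ref{itEllsum} to the sum $k + a2^m$, we conclude $\ell(a2^m + k) = \ell(k)$.
\end{proof}
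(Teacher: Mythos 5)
Your proof is correct and follows essentially the same route as the paper: dispose of $a=0$, note $\ell(a2^m)=\ell(a)+m\geq m$ while $\ell(k)\leq m-1$ since $k<2^m$, and conclude via the first clause of Lemma~\ref{lemEllEqs}.\ref{itEllsum} that the last exponent of the sum is $\ell(k)$. The only difference is cosmetic: you justify $\ell(k)<m$ explicitly through Lemma~\ref{lemEllEqs}.\ref{itEllIneq}, whereas the paper states it directly and phrases the final step as a minimum.
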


\proof
If $a = 0$, the claim is obvious.
Otherwise, note that since $k < 2^m$, we have that $\ell(k) \leq m-1$.
Then $\ell (a 2^m  ) =  \ell(a) + m \geq m$, so that $\ell ( a 2^m + k  ) = \min \{\ell (a 2^m ), \ell(k) \} = \ell (k)$.
\endproof

With this we are ready to define the models $\DiamMod n$.

\begin{definition}
Let $n\geq 0$ and fix a `designated' variable $p$. We define a model $\DiamMod n =(W,{\peq},S,V)$, where
\begin{enumerate}

\item $W = \mathbb Z / (2^n) \times [0,n ]$,

\item $(i,j) \peq (i',j')$ if $i = i'$ and $j \leq j'$,

\item $S(i,j) = ([i+ 1]_{2^n},j)$, and

\item $V(i,j) = \{p\} $ if and only if $ j  > n - \ell (i)$, $V(i,j) = \varnothing$ otherwise.

\end{enumerate}
\end{definition}

\begin{figure}[h!]
  \begin{tikzpicture}[x=1.2cm,y=1.2cm]
    % Not p
    \foreach \i/\j/\c in {
      1/0/a, 2/0/b, 3/0/c, 4/0/d, 5/0/a, 6/0/b, 7/0/c, 8/0/d,
      1/1/a, 2/1/b, 3/1/c, 4/1/d, 5/1/e, 6/1/f, 7/1/g,
      1/2/e, 2/2/f, 3/2/g,        5/2/e, 6/2/f, 7/2/g,
      1/3/i,        3/3/k,        5/3/i,        7/3/k       }
    { \path (\i,\j) node [not p big model state] (n\i\j) {$\c$} ; }
    % p
    \foreach \i/\j/\c in {
      2/3/j, 4/3/l, 6/3/j, 8/3/l,
             4/2/h,        8/2/h,
                           8/1/h}
    { \path (\i,\j) node [p big model state] (n\i\j) {$\c$} ; }
    % S
    \foreach \j in {0, 1, 2, 3}
    { \draw[successor relation,rounded corners] (n8\j) -| ++(.4,-.4) -- ++(-7.8,0) |- (n1\j) ;
      \foreach \f/\t in {1/2, 2/3, 3/4, 4/5, 5/6, 6/7, 7/8}
      { \draw[successor relation] (n\f\j) -- (n\t\j) ; }
      \path (.2,\j) node [gray] (l\j) {\j};
      \begin{scope}[on background layer]
      \foreach \f/\t in {1/4, 5/8}
      { \draw[color=gray!40,pattern=north east lines,pattern color=gray!20,rounded corners]
          ($(n\f\j) + (-.3,.3)$) rectangle ($(n\t\j) + (.3,-.3)$);
      }
      \end{scope}
    }
    % intui
    \foreach \i in {1, 2, 3, 4, 5, 6, 7, 8}
    { \foreach \f/\t in {0/1, 1/2, 2/3}
      { \draw[intuitionistic relation] (n\i\f) -- (n\i\t) ; }
      \path (\i, 3.6) node [gray] (c\i) {\i};
    }
  \end{tikzpicture}
  \caption{The model $\DiamMod 3$.
  Black states satisfy the atom $p$ while the other states do not.
  All other atoms are false everywhere.
  Solid arrows indicate~$\peq$ and dashed arrows indicate~$S$.
  Letters correspond to the equivalence classes w.r.t. $\sim_2$, i.e.,
  if two states $w, x$ are represented by the same letter then $w \sim_2 x$.
  The hashed regions correspond to $2$-blocks.}  \label{fig:undefdiamR}
\end{figure}
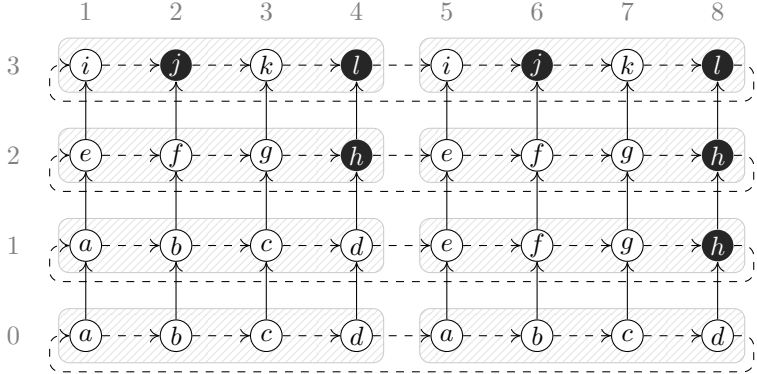

See Figure \ref{fig:undefdiamR} for an illustration of $\DiamMod 3$.
The key properties of the model $\DiamMod n$ are that $(1,0)$ and $(1,1)$ are $(n-1)$-$\R$-bisimilar, yet they disagree on the truth of $\diam p$. Let us begin by proving the latter.

\begin{lemma}\label{lemmDiamDisagree}
Given $n\geq 0$, $\DiamMod n,(1,0)\not \models \diam p$ and $\DiamMod n,(1,1) \models \diam p$.
\end{lemma}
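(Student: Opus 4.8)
The plan is to unwind the definitions of $\DiamMod n$ and the function $\ell$. For the second assertion, $\DiamMod n,(1,1)\models\diam p$, I would note that $\diam p$ holds at $(1,1)$ iff there is $k\ge 0$ with $\DiamMod n, S^k(1,1)\models p$, and $S^k(1,1) = ([1+k]_{2^n},1)$, which satisfies $p$ iff $1 > n - \ell([1+k]_{2^n})$, i.e. iff $\ell([1+k]_{2^n}) \ge n$. Taking $k$ so that $[1+k]_{2^n} = 2^n$ (the representative of the class $\{1,\dots,2^n\}$, which we identify with $0 \bmod 2^n$, so $k = 2^n - 1$), we get $\ell(2^n) = n$ by Lemma \ref{lemEllEqs}.\ref{itEllIneq}, and hence $n > n - n = 0 = j$ fails — so here I must be careful: with $j=1$ I in fact only need $\ell(i) \ge n$, which forces $i = 2^n$; let me instead observe that for $j = 1$ the condition $j > n - \ell(i)$ reads $\ell(i) > n - 1$, i.e. $\ell(i) \ge n$, i.e. $i = 2^n$ (using Lemma \ref{lemEllEqs}.\ref{itEllIneq} once more and $1 \le i \le 2^n$). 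Thus $\DiamMod n, (2^n,1)\models p$, and since $(2^n,1) = S^{2^n-1}(1,1)$, we conclude $\DiamMod n,(1,1)\models\diam p$.

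For the first assertion, $\DiamMod n,(1,0)\not\models\diam p$, I would argue that for every $k\ge 0$, the state $S^k(1,0) = ([1+k]_{2^n},0)$ does not satisfy $p$: the condition $V(i,0) = \{p\}$ requires $0 > n - \ell(i)$, i.e. $\ell(i) > n$; but every $i\in\{1,\dots,2^n\}$ satisfies $1\le i\le 2^n$, so by Lemma \ref{lemEllEqs}.\ref{itEllIneq} we have $\ell(i)\le n < n+1 \le \ell(i)$ is impossible, hence $\ell(i)\le n$ and $p\notin V(i,0)$. Therefore no successor of $(1,0)$ satisfies $p$, which is exactly $\DiamMod n,(1,0)\not\models\diam p$.

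I do not expect any genuine obstacle here: both parts are direct computations with the definition of $V$ in $\DiamMod n$ and the elementary inequality $\ell(i)\le n \iff 1\le i\le 2^n$ together with the equality case $\ell(i)=n\iff i=2^n$, both supplied by Lemma \ref{lemEllEqs}.\ref{itEllIneq}. The only point requiring a little care is bookkeeping the identification of $\mathbb Z/(2^n)$ with $\{1,\dots,2^n\}$ via $[\cdot]_{2^n}$, so that the representative reached after $2^n-1$ steps from $1$ is $2^n$ (the element with $\ell$-value exactly $n$), and noting that this is the unique state in the bottom row-pattern whose first coordinate attains the maximal exponent.
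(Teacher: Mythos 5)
Your proposal is correct and follows essentially the same route as the paper: exhibit the state $(2^n,1)$, reachable from $(1,1)$, where $p$ holds because $\ell(2^n)=n$, and observe that every state reachable from $(1,0)$ has the form $(i,0)$ with $i\in[1,2^n]$, so $\ell(i)\leq n$ by Lemma \ref{lemEllEqs}.\ref{itEllIneq} and $p$ fails there. The extra uniqueness remark about $i=2^n$ is unnecessary but harmless.
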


\proof
Let $\DiamMod n=(W,{\peq},S,V)$. Note that $\DiamMod n,(1,1)\models \diam p$ since $(2^n,1) = S^{2^n}(1,0)$ and $1>n - \ell(2^n)$, so that $\DiamMod n,(2^n,1)\models p$. On the other hand, if $(i,j) = S^k(1,0)$ then $j=0$ and $i \in [1,2^n] $, so that by Lemma \ref{lemEllEqs}.\ref{itEllIneq} $\ell(i) \leq n$ and $0 \leq n - \ell(i)$. Hence $\DiamMod n, S^k(1,0) \not \models p$, and since $k$ was arbitrary, $\DiamMod n,(1,0) \not \models \diam p$.
\endproof

Next we will define a family of binary relations $(\sim_m)_{m<n}$ on $\DiamMod n$ which will be used to show that $(1,0)$ and $(1,1)$ are $n$-$\R$-bisimilar. These relations are defined using the notion of {\em congruent blocks.}

\begin{definition}
Let $n\geq 0$ and $\DiamMod n = (W,{\peq},S,V)$.
Given $m \in [0,n]$, say that an {\em $m$-block} is a set of the form
\[B_m(a,b) \eqdef [(a-1) 2^m + 1,a 2^m] \times \{b\},\]
where $a \in \mathbb Z/(2^{n-m})$ and $0\leq b\leq n$; we say that $b$ is the {\em height} of $B_m(a,b)$. Two blocks $B_m(a,b)$ and $B_m(a',b')$ are {\em congruent} if for all $i\in [1, 2^m]$, $ p \in V ((a-1) 2^m+i , b)$ if and only if $p\in V ((a'-1) 2^m+i , b') $. Then, if $x=(x_1,x_2)$ and $y = (y_1,y_2)$, define $x \sim_m y$ if and only if $x_1 \equiv y_1 \pmod{2^m}$ and $x,y$ belong to congruent $m$-blocks.
\end{definition}

It will be convenient to classify the different $m$-blocks.
We say that $B = B_m(a,b)$ is {\em initial} if $V(x) = \varnothing$ for all $x\in B$, {\em terminal} if $V( a 2^m , b) = \{p\}$ and $V(x) = \varnothing$ for any other $x \in B$, and {\em regular} otherwise. A point $x=(x_1, x_2)$ is {\em $m$-initial,} {\em $m$-terminal} or {\em $m$-regular} if it belongs to an $m$-block of the respective kind.
The classification of an $m$-block can be deduced from its height.

\begin{lemma}\label{lemmHeightClass}
Let $n\geq 0$ and $m \in [1,n]$. Let $B = B_m(a,b)$ be an $m$-block in $\DiamMod n$ of height $b$. Then:
\begin{enumerate}

\item\label{itHeightClassInit} $B$ is initial if and only if $b  \leq n - m - \ell(a)$;

\item\label{itHeightClassTerm} $B$ is terminal if and only if $b \in ( n - m - \ell(a) , n - m + 1]$, and

\item\label{itHeightClassReg} $B$ is regular if and only if $b > n - m + 1$.

\end{enumerate}
\end{lemma}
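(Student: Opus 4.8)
The plan is to read off the three cases from a direct computation of $V$ at each of the $2^{m}$ points of a fixed $m$-block $B=B_{m}(a,b)$. By definition these points are $\big((a-1)2^{m}+i,\,b\big)$ for $i\in[1,2^{m}]$, and $p\in V\big((a-1)2^{m}+i,\,b\big)$ if and only if $b>n-\ell\big((a-1)2^{m}+i\big)$. Since $a\in\mathbb Z/(2^{n-m})$ is identified with an element of $[1,2^{n-m}]$, we have $a-1\geq 0$, so Lemma~\ref{lemInterK} applies and gives $\ell\big((a-1)2^{m}+i\big)=\ell(i)$ for $i\in[1,2^{m})$, while for $i=2^{m}$ we have $(a-1)2^{m}+i=a2^{m}$ and, by Lemma~\ref{lemEllEqs}.\ref{itEllProd} together with $\ell(2^{m})=m$ (from Lemma~\ref{lemEllEqs}.\ref{itEllIneq}), $\ell(a2^{m})=\ell(a)+m$. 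Hence the last point $(a2^{m},b)$ satisfies $p$ exactly when $b>n-m-\ell(a)$, and each interior point $\big((a-1)2^{m}+i,\,b\big)$ with $i\in[1,2^{m})$ satisfies $p$ exactly when $b>n-\ell(i)$.

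The only further arithmetic input is the value $\max_{1\le i<2^{m}}\ell(i)$, which I claim equals $m-1$: by Lemma~\ref{lemEllEqs}.\ref{itEllIneq}, $\ell(2^{m-1})=m-1$ and every $i\in[1,2^{m})$ has $\ell(i)\le m$ with equality only for $i=2^{m}$, so $\ell(i)\le m-1$. Consequently $b\le n-\ell(i)$ holds for all $i\in[1,2^{m})$ if and only if $b\le n-(m-1)=n-m+1$, and it fails for some such $i$ if and only if $b>n-m+1$.

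With these facts in hand, the three equivalences follow by a short case split, using $\ell(a)\ge 0$ so that $n-m-\ell(a)\le n-m<n-m+1$. The block $B$ is initial iff $p$ is false at all of its points, i.e.\ iff $b\le n-m-\ell(a)$ (condition at the last point) and $b\le n-m+1$ (condition at the interior points); the former implies the latter, giving \ref{itHeightClassInit}. The block is terminal iff $p$ holds at $(a2^{m},b)$ and nowhere else in $B$, i.e.\ iff $b>n-m-\ell(a)$ and $b\le n-m+1$, which is precisely $b\in(n-m-\ell(a),\,n-m+1]$, giving \ref{itHeightClassTerm}. Finally, $B$ is regular iff it is neither, and ``not initial'' forces $b>n-m-\ell(a)$, under which ``not terminal'' is equivalent to $p$ being true at some interior point, i.e.\ to $b>n-m+1$; conversely $b>n-m+1$ implies both $b>n-m-\ell(a)$ and the existence of an interior point satisfying $p$ (take $i=2^{\,n-b+1}$, which lies in $[1,2^{m-1}]$ since $b\le n$ and $b>n-m+1$), giving \ref{itHeightClassReg}.

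I do not expect a genuine obstacle: the argument is substitution into the definition of $V$ together with Lemmas~\ref{lemEllEqs} and~\ref{lemInterK}. The only points needing care are keeping straight which index range each regime of $\ell$ covers, and the degenerate case $m=1$, where $[1,2^{m})=\{1\}$ and no block is regular, so one should check that the displayed formulas still read correctly (``$b>n-m+1$'' becoming ``$b>n$'', which is indeed never satisfied).
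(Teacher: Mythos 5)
Your proof is correct and follows essentially the same route as the paper's: compute the valuation at the last point via $\ell(a2^m)=\ell(a)+m$ and at the interior points via Lemma~\ref{lemInterK}, then read off the three cases from the resulting thresholds. The only cosmetic difference is your witness $i=2^{\,n-b+1}$ for the regular case, where the paper uses $2^{m-1}$; both work.
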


\begin{proof}
Let $B = B_m(a,b)$ be any block. First observe that $\ell (a2^m) = \ell(a) + m \geq m $, so that $p \in V(a2^m,b)$ if and only if $ b > n - \ell (a) - m$; it follows that if $B$ is initial then $b \leq n - \ell (a) - m$.

Next we show that if $b \leq n-m + 1$, then for $k \in [1,2^m)$, $ p \not \in V \big ( (a-1)2^m + k , b' \big )  $.
Since by Lemma \ref{lemInterK} $\ell \big ( (a-1)2^m + k \big ) = \ell (k) < m$, we see that $b \leq n - \ell \big ( (a-1)2^m + k \big )$, and thus $p \not \in V \big ( (a-1)2^m + k , b \big ) $, as claimed.

But then if $b \leq n - \ell (a) - m$ we have that $V  ( a2^m  , b ) = \varnothing  $ as well, so that $B$ is initial if and only if $b \leq n - \ell (a) - m$, while if $b \in ( n - m - \ell(a) , n - m + 1]$ it is neither initial nor regular, hence it is terminal.

It remains to check that if $b > n - m + 1$, then $B$ is regular.
But then as $m\geq 1$ we have that $x = ((a-1)2^m+2^{m-1},b) \in B$ and since $\ell \big ( (a-1)2^m+2^{m-1} \big ) = m - 1 $, we see that $b > n - \ell \big ( (a-1)2^m+2^{m-1} \big )$ and $ x \in B \cap V(p)$, so that $B$ is regular. 
\end{proof}

\begin{lemma}\label{lemmTermiAbove}
Let $n\geq 0$ and $\DiamMod n=(W,{\peq},S,V)$. Then, if $x \in W$ is $m$-initial, there is $y \seq x$ which is $m$-terminal.
\end{lemma}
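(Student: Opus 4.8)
The plan is to let Lemma~\ref{lemmHeightClass} (the height classification of $m$-blocks) do essentially all the work. First I would unpack the hypothesis: writing $x=(x_1,x_2)$, let $a\in\mathbb Z/(2^{n-m})$ be the unique index with $x_1\in[(a-1)2^m+1,\,a2^m]$, so that $x$ lies in the $m$-block $B_m(a,x_2)$. Since $x$ is $m$-initial this block is initial, so Lemma~\ref{lemmHeightClass}.\ref{itHeightClassInit} gives $x_2\le n-m-\ell(a)$.

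The point $y$ I would produce is obtained simply by keeping the first coordinate of $x$ and raising the second coordinate to the least height at which the column $B_m(a,\cdot)$ turns terminal: set $h\eqdef n-m-\ell(a)+1$ and $y\eqdef(x_1,h)$. Three routine verifications then remain. (i) $y$ is a genuine state of $\DiamMod n$: identifying $a$ with an element of $\{1,\dots,2^{n-m}\}$, Lemma~\ref{lemEllEqs}.\ref{itEllIneq} gives $\ell(a)\le n-m$, hence $1\le h\le n-m+1\le n$ (the last inequality uses $m\ge 1$, which is the relevant range and also the standing hypothesis of Lemma~\ref{lemmHeightClass}). (ii) $y\seq x$: by the definition of $\peq$ in $\DiamMod n$ this holds since $y$ and $x$ agree in the first coordinate and $h>n-m-\ell(a)\ge x_2$. (iii) $y$ is $m$-terminal: $y$ lies in $B_m(a,h)$ with $h\in(n-m-\ell(a),\,n-m+1]$, so Lemma~\ref{lemmHeightClass}.\ref{itHeightClassTerm} tells us $B_m(a,h)$ is a terminal block.

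I do not expect any real obstacle: the content is entirely contained in the classification lemma, and the only point requiring care is the arithmetic bound $\ell(a)\le n-m$, needed to keep the target height $h$ from exceeding $n$. As a sanity check, in $\DiamMod 3$ with $m=2$ the $2$-initial point $(6,0)$ lies in $B_2(2,0)$, and since $\ell(2)=1$ the recipe returns $y=(6,1)\in B_2(2,1)$, which is terminal (see Figure~\ref{fig:undefdiamR}).
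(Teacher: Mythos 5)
Your proof is correct and follows essentially the same route as the paper's: both rest entirely on the height classification of Lemma~\ref{lemmHeightClass}, raising the second coordinate of $x$ into the terminal range while keeping the first coordinate fixed. The only (immaterial) difference is the choice of target height --- you take the least terminal height $n-m-\ell(a)+1$ (and hence need the extra check $\ell(a)\le n-m$), whereas the paper simply takes $n-m+1$, which lies in the terminal interval for every $a$.
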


\proof
Let $x = (i,b) $ and $a $ be such that $x \in B= B_m(a,b)$. Since $x$ is $m$-initial, by Lemma \ref{lemmHeightClass}.\ref{itHeightClassInit} we have that $b < n - m + 1$. Hence if we set $b' = n - m +1$ and $B' = B_m (a,b')$, we see by Lemma \ref{lemmHeightClass}.\ref{itHeightClassTerm} that $B'$ is terminal, and as $b \leq b'$ that $y =(i,b')\seq x$, as needed.
\endproof

\begin{lemma}\label{lemmSameHeight}
Let $n\geq 0$ and $m \in [0,n]$.
If $B $ and $B'$ are regular $m$-blocks then $B$ and $B'$ are congruent if and only if they have the same height.
\end{lemma}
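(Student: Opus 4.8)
The plan is to compute, for a regular $m$-block, precisely which of its points satisfy $p$, and then to observe that this ``profile'' is determined by, and depends injectively on, the height.

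First I would fix a regular $m$-block $B = B_m(a,b)$; by Lemma~\ref{lemmHeightClass}.\ref{itHeightClassReg} this means $b > n-m+1$, i.e.\ $b \geq n-m+2$. For an interior position $i \in [1, 2^m)$, Lemma~\ref{lemInterK} gives $\ell\big((a-1)2^m + i\big) = \ell(i)$, which does not depend on $a$; hence $p \in V\big((a-1)2^m + i, b\big)$ if and only if $b > n - \ell(i)$, equivalently $\ell(i) \geq n-b+1$. For the last position $i = 2^m$, Lemma~\ref{lemEllEqs}.\ref{itEllProd} gives $\ell(a2^m) = \ell(a) + m \geq m$, so $p \in V(a2^m,b)$ iff $b > n - m - \ell(a)$; and since $b \geq n-m+2 > n-m \geq n-m-\ell(a)$ this holds automatically for \emph{every} regular block. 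Thus the $p$-profile of $B$ is: $p$ at the last point, and at interior point $i$ exactly when $\ell(i) \geq n-b+1$.

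The easy direction then follows at once: two regular $m$-blocks of the same height have identical profiles position by position, hence are congruent. For the converse I would argue contrapositively. Suppose the heights differ, say $b < b'$, with $B = B_m(a,b)$ and $B' = B_m(a',b')$ both regular. I look for an interior position witnessing non-congruence, i.e.\ some $i \in [1,2^m)$ with $\ell(i) \geq n-b'+1$ but $\ell(i) < n-b+1$. Take $i = 2^{\,n-b'+1}$: since $b' \geq n-m+2$ and $b' \leq n$ we have $0 \leq n-b'+1 \leq m-1$, so $i \in [1, 2^m)$, and $\ell(i) = n-b'+1$ (using $\ell(2^t)=t$). Then $\ell(i) \geq n-b'+1$ while $\ell(i) = n-b'+1 < n-b+1$ because $b < b'$; so $p$ holds at position $i$ of $B'$ but not at position $i$ of $B$, and $B,B'$ are not congruent.

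I do not expect a genuine obstacle here. The only thing requiring care is the bookkeeping of inequalities: translating regularity into $b \geq n-m+2$, keeping track of $[1,2^m)$ versus $[1,2^m]$, and checking that the witness exponent $n-b'+1$ lands in the admissible range $[0,m-1]$ — which is exactly what regularity ($b' \geq n-m+2$) together with $b' \leq n$ provides.
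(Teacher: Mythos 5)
Your proof is correct and follows essentially the same route as the paper's: both use Lemma~\ref{lemInterK} to show that at interior positions the truth of $p$ depends only on $\ell(i)$ and the height, note that the endpoint of a regular block always satisfies $p$, and refute congruence for different heights via a power-of-two witness position. The only (immaterial) difference is the witness: you take $2^{\,n-b'+1}$, the paper takes $2^{\,n-b}$.
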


\proof
Suppose that $B= B_m(a,b)$ and $B' = B_m(a',b' )$ are regular, so that by Lemma \ref{lemmHeightClass}.\ref{itHeightClassReg}, $b, b' > n - m + 1$.
If $b = b'$ and $k \in [1,2^m)$ then by Lemma \ref{lemInterK},
\[\ell \big ( (a-1) 2^m + k \big ) = \ell(k) = \ell \big ( (a'-1) 2^m + k \big ) ,\]
so that $p \in V \big ( (a-1) 2^m + k, b \big ) $ if and only if $ p \in V \big ( (a'-1) 2^m + k, b \big ) $. Since $b > n - m + 1$ and $\ell ( c 2^m) = \ell( c ) + m > m$ for $c \in \{a,a'\}$, we see that $ b > n- \ell(a2^m)$ and also $b> n - \ell(a' 2^m)$, so that $p \in V (a 2^m, b) \cap V (a' 2^m , b') $. We conclude that for all $k \in [0, 2^m]$, $ p \in \big ( (a-1) 2^m + k, b \big )  $ if and only if $p \in \big ( (a'-1) 2^m + k, b \big ) $, i.e.~$B$ and $B'$ are congruent.

If instead $b \not = b'$, assume without loss of generality that $b < b'$. Since $b > n - m + 1$ we have that $k : = 2^{n-b} \in [1,2^m)$. But then $ b \leq n - (n - b) = n - \ell \big ( (a-1) 2^m + k \big )$, while $ b' > n - (n - b ) = n - \ell \big ((a'-1) 2^m + k \big )$. We conclude that $ p \not \in \big ((a-1) 2^m + k, b \big )  $ while $p \in V\big ((a ' - 1) 2^m + k, b' \big ) $, hence $B$ and $B'$ are not congruent.\david{Double-check.}
\endproof

%%%BOOKMARK

In order to prove that $(\sim_m)_{m<n}$ is indeed a graded $\R$-bisimulation we will need to consider some basic transformations on blocks.
Namely, we define the {\em successor} of $B_m(a,b)$ to be $B_m([a + 1]_{2^{n-m}},b)$, and if $m=m'+1$, then we say that $B_{m'}(2a - 2 ,b)$ is the {\em first half} of $B$, and $B_{m'}(2a - 1 ,b)$ is the {\em second half} of $B$.

\begin{lemma}\label{lemBlockTrans}
If $B$ and $B'$ are congruent $m$-blocks, then:

\begin{enumerate}

\item the first halves of $B$ and $B'$ are congruent,

\item the second halves of $B$ and $B'$ are congruent, and

\item the successors of the second halves of $B$ and $B'$ are congruent.

\end{enumerate} 
\end{lemma}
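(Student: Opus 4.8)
The plan is to unwind the definitions and reduce everything to properties of the function $\ell$ that are already available, chiefly Lemma \ref{lemInterK} and Lemma \ref{lemmSameHeight}. Write $B = B_m(a,b)$ and $B' = B_m(a',b')$ and suppose they are congruent; by definition this means that for every $i\in[1,2^m]$ we have $p\in V((a-1)2^m+i,b)$ iff $p\in V((a'-1)2^m+i,b')$, equivalently $b > n-\ell\bigl((a-1)2^m+i\bigr)$ iff $b' > n-\ell\bigl((a'-1)2^m+i\bigr)$. I would first dispose of the \emph{degenerate case} $m = 1$ (where halves are $0$-blocks, i.e.\ singletons) as a special instance, and otherwise assume $m = m'+1$ with $m'\geq 1$, which is the case needed in the construction.

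The key observation is that, by Lemma \ref{lemInterK}, for $k\in[1,2^{m'})$ the value $\ell\bigl((c-1)2^m + k\bigr) = \ell(k)$ and also $\ell\bigl((c-1)2^m + 2^{m'} + k\bigr) = \ell(k)$ for any $c$; the only ``special'' positions inside an $m$-block are the midpoint $(c-1)2^m+2^{m'}$, where $\ell = m'$, and the endpoint $c2^m$, where $\ell = \ell(c)+m$. So congruence of $B$ and $B'$ forces: (i) for each $k\in[1,2^{m'})$, $b>n-\ell(k)$ iff $b'>n-\ell(k)$; (ii) at the midpoint, $b>n-m'$ iff $b'>n-m'$; (iii) at the endpoint, $b>n-\ell(a)-m$ iff $b'>n-\ell(a')-m$. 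For the first half $B_{m'}(2a-2,b)$ versus $B_{m'}(2a'-2,b')$: its interior positions reuse exactly the values $\ell(k)$ for $k\in[1,2^{m'})$ covered by (i), and its endpoint is the midpoint $(a-1)2^m+2^{m'}$ of $B$ with $\ell = m'$, covered by (ii); hence the two first halves are congruent. For the second half $B_{m'}(2a-1,b)$: its interior positions $(a-1)2^m+2^{m'}+k$ again have $\ell = \ell(k)$ for $k\in[1,2^{m'})$, covered by (i), and its endpoint is $a2^m$ with $\ell = \ell(a)+m$, covered by (iii); hence the two second halves are congruent. This proves the first two items.

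For the third item, the successor of the second half of $B$ is $B_{m'}\bigl([2a]_{2^{n-m'}},b\bigr)$. Note that the two second halves are congruent by the item just proved, and they have the same height ($b$ and $b'$ respectively, inherited from $B$, $B'$). The simplest route is to split on whether the second half of $B$ is initial, terminal, or regular (Lemma \ref{lemmHeightClass}): if it is regular, then by Lemma \ref{lemmSameHeight} congruence of the two second halves gives $b = b'$, and then the successors, being $m'$-blocks of the same height, are congruent by the same Lemma \ref{lemmSameHeight} (after checking the successors are themselves regular, which follows from $b=b'$ exceeding $n-m'+1$); if instead the second halves are initial or terminal, a direct computation with Lemma \ref{lemmHeightClass}\ref{itHeightClassInit}--\ref{itHeightClassTerm} shows that their successors are likewise initial or terminal with the classification depending only on the height $b$ (resp.\ $b'$) and on $\ell$ evaluated at the relevant endpoints, and one checks these agree. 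I expect this last item to be the main obstacle, precisely because the successor block can have an endpoint $[2a]_{2^{n-m'}}\cdot 2^{m'}$ whose $\ell$-value $\ell(2a)+m' = \ell(a)+1+m'$ need not equal $\ell(a')+1+m'$ even when $B,B'$ are congruent; the way around this is that congruence of $B$ and $B'$ at their \emph{own} endpoints ($\ell(a)+m$ vs.\ $\ell(a')+m$, condition (iii) above) plus the height constraint pins down the classification of the successor regardless of the exact $\ell$-values, so the argument should go through by a careful case analysis rather than by a uniform equality of exponents.
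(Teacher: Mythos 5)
Your items (1) and (2) are correct, though heavier than necessary: the halves occupy a subset of the positions of $B$ and $B'$, and congruence is defined position by position, so these two items follow from the definition alone without any computation of $\ell$ (this is all the paper says about them). Note also that the displayed formulas for the halves contain an off-by-one shift: with $B_m(a,b)=[(a-1)2^m+1,a2^m]\times\{b\}$, the blocks you actually describe (and the ones needed in Proposition \ref{propRBisim}) are $B_{m'}(2a-1,b)$ (positions up to the midpoint) and $B_{m'}(2a,b)$ (positions up to $a2^m$), so the successor of the second half is $B_{m'}([2a+1]_{2^{n-m'}},b)$, i.e.\ the first half of the next $m$-block. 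Your write-up mixes the two indexings: you assign to the ``second half'' the positions of $B_{m'}(2a,b)$ but then take its successor to be $B_{m'}([2a]_{2^{n-m'}},b)$, which by your own position assignment is the second half itself.

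This misidentification is where item (3), the only nontrivial part, goes astray. The true successor block has \emph{odd} index, so by Lemma \ref{lemEllEqs} and Lemma \ref{lemmHeightClass} its classification depends only on its height: initial iff the height is at most $n-m'$, terminal iff it equals $n-m'+1$, regular iff it exceeds $n-m'+1$. Hence the obstacle you single out --- the endpoint $\ell$-value $\ell(2a)+m'=\ell(a)+1+m'$ --- concerns the endpoint $a2^m$ of $B$ itself, not of the successor, and your proposed remedy via condition (iii) does not close the remaining case. Concretely: when the two second halves are \emph{terminal}, they are congruent whatever their heights (any two terminal blocks are congruent), and (iii) involves $\ell(a),\ell(a')$, which may differ; so neither tells you whether $b=n-m'+1$ (successor terminal) or $b\le n-m'$ (successor initial), and these could a priori disagree for $B$ and $B'$. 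What settles it is your condition (ii), congruence of $B$ and $B'$ at their midpoints (where $\ell=m'$): it yields $b>n-m'$ iff $b'>n-m'$, equivalently that $B$ and $B'$ are both regular or both not, and then the two successors are both terminal or both initial. With (ii) inserted at exactly this point your case analysis closes, and it then coincides in substance with the paper's proof, which splits on the classification of $B,B'$ themselves: if both are regular they share their height by Lemma \ref{lemmSameHeight}, so the successors are both terminal or both regular of the same height; if both are non-regular then $b,b'\le n-m+1=n-m'$ and the odd index forces both successors to be initial.
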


\proof
The first two items follow directly from the definition of congruence.
For the third item, the congruence of the successors of the second halves of $B$ and $B'$ is shown by a case-by-case analysis: if $B$ and $B'$ are regular, then the successors of the second halves are either both terminal or both regular with the same height. If $B,B'$ are not regular, then said successors are both initial.
\endproof

\begin{proposition}\label{propRBisim}
The relations $(\sim_m)_{m<n}$ form a graded $\R$-bisimulation on $\DiamMod n$.
\end{proposition}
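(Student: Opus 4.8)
The plan is to verify, clause by clause, that the decreasing sequence $\sim_{n-1}\subseteq\cdots\subseteq\sim_0$ satisfies the definition of a bounded $\R$-bisimulation of $\DiamMod n$ with itself. First I would check that these are indeed a decreasing chain of equivalence relations: reflexivity and symmetry are immediate, transitivity holds because congruence of blocks and the relation $\equiv\pmod{2^m}$ are each transitive, and $\sim_{m+1}\subseteq\sim_m$ because $2^m$ divides $2^{m+1}$ and, by Lemma~\ref{lemBlockTrans}(1)--(2), corresponding $m$-halves of congruent $(m+1)$-blocks are congruent. It is convenient to record, via Lemmas~\ref{lemmHeightClass} and~\ref{lemmSameHeight}, the coarser reformulation of $\sim_m$: $x\sim_m y$ iff $x_1\equiv y_1\pmod{2^m}$ and the $m$-blocks of $x$ and $y$ are both initial, both terminal, or both regular of the same height. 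A key point, used throughout, is that this ``congruent blocks'' requirement sharply constrains how far apart the second coordinates of $\sim_m$-related points can be (regular blocks must have equal heights, initial/terminal blocks must have height at most $n-m+1$).

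Next I would dispatch the clauses of a bounded $\tnext$-bisimulation. For \textsc{Atoms}, a $0$-block is a singleton, so $x\sim_0 y$ already forces $p\in V(x)\iff p\in V(y)$, and since $p$ is the only variable ever true and each $\sim_m$ refines $\sim_0$, the clause holds at every level. For \textsc{Forth}/\textsc{Back}~$\tnext$, note that $S$ shifts the first coordinate by one, hence carries an $(i+1)$-block either to itself (one place over) or to its successor $(i+1)$-block, sending the last element of a block to the successor of its second half; in each case Lemma~\ref{lemBlockTrans}, combined with the height bounds forced by the block types of $x$ and $y$, shows the $i$-blocks of $S(x)$ and $S(y)$ remain congruent while the residues stay equal mod $2^i$, so $S(x)\sim_i S(y)$. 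For \textsc{Forth}/\textsc{Back}~$\to$, a point $\peq$-above $x$ has the form $(x_1,x_2')$ with $x_2'\geq x_2$; using the classification of Lemma~\ref{lemmHeightClass}, together with Lemma~\ref{lemmTermiAbove} to witness that an initial block always lies below a terminal block, one raises the second coordinate on the other side to reach a congruent block of the matching type.

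The heart of the proof is \textsc{Forth}~$\R$ and \textsc{Back}~$\R$; by symmetry of the relations it suffices to treat \textsc{Forth}~$\R$. Assume $x\sim_{i+1}y$ and fix $k_2\geq 0$. The point $S^{k_2}(y)$ lies in some $i$-block $B'$; I would pick $k_1$ so that $S^{k_1}(x)$ lies in the $i$-block $B$ reached from the $i$-block of $x$ by the same number of successor steps that carry the $i$-block of $y$ to $B'$ (possible since $x_1\equiv y_1\pmod{2^i}$ and both orbits run on a cycle of length $2^n$), and then choose $v_1,v_2$ by raising second coordinates exactly as in the \textsc{Back}~$\to$ analysis so as to land in congruent blocks, yielding $v_1\sim_i v_2$. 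Here the required congruence of $B$ and $B'$ comes from iterating Lemma~\ref{lemBlockTrans} along the successor operation, starting from the congruence of the $(i+1)$-blocks of $x$ and $y$. For the intermediate condition, given $j_1\in[0,k_1)$ one selects $j_2\in[0,k_2)$ so that the $i$-blocks of $S^{j_1}(x)$ and $S^{j_2}(y)$ are congruent and then raises coordinates to produce $u_1\sim_i u_2$; this is a book-keeping argument tracking, block by block, the finitely many congruence types met along the two orbit segments and checking they are matched under the chosen alignment.

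I expect the main obstacle to be precisely this intermediate book-keeping in \textsc{Forth}~$\R$ (and symmetrically \textsc{Back}~$\R$): one must argue that the sequence of $i$-block congruence types encountered along $x,S(x),\dots,S^{k_1-1}(x)$ is covered by the one encountered along $y,S(y),\dots,S^{k_2-1}(y)$, and conversely. This is delicate because initial or terminal blocks of different ``widths'' can coincide up to congruence while their plain successors do not (the function $\ell$ is not shift-invariant), so one cannot naively transport blocks forward. The cleanest route I see is to isolate a single preparatory lemma stating that, after one application of the halving/successor operations, congruent $(i+1)$-blocks induce congruent $i$-block data --- Lemma~\ref{lemBlockTrans} is tailored to exactly this --- and then run an induction on the number of successor steps, with the height-raising step never leaving the relevant congruence class thanks to Lemma~\ref{lemmSameHeight} and Lemma~\ref{lemmTermiAbove}. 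Once that lemma is set up, the remainder of \textsc{Forth}~$\R$/\textsc{Back}~$\R$ is a routine but notation-heavy index chase, which I would present by spelling out the alignment explicitly and leaving the verification to the reader.
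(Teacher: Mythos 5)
Your treatment of the easy clauses ({\sc Atoms}, {\sc Forth}/{\sc Back} $\to$, {\sc Forth} $\tnext$) follows the paper's argument and is fine, but there is a genuine gap exactly where you predicted trouble: the {\sc Forth} $\R$ clause. Your plan is to take $k_1$ so that $S^{k_1}(x)$ lies in the $i$-block reached from the $i$-block of $x$ by \emph{the same number} of block-successor steps as carry the $i$-block of $y$ to that of $S^{k_2}(y)$, and to get congruence of the two target $i$-blocks ``by iterating Lemma~\ref{lemBlockTrans} along the successor operation.'' That iteration is not available: Lemma~\ref{lemBlockTrans} takes \emph{congruent $(i+1)$-blocks} as input and only yields statements about their $i$-halves and the successor of the second half; after one boundary crossing you know nothing about the new $(i+1)$-blocks, and congruence of $i$-blocks is simply not preserved by the successor operation. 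Concretely, in $\DiamMod 3$ the points $(1,0)$ and $(1,1)$ lie in congruent (both initial) $2$-blocks, yet $S^7(1,0)=(8,0)$ lies in the initial $1$-block $B_1(4,0)$ while $S^7(1,1)=(8,1)$ lies in the terminal $1$-block $B_1(4,1)$; so under your alignment the target blocks $B$ and $B'$ need not be congruent, and the same example shows the intermediate ``book-keeping'' cannot be done step-by-step either, since the sequences of $i$-block types along the two orbits diverge after the first $(i+1)$-block boundary. Your proposed remedy (an induction on the number of successor steps using Lemmas~\ref{lemBlockTrans}, \ref{lemmSameHeight} and \ref{lemmTermiAbove}) does not repair this, because the inductive hypothesis one would need --- congruence of the current $(i+1)$-blocks --- is lost after the first crossing.

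The paper's proof avoids long transport altogether, and this asymmetry is the missing idea. Writing $r$ for the given number of steps on one side, it distinguishes two cases: if $S^r$ stays inside the same $(m+1)$-block, it aligns identically; otherwise it chooses on the other side the \emph{least} $r'$ such that $S^{r'}$ has just left that point's $(m+1)$-block with matching residue modulo $2^m$. Thus only one block boundary is ever crossed on the side being produced: all intermediate points with $t<r'$ sit in corresponding halves (or the successor of the second half) of the two congruent $(m+1)$-blocks and are matched by the identity alignment via Lemma~\ref{lemBlockTrans}, while the endpoint is matched not by transported congruence but by a direct case analysis on the $m$-block type of $S^r$ (regular/initial/terminal, using Lemma~\ref{lemmHeightClass} and Lemma~\ref{lemmSameHeight}), raising to a terminal block via Lemma~\ref{lemmTermiAbove} when needed. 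Note also that $r'$ may be much smaller than $r$ --- the quantifier pattern of {\sc Forth} $\R$ permits this, and exploiting it is what makes the argument go through.
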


\proof
Note that $\sim_m$ is symmetric, so we only check the `forth' clauses. Below, assume that $x = (x_1,x_2)$, $y = (y_1,y_2)$ and $x \sim_m y$.\\

\noindent {\sc Atoms:} From $x_1 \equiv y_1 \pmod{2^m}$ and the fact that $x,y$ belong to congruent $m$-blocks, we obtain that $p \in V(x)$ if and only if $p \in V( y )$.
\\

\noindent {\sc Forth $\peq$:} Let $x' = (x_1,b)\seq x$. If $x,y$ are $(m+1)$-regular, it follows from Lemma \ref{lemmSameHeight} that $x_2 = y_2$. Thus $y' \eqdef (y_1,b) \seq y$, and it is not hard to see using Lemma \ref{lemmHeightClass}.\ref{itHeightClassReg} that $x',y'$ are both $m$-regular, so that $x' \sim_m y'$ by Lemma \ref{lemmSameHeight}. If $x'$ is $m$-initial, then it follows that $y$ is $m$-initial and we take $y' = y$. If $x'$ is $m$-terminal, then either $y$ is $m$-terminal and we take $y' = y$, or $y$ is $m$-initial so that by Lemma \ref{lemmTermiAbove} there is some $m$-terminal $y'\seq y$; in either case we have that $x' \sim_m y'$.\\

\noindent {\sc Forth $\tnext$:} If $x,y$ belong to the $(m+1)$-blocks $B,B'$, then $S(x),S(y)$ either both belong to first halves of $B,B'$, to their second halves, or to the successors of their second halves. In any case, it follows from Lemma \ref{lemBlockTrans} that they belong to congruent $m$-blocks, and since addition preserves congruence modulo $2^m$ we obtain $S(x) \sim _m S(y)$.\\

\noindent {\sc Forth $\R$:} Suppose that $x,y$ belong to the $(m+1)$-blocks $B,B'$. Let $x'  = S^r(x)$ and note that $s' = ( [ x_1 +  r] _{2^n} , x_2)$. If $x'$ belongs to the same $(m+1)$-block as $x$, take $y' = (y_1 +  r, y_2)$. Then, it is readily verified that, for each $t\leq r$, $(x_1 + t, x_2) \sim_{m} (y_1 +  t, y_2) $.

Otherwise, let $r'$ be the least such that $ y ' \eqdef ( [ y_1 +  r'] _{2^n} , y_2)$ is {\em not} on the same $(m+1)$-block as $y$ and $y_1 + _{2^n} r' \equiv [ x_1 + r ]_{2^n} \pmod{2^{m}}$. Clearly $r'\leq r$, and thus as before we have that for each $t < r'$, $([x_1 + t]_{2^n} , x_2) \sim_{m} ( [ y_1 + t ] _{2^n} , y_2) $; this is seen by noting that $ [ x_1 + t ] _{2^n} \equiv  [ y_1 + t ]_{2^n} \pmod{2^m}$, and $( [ x_1 + t ] _{2^n} , x_2) $, $ ( [ y_1 +t ]_{2^n} , y_2) $ are both either on the first halves of $B$ and $B'$, or both on the second halves, or both on the successor of the second half; the minimality of $r'$ guarantees that no other case is possible.

If $x'$ is $m$-regular then $x$ must be $(m+1)$-regular, from which it is easy to see that $x,x',y,y'$ all share the same height and hence $y' \sim_{m}x'$. Otherwise, $y'$ is $m$-initial. If $x'$ is $m$-initial define $y''=y'$, and if $x'$ is $m$-terminal, choose $y''\seq y'$ which is $m$-terminal. In either case, $y'' \sim_{m}x'$, as needed.
\endproof

\begin{theorem} \label{thm:undefdiamR}
The formula $\diam p$ is not $\lang_{\R}$-definable over the class of persistent models.
\end{theorem}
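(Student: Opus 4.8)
The plan is to reuse the model family $\DiamMod n$ together with the two facts already in hand: Lemma~\ref{lemmDiamDisagree}, which separates $(1,0)$ from $(1,1)$ by the formula $\diam p$, and Proposition~\ref{propRBisim}, which supplies a graded $\R$-bisimulation $(\sim_m)_{m<n}$ on $\DiamMod n$. This is the same strategy used to show that $\ubox p$ is not $\lang_\Until$-definable, transplanted from here-and-there models to persistent models.

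First I would check that each $\DiamMod n$ is a persistent model, since this is what places it in the relevant class. The order $\peq$ relates points only within a single layer $\mathbb Z/(2^n)\times\{j\}$, and $S(i,j)=([i+1]_{2^n},j)$ restricts to a bijection of each layer. Forward confluence is immediate, and for backward confluence, given $v\succcurlyeq S(i,j)$ we write $v=([i+1]_{2^n},j')$ with $j'\geq j$ and take $u=(i,j')$, so that $S(u)=v$ and $u\succcurlyeq(i,j)$. Monotonicity of $V$ holds because $p\in V(i,j)$ iff $j>n-\ell(i)$, a condition upward-closed in $j$.

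Next I would verify the one genuinely new relational fact, namely $(1,0)\sim_{n-1}(1,1)$. By the definition of $\sim_m$ this requires $1\equiv 1\pmod{2^{n-1}}$, which is trivial, together with congruence of the $(n-1)$-blocks $B_{n-1}(1,0)=[1,2^{n-1}]\times\{0\}$ and $B_{n-1}(1,1)=[1,2^{n-1}]\times\{1\}$. For every $i\in[1,2^{n-1}]$ we have $\ell(i)\leq n-1$, hence $n-\ell(i)\geq 1$, so $p\notin V(i,0)$ and $p\notin V(i,1)$; thus both blocks are initial and therefore congruent, giving $(1,0)\sim_{n-1}(1,1)$.

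Finally I would conclude by contradiction. Suppose some $\varphi\in\lang_\R$ satisfied $\varphi\leftrightarrow\diam p$ over all persistent models, and set $n=|\varphi|+1$. Applying Lemma~\ref{lemma:bisimulation:release} to the graded $\R$-bisimulation $(\sim_m)_{m<n}$ with $m=|\varphi|=n-1$ and the pair $(1,0)\sim_{n-1}(1,1)$ yields $\DiamMod n,(1,0)\sat\varphi$ iff $\DiamMod n,(1,1)\sat\varphi$. Since $\DiamMod n$ is persistent, $\varphi$ and $\diam p$ agree on it, so $\DiamMod n,(1,0)\sat\diam p$ iff $\DiamMod n,(1,1)\sat\diam p$, contradicting Lemma~\ref{lemmDiamDisagree}; hence no such $\varphi$ exists. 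I expect the real difficulty of the result to have already been absorbed into Proposition~\ref{propRBisim}, especially the block-theoretic bookkeeping behind its \textsc{Forth $\R$} clause, so that the steps above are short; the only point requiring slight care is confirming that the cyclic structures $\DiamMod n$ genuinely belong to the persistent class.
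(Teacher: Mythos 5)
Your proposal is correct and follows essentially the same route as the paper: work in $\DiamMod{\length{\varphi}+1}$, observe that the blocks $B_{\length{\varphi}}(1,0)$ and $B_{\length{\varphi}}(1,1)$ are both initial and hence congruent so that $(1,0)\sim_{\length{\varphi}}(1,1)$, and then combine Lemma~\ref{lemmDiamDisagree}, Proposition~\ref{propRBisim} and Lemma~\ref{lemma:bisimulation:release}; your explicit verification that $\DiamMod{n}$ is persistent is a welcome addition that the paper leaves implicit. One phrasing slip only: $\peq$ relates points within a column $\{i\}\times[0,n]$ (fixed first coordinate), not within a layer $\mathbb Z/(2^{n})\times\{j\}$, though your actual backward-confluence computation uses the correct definition, so nothing in the argument is affected.
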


\proof
Let $\varphi \in \lang_{\R}$, let $ n = \length \varphi $ and consider the model $\DiamMod {n + 1}$. By Lemma \ref{lemmDiamDisagree}, $\DiamMod {n + 1} ,(1,0)\not \models \diam p$ and $\DiamMod {n + 1} , (1,1) \models \diam p$. However, by Lemma \ref{lemmHeightClass}.\ref{itHeightClassInit}, $B_{n}(1,0) $ and $B_{n} (1,1)$ are both initial, hence $( 1 , 0 ) \sim_{n} ( 1 , 1 )$. By Lemma \ref{lemma:bisimulation:release}, $\DiamMod {n + 1}, ( 1 , 0 ) \models \varphi$ if and only if $\DiamMod {n + 1}, ( 1 , 1 ) \models \varphi$. It follows that $\varphi $ is not equivalent to $\diam p$ over the class of persistent models.
\endproof

\section{Conclusions}

We have studied $\iltl$, an intuitionistic analogue of $\sf LTL$ based on expanding domain models from modal logic and first introduced in \cite{Boudou2017}. In the literature, intuitionistic modal logic is typically interpreted over persistent models, but as we have shown this interpretation has the technical disadvantage of not enjoying the finite model property. Of course, this fact alone does not imply that $\itlb$ is undecidable, and whether the latter is true remains an  open problem.
This should not be surprising, as decidability for intuitionistic modal logics with a transitive modal accessibility relation is notoriously difficult to prove \cite{Simpson94}, having resisted proof techniques that have been successfully applied to other intuitionistic modal logics, such as those in e.g.~\cite{AlechinaS06}.
Meanwhile, our semantics are natural in the sense that we impose the minimal conditions on $\msuccfct$ so that all truth values are monotone under $\peq$, and a wider class of models is convenient as they can more easily be tailored for specific applications. Furthermore, we have presented the notions of bounded bisimulations and shown that, as happens in other modal intuitionistic logics or modal intermediate logics, modal operators are not interdefinable. 

This work and \cite{Boudou2017} represent the first attempts to study $\iltl$. Needless to say, many open questions remain. We know that $\iltl$ is decidable, but the proposed decision procedure is non-elementary. However, there seems to be little reason to assume that this is optimal, raising the following question:

\begin{question}
Are the satisfiability and validity problems for $\iltl$ elementary?
\end{question}

Meanwhile, we saw in Theorems \ref{ThmFMP} and \ref{ThmNonFin} that $\iltl$ has the effective finite model property, while $\itlb$ does not have the finite model property at all. However, it may yet be that $\itlb$ is decidable despite this.

\begin{question}
Is $\itlb$ decidable?
\end{question}

Regarding expressive completeness, it is known that $\sf LTL$ is expressively complete~\cite{Kamp68,Rabinovich12,Gabbay80,Hodkinson94}: $\lang_{\Until}$ is expressively equivalent to monadic first-order logic equipped with a linear order and `next' relation~\cite{Gabbay80}.
Persistent models can be viewed as models of first-order intuitionistic logic, and hence we can ask the same question of $\itlb$.

\begin{question}
Is $\lang_{\ubox{\Until}}$ equally expressive to monadic first-order logic over the class of persistent models?
\end{question}

Finally, a sound and complete axiomatization for $\iltl$ remains to be found. In \cite{DieguezCompleteness} we axiomatize the $\ubox$-free fragment of $\iltl$ and we discuss possible axioms for the full language in \cite{BoudouJELIA}, but treating languages with $\ubox$ seems to be a much more difficult problem.

\begin{question}
Do $\iltl$ or $\itlb$ enjoy natural axiomatizations?
\end{question}

\bibliographystyle{plain}

\newcommand\SortNoop[1]{}

%\bibliography{biblio}

\end{document}